\DeclarePairedDelimiter{\ceil}{\lceil}{\rceil}
\DeclarePairedDelimiter\floor{\lfloor}{\rfloor}
\newcommand{\Hil}{\mathcal{H}}
\newcommand{\Kil}{\mathcal{K}}
\newcommand{\cH}{\mathcal{H}}
\newcommand{\cC}{\mathcal{C}}
\newcommand{\cD}{\mathcal{D}}
\newcommand{\cL}{\mathcal{L}}
\newcommand{\iden}{\mathbb{1}}
\newcommand{\eps}{\varepsilon}
\renewcommand{\epsilon}{\varepsilon}
\renewcommand{\phi}{\varphi}
\newcommand{\overbar}[1]{\mkern 1.5mu\overline{\mkern-1.5mu#1\mkern-1.5mu}\mkern 1.5mu}
\newcommand{\C}[1]{\mathbb{C}^{#1}}
\newcommand{\B}[1]{\mathcal{L}({#1})}
\newcommand{\State}[1]{\mathcal{D}({#1})}
\newcommand{\id}{{\rm{id}}}
\newcommand{\supp}{{\rm{supp }}}
\newtheorem{theorem}{Theorem}[section]
\newtheorem{definition}[theorem]{Definition}
\newtheorem*{definition*}{Definition}
\newtheorem{corollary}[theorem]{Corollary}
\newtheorem{lemma}[theorem]{Lemma}
\newtheorem{remark}[theorem]{Remark}
\newtheorem{question}[theorem]{Question}
\newtheorem*{conjecture*}{Conjecture}
\newcommand\vertarrowbox[3][6ex]{%
  \begin{array}[t]{@{}c@{}} #2 \\
  \left\uparrow\vcenter{\hrule height #1}\right.\kern-\nulldelimiterspace\\
  \makebox[0pt]{\scriptsize#3}
  \end{array}%
}
\theoremstyle{definition}
\definecolor{darkgreen}{rgb}{0,0.392,0}
\author{Satvik Singh}
\email{satvik.singh@tum.de}
\address{\parbox{\linewidth}{Department of Mathematics, Technical University of Munich, \\[2pt]
Department of Applied Mathematics and Theoretical Physics, \\ University of Cambridge, Cambridge, United Kingdom }}
\author{Nilanjana Datta}
\email{n.datta@damtp.cam.ac.uk}
\address{\parbox{\linewidth}{Department of Applied Mathematics and Theoretical Physics, \\ University of Cambridge, Cambridge, United Kingdom}}
\title[Information storage and transmission under Markovian noise]{Information storage and transmission under \\ Markovian noise}
\begin{document}

\begin{abstract}
We study the information transmission capacities of quantum Markov semigroups $(\Psi^t)_{t\in \mathbb{N}}$ acting on $d-$dimensional quantum systems. We show that, in the limit of $t\to \infty$, the capacities can be efficiently computed in terms of the structure of the peripheral space of $\Psi$, are strongly additive, and satisfy the strong converse property. We also establish convergence bounds to show that the infinite-time capacities are reached after time $t\gtrsim d^2\ln (d)$. From a data storage perspective, our analysis provides tight bounds on the number of bits or qubits that can be reliably stored for long times in a quantum memory device that is experiencing Markovian noise. From a practical standpoint, we show that typically, an $n-$qubit quantum memory, with Markovian noise acting independently and identically on all qubits and a fixed time-independent global error correction mechanism, becomes useless for storage after time $t\gtrsim n2^{2n}$. In contrast, if the error correction is local, we prove that the memory becomes useless much more quickly, i.e., after time $t\gtrsim \ln(n)$. In the setting of point-to-point communication between two spatially separated parties, our analysis provides efficiently computable bounds on the optimal rate at which bits or qubits can be reliably transmitted via long Markovian communication channels $(\Psi^l)_{l\in \mathbb{N}}$ of length $l\gtrsim d^2 \ln(d)$, both in the finite block-length and asymptotic regimes. 

\end{abstract}

\maketitle
\tableofcontents

\section{Introduction}
Suppose that two parties, Alice and Bob, are spatially separated, and Alice wants to send information encoded in $d-$dimensional quantum systems (qudits) to Bob. The communication link between them is modelled by a (memoryless) noisy quantum channel $\Phi:\B{\Hil_A}\to \B{\Hil_B}$, where $\Hil_A\simeq \Hil_B\simeq\Hil\simeq \mathbb{C}^d$, and $\B{\Hil}$ denotes the algebra of linear operators acting on a Hilbert space $\Hil$. The capacity of $\Phi$ describes the best rate at which Alice can send information to Bob by using the channel many times (say $n$) such that the error incurred in transmission vanishes as $n\to \infty$ \cite{Shannon1948communication, Cover2005book,  watrous2018theory}. Depending on the type of information being sent: classical, private classical, and quantum, one obtains different capacities: $C(\Phi), P(\Phi)$, and $Q(\Phi)$, respectively. The infimum of communication rates for which the error incurred in transmission goes to $1$ in the limit $n \to \infty$ gives the strong converse capacities: $C^{\dagger}(\Phi), P^{\dagger}(\Phi)$ and $Q^{\dagger}(\Phi)$. These denote the threshold values of the rates above which information transmission fails with certainty. Clearly, $Q(\Phi)\leq Q^{\dagger}(\Phi)$, and we say that $\Phi$ satisfies the strong converse property for quantum capacity if $Q(\Phi)=Q^{\dagger}(\Phi)$, with the interpretation being that the capacity $Q(\Phi)$ provides a sharp threshold between achievable and unachievable rates of communication. The strong converse property for classical and private classical capacity are defined similarly. Determining whether the strong converse property holds for different types of capacities has been an active area of research (see \cite[Section 3]{cheng2024strong} for a historical survey). The capacities satisfy the following relation: $C(\Phi)\geq P(\Phi)\geq Q(\Phi)$, and the inequalities can be maximally strict \cite{Leung2014privacy, Leung2016privacy}. In particular, the separation between $P(\Phi)$ and $Q(\Phi)$ can be linked to the fact that distilling entanglement is fundamentally different from distilling private classical bits \cite{Horodecki2005private, Horodecki2009private}. 

Although operationally crucial, the capacities of a noisy channel are not even known to be computable~\cite{wolf2011undecidable, wolf2024decidable}, let alone efficiently computable. Furthermore, the capacities can exhibit strange superadditive behavior: there exist pairs of channels (say $\Phi$ and $\Psi$), each of which has zero quantum capacity $Q(\Phi)=Q(\Psi)=0$, but which can be used in tandem to transmit quantum information at a non-zero rate, i.e.~$Q(\Phi \otimes \Psi)>0$ \cite{Smith2008super}. More generally, for any channel $\Phi$, there may exist other channels $\Psi$ that can increase $\Phi$'s communication capacity, in the sense that $Q(\Phi\otimes \Psi)>Q(\Phi)+Q(\Psi)$ or $P(\Phi\otimes \Psi)>P(\Phi)+P(\Psi)$ \cite{Smith2009superadd, Li2009superadd, Smith2011superadd, Lim2019superadd,  Koudia2022superadd, Leditzky2023superadd}. Such exotic superadditive behavior indicates that the capacity of a noisy channel
may not adequately characterize the channel, since the utility of the channel depends on what other contextual channels are available for communication \cite{Winter2016potential}. 

If the error incurred in information transmission via a channel $\Phi:\B{\Hil_A}\to \B{\Hil_B}$ is required to always be zero, one enters the world of zero-error information theory \cite{Shannon1956zero, Korner1998zero}. The corresponding capacities for transmission of classical, private classical, and quantum information are denoted by $C_{\operatorname{zero}}(\Phi), P_{\operatorname{zero}}(\Phi)$, and $Q_{\operatorname{zero}}(\Phi)$, respectively. The constraint of error-free communication makes zero-error information theory much more algebraic/combinatorial in nature, with the central object of study being the so-called \emph{(non-commutative) confusability graph} of the noisy channel \cite{Shannon1956zero, Duan2013noncomm}. All the computability and superadditivity issues mentioned above persist (and arguably become more extreme) for zero-error communication. There is some evidence for the capacities to be uncomputable (even for classical channels) \cite{Alon2006zeroclassical, Shor2008complexity, Boche2020zeroclassical, boche2024computability}. Moreover, exotic examples of channels have been constructed that exhibit \emph{super-duper activation} of channel capacities, i.e., there exist quantum channels $\Phi$ and $\Psi$ with no zero-error classical capacity: $C_{\operatorname{zero}}(\Phi)=C_{\operatorname{zero}}(\Psi)=0$, yet they can be used in tandem to transmit far more delicate quantum information perfectly: $Q_{\operatorname{zero}}(\Phi\otimes \Psi)>0$ \cite{Chen2010zerosuper,Duan2009zerosuper, Cubitt2011zerosuper, Shirokov2015zerosuper}.

In a nutshell, we will show that all the stated obstacles that one usually faces in quantum Shannon theory disappear when one studies the capacities of sequential compositions 
\begin{equation}
        \Psi^m := \underbrace{\Psi\circ \Psi \circ \ldots \circ \Psi}_{m \text{ times}}
\end{equation}
of a channel $\Psi: \B{\Hil}\to \B{\Hil}$ in the limit $m\to \infty$.

\subsection{Our contribution}

Broadly speaking, we address the following question in this paper.

\begin{question}
    Given a channel $\Psi:\B{\Hil}\to \B{\Hil}$, how do the transmission capacities of self-compositions $\Psi^m := \Psi\circ \Psi \ldots \circ \Psi$ behave as a function of the number of compositions $m$ and $\Psi$?
\end{question}

We study this problem from two different perspectives.

\subsubsection{Data storage} 
Suppose we have a physical quantum memory device comprised of $n$ qudits that we wish to use to store data. We model the device as an \emph{open} quantum system that is \emph{weakly} interacting with the environment (or bath). In this so-called \emph{weak-coupling limit}, the decay times of correlation functions of the bath are much
shorter than the typical time scale over which the state of the system changes
significantly. In other words, the bath `forgets' about its interaction with the system and returns to its steady state quickly relative to the speed at which the system evolves. Since in subsequent interactions, the bath does not remember the details of the previous interaction, the dynamics of the system becomes \emph{Markovian} \cite{Alicki2002open, Alicki2007open, Breuer2007open}. Mathematically, the system dynamics in time can be modelled by a \emph{discrete Quantum Markov Semigroup} (dQMS) $(\Psi^t)_{t\in \mathbb{N}}$, where $\Psi :\B{\Hil}\to \B{\Hil}$ is a noisy channel, $\Hil\simeq (\C{q})^{\otimes n}$ is the Hilbert space of the memory, $\Psi^t = \Psi \circ \Psi \cdots \circ \Psi$ denotes the $t$-fold composition of $\Psi$ with itself, and $t\in\mathbb{N}$ plays the role of the time-parameter.

\begin{figure}[H]
    \centering
    \includegraphics[scale=2]{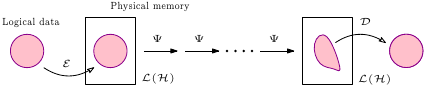}
    \caption{Schematic for a physical memory device experiencing Markovian noise in time modelled by a dQMS $(\Psi^t)_{t\in\mathbb{N}}$, where $\Psi:\B{\Hil}\to \B{\Hil}$ is a quantum channel. Logical data is encoded in the memory at time $t=0$ using an encoding channel $\mathcal{E}:\B{\Hil_{\operatorname{data}}}\to \B{\Hil}$. After some time $t$, a decoding channel $\mathcal{D}:\B{\Hil}\to \B{\Hil_{\operatorname{data}}}$ is applied to recover the data, so that $\mathcal{D}\circ \Psi^t \circ \mathcal{E}\simeq_{\epsilon} \operatorname{id}$ approximately simulates the identity channel on the data system, where $\epsilon\in [0,1)$ is the error allowed in the recovery process. }
    \label{fig:memory-intro}
\end{figure}

In this setting, we want to determine the maximum amount of data that can be stored in the memory in such a way that it can be recovered with some error $\epsilon\in [0,1)$ after time $t\in \mathbb{N}$ (see Figure~\ref{fig:memory-intro}). Building such a quantum memory that is able to store information for a long time is crucial in order to build a reliable quantum computer. Consequently, this task has been studied from different perspectives, most notably from the viewpoint of designing error-correcting codes to develop a fault-tolerant quantum memory \cite{Tehral2015memory, Brown2016memory, gottesman2016surviving}. In this paper, we take a Shannon theoretic perspective, where the goal is to study the ultimate limits of data storage without placing any computational or physical restrictions on the encoding and decoding operations. Note that the asymptotic and memoryless theory of communication that we described in the introduction is not a good framework for this problem. This is because of two reasons. Firstly, the channel $\Psi$ might not act independently and identically on all qubits inside the memory, since the noise might act in a correlated fashion across some of the qubits. Secondly, since current quantum technologies can only coherently manipulate a few hundred qubits at most \cite{Preskill2018nisq, Bharti2022nisq}, it is pertinent to analyze storage capacities of memory devices with a small number of qubits ($n\sim 100$), thus making the asymptotic $n\to \infty$ limit rather unrealistic. These concerns are addressed by the framework of \emph{one-shot} information theory, where the goal is to understand the maximum amount of information $Q_{\epsilon}(\Phi)$ that can be sent through a \emph{single} use of a noisy channel $\Phi$ with some fixed allowed error $\epsilon\in [0,1)$ (Definitions~\ref{def:classical-protocol}-\ref{def:quantum-protocol}). Thus, in Section~\ref{sec:qms-storage}, we analyze the \emph{one-shot} $\epsilon-$error information transmission capacities $Q_{\epsilon}(\Psi^t)$ of the channels $\Psi^t$ as a function of $t,\epsilon$ and $\Psi$, with the focus being on the $t\to \infty$ limit. The peripheral space of the noise $\mathscr{X}(\Psi)$, which can be decomposed into a direct sum of matrix blocks (see Section~\ref{sec:periphery}):
\begin{align}
\mathscr{X} (\Psi) &:= \text{span}\{X\in \B{\Hil} : \exists \, \theta\in \mathbb{R} \text{ with } \Psi(X)= e^{i\theta} X\}   \\
&\simeq 0 \oplus \bigoplus_{k=1}^K (\B{\C{d_k}}\otimes \delta_k), \label{eq:phasespace-intro} 
\end{align}
plays a crucial role in our analysis. In a nutshell, we show that any information stored inside the peripheral space $\mathscr{X} (\Psi)$ is shielded from noise for an arbitrarily long time. Furthermore, as $t\to \infty$, we prove that this is the optimal way to store data to ensure that it can be recovered with good fidelity (c.f. Theorem~\ref{theorem:main-storage}). In fact, our convergence estimates show that this is the optimal strategy for time $t\gtrsim d^2\ln (d)$ scaling \emph{exponentially} with the number of qudits in memory, where $d=\dim\Hil=q^n$ (Section~\ref{subsec:convergence-storage}). A formal statement of the $t\to \infty$ limit of the capacities is given below, where we denote the \emph{one-shot} $\epsilon-$error quantum, private classical, and classical capacities of a channel $\Phi$ by $Q_{\epsilon}(\Phi), C^{\operatorname{p}}_{\epsilon}(\Phi),$ and $C_{\epsilon}(\Phi)$, respectively.

\begin{theorem}\label{theorem:main-storage-intro}
Let $\Psi : \B{\Hil}\to \B{\Hil}$ be a quantum channel and $\epsilon\in [0,1)$. There exist positive integers $K, d_1, \dots d_K$ that can be efficiently computed from $\Psi$ such that 
 \begin{align}
  \label{eq:Q}  \log (\max_{k} d_k) \leq \lim_{t\to \infty} Q_{\epsilon}(\Psi^t) &\leq \log (\max_{k} d_k) + \log (\frac{1}{1-\epsilon}), \\ 
    \log (\max_{k} d_k) \leq \lim_{t\to \infty} C^{\operatorname{p}}_{\epsilon}(\Psi^t) &\leq \log (\max_{k} d_k) + \log (\frac{1}{1-\epsilon}), \\ 
   \label{eq:C}  \log (\sum_{k} d_k) \leq \lim_{t\to \infty}C_{\epsilon}(\Psi^t) &\leq \log (\sum_{k} d_k ) + \log (\frac{1}{1-\epsilon}) .
 \end{align}
 These integers arise from the peripheral space $\mathscr{X}(\Psi) = \text{span}\{X\in \B{\Hil} : \exists \, \theta\in \mathbb{R} \text{ with } \Psi(X)= e^{i\theta} X\}$ which can be decomposed as $0 \oplus \bigoplus_{k=1}^K (\B{\C{d_k}}\otimes \delta_k)$, where $\delta_k$ are some fixed density operators. 
\end{theorem}

\begin{remark}
    Eqs.~\eqref{eq:Q} and \eqref{eq:C} were independently proved in \cite{singh2024zero} (for the $\epsilon=0$ case) and in \cite{fawzi2024error} (for arbitrary $\epsilon\in [0,1)$). The $\epsilon=0$ case of Eq.~\eqref{eq:C} is also proved in \cite{guan2016zero}. 
\end{remark}

As a special case, we also consider the memoryless setting where the noise acts independently and identically on all the qudits. Here, $\Psi=\Gamma^{\otimes n}$, where $\Gamma:\B{\C{q}} \to \B{\C{q}}$ is a local quantum channel that acts on a single qudit. In this case, we prove that since the peripheral space is multiplicative (Lemma~\ref{lemma:peripheral-multi}): $\mathscr{X} (\Psi)=\mathscr{X}(\Gamma)^{\otimes n}$, the infinite-time capacities become additive (see Section~\ref{subsec:IIDconvergence}): 
\begin{align}\label{eq:IID-intro}
  \forall n \in \mathbb{N}: \quad \log (\max_{k} d_k) \leq \lim_{t \to \infty} \frac{1}{n} Q_{\epsilon}((\Gamma^{\otimes n})^t) \leq \log (\max_{k} d_k) + \frac{1}{n}\log(\frac{1}{1-\epsilon}),
\end{align}
where the integers $d_k$ come now from the block decomposition of the peripheral space of the local channel $\mathscr{X}(\Gamma)$. Moreover, the infinite-time capacities are reached quite rapidly after time $t\gtrsim \ln (n)$ scaling \emph{logarithmically} with the number of qudits in memory. If $R_n := \lim_{t \to \infty} Q_{\epsilon}((\Gamma^{\otimes n})^t)/n$ is the optimal (infinite-time) storage \emph{rate} of the $n$-qudit memory device, the 
second inequality in Eq.~\eqref{eq:IID-intro} can be rearranged to show that the error incurred in data recovery $\epsilon$ is bounded as 
\begin{equation}
\epsilon \geq 1 - 2^{-n(R_n - \log (\max_k d_k) )}.    
\end{equation}
Thus, if we try to store qubits in the memory at a rate $R>\log \left(\max_k d_k \right)$, the error incurred in recovery 
approaches $1$ exponentially fast as $n\to \infty$. At the same time, the coding scheme used in Theorem~\ref{theorem:main-storage-intro} shows that any rate $R\leq\log (\max_k d_k)$ is achievable with \emph{exactly} zero error. Hence, for \emph{any} channel $\Gamma$, the strong converse property holds asymptotically as $t\to \infty$, and $\log \left(\max_k d_k \right)$ provides a sharp threshold between achievable and unachievable rates of data storage (see Figure~\ref{fig:error-rate-qms-storage}). 

\begin{figure}[H]
    \centering
    \includegraphics[width=0.5\linewidth]{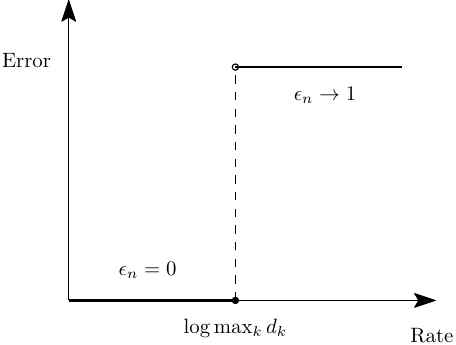}
    \caption{The error-rate curve for data storage in a quantum memory comprised of $n$ physical qudits (Hilbert space $\Hil\simeq (\C{q})^{\otimes n}$) experiencing IID Markovian noise modelled by a dQMS $(\Psi^t)_{t\in \mathbb{N}}$, where $\Psi = \Gamma^{\otimes n}$ for some quantum channel $\Gamma:\B{\C{q}}\to \B{\C{q}}$. One can store $\log\left( \max_k d_k \right)$ many logical qubits per physical qudit inside the largest block in the peripheral space of the local channel $\mathscr{X}(\Gamma)$ (see Eq.~\eqref{eq:phasespace-intro}) perfectly with zero data recovery error. Moreover, any attempt to store data at a higher rate $R>\log \left(\max_k d_k\right)$ fails as $t\to \infty$ with certainty, since the data recovery error $\epsilon_n\to 1$ exponentially as $n\to \infty$.}
    \label{fig:error-rate-qms-storage}
\end{figure}

From a practical standpoint, our analysis shows that typically, an $n-$qubit quantum memory with Markovian noise acting independently and identically on all qubits and a fixed time-independent global error correction mechanism becomes useless for storage after time $t\gtrsim n2^{2n}$ scaling exponentially with the number of qubits. In contrast, if the error correction is local, then the memory becomes useless much more quickly after time $t\gtrsim \ln(n)$; see Section~\ref{subsec:fault-tolerance}.    

\subsubsection{Data transmission}
In Section~\ref{sec:qms-transmission}, we come back to the point-to-point communication setting between two spatially separated parties, where Alice wants to use a noisy channel many times for faithful transmission of information to Bob. We make a crucial assumption that the noise in the communication link between Alice and Bob is \emph{Markovian}, which we model by considering a dQMS $(\Psi^l)_{l \in {\mathbb{N}}}$, where $\Psi :\B{\Hil}\to \B{\Hil}$ is a noisy channel, $\Psi^l = \Psi \circ \Psi \cdots \circ \Psi$ is the $l$-fold composition of $\Psi$ with itself, and $l \in {\mathbb{N}}$ plays the role of the length of communication link. Physically, the noise in each unit length of the link is modelled by $\Psi$, and since the noise is Markovian, the cumulative noise in a length $l$ segment is given by $\Psi^l$. We study capacities of `long' noisy communication links $\Psi^l$ of length $l \gtrsim d^2 \ln (d)$, where $d=\dim\Hil$. Mathematically, we are interested in the capacities of channels $\Phi:\B{\Hil}\to \B{\Hil}$ that are $l$-Markovian divisible\footnote{The notion of  {\em{divisibility}} of quantum channels has long been the focus of active research, especially in the study of open quantum systems. See e.g.~\cite{Wolf2008dividing, Rivas2014dividing, Breuer2016dividing, Chruciski2022dividing} and references therein.} for `large' $l$, i.e.,  channels $\Phi$ for which there exists another channel $\Psi$ such that $\Phi=\Psi^l$ with $l \gtrsim d^2\ln (d)$. We show that the capacities of such highly Markovian divisible channels have very nice properties:

\begin{itemize}
    \item All the capacities can be efficiently approximated.
    \item The quantum and private capacities approximately satisfy the
    strong converse property.
    \item The quantum and private capacities approximately coincide.
    \item All the capacities are approximately strongly additive.
\end{itemize}

All these approximate results become exact for zero-error capacities of channels that are $l$-Markovian divisible with $l\geq d^2$. We provide semi-formal statements of these results below.
\smallskip

\begin{theorem} \label{thm:main-transmission-zero}
 Let $\Phi : \B{\Hil}\to \B{\Hil}$ be an $l$-Markovian divisible channel with $l\geq d^2$ and $d=\dim \Hil$. There exist integers $K, d_1, \dots d_K$ that can be efficiently computed from $\Phi$ such that 
 \begin{align*}
    C_{\operatorname{zero}}(\Phi) &= \log \left(\sum_k d_k \right), \\ 
    P_{\operatorname{zero}}(\Phi) &= \log \left(\max_k d_k \right) = Q_{\operatorname{zero}}(\Phi).
 \end{align*}
 These integers arise from the peripheral space $\mathscr{X}(\Phi) = \text{span}\{X\in \B{\Hil} : \exists \, \theta\in \mathbb{R} \text{ with } \Phi(X)= e^{i\theta} X\}$ which can be decomposed as $0 \oplus \bigoplus_{k=1}^K (\B{\C{d_k}}\otimes \delta_k)$, where $\delta_k$ are some fixed density operators.
 
 Moreover, for any other $l'$-Markovian divisible channel $\Gamma : \B{\Hil} \to \B{\Hil}$ with $l'\geq d^2$, 
 \begin{align*}
     C_{\operatorname{zero}}(\Phi \otimes \Gamma) &= C_{\operatorname{zero}}(\Phi) + C_{\operatorname{zero}}(\Gamma), \\
     P_{\operatorname{zero}}(\Phi \otimes \Gamma) &= P_{\operatorname{zero}}(\Phi) + P_{\operatorname{zero}}(\Gamma), \\
     Q_{\operatorname{zero}}(\Phi \otimes \Gamma) &= Q_{\operatorname{zero}}(\Phi) + Q_{\operatorname{zero}}(\Gamma).
 \end{align*}
\end{theorem}

Let us highlight that to the best of our knowledge, the class of highly Markovian divisible quantum channels as in Theorem~\ref{thm:main-transmission-zero} provides the first example of a non-trivial family of channels for which all the zero-error capacities are additive.  This includes the class of $\infty$-Markovian divisible channels \cite{Denisov1989infdivisble, Wolf2008dividing}, which further includes the class of continuous Quantum Markov semigroups (QMS) generated by a Lindbladian \cite{Gorini1976qms, Lindblad1976qms}. The reason for this additivity boils down to the fact that the non-commutative confusability graphs \cite{Duan2013noncomm} of such channels `look like' $*-$algebras (see Theorems~\ref{thm:stab-opsys}, \ref{theorem:main-zero}) for which the graph independence numbers are nicely behaved (Lemma~\ref{lemma:alpha-algebra}). 

\begin{theorem}\label{thm:main-transmission-nonzero}
\label{thm:main-finitetime}
    Let $\Psi:\cL (\cH)\to \cL (\cH)$ be a channel with $d=\dim \Hil$. There exist integers $K, d_1, \dots d_K$ that can be efficiently computed from $\Psi$ as in Theorem~\ref{thm:main-transmission-zero} such that 
\begin{align*}
       \log \left( \max_k d_k \right) \leq Q(\Psi^l) &\leq P^{\dagger}(\Psi^l) \leq \log \left( \max_k d_k \right) + \log (1+ \frac{\delta_l d}{2}), \\ 
       \log \left( \sum_k d_k \right) \leq C(\Phi^l) &\leq \log \left( \sum_k d_k \right) + \delta_l \log (d^2 -1 ) + 2h \left(\frac{\delta_l}{2} \right).
\end{align*}
Moreover, for any other channel $\Gamma: \B{\Hil}\to \B{\Hil}$, 
\begin{align*}
       \log \left( \sum_k d_k \right) + C(\Gamma) \leq C(\Psi^l &\otimes \Gamma) \leq \log\left( \sum_k d_k \right) + C(\Gamma) + \delta_l \log (d^4 -1 ) + 2h \left(\frac{\delta_l}{2} \right), \\
       \log(\max_k d_k) + P(\Gamma) \leq P(\Psi^l &\otimes \Gamma) \leq \log(\max_k d_k) + P(\Gamma) + 2\delta_l \log (d^4 -1 ) + 4h \left(\frac{\delta_l}{2} \right),
\end{align*}
\begin{align*}
    \log(\max_k d_k) + Q(\Gamma) \leq Q(\Psi^l &\otimes \Gamma) \leq \log(\max_k d_k) + Q(\Gamma) + \delta_l \log (d^4 -1 ) + 2h \left(\frac{\delta_l}{2} \right).
\end{align*}
Here, $h(\epsilon):= -\eps\log \eps - (1-\eps)\log (1-\eps)$ and $\delta_l=\norm{\Psi^l-\Psi^l_{\infty}}_{\diamond}\leq \kappa\mu^l \to 0$ as $l\to \infty$, where
\begin{itemize}
    \item $\mu = \operatorname{spr}(\Psi-\Psi_{\infty}) <1$ is such that $1-\mu$ is the spectral gap of $\Psi$, and
    \item $\kappa$ depends on $\Psi$, $l$, and $d=\dim \cH$.
\end{itemize}
The lower bounds above hold for all $l\in\mathbb{N}$ and the upper bounds hold for $l \gtrsim d^2 \ln (d)$.
\end{theorem}

The crucial idea behind the proof of Theorem~\ref{thm:main-transmission-nonzero} is the fact that for {\em{any}} quantum channel $\Psi$, the corresponding asymptotic channel $\Psi_\infty$ satisfies the strong converse property for the classical, private classical, and quantum capacities, and these are also strongly additive (see Theorems~\ref{thm:QPinf}, \ref{thm:Cinf}, \ref{thm:asym-strongadd}). These properties can then be lifted to the finite length regime by using continuity arguments and the convergence estimate $\norm{\Psi^l-\Psi^l_{\infty}}_{\diamond} \leq \kappa\mu^l \to 0$ (see Theorems~\ref{thm:QPasymp}, \ref{thm:Casmpy}, \ref{theorem:strong-add}).

Finally, since realistic quantum devices can process only a small finite number of qubits $(n\sim 100)$ at a time \cite{Preskill2018nisq, Bharti2022nisq}, it is pertinent to analyze optimal rates of information transmission via a fixed number of uses of a given channel. Hence, we provide finite-blocklength bounds on the capacities of `long' noisy Markovian communication links of length $l\gtrsim d^2\ln(d)$ for a fixed number of channel uses $n\in \mathbb{N}$ and transmission error $\epsilon\in [0,1)$ that are of the following form:
\begin{align*}
    \log(\max_k d_k) \leq \frac{1}{n} Q_{\epsilon}((\Psi^l)^{\otimes n})) &\leq \log(\max_k d_k) +  \frac{\alpha}{\alpha-1} \log (1+\frac{\delta_l d^{\frac{\alpha-1}{\alpha}}}{2}) + \frac{\alpha}{n(\alpha-1)} \log (\frac{n^{d^2}}{1-\epsilon}), \\
    \log(\max_k d_k) \leq \frac{1}{n} Q_{\epsilon}((\Psi^l)^{\otimes n})) &\leq \log(\max_k d_k) +  \log (1+\frac{\delta_l d}{2}) + \frac{1}{n} \log (\frac{1}{1-\epsilon}),
\end{align*}
where the lower bound holds for all $l\in\mathbb{N}$ and the upper bound holds for $l \gtrsim d^2 \ln (d)$ and $\alpha>1$. Similar bounds are derived for the private classical capacity as well (see Theorems~\ref{theorem:QPfinite-alpha}, \ref{theorem:QPfinite-max}, \ref{theorem:QPfinite-assisted}). Observe that as $n\to \infty$ we recover (and improve) the asymptotic bound stated in Theorem~\ref{thm:main-transmission-nonzero}.

\begin{remark}\label{remark}
    The efficient computability of the capacities in Theorems~\ref{theorem:main-storage-intro}, \ref{thm:main-transmission-zero}, and \ref{thm:main-transmission-nonzero} follows from the fact that for a given channel $\Phi$, the structure of its peripheral space $\mathscr{X}(\Phi)$ can be efficiently computed. The linear structure of $\mathscr{X}(\Phi)$ can be efficiently computed using the algorithm given in \cite{BlumeKohout2010algebra}, following which the algebraic structure (i.e., the integers $K,d_1,\ldots ,d_K$ in Eq.~\eqref{eq:phasespace-intro}) can be efficiently computed using the algorithms given in \cite{zarikan2003algebra, Holbrook2003commutant, Guan2018algebra, fawzi2024error}. Note that this is in stark contrast with the general picture, where the capacities of a quantum channel are not even known to be computable \cite{wolf2011undecidable, wolf2024decidable}.
\end{remark}

\subsection{Outline of the paper} 
\begin{itemize}
    \item We review some basic mathematical preliminaries in Section~\ref{sec:prelim}. 
    \item Next, we consider the problem of information \emph{storage} under Markovian noise in Section~\ref{sec:qms-storage}. The main result of this section, which provides tight upper and lower bounds on the long-time storage capacities of quantum memories experiencing Markovian noise, is stated and proved in Theorem~\ref{theorem:main-storage}. Convergence bounds on the time after which the infinite-time capacities are reached are derived in Section~\ref{subsec:convergence-storage}. In Section~\ref{sec:consequences-qms-storage}, we also discuss several applications of Theorem~\ref{theorem:main-storage}. In particular, we examine the special case of memories experiencing IID noise (Section~\ref{subsec:IIDconvergence}) and discuss how lower bounds on the space overhead of fault-tolerant quantum memories can be derived using our techniques (Section~\ref{subsec:fault-tolerance}). In Section~\ref{sec:examples-qms-storage}, we compute storage capacities for some specific examples of noise models. 
    \item Next, we consider the problem of information \emph{transmission} under Markovian noise in Section~\ref{sec:qms-transmission}. The main results of this section in the zero-error regime are presented in Section~\ref{sec:main-zero}. In particular, the primary technical ingredient for these results, namely the structure theorem for the stabilized operator system of any quantum channel, is discussed in Section~\ref{sec:stablized}. The non-zero-error results are discussed in Section~\ref{sec:main:non-zero}. The main capacity bounds are presented in Sections~\ref{subsec:non-zero-classical}-\ref{subsec:non-zero-finiteblock}. The results on strong additivity and rates of convergence are presented in Sections~\ref{subsec:strong} and \ref{subsec:convergence-transmission}, respectively.

    \item Finally, we provide a concluding discussion in Section~\ref{sec:conclude}. Here, we compare the storage vs transmission perspectives presented in the previous two sections and detail a few directions for future research.
    \end{itemize}

\section{Preliminaries} \label{sec:prelim}
We denote quantum systems by capital letters $A,B,C$ and the associated (finite-dimensional) Hilbert spaces by $\Hil_A, \Hil_B$ and $\Hil_C$ with dimensions $d_A, d_B$ and $d_C$, respectively. For a joint system $AB$, the associated Hilbert space is $\Hil_A\otimes \Hil_B$. The space of linear operators acting on $\Hil_A$ is denoted by $\B{\Hil_A}$ and the convex set of quantum states or density operators (i.e.~positive semi-definite operators in $\B{\Hil_A}$ with unit trace) is denoted by $\State{\Hil_A}$. For a unit vector $\ket{\psi}\in \Hil_A$, the pure state $\ketbra{\psi}\in \State{\Hil_A}$ is denoted by $\psi$. 

A quantum channel $\Phi:\B{\Hil_A}\to \B{\Hil_B}$ is a linear, completely positive, and trace-preserving map. By Stinespring's dilation theorem, for every quantum channel $\Phi : \B{\Hil_A}\to \B{\Hil_B}$, there exists an isometry $V: \Hil_A \to \Hil_B\otimes \Hil_E$ (called a Stinespring isometry) such that for all $X \in \B{\Hil_A}$,  $\Phi (X) = \Tr_E (VX V^{\dagger})$, where $\Tr_E$ denotes the partial trace operation over the subsystem $E$ (often called the environment). The corresponding complementary channel $\Phi^c : \B{\Hil_A}\to \B{\Hil_E}$ is then defined as $\Phi^c (X) = \Tr_B (VX V^{\dagger}).$ The adjoint $\Phi^*$ of a quantum channel $\Phi : \B{\Hil_A}\to \B{\Hil_B}$ is defined through the following relation: 
\begin{equation}
    \forall X \in \B{\Hil_A}, \, \forall Y \in \B{\Hil_B} : \quad \Tr(Y \Phi(X))= \Tr (\Phi^*(Y) X).
\end{equation}

\begin{remark}
    To make the systems on which an operator or a channel acts more explicit, we sometimes denote operators $X\in \B{\Hil_A}$ by $X_A$  and linear maps $\Phi:\B{\Hil_A}\to \B{\Hil_B}$ by $\Phi_{A\to B}$.
\end{remark}

For a bipartite operator $X_{RA}$ and a linear map $\Phi_{A\to B}$, we use the shorthand $\Phi_{A\to B}(X_{RA})$ to denote $(\id_R \otimes \Phi_{A\to B}) (X_{RA})$, where $\id_R $ is the identity map on $\B{\Hil_R}$. Similarly, $X_R$ and $X_A$ denote the reduced operators on $R$ and $A$, respectively, i.e., $X_R := \Tr_A X_{RA}$ and $X_A := \Tr_R X_{RA}$. 

The \emph{trace norm} of a linear operator $X\in \B{\Hil_A}$ is defined as $\norm{X}_1 := \Tr \sqrt{X^{\dagger}X}$. The \emph{diamond norm} of a linear map $\Phi: \B{\Hil_A}\to \B{\Hil_B}$ is defined as
\begin{equation}
    \norm{\Phi}_{\diamond} := \sup_{\norm{X}_1\leq 1} \norm{\Phi_{A\to B}(X_{RA})}_1,
\end{equation}
where the supremum is over all $X\in \B{\Hil_R\otimes \Hil_A}$ and $d_R\in \mathbb{N}$ with $\norm{X}_1\leq 1$. Here, if $\Phi$ is a quantum channel, the optimization can be restricted to pure states $\psi\in \State{\Hil_R \otimes \Hil_A}$ with $d_R=d_A$ \cite[Chapter 3]{watrous2018theory}.

The \emph{fidelity} between two quantum states $\rho, \sigma \in \State{\Hil_A}$ is defined as $F(\rho, \sigma):= \norm{\sqrt{\rho}\sqrt{\sigma}}^2_1$. The \emph{fidelity} of a quantum channel $\Phi:\B{\Hil_A}\to \B{\Hil_A}$ is defined as
\begin{equation}
    F(\Phi) := \inf_{\psi_{RA}} \bra{\psi}_{RA} (\id_R \otimes \Phi_{A\to A})(\psi_{RA}) \ket{\psi}_{RA},
\end{equation}
where the infimum is over all pure states $\psi\in \State{\Hil_R \otimes \Hil_A}$ with $d_R=d_A$.

\subsection{Channel capacities}

In this section, we introduce the different information transmission capacities of quantum channels. We begin with the one-shot framework, which will later be used to define the asymptotic and memoryless capacities. In what follows, we denote a classical message set of size $\mathscr{M}\in \mathbb{N}$ by 
\begin{equation}
    [\mathscr{M}] := \{0,1,\ldots ,\mathscr{M}-1 \}.
\end{equation}

\subsubsection{One-shot capacities}

\begin{definition} \label{def:classical-protocol} (Classical communication) \,
    Let $\mathscr{M}\in \mathbb{N}$ and $\epsilon\in [0,1)$. An $(\mathscr{M},\epsilon)$ \emph{classical code} for a channel $\Phi:\B{\Hil_A}\to \B{\Hil_B}$ is defined by the following:
\begin{itemize}
    \item Encoding states $\rho^m_{A}\in \State{\Hil_A}$ that Alice uses to encode a message $m\in [\mathscr{M}]$,
    \item Decoding POVM $\{\Lambda^m_B\}_{m \in [\mathscr{M}]}$ that Bob uses to decode the message,
\end{itemize}
such that for each message $m$, the probability that Bob successfully decodes the intended message is
\begin{equation}
    \Tr [\Lambda^m_{B}  (\Phi_{A\to B}(\rho^m_A))  ] \geq 1 -\epsilon.
\end{equation} 

The \emph{one-shot} $\epsilon-$\emph{error classical capacity} of $\Phi$ is defined as
\begin{equation}
    C_{\epsilon}(\Phi):= \log \max \{ \mathscr{M}\in \mathbb{N}: \exists (\mathscr{M},\epsilon) \text{ classical code for } \Phi \}.
\end{equation}
\end{definition}

\begin{definition} \label{def:private-protocol}
(Private classical communication) \, Let $\mathscr{M}\in \mathbb{N}$ and $\epsilon\in [0,1)$. An $(\mathscr{M},\epsilon)$ \emph{private classical code} for a channel $\Phi: \B{\Hil_A}\to \B{\Hil_B}$ is defined by the following:
\begin{itemize}
    \item Encoding states $\rho^m_{A}\in \State{\Hil_A}$ that Alice uses to encode a message $m\in [\mathscr{M}]$.
    \item Decoding POVM $\{\Lambda^m_B\}_{m\in [\mathscr{M}]}$ with an associated channel $\mathcal{D}_{B\to M}$ defined as \\ $\mathcal{D}(\cdot) = \sum_m \Tr (\Lambda^m_B (\cdot))\ketbra{m}_{M}$ that Bob uses to decode the message,
\end{itemize}
such that for each message $m\in [\mathscr{M}]$,
\begin{equation}\label{eq:F-privatecode}
    F(\ketbra{m}_{M} \otimes \sigma_E , \mathcal{D}_{B\to M}\circ \mathcal{V}_{A\to BE} (\rho^m_A)) \geq 1-\epsilon,
\end{equation}
where $\sigma_E$ is some fixed state independent of $m$ and $\mathcal{V}_{A\to BE}(\cdot ) = V (\cdot) V^{\dagger}$, where $V:\Hil_A\to \Hil_B \otimes \Hil_E$ is a Stinespring isometry of $\Phi_{A\to B}$. 

The \emph{one-shot $\epsilon-$error private classical capacity} of $\Phi$ is defined as
\begin{equation}
    C_{\epsilon}^{\text{p}}(\Phi):= \log \max \{ \mathscr{M}\in \mathbb{N}: \exists (\mathscr{M},\epsilon) \text{ private classical code for } \Phi \}.
\end{equation}

\end{definition}

According to the data processing inequality for the fidelity function, the privacy condition in Eq.~\eqref{eq:F-privatecode} implies that
\begin{align}
    \forall m\in [\mathscr{M}]: \quad \Tr [\Lambda^m_{B}  (\Phi_{A\to B}(\rho^m_A)  ] &\geq 1 -\epsilon, \\
    F(\sigma_E, \Phi^c_{A\to E}(\rho^m_A)) &\geq 1-\epsilon,
\end{align}
where $\Phi^c_{A \to E}$ denotes the complementary channel associated with the isometry $\mathcal{V}_{A\to BE}$. Thus, the goal in private classical communication is for Alice to faithfully transmit classical messages to Bob while at the same time ensuring that no potential eavesdropper (that is modeled via the environment $E$) gets any information about the transmitted messages.

\begin{definition} \label{def:entanglement-classical-protocol} (Entanglement-assisted classical communication) \, \\
Let $\mathscr{M}\in \mathbb{N}$ and $\epsilon\in [0,1)$. An $(\mathscr{M},\epsilon)$ \emph{entanglement-assisted classical code} for a channel $\Phi: \B{\Hil_A}\to \B{\Hil_B}$ is defined by the following:
\begin{itemize}
    \item An entangled state $\psi_{A'B'}\in \State{\Hil_{A'}\otimes \Hil_{B'}}$ shared between Alice and Bob,
    \item Encoding channels $\mathcal{E}^m_{A'\to A}$ that Alice uses to encode a message $m\in [\mathscr{M}]$,
    \item Decoding POVM $\{\Lambda^m_{BB'}\}_{m\in [\mathscr{M}]}$ that Bob uses to decode the message,
\end{itemize}
such that for each message $m$, 
\begin{equation}
   \Tr [\Lambda^m_{BB'}  (\Phi_{A\to B} \circ \mathcal{E}^m_{A'\to A} (\psi_{A'B'}) ) ] \geq 1 -\epsilon.
\end{equation}
The \emph{one-shot $\epsilon-$error entanglement-assisted classical capacity} of $\Phi$ is defined as
\begin{align}
    C_{\epsilon}^{\operatorname{ea}}(\Phi):= \log \max \{ \mathscr{M}\in \mathbb{N}: \exists (\mathscr{M},\epsilon)& \text{ entanglement-assisted classical code for } \Phi\}.
\end{align}
\end{definition}

\begin{definition} \label{def:quantum-protocol} (Quantum communication) \, 
    Let $d\in \mathbb{N}$ and $\epsilon\in [0,1)$. A $(d,\epsilon)$ \emph{quantum code} for a channel $\Phi: \B{\Hil_A}\to \B{\Hil_B}$ is defined by of the following:
\begin{itemize}
    \item An encoding channel $\mathcal{E}_{A'\to A}$ that Alice uses to encode quantum information,
    \item A decoding channel $\mathcal{D}_{B\to A'}$ that Bob uses to decode the information,
\end{itemize}
such that for every pure state $\psi_{RA'}$ with $d=d_R=d_{A'}$:
\begin{equation}\label{eq:quantum-code}
   \bra{\psi}_{RA'} \mathcal{D}_{B\to A'}\circ \Phi_{A\to B}\circ \mathcal{E}_{A'\to A}(\psi_{RA'})\ket{\psi}_{RA'} \geq  1-\epsilon.
\end{equation}
In other words, the fidelity of $\mathcal{D}\circ \Phi \circ \mathcal{E}$ satisfies $F(\mathcal{D}\circ \Phi \circ \mathcal{E})\geq 1-\epsilon$. The \emph{one-shot $\epsilon-$error quantum capacity} of $\Phi$ is defined as
\begin{equation}
    Q_{\epsilon}(\Phi):= \log \max \{ d\in \mathbb{N}: \exists (d,\epsilon) \text{ quantum code for } \Phi \}.
\end{equation}
\end{definition}

By definition, all one-shot $\epsilon$-error capacities take non-negative values and are monotonic in $\epsilon$, i.e., for $0\leq\epsilon\leq \epsilon'<1$ and $\mathbb{Q}\in \{C, C^{\operatorname{p}}, C^{\operatorname{ea}},Q \}$, we have $0\leq \mathbb{Q}_{\epsilon}(\Phi)\leq \mathbb{Q}_{\epsilon'}(\Phi)$. Moreover, for any channel $\Phi_{A\to B}$ and $\epsilon\in [0,1)$, it follows from the definitions that $C^{\operatorname{p}}_{\epsilon}(\Phi)\leq C_{\epsilon}(\Phi)\leq C^{\operatorname{ea}}_{\epsilon}(\Phi)$. Similarly, for $\epsilon\in (0,1)$, one can show that $Q_{\epsilon/2}(\Phi) \leq C^{\operatorname{p}}_{\epsilon}(\Phi) + 1$ \cite{khatri2024principles}. In the zero-error case, it is possible to obtain a stronger inequality: $Q_0(\Phi)\leq C^{\operatorname{p}}_0(\Phi)$, see Lemma~\ref{lemma:Q0<=P0}. Exact computation of the one-shot capacities is hard because of the complicated optimizations involved in their definitions \cite{Costa2010oneshot-NPhard}. However, in this paper, we will see that for any channel $\Psi:\B{\Hil}\to \B{\Hil}$, the one-shot capacities of $m-$fold sequential compositions $\Psi^m=\Psi \circ \Psi \circ \ldots \Psi$ can be efficiently computed in the limit of $m\to \infty$.

Below, we note some simple yet crucial properties of the one-shot capacity functions. 

\begin{lemma}\label{lemma:bottleneck-oneshot} (Bottleneck inequalities)
    Let $\Psi_{A\to B}$, $\Phi_{B\to C}$ be quantum channels. Then, for $\epsilon\in [0,1)$ and $\mathbb{Q}\in \{C, C^{\operatorname{p}}, C^{\operatorname{ea}},Q \} $,
    \begin{equation}
        \mathbb{Q}_{\epsilon}(\Phi\circ \Psi) \leq \min ( \mathbb{Q}_{\epsilon}(\Phi), \mathbb{Q}_{\epsilon}(\Psi) ).
    \end{equation}
\end{lemma}
\begin{proof}
Consider a $(d,\epsilon)$ quantum code for the channel $(\Phi\circ \Psi)_{A\to C}$, defined by the encoder and decoder pair $(\mathcal{E}_{A'\to A}, \mathcal{D}_{C\to A'})$ with $d=d_{A'}$, such that for any pure state $\psi_{RA'}$:
    \begin{equation}
   \bra{\psi_{RA'}} \mathcal{D}_{C\to A'}\circ (\Phi\circ\Psi)_{A\to C}\circ \mathcal{E}_{A'\to A}(\psi_{RA'})\ket{\psi_{RA'}} \geq  1-\epsilon.
\end{equation}
Now, by absorbing either $\Psi$ into the encoding channel $\mathcal{E}_{A'\to A}$ or $\Phi$ into the decoding channel $\mathcal{D}_{C\to A'}$, we see that the same $(d,\epsilon)$ code works for $\Phi$ and $\Psi$, which proves the desired result. We leave very similar proofs for the other capacities to the reader.
\end{proof}

\begin{lemma}\label{lemma:epsilon-delta} (Continuity-like bounds)
    Let $\Phi, \Psi : \B{\Hil_A}\to \B{\Hil_B}$ be quantum channels such that $\norm{\Phi-\Psi}_{\diamond}= \delta$. Then, for $\mathbb{Q}\in \{Q,C,C^{\operatorname{ea}} \}$ and $\epsilon\in [0,1)$ such that $\epsilon+\delta<1$,
    \begin{align}
        \mathbb{Q}_{\epsilon}(\Phi) &\leq \mathbb{Q}_{\epsilon+\delta}(\Psi)
    \end{align}
\end{lemma}
\begin{proof}
Consider a $(\mathscr{M},\epsilon)$ classical code $\{\rho^m_A, \Lambda^m_B \}_{m\in [\mathscr{M}]}$ for $\Phi_{A\to B}$ (see Definition~\ref{def:classical-protocol}) such that
\begin{equation}
\forall m\in [\mathscr{M}]: \quad    \Tr [\Lambda^m_{B}  (\Phi_{A\to B}(\rho^m_A)  ] \geq 1 -\epsilon.
\end{equation}
For each $m$, it is then easy to see that
\begin{align}
    \Tr [\Lambda^m_B  (\Psi_{A\to B}(\rho^m_A)  ] &= \Tr [\Lambda^m_B  (\Phi_{A\to B}(\rho^m)  ] - \Tr [\Lambda^m_B  (\Phi - \Psi)_{A\to B}(\rho^m_A)  ]  \nonumber \\
    &\geq 1- (\epsilon + \delta),
\end{align}
where the last inequality follows from the fact that 
\begin{align}
    \Tr [\Lambda^m_B  (\Phi - \Psi)_{A\to B}(\rho^m_A)  ] &\leq \norm{\Lambda^m_B}_{\infty} \norm{(\Phi-\Psi)_{A\to B}(\rho^m_A)}_1 \nonumber \\
    &\leq \norm{\Phi-\Psi}_{\diamond} = \delta.
\end{align}
Hence, $\{\rho^m_A, \Lambda^m_B \}_{m\in [\mathscr{M}]}$ is a $(\mathscr{M},\epsilon+\delta)$ classical code for $\Psi_{A\to B}$.

Similarly, consider a $(d,\epsilon)$ quantum code $(\mathcal{E}_{A'\to A}, \mathcal{D}_{B\to A'})$ for $\Phi_{A\to B}$ (see Definition~\ref{def:quantum-protocol}) satisfying
\begin{equation}
\bra{\psi}_{RA'} \mathcal{D}_{B\to A'}\circ \Phi_{A\to B}\circ \mathcal{E}_{A'\to A}(\psi_{RA'})\ket{\psi}_{RA'} \geq  1-\epsilon
\end{equation}
for any pure state $\psi_{RA'}$, where $d=d_{A'}=d_R$. Then, it is easy to see that
\begin{align}
    \bra{\psi}&_{RA'} \mathcal{D}_{B\to A'}\circ \Psi_{A\to B}\circ \mathcal{E}_{A'\to A}(\psi_{RA'})\ket{\psi}_{RA'} \nonumber \\
    &=\Tr[\psi_{RA'} (\mathcal{D}_{B\to A'}\circ \Psi_{A\to B}\circ \mathcal{E}_{A'\to A}(\psi_{RA'}) )] \nonumber  \\ 
    &= \Tr[\psi_{RA'} (\mathcal{D}_{B\to A'}\circ \Phi_{A\to B}\circ \mathcal{E}_{A'\to A}(\psi_{RA'})) ] - \Tr[\psi_{RA'} (\mathcal{D}_{B\to A'}\circ (\Phi-\Psi)_{A\to B}\circ \mathcal{E}_{A'\to A}(\psi_{RA'})) ] \nonumber \\
    &\geq 1-(\epsilon + \delta),
\end{align}
where the last inequality follows from the fact that
\begin{align}
    \Tr[\psi_{RA'} (\mathcal{D}_{B\to A'}\circ (\Phi-\Psi)_{A\to B}\circ \mathcal{E}_{A'\to A}(\psi_{RA'})) ] &\leq \norm{\psi_{RA'}}_{\infty} \norm{\mathcal{D}_{B\to A'}\circ (\Phi-\Psi)_{A\to B}\circ \mathcal{E}_{A'\to A}(\psi_{RA'})}_1 \nonumber \\
    &\leq \norm{\mathcal{D}_{B\to A'}\circ (\Phi-\Psi)_{A\to B}\circ \mathcal{E}_{A'\to A}}_{\diamond} \nonumber \\
    &\leq \norm{\Phi-\Psi}_{\diamond} = \delta,
\end{align}
where we have used sub-multiplicativity of the diamond norm \cite[Proposition 3.48]{watrous2018theory} and the fact that $\norm{\Phi}_{\diamond}=1$ for any channel $\Phi$ \cite[Proposition 3.44]{watrous2018theory}. Thus, $(\mathcal{E}_{A'\to A}, \mathcal{D}_{A\to B'})$ is a $(d,\epsilon+\delta)$ quantum code for $\Psi_{A\to B}$. The proof for the entanglement-assisted classical capacity works similarly.
\end{proof}

\subsubsection{Asymptotic capacities}

By using the one-shot channel capacities introduced in the previous section, we can succinctly define the (asymptotic) channel capacities as the optimal rates of information transmission via asymptotically many uses of a given channel.

\begin{definition}\label{def:capacity} 
    Let $\Phi:\B{\Hil_A}\to \B{\Hil_B}$ be a quantum channel. We define the \emph{classical}, \emph{entanglement-assisted classical}, \emph{private classical}, and \emph{quantum} capacity of $\Phi$, respectively, as
 \begin{align*}
     C(\Phi) &:= \inf_{\epsilon\in (0,1)} \liminf_{n\to \infty} \frac{1}{n} C_{\epsilon}(\Phi^{\otimes n}), \\
     C_{\operatorname{ea}}(\Phi) &:= \inf_{\epsilon\in (0,1)} \liminf_{n\to \infty} \frac{1}{n} C^{\operatorname{ea}}_{\epsilon}(\Phi^{\otimes n}), \\
     P(\Phi) &:= \inf_{\epsilon\in (0,1)} \liminf_{n\to \infty} \frac{1}{n} C^{\operatorname{p}}_{\epsilon}(\Phi^{\otimes n}), \\  
     Q(\Phi) &:= \inf_{\epsilon\in (0,1)} \liminf_{n\to \infty} \frac{1}{n} Q_{\epsilon}(\Phi^{\otimes n}). 
 \end{align*}   
 The corresponding \emph{strong converse} capacities are defined as  
  \begin{align*}
     C^{\dagger}(\Phi) &:= \sup_{\epsilon\in (0,1)} \limsup_{n\to \infty} \frac{1}{n} C_{\epsilon}(\Phi^{\otimes n}), \\
     C_{\operatorname{ea}}^{\dagger}(\Phi) &:= \sup_{\epsilon\in (0,1)} \limsup_{n\to \infty} \frac{1}{n} C^{\operatorname{ea}}_{\epsilon}(\Phi^{\otimes n}), \\ 
     P^{\dagger}(\Phi) &:= \sup_{\epsilon\in (0,1)} \limsup_{n\to \infty} \frac{1}{n} C^{\operatorname{p}}_{\epsilon}(\Phi^{\otimes n}), \\   
     Q^{\dagger}(\Phi) &:= \sup_{\epsilon\in (0,1)} \limsup_{n\to \infty} \frac{1}{n} Q_{\epsilon}(\Phi^{\otimes n}). 
 \end{align*} 
\end{definition}

By definition, the strong converse capacities are always at least as large as the normal capacities:
\begin{align}
    Q (\Phi) \leq Q^{\dagger}(\Phi), \quad P(\Phi)\leq P^{\dagger}(\Phi), \quad C(\Phi) \leq C^{\dagger}(\Phi), \quad C_{\operatorname{ea}}(\Phi)\leq C^{\dagger}_{\operatorname{ea}}(\Phi).
\end{align}
As already noted, if $C(\Phi)=C^{\dagger}(\Phi)$, the capacity provides a sharp divide between achievable and unachievable rates of communication, and we say that $\Phi$ satisfies the \emph{strong converse} property for classical capacity. One can similarly define the strong converse property for other capacities. It turns out that all quantum channels satisfy the strong converse property for the entanglement-assisted classical capacity \cite{Bennett1999entanglement, Bennett2002entanglement, Holevo2002entanglement}. For the other capacities, determining if the strong converse property holds for all quantum channels has been a fundamental open problem (see \cite[Section 3]{cheng2024strong} and references therein). In this work, we will see that $m-$fold sequential iterates $\Psi^m=\Psi \circ \Psi \circ \ldots \circ \Psi$ of \emph{any} channel $\Psi:\B{\Hil}\to \B{\Hil}$ satisfy the strong converse property for all the capacities as $m\to \infty$.

The capacities satisfy the following inequalities:
\begin{alignat}{2}\label{eq:Q<P<C}
    Q(\Phi) &\leq \,P(\Phi) &&\leq C(\Phi) \leq C_{\operatorname{ea}}(\Phi) \\ 
    Q^{\dagger}(\Phi) &\leq P^{\dagger}(\Phi) &&\leq C^{\dagger}(\Phi) \leq C_{\operatorname{ea}}(\Phi),
\end{alignat}
where the relations between $P,C,$ and $C_{\operatorname{ea}}$ are immediate from Definitions~\ref{def:classical-protocol}-\ref{def:entanglement-classical-protocol}, and the relation between $Q$ and $P$ follows from \cite{Devetak2005capacity} (see also \cite[Theorem 13.6.1]{Wilde2013book}).

\begin{definition}\label{def:zero-error-capacity}
    Let $\Phi:\B{\Hil_A}\to \B{\Hil_B}$ be a quantum channel. We define the \emph{zero-error} \emph{classical}, \emph{private classical}, \emph{entanglement-assisted classical,} and \emph{quantum} capacity of $\Phi$, respectively, as
\begin{align*}
     C_{\operatorname{zero}}(\Phi) &:= \lim_{n\to \infty} \frac{1}{n} C_{0}(\Phi^{\otimes n}) = \sup_{n\in \mathbb{N}} \frac{1}{n} C_{0}(\Phi^{\otimes n}), \\
     P_{\operatorname{zero}}(\Phi) &:= \lim_{n\to \infty} \frac{1}{n} C^{\operatorname{p}}_{0}(\Phi^{\otimes n})= \sup_{n\in \mathbb{N}} \frac{1}{n} C^{\operatorname{p}}_{0}(\Phi^{\otimes n}), \\ 
     C^{\operatorname{ea}}_{\operatorname{zero}}(\Phi) &:= \lim_{n\to \infty} \frac{1}{n} C^{\operatorname{ea}}_{0}(\Phi^{\otimes n}) = \sup_{n\in \mathbb{N}} \frac{1}{n} C^{\operatorname{ea}}_{0}(\Phi^{\otimes n}), \\
     Q_{\operatorname{zero}}(\Phi) &:= \lim_{n\to \infty} \frac{1}{n} Q_{0}(\Phi^{\otimes n}) = \sup_{n\in \mathbb{N}} \frac{1}{n} Q_{0}(\Phi^{\otimes n}). 
 \end{align*}   
\end{definition}

\begin{remark}\label{remark:zero-super}
Suppose $(\mathcal{E}_1, \mathcal{D}_1)$ and $(\mathcal{E}_2, \mathcal{D}_2)$ are zero-error quantum codes (Definition~\ref{def:quantum-protocol}) for channels $\Phi_1$ and $\Phi_2$, respectively. Then, it is easy to show that $(\mathcal{E}_1\otimes \mathcal{E}_2, \mathcal{D}_1 \otimes \mathcal{D}_2)$ is a zero-error quantum code for $\Phi_1\otimes \Phi_2$. Hence, the one-shot zero-error quantum capacity is super-additive: $Q_0(\Phi_1 \otimes \Phi_2) \geq Q_0(\Phi_1)+Q_0(\Phi_2)$. A similar argument establishes the superadditivity for the other capacities. Consequently, the limits in Definition~\ref{def:zero-error-capacity} can be shown to exist by a simple application of Fekete's Lemma \cite{Fekete1923}, and are equal to the suprema of the corresponding sequences. 
\end{remark}

As before, the capacities satisfy the following relation:
\begin{equation}\label{eq:Q<P<Czero}
    Q_{\operatorname{zero}}(\Phi) \leq P_{\operatorname{zero}}(\Phi) \leq C_{\operatorname{zero}}(\Phi) \leq C^{\operatorname{ea}}_{\operatorname{zero}}(\Phi).
\end{equation}
Note that the inequalities in Eqs.~\eqref{eq:Q<P<C}, \eqref{eq:Q<P<Czero} between $Q,P,C$ can be (maximally) strict \cite{Leung2014privacy, Leung2016privacy}. We will study zero-error capacities in more detail in Section~\ref{sec:zero-error}.

We collect some bottleneck inequalities between the channel capacities below.

\begin{lemma}\label{lemma:bottleneck-regular}
    Let $\Psi_{A\to B}$, $\Phi_{B\to C}$ be quantum channels. Then, for $\mathbb{Q}\in \{C,P,C^{\operatorname{ea}}, Q \}$,
    \begin{align}
        \mathbb{Q}(\Phi\circ \Psi) &\leq \min ( \mathbb{Q}(\Phi), \mathbb{Q}(\Psi) ) \\ 
        \mathbb{Q}^{\dagger}(\Phi\circ \Psi) &\leq \min ( \mathbb{Q}^{\dagger}(\Phi), \mathbb{Q}^{\dagger}(\Psi) ) \\ 
        \mathbb{Q}_{\operatorname{zero}}(\Phi\circ \Psi) &\leq \min ( \mathbb{Q}_{\operatorname{zero}}(\Phi), \mathbb{Q}_{\operatorname{zero}}(\Psi) ).
    \end{align}
\end{lemma}
\begin{proof}
    The proof follows easily from the one-shot bottleneck inequalities (Lemma~\ref{lemma:bottleneck-oneshot}).
\end{proof}

Before closing this section, we note that if Alice and Bob are allowed to classically communicate with each other, then they can potentially boost the optimal rates of information transmission via the noisy channel that they share. We discuss the definitions of channel capacities assisted by classical communication in Appendix~\ref{appen:assisted}.

\subsection{Entropic quantities}
In this section, we define some divergences and entropies that are used to characterize capacities of quantum channels. For a more exhaustive account of these functions, we refer the readers to the excellent books \cite[Chapters 3-4]{Tomamichel2016book} and \cite[Chapter 7]{khatri2024principles}. 

\begin{definition}\label{def:divergence}
    Let $\rho\in \State{\Hil_A}$ be a state and $\sigma\in \B{\Hil_A}$ be a positive semi-definite operator.
    \begin{itemize}
        \item The \emph{(Umegaki) relative entropy} between $\rho$ and $\sigma$ is defined as \cite{Umegaki1962relative}
\begin{align*}
    D(\rho \Vert \sigma) := \begin{cases}
        \Tr \rho (\log \rho -\log \sigma) \quad &\text{if } \supp \rho \subseteq \supp \sigma \\
        + \infty  &\text{otherwise}
    \end{cases}
\end{align*}
        \item The $\alpha$-\emph{Petz R\'enyi relative entropy} between $\rho$ and $\sigma$ with $\alpha\in (0,1)\cup (1,\infty)$ is defined as \cite{Petz1985relative, Petz1986relative}
\begin{align*}
    D_{\alpha}(\rho \Vert \sigma) := \begin{cases}
        \frac{1}{\alpha-1} \log \Tr \left[ \rho^{\alpha} \sigma^{1-\alpha}  \right] \quad &\begin{cases}
            \text{if } \alpha\in (0,1) \text{ and } \rho\sigma\neq 0 \\
            \text{or } \alpha\in (1,\infty) \text{ and } \supp \rho \subseteq \supp \sigma
        \end{cases}  \\
        + \infty &\text{otherwise}
    \end{cases}
\end{align*}
        \item The $\alpha$-\emph{sandwiched R\'enyi relative entropy} between $\rho$ and $\sigma$ with $\alpha\in (0,1)\cup (1,\infty)$ is defined as \cite{MullerLennert2013sandwich, Wilde2014converse}
\begin{align*}
    \widetilde{D}_{\alpha}(\rho \Vert \sigma) := \begin{cases}
        \frac{1}{\alpha-1} \log \Tr \left[ \left(\sigma^{\frac{1-\alpha}{2\alpha}} \rho \sigma^{\frac{1-\alpha}{2\alpha}} \right)^{\alpha}  \right] \quad &\begin{cases}
            \text{if } \alpha\in (0,1) \text{ and } \rho\sigma \neq 0 \\
            \text{or } \alpha\in (1,\infty) \text{ and } \supp \rho \subseteq \supp \sigma
        \end{cases} \\
        + \infty &\text{otherwise}
    \end{cases}
\end{align*}
    \item The \emph{max-relative entropy} between $\rho$ and $\sigma$ is defined as \cite{Datta2009max, Renner2006PhD}
\begin{align*}
    D_{\max}(\rho \Vert \sigma) := \begin{cases}
        \log \norm{\sigma^{-1/2} \rho \sigma^{-1/2} }_{\infty} \quad &\text{if } \supp \rho \subseteq \supp \sigma \\
        + \infty &\text{otherwise}
    \end{cases}
\end{align*}
    \item The \emph{min-relative entropy} between $\rho$ and $\sigma$ is defined as \cite{Datta2009max}
\begin{align*}
    D_{\min}(\rho \Vert \sigma) := \begin{cases}
        -\log  \Tr (\Pi_{\rho} \sigma) \quad &\text{if } \rho\sigma\neq0 \\
        + \infty &\text{otherwise},
    \end{cases}
\end{align*}
    where $\Pi_{\rho}$ denotes the orthogonal projection onto support of $\rho$.
    \item The $\epsilon-$\emph{hypothesis testing relative entropy} between $\rho$ and $\sigma$ with $\epsilon\in [0,1]$ is defined as \cite{Buscemi2010hypothesis, WRenner2012hypo}
    \end{itemize}
    \begin{equation*}
       D^{\epsilon}_H (\rho \Vert \sigma) := -\log \inf \{ \Tr (\Lambda \sigma  ) : 0\leq \Lambda \leq \iden, \, \Tr (\Lambda \rho) \geq 1-\epsilon \}
    \end{equation*}
\end{definition}

\begin{remark}\label{remark:Dmax-quasi}
    The max-relative entropy can be alternatively expressed as \cite{Datta2009max}
    \begin{equation}
        D_{\max}(\rho \Vert \sigma) = \log \inf \{\lambda : \rho \leq \lambda \sigma \},
    \end{equation}
    where the infimum over an empty set is assumed to be $+\infty$. Moreover, it is quasi-convex: for a probability distribution $\{p_i \}_i$, states $\{\rho_i \}_i$, and positive operators $\{\sigma_i \}_i$, 
    \begin{equation}
        D_{\max} \left( \sum_i p_i \rho_i \Vert \sum_i p_i \sigma_i \right) \leq \max_i D_{\max}(\rho_i \Vert \sigma_i),
    \end{equation}
    where equality holds if $\operatorname{supp}\rho_i, \operatorname{supp}\sigma_i \subseteq \Hil_i$ and $\Hil_i \perp \Hil_j$ for $i\neq j$.
\end{remark}

The \emph{data-processing inequality} (DPI) is the defining property of these divergences. 

\begin{lemma} \label{lemma:DPI} (Data-processing inequality) \,
     Let $\rho\in \State{\Hil_A}$ be a state, $\sigma\in \B{\Hil_A}$ be a positive semi-definite operator, and $\Phi:\B{\Hil_A}\to \B{\Hil_B}$ be a quantum channel. Then, for $\mathbb{D}\in \{D,D_{\alpha}, \widetilde{D}_{\alpha},D_{\max},D_{\min},D_H^{\epsilon}\}$, 
    \begin{align*}
        \mathbb{D}(\Phi(\rho)\Vert \Phi(\sigma)) &\leq \mathbb{D}(\rho \Vert \sigma).
    \end{align*}
    For $\mathbb{D}=D_{\alpha}$, this holds for $\alpha\in (0,1)\cup(1,2]$ and for $\mathbb{D}=\widetilde{D}_{\alpha}$, this holds for $\alpha\in [1/2, 1) \cup (1,\infty)$.
\end{lemma}
\begin{proof}
    See \cite[Chapter 3]{Tomamichel2016book} or \cite[Chapter 7]{khatri2024principles} and references therein.
\end{proof}

Using these divergences as parent quantities, we now introduce several information measures for states and channels. We refer the readers to \cite{khatri2024principles} for a more coherent discussion of these quantities.

\begin{definition}\label{def:state-measures} (State measures)
    Let $\rho_{AB}\in \State{\Hil_A\otimes \Hil_B}$ be a bipartite state. We define the
\begin{itemize}
    \item \emph{mutual}, $\alpha$-\emph{mutual}, \emph{max-mutual}, and $\epsilon$-\emph{hypothesis testing mutual information} of $\rho_{AB}$ as 
    \begin{align*}
        I(A:B)_{\rho} &:= \inf_{\sigma_B} D(\rho_{AB} \Vert \rho_A \otimes \sigma_B ), \\
        \widetilde{I}_{\alpha}(A:B)_{\rho} &:= \inf_{\sigma_B} \widetilde{D}_{\alpha}(\rho_{AB} \Vert \rho_A \otimes \sigma_B ), \\
        I_{\max}(A:B)_{\rho} &:= \inf_{\sigma_B} D_{\max}(\rho_{AB} \Vert \rho_A \otimes \sigma_B ), \\
        I^{\epsilon}_H(A:B)_{\rho} &:= \inf_{\sigma_B} D^{\epsilon}_H(\rho_{AB} \Vert \rho_A \otimes \sigma_B ),
    \end{align*}
    respectively, where the infimum is over all states $\sigma_B\in \State{\Hil_B}$.
    \item \emph{coherent}, $\alpha$-coherent, \emph{max-coherent}, and $\epsilon$-\emph{hypothesis testing coherent information} of $\rho_{AB}$ as 
    \begin{align*}
        I(A\rangle B)_{\rho} &:= \inf_{\sigma_B} D(\rho_{AB}\Vert \iden_A \otimes \sigma_B), \\
        \widetilde{I}_{\alpha}(A\rangle B)_{\rho} &:= \inf_{\sigma_B} \widetilde{D}_{\alpha}(\rho_{AB}\Vert \iden_A \otimes \sigma_B), \\
        I_{\max}(A\rangle B)_{\rho} &:= \inf_{\sigma_B} D_{\max}(\rho_{AB}\Vert \iden_A \otimes \sigma_B) \\
        I^{\epsilon}_H(A\rangle B)_{\rho} &:= \inf_{\sigma_B} D_H^{\epsilon}(\rho_{AB}\Vert \iden_A \otimes \sigma_B),
    \end{align*}
    respectively, where the infimum is over all states $\sigma_B\in \State{\Hil_B}$.
\end{itemize}
\end{definition}

\begin{definition}\label{def:channel-measures} (Channel measures)
    Let $\Phi:\B{\Hil_A}\to \B{\Hil_B}$ be a quantum channel. We define 
    \begin{itemize}
        \item \emph{mutual}, $\alpha$-\emph{mutual}, \emph{max-mutual}, and $\epsilon$-\emph{hypothesis testing mutual information} of $\Phi$ as 
        \begin{align*}
            I(\Phi) &:= \sup_{\rho_{RA}} I(R:B)_{\Phi_{A\to B}(\rho_{RA})} \\
            \widetilde{I}_{\alpha}(\Phi) &:= \sup_{\rho_{RA}} \widetilde{I}_{\alpha}(R:B)_{\Phi_{A\to B}(\rho_{RA})}, \\
            I_{\max}(\Phi) &:= \sup_{\rho_{RA}} I_{\max}(R:B)_{\Phi_{A\to B}(\rho_{RA})}, \\
            I_H^{\epsilon}(\Phi) &:= \sup_{\rho_{RA}} I_H^{\epsilon}(R:B)_{\Phi_{A\to B}(\rho_{RA})},
        \end{align*}
        where the optimization is over all states $\rho_{RA}\in \State{\Hil_R \otimes \Hil_A}$ and $d_R\in \mathbb{N}$.
        \item \emph{Holevo}, $\alpha$-\emph{Holevo}, \emph{max-Holevo}, and $\epsilon$-\emph{hypothesis testing Holevo information} of $\Phi$ as
        \begin{align*}
            \chi(\Phi) &:= \sup_{\rho_{MA}} I(M:B)_{\Phi_{A\to B}(\rho_{MA})} \\
            \widetilde{\chi}_{\alpha}(\Phi) &:= \sup_{\rho_{MA}} \widetilde{I}_{\alpha}(M:B)_{\Phi_{A\to B}(\rho_{MA})}, \\
            \chi_{\max}(\Phi) &:= \sup_{\rho_{MA}} I_{\max}(M:B)_{\Phi_{A\to B}(\rho_{MA})}, \\
            \chi_H^{\epsilon}(\Phi) &:= \sup_{\rho_{MA}} I_H^{\epsilon}(M:B)_{\Phi_{A\to B}(\rho_{MA})},
        \end{align*}
        where the optimization is over all cq states $\rho_{MA}$ and $d_M\in \mathbb{N}$.
        \item \emph{coherent}, $\alpha$-\emph{coherent}, \emph{max-coherent}, and $\epsilon$-\emph{hypothesis testing coherent information} of $\Phi$ as 
        \begin{align*}
            I_c(\Phi) &:= \sup_{\rho_{RA}} I(R\rangle B)_{\Phi_{A\to B}(\rho_{RA})} \\
            \widetilde{I}^c_{\alpha}(\Phi) &:= \sup_{\rho_{RA}} \widetilde{I}_{\alpha}(R \rangle B)_{\Phi_{A\to B}(\rho_{RA})}, \\
            I^c_{\max}(\Phi) &:= \sup_{\rho_{RA}} I_{\max}(R \rangle B)_{\Phi_{A\to B}(\rho_{RA})}, \\
            I^{c,\epsilon}_{H}(\Phi) &:= \sup_{\rho_{RA}} I_H^{\epsilon}(R \rangle B)_{\Phi_{A\to B}(\rho_{RA})},
        \end{align*}
        where the optimization is over all states $\rho_{RA}\in \State{\Hil_R \otimes \Hil_A}$ and $d_R\in \mathbb{N}$.
        \item \emph{private information} of $\Phi$ as
        \begin{align*}
            I_p(\Phi) &:= \sup_{\rho_{MA}} \left(I(M : B)_{\Phi_{A\to B}(\rho_{MA})} - I(M:E)_{\Phi^c_{A\to E}(\rho_{MA})} \right)
        \end{align*}
        where the optimization is over all cq states $\rho_{MA}$ and $d_M\in \mathbb{N}$.
    \end{itemize}
\end{definition}

\begin{remark}\label{remark:pure}
    The optimizations in Definition~\ref{def:channel-measures} over states $\rho_{RA}\in \State{\Hil_R \otimes \Hil_A}$ with arbitrary dimension $d_R\in \mathbb{N}$ of the reference system can be restricted to just pure states $\psi_{RA}\in \State{\Hil_R\otimes \Hil_A}$, where the reference system has the same dimension as $A$: $d_R=d_A$. The argument is a standard one, and uses purification and Schmidt decomposition, along with data-processing of the underlying divergences (Lemma~\ref{lemma:DPI}), see e.g., \cite{khatri2024principles}.
\end{remark} 

\subsubsection{Entropic bounds on channel capacities}

By connecting the task of channel coding with 
hypothesis testing, one can establish converse bounds on the one-shot channel capacities in terms of the various correlation measures introduced above \cite{Polyanskiy2010meta, Polyanskiy2010converse, Sharma2013meta}. We note these bounds in the following lemma.

\begin{lemma}\label{lemma:one-shot-converse}
    Let $\Phi:\B{\Hil_A}\to \B{\Hil_B}$ be a quantum channel, $\epsilon\in [0,1)$, and $\alpha>1$. Then,
    \begin{alignat*}{2}
        C_{\epsilon}(\Phi) &\leq \chi^{\epsilon}_H(\Phi) &&\leq \widetilde{\chi}_{\alpha}(\Phi) + \frac{\alpha}{\alpha-1} \log (\frac{1}{1-\epsilon}), \\
        C^{\operatorname{ea}}_{\epsilon}(\Phi) &\leq I^{\epsilon}_H(\Phi) &&\leq \widetilde{I}_{\alpha}(\Phi) + \frac{\alpha}{\alpha-1} \log (\frac{1}{1-\epsilon}), \\
        Q_{\epsilon}(\Phi) &\leq I^{c,\epsilon}_{H}(\Phi) &&\leq \widetilde{I}^c_{\alpha}(\Phi) + \frac{\alpha}{\alpha-1} \log (\frac{1}{1-\epsilon}).
    \end{alignat*}
    Similarly, 
    \begin{alignat*}{2}
        C_{\epsilon}(\Phi) &\leq \chi_H^{\epsilon}(\Phi) &&\leq \chi_{\max}(\Phi) + \log (\frac{1}{1-\epsilon}), \\
        C^{\operatorname{ea}}_{\epsilon}(\Phi) &\leq I_H^{\epsilon}(\Phi) &&\leq I_{\max}(\Phi) + \log (\frac{1}{1-\epsilon}), \\
        Q_{\epsilon}(\Phi) &\leq I_H^{c,\epsilon}(\Phi) &&\leq I^c_{\max}(\Phi) + \log (\frac{1}{1-\epsilon}).
    \end{alignat*}
\end{lemma}
\begin{proof}
    The bounds on classical and entanglement-assisted classical capacity are from \cite{Matthews2014converse} (see also \cite{Wilde2014converse}). Simpler proofs for the entanglement-assisted capacity bound are given in \cite{Anshu2019oneshot, Qi2018position}. We refer the readers to \cite{khatri2024principles} for a unified derivation of all the bounds.
\end{proof}

It turns out that the one-shot upper bounds, given above, become tight in the limit of the number of channel uses $n\to \infty$, which allows us to provide regularized entropic formulas for the asymptotic channel capacities from Definition~\ref{def:capacity}. 

\begin{theorem}\label{theorem:Cea-I} \cite{Bennett1999entanglement, Bennett2002entanglement, Holevo2002entanglement}
    Let $\Phi:\B{\Hil_A}\to \B{\Hil_B}$ be a quantum channel. Then,
    \begin{equation*}
        C_{\operatorname{ea}}(\Phi) = C^{\dagger}_{\operatorname{ea}}(\Phi) = I(\Phi).
    \end{equation*}
\end{theorem}

\begin{theorem}\label{theorem:LSD+CP} 
    For a quantum channel $\Phi:\B{\Hil_A}\to \B{\Hil_B}$, the following is true:
    \begin{align*}
      \text{\cite{Holevo1998capacity, Schumacher1997capacity}} \quad  C(\Phi) &= \lim_{n\to \infty} \frac{1}{n} \chi (\Phi^{\otimes n}) = \sup_{n\in \mathbb{N}} \frac{1}{n} \chi (\Phi^{\otimes n}), \\
      \text{\cite{Cai2004private, Devetak2005capacity} } \,\,\,\,  P(\Phi) &= \lim_{n\to \infty} \frac{1}{n} I_p (\Phi^{\otimes n}) = \sup_{n\in \mathbb{N}} \frac{1}{n} I_p (\Phi^{\otimes n}), \\
      \text{\cite{Lloyd1997capacity, Shor2002capacity, Devetak2005capacity}} \quad  Q(\Phi) &= \lim_{n\to \infty} \frac{1}{n} I_c (\Phi^{\otimes n}) = \sup_{n\in \mathbb{N}} \frac{1}{n} I_c (\Phi^{\otimes n}).
    \end{align*}
\end{theorem}

The information measure $\chi$ is super-additive: $\chi(\Phi \otimes \Psi)\geq \chi(\Phi)+\chi(\Psi)$ for all channels $\Phi$ and $\Psi$, and the same is true for $I_p$ and $I_c$. Moreover, the inequality here can be strict \cite{smolin1998noisy, Hastings2009, Smith2009superadd}. Hence, apart from special channels for which these information measures are additive (such as for Hadamard channels \cite{Winter2016potential}), the above capacity expressions become intractable because of regularization. In fact, the capacities are not even known to be computable in general \cite{wolf2024decidable}. Moreover, no such expressions are known for the strong converse capacities. 

In the following, we collect some simple upper bounds on the strong converse capacities in terms of entanglement measures. Let us begin by introducing the relevant definitions.

\begin{definition} \label{def:entanglement-measures} (Entanglement measures) \,
    Let $\rho_{AB}$ be a bipartite state. We define the \emph{relative}, $\alpha$-\emph{relative}, \emph{max-relative}, and $\epsilon$-\emph{hypothesis testing relative entropy of entanglement} of $\rho_{AB}$ as 
\begin{align*}\label{eq:EHstate}
    E(A:B)_{\rho} &:= \inf_{\sigma_{AB}\in \textrm{SEP}(A:B)} D(\rho_{AB}\Vert \sigma_{AB}) \\  
    \widetilde{E}_{\alpha}(A:B)_{\rho} &:= \inf_{\sigma_{AB}\in \textrm{SEP}(A:B)} \widetilde{D}_{\alpha}(\rho_{AB}\Vert \sigma_{AB}), \\
    E_{\max}(A:B)_{\rho} &:= \inf_{\sigma_{AB}\in \textrm{SEP}(A:B)} D_{\max}(\rho_{AB}\Vert \sigma_{AB}), \\
    E^{\epsilon}_H(A:B)_{\rho} &:= \inf_{\sigma_{AB}\in \textrm{SEP}(A:B)} D^{\epsilon}_H(\rho_{AB}\Vert \sigma_{AB}),
\end{align*}
respectively, where the optimization is over the set of separable states $\textrm{SEP}(A:B)$.
\end{definition}

The defining property of any entanglement measure is that it is non-increasing under \emph{local operations and classical communication} (LOCC), which encapsulates the requirement that starting from an arbitrary state $\rho_{AB}$, Alice and Bob should not be able to increase the amount of entanglement they share by only doing local operations and communicating classically. It is easy to use data-processing of the underlying divergences (Lemma~\ref{lemma:DPI}) to show that the measures in Definition~\ref{def:entanglement-measures} satisfy this property: for any LOCC channel $\mathcal{L}_{AB\to A'B'}$, we have $E(A':B')_{\mathcal{L}(\rho)}\leq E(A:B)_{\rho}$.

For any quantum channel $\Phi_{A\to B}$, we can use the entanglement measures from Definition~\ref{def:entanglement-measures} to quantify the channel's ability to preserve entanglement with an arbitrary reference $R$ as follows.

\begin{definition}\label{def:channel-entanglement-measures}
    Let $\Phi_{A\to B}$ be a quantum channel. We define the \emph{relative}, $\alpha$-\emph{relative}, \emph{max-relative}, and $\epsilon$-\emph{hypothesis testing relative entropy of entanglement} of $\Phi$ as 
        \begin{align*}
            E(\Phi ) &:= \sup_{\rho_{RA}} E(R:B)_{\Phi_{A\to B}(\rho_{RA})} \\
            \widetilde{E}_{\alpha}(\Phi ) &:= \sup_{\rho_{RA}} \widetilde{E}_{\alpha}(R:B)_{\Phi_{A\to B}(\rho_{RA})}, \\
            E_{\max}(\Phi ) &:= \sup_{\rho_{RA}} E_{\max}(R:B)_{\Phi_{A\to B}(\rho_{RA})}, \\
            E_H^{\epsilon}(\Phi ) &:= \sup_{\rho_{RA}} E_{H}^{\epsilon}(R:B)_{\Phi_{A\to B}(\rho_{RA})},
        \end{align*}
        where the optimization is over all states $\rho_{RA}\in \State{\Hil_R \otimes \Hil_A}$ and $d_R\in \mathbb{N}$. It suffices to restrict the optimization to pure states $\psi_{RA}$ with $d_R=d_A$ (see Remark~\ref{remark:pure}).
\end{definition}

With the relevant definitions in place, we are ready to state the promised converse bounds.

\begin{lemma}\label{lemma:QP<=E}
Let $\Phi:\B{\Hil_A}\to \B{\Hil_B}$ be a quantum channel, $\epsilon\in [0,1)$, and $\alpha>1$. Then,
    \begin{align}
        Q_{\epsilon}(\Phi) &\leq E^{\epsilon}_H(\Phi) \leq \widetilde{E}_{\alpha}(\Phi) + \frac{\alpha}{\alpha-1} \log (\frac{1}{1-\epsilon}), \\
        C^p_{\epsilon}(\Phi) &\leq E^{\epsilon}_H(\Phi) \leq \widetilde{E}_{\alpha}(\Phi) + \frac{\alpha}{\alpha-1} \log (\frac{1}{1-\epsilon}).
    \end{align}
\end{lemma}

\begin{lemma}\label{lemma:strong-converse}
    Let $\Phi:\B{\Hil_A}\to \B{\Hil_B}$ be a quantum channel. Then, for all $\alpha>1$
    \begin{align}
        Q^{\dagger}(\Phi) \leq P^{\dagger}(\Phi) &\leq \widetilde{E}_{\alpha} (\Phi) \leq E_{\max}(\Phi) \\ 
        C^{\dagger}(\Phi) &\leq \lim_{n\to \infty}  \frac{1}{n} \widetilde{\chi}_{\alpha} (\Phi^{\otimes n}) = \sup_{n\in \mathbb{N}} \frac{1}{n} \widetilde{\chi}_{\alpha} (\Phi^{\otimes n}).
    \end{align}
\end{lemma}
\begin{proof}
    The relative entropy of entanglement upper bound on the strong converse private capacity was proven in \cite{Wilde2017private}. The max-relative entropy of entanglement is known to be an upper bound even on the private classical capacity assisted with two-way classical communication \cite{Christandl2017max}. 
    For the upper bound on classical capacity, note that for all $\alpha>1$, $\eps\in[0,1)$ and $n\in\mathbb{N}$ (Lemma~\ref{lemma:one-shot-converse}):
    \begin{equation}
       \frac{1}{n} C_{\epsilon}(\Phi^{\otimes n}) \leq \frac{1}{n} \widetilde{\chi}_{\alpha}(\Phi^{\otimes n}) + \frac{\alpha}{n(\alpha-1)}\log(\frac{1}{1-\eps}),
    \end{equation}
    which proves the desired bound by taking approprite limits (see Definition~\ref{def:capacity}). Since $\widetilde{\chi}_{\alpha}$ is superadditive: $\widetilde{\chi}_{\alpha}(\Phi \otimes \Psi)\geq \widetilde{\chi}_{\alpha}(\Phi)+\widetilde{\chi}_{\alpha}(\Psi)$ \cite{Beigi2013sandwich}, the following limit exists \cite{Fekete1923}:
    \begin{equation}
        \lim_{n\to \infty}  \frac{1}{n} \widetilde{\chi}_{\alpha} (\Phi^{\otimes n}) = \sup_{n\in \mathbb{N}} \frac{1}{n} \widetilde{\chi}_{\alpha} (\Phi^{\otimes n}).
    \end{equation}
\end{proof}

Finally, we note bottleneck inequalities for all the channel measures that we will employ later.

\begin{lemma}\label{lemma:channel-bottlenecks}
    Let $\Psi:\B{\Hil_A}\to \B{\Hil_B}$ and $\Phi:\B{\Hil_B}\to \B{\Hil_C}$ be quantum channels. Then, for all channel measures $\mathbb{I}$ from Definitions~\ref{def:channel-measures} and entanglement measures $\mathbb{E}$ from Definition~\ref{def:channel-entanglement-measures}, 
    \begin{align}
        \mathbb{I}(\Phi\circ\Psi) &\leq \min (\mathbb{I}(\Phi),\mathbb{I}(\Psi)), \\
        \mathbb{E}(\Phi\circ \Psi) &\leq \min(\mathbb{E}(\Phi),\mathbb{E}(\Psi)).
    \end{align}
\end{lemma}
\begin{proof}
    The claims follow from data-processing of the underlying divergences (Lemma~\ref{lemma:DPI}). 
\end{proof}

\subsection{Zero-error communication}\label{sec:zero-error}
The constraint of perfect, i.e., error-free, communication gives the theory of zero-error communication a much more algebraic/combinatorial flavor \cite{Shannon1956zero, Korner1998zero, Duan2013noncomm}. In this section, we give a short background on the basics of this theory. The primary object of interest here is the so-called non-commutative (confusability) graph of a quantum channel \cite{Duan2013noncomm}, which is a non-commutative generalization of the confusability graph of classical stochastic channels \cite{Shannon1956zero}. Recently, there has also been a growing interest in the theory of non-commutative graphs independently of its connection with zero-error information theory. We refer the interested readers to the review article \cite{Daws2024qgraph} for further details.

\begin{definition}\label{def: op-sys}
    Let $\Phi: \B{\Hil_A}\to \B{\Hil_B}$ have a Kraus representation $\Phi(X)=\sum_{i=1}^n K_i XK_i^{\dagger}$. The operator system (also called the \emph{non-commutative (confusability) graph}) of $\Phi$ is defined as
\begin{equation*}
    S_{\Phi} := {\rm{span}} \{K^{\dagger}_i K_j : \, 1\leq i,j \leq n\} \subseteq \B{\Hil_A}. 
\end{equation*}
\end{definition}

It is easy to check that the above definition is independent of the chosen Kraus representation of $\Phi$. Moreover, $\sum_{i=1}^n K_i^{\dagger} K_i={\iden}_A \in S_{\Phi}$ (since $\Phi$ is trace-preserving) and $X\in S_{\Phi} \implies X^{\dagger} \in S_{\Phi}$. Such $\dagger-$closed subspaces $S\subseteq \B{\Hil_A}$ containing the identity are called \emph{operator systems} \cite{paulsen-book}. Moreover, any such operator system $S$ arises as the non-commutative graph of some channel $\Phi$ \cite{Duan2009zerosuper}. One can check that if $\Phi_c:\B{\Hil_A}\to \B{\Hil_E}$ is complementary to $\Phi$, then the operator system is obtained as the image of the environment algebra under $(\Phi_c)^*$ \cite{Duan2013noncomm}:
\begin{equation}
    S_{\Phi} = (\Phi_c)^*(\B{\Hil_E}) := \{(\Phi_c)^* (X) : X\in\B{\Hil_E}\}.
\end{equation}

\begin{remark}
    It is easy to see that the operator systems are multiplicative, i.e., for two quantum channels $\Phi:\B{\Hil_A}\to \B{\Hil_B}$ and $\Psi:\B{\Hil_C}\to \B{\Hil_D}$, we have $S_{\Phi\otimes \Psi} = S_{\Phi}\otimes S_{\Psi} \subseteq \B{\Hil_A\otimes \Hil_C}$.
\end{remark}

The following parameters were introduced as the non-commutative generalizations of classical graph parameters (such as the independence number of a graph) in \cite{Duan2013noncomm}. Below, orthogonality between operators is with repsect to the Hilbert Schmidt inner product on $\B{\Hil}$, i.e., we write $X\perp Y$ if $\Tr (X^{\dagger}Y)=0$. Moreover, for an operator $X\in \B{\Hil}$ and subspace $S\subseteq \B{\Hil}$, $X\perp S$ means that $X\perp Y$ for all $Y\in S$.

\begin{definition}\cite{Duan2013noncomm} \label{def:op-parameters}
    For an operator system $S\subseteq \B{\cH}$,
\begin{itemize}
    \item the maximum number $\mathscr{M}$ such that there exist states $\{\rho_m \}_{m\in [\mathscr{M}]} \subseteq \State{\Hil}$ such that 
\begin{equation}
    \forall m\neq m': \forall \ket{\psi}\in \operatorname{supp}\rho_m, \forall \ket{\phi}\in \operatorname{supp}\rho_{m'}: \quad |\psi\rangle \langle\phi | \perp S
\end{equation}
is called the \emph{independence number} of $S$ (denoted as $\alpha(S)$) . 
\item the maximum number $\mathscr{M}$ such that there exist states $\{\rho_m \}_{m\in [\mathscr{M}]} \subseteq \State{\Hil}$ such that 
\begin{align}
    \forall m\neq m': \forall \ket{\psi}\in \operatorname{supp}\rho_m, \forall \ket{\phi}\in \operatorname{supp}\rho_{m'}: \quad |\psi\rangle \langle\phi | \perp S \,\,\text{and} \,\, (\rho_m-\rho_{m'}) \perp S
\end{align}
is called the \emph{private independence number} of $S$ (denoted as $\alpha_p(S)$) . 
\item the maximum number $\mathscr{M}$ such that there exist Hilbert spaces $\cH_{A_0}, \cH_R$, a state $\rho\in \State{\Hil_{A_0}}$, and isometries $\{V_m: \Hil_{A_0} \to \Hil\otimes \Hil_R\}_{m\in [\mathscr{M}]}$ such that 
\begin{equation}
\forall m\neq m': \quad V_m \rho V_{m'} \perp S\otimes \B{\Hil_R},
\end{equation}
is called the \emph{entanglement-assisted independence number} of $S$ (denoted ${\alpha}_{ea}(S)$).  
\item the maximum number $d$ such that there exists a subspace $\cC\subseteq \Hil$ with $\dim \cC=d$ satisfying $P_\cC S P_\cC = \mathbb{C} P_\cC$,
(where $P_\cC$ is the orthogonal projection onto $\mathcal{C}$)\footnote{This is exactly the Knill-Laflamme error-correction condition \cite{knil-laf} for the subspace $\mathcal{C}$.}
is called the \emph{quantum independence number} of $S$ (denoted as $\alpha_q(S)$). 
\end{itemize}
\end{definition}

Exactly as in classical zero-error information theory \cite{Shannon1956zero}, the above graph parameters are closely linked to the one-shot zero-error capacities of the corresponding noisy channel.

\begin{theorem}\cite{Duan2013noncomm}\label{thm:DSW}
    For a quantum channel $\Phi: \B{\cH_A}\to \B{\cH_B}$,
\begin{align*}
    C_0 (\Phi) &= \log \alpha (S_{\Phi}), \\ 
    C_0^{\operatorname{p}}(\Phi) &= \log \alpha_p (S_{\Phi}), \\
    C_0^{\operatorname{ea}}(\Phi) &= \log \alpha_{ea} (S_{\Phi}), \\
    Q_0 (\Phi) &= \log \alpha_q (S_{\Phi}).
\end{align*}
\end{theorem}

The notions of pre- and post-processing by quantum channels are captured by homomorphisms and inclusions in the language of operator systems, as we note below.

\begin{definition} \cite{Stahlke2016zero} \label{def:op-homo}
    Let $S\subseteq \B{\Hil_A}$ and $T\subseteq\B{\Hil_B}$ be operator systems. We say that $S$ is \emph{homomorphic} to $T$ (denoted as $S\longrightarrow T$) if there exists an isometry $V:\Hil_A\to \Hil_B\otimes \Hil_E$ such that 
    \begin{equation*}
         V^{\dagger} (T \otimes \B{\Hil_E} ) V \subseteq S.
    \end{equation*}
\end{definition}

Let us note some basic properties of graph homomorphisms and inclusions below. 

\begin{lemma} \cite{Stahlke2016zero} \label{lemma:op-homo-2}
    Let $Q,R, S$ and $T$ be operator systems.
    \begin{itemize}
        \item If $R\longrightarrow S$ and $S\longrightarrow T$, then, $R\longrightarrow T$. 
        \item If $Q\longrightarrow R$ and $S\longrightarrow T$, then $Q \otimes S \longrightarrow R \otimes T$.
    \end{itemize}
    
\end{lemma}

\begin{lemma}\label{lemma:op-homo}
    Let $\Phi:\B{\Hil_A}\to \B{\Hil_B}$ and $\Psi:\B{\Hil_B}\to \B{\Hil_C}$ be quantum channels. Then,
    \begin{equation*}
        S_{\Phi} \subseteq S_{\Psi\circ \Phi} \longrightarrow S_{\Psi}.
    \end{equation*}
\end{lemma}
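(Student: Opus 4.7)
The plan is to work directly with Kraus representations. Fix Kraus operators $\{K_i\}_{i=1}^n$ for $\Phi$ and $\{L_j\}_{j=1}^m$ for $\Psi$, so that $\{L_j K_i\}_{i,j}$ is a Kraus family for $\Psi\circ\Phi$, giving
\begin{equation*}
    S_{\Psi\circ\Phi} = \mathrm{span}\{K_i^{\dagger} L_j^{\dagger} L_{j'} K_{i'} : 1\le i,i'\le n,\ 1\le j,j'\le m\}.
\end{equation*}

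For the inclusion $S_{\Phi} \subseteq S_{\Psi\circ\Phi}$, I would use the fact that $\Psi$ is trace preserving, i.e.\ $\sum_j L_j^{\dagger} L_j = \iden$, so that for any $i,i'$
\begin{equation*}
    K_i^{\dagger} K_{i'} \;=\; K_i^{\dagger} \Bigl(\sum_{j} L_j^{\dagger} L_j\Bigr) K_{i'} \;=\; \sum_{j} K_i^{\dagger} L_j^{\dagger} L_j K_{i'} \;\in\; S_{\Psi\circ\Phi}.
\end{equation*}
Since these operators span $S_{\Phi}$, the inclusion follows.

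For the homomorphism $S_{\Psi\circ\Phi} \longrightarrow S_{\Psi}$, I would exhibit an explicit isometry. Let $\Hil_E=\mathbb{C}^n$ with orthonormal basis $\{\ket{i}\}_{i=1}^{n}$, and let $V:\Hil_A\to \Hil_B\otimes \Hil_E$ be the canonical Stinespring isometry
\begin{equation*}
    V \;=\; \sum_{i=1}^{n} K_i \otimes \ket{i}_E.
\end{equation*}
For any $Y\in\B{\Hil_B}$ and $M\in\B{\Hil_E}$, a direct computation gives
\begin{equation*}
    V^{\dagger}(Y\otimes M)V \;=\; \sum_{i,i'} \bra{i}M\ket{i'}\, K_i^{\dagger}\, Y\, K_{i'}.
\end{equation*}
Taking spans, $V^{\dagger}(S_{\Psi}\otimes \B{\Hil_E})V = \mathrm{span}\{K_i^{\dagger} Y K_{i'} : Y\in S_{\Psi}, \ 1\le i,i'\le n\}$, where I have used that $\{|i\rangle\langle i'|\}$ spans $\B{\Hil_E}$.

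It then remains to check the two resulting inclusions. One direction is immediate: each generator $K_i^{\dagger}L_j^{\dagger}L_{j'} K_{i'}$ of $S_{\Psi\circ\Phi}$ lies in $V^{\dagger}(S_{\Psi}\otimes \B{\Hil_E})V$ since $L_j^{\dagger} L_{j'}\in S_{\Psi}$. The other direction uses that $S_{\Psi}=\mathrm{span}\{L_j^{\dagger}L_{j'}\}$, so every $K_i^{\dagger} Y K_{i'}$ with $Y\in S_{\Psi}$ is a linear combination of operators $K_i^{\dagger} L_j^{\dagger} L_{j'} K_{i'}\in S_{\Psi\circ\Phi}$. The only mild subtlety is to make sure the definition of $\longrightarrow$ is matched exactly (full $\B{\Hil_E}$, not a subsystem), which is why the Stinespring isometry naturally tensored with all of $\B{\Hil_E}$ is the right choice; beyond that, the argument is a straightforward verification rather than a genuine obstacle.
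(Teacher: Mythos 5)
Your proof is correct and follows essentially the same route as the paper: the first inclusion via trace preservation of $\Psi$, and the homomorphism via a Stinespring isometry of $\Phi$. The only difference is that you carry out the explicit verification of $S_{\Psi\circ\Phi}=V^{\dagger}(S_{\Psi}\otimes\B{\Hil_E})V$ for the canonical isometry, which the paper merely asserts; that extra detail is harmless and correct.
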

\begin{proof}
    Choose Kraus representations $\Phi(X)=\sum_{i=1}^n K_i XK_i^{\dagger}$ and $\Psi(Y)=\sum_{j=1}^m F_j YF_j^{\dagger}$. Then, 
    \begin{equation}
        S_{\Psi\circ \Phi} = \operatorname{span}\{K_i^{\dagger}F_j^{\dagger} F_q K_p : 1\leq i,p\leq n, 1\leq j,q\leq m \}.
    \end{equation}
    Clearly, for all $1\leq i,p\leq n$, we have $K_i^{\dagger}K_p = \sum_j K_i^{\dagger}F_j^{\dagger}F_j K_p \in S_{\Psi\circ \Phi}$, since $\sum_j F^{\dagger}_j F_j = \iden_B$. Hence, $S_{\Phi} = \operatorname{span}\{K^{\dagger}_i K_p : 1\leq i,p \leq n \}\subseteq S_{\Psi\circ \Phi}$. 

    To prove the second claim, let $V:\Hil_A\to \Hil_B\otimes \Hil_E$ be a Stinespring isometry for $\Phi:\B{\Hil_A}\to \B{\Hil_B}$. Then, it is easy to check that (see for e.g. \cite{Duan2013noncomm})
    \begin{equation}
         V^{\dagger} (S_{\Psi}\otimes \B{\Hil_E}) V =S_{\Psi\circ \Phi}.
    \end{equation}
\end{proof}

\begin{lemma}\label{lemma:op-bottleneck}
    Let $S,T\subseteq \B{\Hil}$ be two operator systems such that $S\subseteq T$. Then,
    \begin{equation*}
        \alpha(T)\leq \alpha(S), \quad \alpha_p(T)\leq \alpha_p(S), \quad \alpha_q (T)\leq \alpha_q (S).
    \end{equation*}
    Similarly, let $S\subseteq \B{\Hil_A}$ and $T\subseteq\B{\Hil_B}$ be two operator systems such that $S\longrightarrow T$. Then,
    \begin{equation*}
        \alpha(S)\leq \alpha(T), \quad \alpha_p(S)\leq \alpha_p(S), \quad \alpha_q (S)\leq \alpha_q (T).
    \end{equation*} 
\end{lemma}
\begin{proof}
    If $S,T\subseteq \B{\Hil}$ are operator systems satisfying $S\subseteq T$, then $T^{\perp} \subseteq S^{\perp}$. Hence, it is clear from Definition~\ref{def:op-parameters} that the desired relations hold.

    Now, consider operator systems $S\subseteq \B{\Hil_{A}}$ and $T\subseteq\B{\Hil_{B}}$ such that $S\longrightarrow T$. Then, according to \cite{Duan2009zerosuper}, there exist quantum channels $\Phi_{A\to A'}$ and $\Psi_{B\to B'}$ such that $S=S_{\Phi}$ and $T=S_{\Psi}$. Moreover, by Definition~\ref{def:op-homo}, there exists a channel $\mathcal{N}:\B{\Hil_{A}}\to \B{\Hil_{B}}$ with Stinespring isometry $V:\Hil_{A}\to \Hil_{B}\otimes \Hil_E$ such that 
    \begin{equation}
      S_{\Psi\circ \mathcal{N}} = V^{\dagger}(S_{\Psi} \otimes \B{\Hil_E})V \subseteq S_{\Phi}.
    \end{equation}
    This yields the desired relation: 
    \begin{equation}
        \alpha(S_{\Phi}) \leq \alpha (S_{\Psi \circ \mathcal{N}}) \leq \alpha(S_{\Psi}),
    \end{equation}
    where we used the correspondence between independence numbers and one-shot channel capacities from Theorem~\ref{thm:DSW} along with the bottleneck inequality from Lemma~\ref{lemma:bottleneck-oneshot}.
\end{proof}

    In general, computing the independence numbers of operator systems -- or, equivalently, computing the one-shot zero-error capacities of quantum channels -- is difficult \cite{Shor2008complexity}. Moreover, the independence numbers are highly non-multiplicative \cite{Chen2010zerosuper}. However, In Section~\ref{sec:qms-transmission}, we will prove that if an operator system $S\subseteq \B{\Hil}$ also has the structure of an algebra (i.e., it is closed under matrix multiplication), then its independence numbers can be efficiently and explicitly computed and are multiplicative.

\subsection{Spectral properties}\label{sec:spectral}
Let $\Psi:\B{\Hil}\to \B{\Hil}$ be a quantum channel. Then, $\Psi$ admits a Jordan decomposition \cite[Chapter 6]{Wolf2012Qtour}
\begin{equation}
    \Psi = \sum_{i} \lambda_i \mathcal{P}_i + \mathcal{N}_i \quad \text{with} \quad \mathcal{N}_i \mathcal{P}_i = \mathcal{P}_i \mathcal{N}_i = \mathcal{N}_i \,\,\, \text{and} \,\,\, \mathcal{P}_i \mathcal{P}_j = \delta_{ij}\mathcal{P}_i,
\end{equation}
where the sum runs over the distinct eigenvalues $\lambda_i$ of $\Psi$, $\mathcal{P}_i$ are projectors whose rank equals the algebraic multiplicity of $\lambda_i$, and $\mathcal{N}_i$ denote the corresponding nilpotent operators. All the eigenvalues $\lambda_i$ of $\Psi$ satisfy $\abs{\lambda_i}\leq 1$ and they are either real or come in complex conjugate pairs. Since $\Psi$ always admits a fixed point, $\lambda=1$ is always an eigenvalue of $\Psi$. Moreover, all $\lambda_i$ with $\abs{\lambda_i}=1$ have equal algebraic and geometric multiplicities, so that $\mathcal{N}_i=0$ for all such eigenvalues. As $l\to \infty$, we expect the image of 
\begin{equation}
    \Psi^l := \underbrace{\Psi \circ \Psi \circ \ldots \circ \Psi}_{l\,  \text{times}}
\end{equation}
to converge to the \emph{peripheral space} $\mathscr{X} (\Psi):= \text{span}\{X\in \B{\Hil} : \exists \,\theta\in \mathbb{R} \text{ s.t. } \Psi(X)= e^{i\theta} X\}$. 
\begin{definition}\label{def:peripheral-proj}
   Let $\Psi:\B{\Hil}\to \B{\Hil}$ be a quantum channel. The \emph{asymptotic part} of $\Psi$ and the projector onto the peripheral space $\mathscr{X} (\Psi)$, are respectively defined as follows: 
   \begin{equation}\label{eq:phiinf-proj}
\Psi_{\infty}:= \sum_{i:\, |\lambda_i|=1}\lambda_i \mathcal{P}_i \quad \text{and} \quad  \mathcal{P}_{\Psi} = \sum_{i: \, |\lambda_i|=1} \mathcal{P}_i .
\end{equation}
\end{definition}

Clearly, $\Psi_{\infty}=\Psi_{\infty}\circ \mathcal{P}_{\Psi} = \mathcal{P}_{\Psi}\circ \Psi_{\infty}$.
Notably, both $\Psi_{\infty}:\B{\Hil}\to \B{\Hil}$ and $\mathcal{P}_{\Psi}:\B{\Hil}\to \B{\Hil}$ arise as limit points of the set $(\Psi^l )_{l\in \mathbb{N}}$ \cite[Lemma 3.1]{Szehr2014specconvergence}. Since the set of quantum channels acting on $\Hil$ is closed, both $\Psi_{\infty}$ and $\mathcal{P}_{\Psi}$ are quantum channels themselves. Crucially, the action of $\Psi$ on $\mathscr{X}(\Psi)$ is reversible in the sense of the following lemma.

\begin{lemma}\label{lemma:reverse} \cite{wolf2010inverse, Wolf2012Qtour}
    The action of a channel $\Psi:\B{\Hil}\to \B{\Hil}$ on its peripheral space $\mathscr{X}(\Psi)$ is reversible, i.e.,  there exists a channel $\mathcal{R}_{\Psi}:\B{\Hil}\to \B{\Hil}$ such that $\mathcal{R}_{\Psi}\circ\Psi = \mathcal{P}_{\Psi}=\Psi\circ\mathcal{R}_{\Psi}$.
\end{lemma}

As $l$ increases, $\norm{\Psi^l - \Psi^l_{\infty}}_{\diamond}$ approaches zero. More precisely, the convergence behavior is like 
\begin{equation}\label{eq:converge}
    \norm{\Psi^l - \Psi^l_{\infty}}_{\diamond} \leq \kappa \mu^l,
\end{equation}
where $\mu = \operatorname{spr}(\Psi-\Psi_{\infty})<1$ is the spectral radius of $\Psi-\Psi_{\infty}$ (i.e., $\mu$ is the largest magnitude of the eigenvalues of $\Psi-\Psi_{\infty}$) and $\kappa$ depends on the spectrum of $\Psi$, on $l$, and on the dimension $d=\dim\Hil$ \cite{Szehr2014specconvergence}. The dependence of $\kappa$ on $l$ is sub-exponential, which captures the fact that for large $l$, the RHS of~\eqref{eq:converge} exponentially decays as $\mu^l$. For example, by only using the spectral gap $\mu$, one can obtain a convergence estimate of the following form for $l>\mu/(1-\mu)$ \cite{Szehr2014specconvergence}:
\begin{equation}\label{eq:converge-gap}
    \norm{\Psi^l - \Psi^l_{\infty}}_{\diamond} \leq \frac{4e^2 d (d^2 + 1)}{ \left( 1- (1+ \frac{1}{l})\mu \right)^{3/2}}  \left(\frac{l (1-\mu^2)}{\mu}\right)^{d^2 -1} \mu^l,
\end{equation}
where $d=\dim \Hil$. A more complete knowledge about the Jordan decomposition of $\Psi$ can be used to sharpen the above estimate (see \cite{Szehr2014specconvergence}). In this paper, we will only work with the general spectral gap bound of the form in Eq.~\eqref{eq:converge-gap}.

\subsubsection{The peripheral space}\label{sec:periphery}
Recall the definition of the peripheral space of a channel $\Psi$:
\begin{equation}
    \mathscr{X} (\Psi):= \text{span}\{X\in \B{\Hil} : \exists \, \theta\in \mathbb{R} \text{ with } \Psi(X)= e^{i\theta} X\}.
\end{equation}
The structure of the peripheral space is well-understood. There exists an orthogonal decomposition of the underlying Hilbert space $\Hil = \Hil_{0} \oplus \bigoplus_{k=1}^K \Hil_{k,1}\otimes \Hil_{k,2}$, and positive definite states $\delta_{k}\in \State{\Hil_{k,2}}$ such that the following block decomposition holds \cite{Lindblad1999} \cite[Chapter 6]{Wolf2012Qtour}: 
    \begin{equation}\label{eq:phasespace}
       \mathscr{X} (\Psi) = 0 \oplus \bigoplus_{k=1}^K (\B{\Hil_{k,1}}\otimes \delta_k). 
    \end{equation}
Moreover, there exist unitaries $U_k\in \B{\Hil_{k,1}}$ and a permutation $\pi$ which permutes within subsets of $\{1,2,\ldots ,K \}$ for which the corresponding $\Hil_{k,1}$'s have the same dimension, such that for
    \begin{equation}\label{eq:phaseaction}
        X = 0 \oplus \bigoplus_{k=1}^K x_k \otimes \delta_k, \quad\text{we have}\quad \Psi (X) = \Psi_{\infty}(X) = 0 \oplus \bigoplus_{k=1}^K U^{\dagger}_k x_{\pi (k)} U_k \otimes \delta_k,
    \end{equation}
see \cite{wolf2010inverse} 
\cite[Chapter 6]{Wolf2012Qtour}. Let $\Hil = \Hil_0 \oplus \Hil_0^{\perp}$, where we have identified $\Hil_0^{\perp}=\bigoplus_{k=1}^K \Hil_{k,1}\otimes \Hil_{k,2}$, and let $V:\Hil_0^{\perp} \hookrightarrow \Hil $ be the canonical inclusion isometry. Recall that $\mathcal{P}_{\Psi}:\B{\Hil}\to \B{\Hil}$ defined in Eq.~\eqref{eq:phiinf-proj} projects onto the peripheral space $\mathscr{X} (\Psi)$. It is easy to see that 
\begin{align}
    \forall X\in \B{\Hil}: \quad \mathcal{P}_{\Psi}(X) &= 0 \oplus V^{\dagger} \mathcal{P}_{\Psi}(X) V  \nonumber \\
    &= 0 \oplus R_V (\mathcal{P}_{\Psi} (X)),
\end{align}
where the channel $R_V:\B{\Hil}\to \B{\Hil_0^{\perp}}$ is the restriction channel defined as $R_V (Y) = V^{\dagger} Y V + \Tr [(\iden - VV^{\dagger})Y ]\sigma $ for some state $\sigma\in \State{\Hil_0^{\perp}}$. Moreover, since $\mathcal{P}_{\Psi}=\mathcal{P}_{\Psi}^2$, we get
\begin{align}
  \forall X\in \B{\Hil}: \quad   \mathcal{P}_{\Psi}(X) = \mathcal{P}_{\Psi}(\mathcal{P}_{\Psi}(X)) &= \mathcal{P}_{\Psi}(0 \oplus R_V( \mathcal{P}_{\Psi} (X))) = 0 \oplus \overbar{\mathcal{P}}_{\Psi} ( R_V ( \mathcal{P}_{\Psi} (X))),
\end{align}
where $\overbar{\mathcal{P}}_{\Psi}:\B{\Hil_0^{\perp}}\to \B{\Hil_0^{\perp}}$ is defined as follows (see \cite[Theorem 12]{Lami2016entsaving}): 
\begin{align} \label{eq:phaseproj-1}
  \forall X\in \B{\Hil_0^{\perp}}: \quad \overbar{\mathcal{P}}_{\Psi}(X) &= \bigoplus_{k=1}^K \Tr_{k,2} (V_k^{\dagger} X V_k) \otimes \delta_k, 
\end{align}
where $V_k: \Hil_{k,1}\otimes \Hil_{k,2}\to \Hil_0^{\perp}$ are the canonical isometries and $\Tr_{k,2}$ is the partial trace over $\Hil_{k,2}$. By using the underlying decomposition $\Hil_0^{\perp}=\bigoplus_{k=1}^K \Hil_{k,1}\otimes \Hil_{k,2}$, we can equivalently write
\begin{align}\label{eq:phaseproj-2}
    \overbar{\mathcal{P}}_{\Psi} &= \bigoplus_k \id_{k,1} \otimes \mathcal{R}_{k,2},
\end{align}
where for each $k$, $\id_{k,1}:\B{\Hil_{k,1}}\to \B{\Hil_{k,1}}$ is the identity channel and $\mathcal{R}_{k,2}:\B{\Hil_{k,2}}\to \B{\Hil_{k,2}}$ is the replacer channel which acts as follows: $\mathcal{R}_{k,2}(X) = \Tr(X)\delta_k$. We should emphasize that $\overbar{\mathcal{P}}_{\Psi}:\B{\Hil_0^{\perp}}\to \B{\Hil_0^{\perp}}$ arises as the restriction of $\mathcal{P}_{\Psi}:\B{\Hil}\to \B{\Hil}$ to $\Hil_0^{\perp}$, in the sense that 
\begin{equation}
    \forall X\in \B{\Hil_0^{\perp}}: \quad \mathcal{P}_{\Psi}(0 \oplus X) = 0 \oplus \overbar{\mathcal{P}}_{\Psi}(X).
\end{equation}
From the above discussion, the following identities are easy to verify
\begin{align}
    \mathcal{P}_{\Psi} &= \mathcal{V}\circ R_V \circ \mathcal{P}_{\Psi}, \label{eq:PPbar-relations} \\
    R_V \circ \mathcal{P}_{\Psi} &= \overbar{\mathcal{P}}_{\Psi}\circ R_V\circ \mathcal{P}_{\Psi}, \label{eq:PPbar-relations2} \\ 
    \overbar{\mathcal{P}}_{\Psi} &=R_V\circ \mathcal{P}_{\Psi}\circ \mathcal{V},\label{eq:PPbar-relations3}
\end{align}
where $\mathcal{V}:\B{\Hil_0^{\perp}}\to \B{\Hil}$ is the isometric channel $\mathcal{V}(X)=VXV^{\dagger}$. These relations imply that the communication capacities of $\mathcal{P}_{\Psi}$ and $\overbar{\mathcal{P}}_{\Psi}$ are identical. We note this in the following two lemmas.

\begin{lemma}\label{lemma:PPar-op}
    For a quantum channel $\Psi:\B{\Hil}\to \B{\Hil}$, the operator systems associated with the peripheral projections $\mathcal{P}_{\Psi}$ and $\overbar{\mathcal{P}}_{\Psi}$ as above are isomorphic:
    \begin{equation}
        S_{\mathcal{P}_{\Psi}} \longrightarrow S_{\overbar{\mathcal{P}}_{\Psi}}, \quad S_{\overbar{\mathcal{P}}_{\Psi}} \longrightarrow S_{\mathcal{P}_{\Psi}}.
    \end{equation}
\end{lemma}
\begin{proof}
    The operator system relations (Lemma~\ref{lemma:op-homo}) along with Eq.~\eqref{eq:PPbar-relations} imply that 
    \begin{equation}
    S_{\mathcal{P}_{\Psi}} \subseteq S_{R_V \circ \mathcal{P}_{\Psi}} \subseteq S_{\mathcal{V}\circ R_V \circ \mathcal{P}_{\Psi}} = S_{\mathcal{P}_{\Psi}}.
\end{equation}
Hence, using Lemma~\ref{lemma:op-homo} again with Eq.~\eqref{eq:PPbar-relations2}, we obtain
\begin{equation}
     S_{R_V \circ \mathcal{P}_{\Psi}} = S_{\mathcal{P}_{\Psi}} \longrightarrow S_{\overbar{\mathcal{P}}_{\Psi}}.
\end{equation}
Finally, Lemma~\ref{lemma:op-homo} and Eq.~\eqref{eq:PPbar-relations3} show $ S_{\overbar{\mathcal{P}}_{\Psi}} = S_{R_V\circ \mathcal{P}_{\Psi}\circ \mathcal{V}} \longrightarrow S_{R_V\circ \mathcal{P}_{\Psi}} = S_{\mathcal{P}_{\Psi}}$.
\end{proof}

\begin{lemma}\label{lemma:PPar}
    Let $\Psi:\B{\Hil}\to \B{\Hil}$ be a channel with associated peripheral projections $\mathcal{P}_{\Psi}$ and $\overbar{\mathcal{P}}_{\Psi}$ as above. Then, for all $\epsilon\in [0,1)$ and $\mathbb{Q}\in \{C, C^{\operatorname{p}}, C^{\operatorname{ea}}, Q \}$, we have $\mathbb{Q}_{\epsilon}(\mathcal{P}_{\Psi})= \mathbb{Q}_{\epsilon}(\overbar{\mathcal{P}}_{\Psi})$.
\end{lemma}
\begin{proof}
    The bottleneck inequalities (Lemma~\ref{lemma:bottleneck-oneshot}) along with Eq.~\eqref{eq:PPbar-relations} imply that 
    \begin{equation}
        \mathbb{Q}_{\epsilon}(\mathcal{P}_{\Psi}) \geq \mathbb{Q}_{\epsilon}(R_V\circ \mathcal{P}_{\Psi}) \geq \mathbb{Q}_{\epsilon}(\mathcal{V}\circ R_V \circ \mathcal{P}_{\Psi}) = \mathbb{Q}_{\epsilon}(\mathcal{P}_{\Psi}),
    \end{equation}
    so that $\mathbb{Q}_{\epsilon}(\mathcal{P}_{\Psi}) = \mathbb{Q}_{\epsilon}(R_V\circ \mathcal{P}_{\Psi})$. Another use of Lemma~\ref{lemma:bottleneck-oneshot} along with Eq.~\eqref{eq:PPbar-relations2} shows
    \begin{equation}
        \mathbb{Q}_{\epsilon}(\mathcal{P}_{\Psi}) = \mathbb{Q}_{\epsilon}(R_V\circ \mathcal{P}_{\Psi}) \leq \mathbb{Q}_{\epsilon}(\overbar{\mathcal{P}}_{\Psi}).
    \end{equation}
    For the reverse inequality, we use Lemma~\ref{lemma:bottleneck-oneshot} and Eq.~\eqref{eq:PPbar-relations3} to write
    \begin{equation}
      \mathbb{Q}_{\epsilon}(\overbar{\mathcal{P}}_{\Psi}) =\mathbb{Q}_{\epsilon}(R_V\circ \mathcal{P}_{\Psi}\circ \mathcal{V}) \leq   \mathbb{Q}_{\epsilon}(\mathcal{P}_{\Psi}\circ \mathcal{V}) \leq \mathbb{Q}_{\epsilon}(\mathcal{P}_{\Psi}).
    \end{equation}
\end{proof}

Finally, we note that the peripheral space of quantum channels is known to be multiplicative \cite[Lemma 3.1]{fawzi2024error}. Below, we provide a different proof of this fact.

\begin{lemma}\label{lemma:peripheral-multi}
    For two channels $\Phi:\B{\Hil}\to \B{\Hil}$ and $\Psi:\B{\mathcal{K}}\to \B{\mathcal{K}}$, we have 
    \begin{equation}
        \mathscr{X} (\Phi\otimes \Psi) = \mathscr{X}(\Phi)\otimes \mathscr{X} (\Psi).
    \end{equation}
\end{lemma}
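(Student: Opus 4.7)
The plan is to establish the stronger identity $\mathcal{P}_{\Phi \otimes \Psi} = \mathcal{P}_\Phi \otimes \mathcal{P}_\Psi$ at the level of peripheral-space projectors; the lemma then follows by passing to ranges, using $\mathrm{range}(\mathcal{P}_\Phi) = \mathscr{X}(\Phi)$ (from Eq.~\eqref{eq:phiinf-proj}) together with the standard identity $\mathrm{range}(P \otimes Q) = \mathrm{range}(P) \otimes \mathrm{range}(Q)$ for projectors on finite-dimensional spaces.

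First I would dispatch the easy inclusion $\mathscr{X}(\Phi) \otimes \mathscr{X}(\Psi) \subseteq \mathscr{X}(\Phi \otimes \Psi)$ directly from the definition: if $\Phi(X) = e^{i\theta} X$ and $\Psi(Y) = e^{i\varphi} Y$, then $(\Phi \otimes \Psi)(X \otimes Y) = e^{i(\theta+\varphi)}(X \otimes Y)$, and linearity handles arbitrary elements. For the nontrivial direction, I would invoke the Jordan decompositions $\Phi = \sum_i(\lambda_i \mathcal{P}_i + \mathcal{N}_i)$ and $\Psi = \sum_j(\mu_j \mathcal{Q}_j + \mathcal{M}_j)$ from Section~\ref{sec:spectral}, crucially using that $\mathcal{N}_i = 0$ whenever $|\lambda_i| = 1$ (and analogously for $\Psi$). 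Writing $\Phi' := \Phi - \Phi_\infty$ and $\Psi' := \Psi - \Psi_\infty$, and decomposing $\mathcal{L}(\mathcal{H}) = \mathscr{X}(\Phi) \oplus \ker \mathcal{P}_\Phi$ and $\mathcal{L}(\mathcal{K}) = \mathscr{X}(\Psi) \oplus \ker \mathcal{P}_\Psi$, the resulting four tensor-blocks of $\mathcal{L}(\mathcal{H}) \otimes \mathcal{L}(\mathcal{K})$ are each invariant under $\Phi \otimes \Psi$. On $\mathscr{X}(\Phi) \otimes \mathscr{X}(\Psi)$ the map acts as $\Phi_\infty \otimes \Psi_\infty$, which is diagonalizable with all nonzero eigenvalues on the unit circle, while on each of the three remaining blocks it acts as a tensor product in which at least one factor is $\Phi'$ or $\Psi'$, so the block has spectral radius strictly less than $1$. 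The peripheral part of $\Phi \otimes \Psi$ therefore lives entirely in the first block, forcing $(\Phi \otimes \Psi)_\infty = \Phi_\infty \otimes \Psi_\infty$ and $\mathcal{P}_{\Phi \otimes \Psi} = \mathcal{P}_\Phi \otimes \mathcal{P}_\Psi$.

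The main conceptual point to verify is that $\Phi_\infty \otimes \Psi_\infty$ remains diagonalizable with modulus-1 eigenvalues in the tensor product, and that the other three blocks contribute no peripheral eigenvalues. The first assertion hinges on the triviality of peripheral Jordan blocks for quantum channels, so that $\Phi_\infty$ and $\Psi_\infty$ are individually diagonalizable and hence so is their tensor product; the second follows from multiplicativity of the spectrum under tensor products together with the observation that $|\lambda \mu| < 1$ whenever $|\lambda| \leq 1$, $|\mu| \leq 1$, and at least one factor has modulus strictly less than $1$. Both checks are routine once the block decomposition above is in place.
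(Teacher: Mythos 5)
Your proposal is correct and takes essentially the same route as the paper: both prove the projector identity $\mathcal{P}_{\Phi\otimes\Psi}=\mathcal{P}_{\Phi}\otimes\mathcal{P}_{\Psi}$ from the Jordan decompositions of $\Phi$ and $\Psi$ and then pass to ranges. The only difference is presentational: the paper simply invokes the Jordan structure of a Kronecker product (citing Horn--Johnson, Thm.~4.3.17) to isolate the peripheral block, whereas you unpack this by hand via the four-fold direct sum $\bigl(\mathscr{X}(\Phi)\oplus\ker\mathcal{P}_{\Phi}\bigr)\otimes\bigl(\mathscr{X}(\Psi)\oplus\ker\mathcal{P}_{\Psi}\bigr)$, the invariance of each tensor block under $\Phi\otimes\Psi$, diagonalizability on the first block (from triviality of peripheral Jordan blocks), and spectral radius strictly below $1$ on the remaining three. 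That is the content of the cited theorem, made explicit; the argument is sound.
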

\begin{proof}
    Consider the Jordan decompositions of the two channels
    \begin{align}
    \Phi &= \sum_{i:|\lambda_i|=1} \lambda_i \mathcal{P}_i + \sum_{i:|\lambda_i|<1} \lambda_i\mathcal{P}_i + \mathcal{N}_i \\
    \Psi &= \sum_{j:|\mu_j|=1} \mu_j \mathcal{Q}_j + \sum_{j:|\mu_j|<1}\mu_j \mathcal{Q}_j + \mathcal{M}_j, 
\end{align}
where $\lambda_i, \mu_j$ are the distinct (respective) eigenvalues, $\mathcal{P}_i,\mathcal{Q}_j$ are the corresponding (respective) projectors and $\mathcal{N}_i, \mathcal{M}_j$ are the nilpotent parts. Note that
\begin{align}
    \Phi\otimes \Psi = \sum_{i,j: |\lambda_i|=|\mu_j|=1} \lambda_i\mu_j \mathcal{P}_i \otimes \mathcal{Q}_j  + \ldots,
\end{align}
where the remaining terms above contribute to the Jordan structure of $\Phi\otimes \Psi$ associated with non-peripheral eigenvalues $\lambda_i\mu_j$ with $|\lambda_i\mu_j|<1$ (see \cite[Theorem 4.3.17]{Horn1991matrix}). Hence, it is clear that $\mathcal{P}_{\Phi\otimes\Psi} = \sum_{i,j:|\lambda_i|=|\mu_j|=1} \mathcal{P}_i \otimes \mathcal{Q}_j = (\sum_{i:|\lambda_i|=1} \mathcal{P}_i) \otimes (\sum_{j: |\mu_j|=1} \mathcal{Q}_j) = \mathcal{P}_{\Phi}\otimes \mathcal{P}_{\Psi}$, so that 
\begin{equation}
    \mathscr{X}(\Phi\otimes \Psi) = \operatorname{range} (\mathcal{P}_{\Phi\otimes\Psi}) = \operatorname{range} (\mathcal{P}_{\Phi} \otimes \mathcal{P}_{\Psi}) = \operatorname{range} \mathcal{P}_{\Phi} \otimes \operatorname{range}\mathcal{P}_{\Psi} = \mathscr{X} (\Phi) \otimes \mathscr{X} (\Psi).
\end{equation}
\end{proof}

\section{Storage perspective} \label{sec:qms-storage}

In this section, we consider a quantum memory that we model as an open quantum system with a Markovian noise model $(\Psi^t )_{t\in \mathbb{N}}$ as described in the introduction. Our task is to store as much information in the memory as possible, in such a way that it can be reliably recovered (with some error $\epsilon\in [0,1)$) after the memory is left to evolve for some time $t\in \mathbb{N}$ (see Figure~\ref{fig:memory}). Building such a quantum memory that is able to store information for a long time is crucial in order to build a reliable quantum computer. Consequently, this task has been studied from different perspectives, most notably from the viewpoint of designing error-correcting codes to develop a fault-tolerant quantum memory \cite{Tehral2015memory, Brown2016memory, gottesman2016surviving}. Here, we adopt a Shannon-theoretic viewpoint, where we want to analyze the maximum amount of information that can be stored in the memory without placing any physical or computational restrictions on the encoding and decoding operations. Equivalently, we are interested in characterizing the {\em{one-shot}} $\epsilon-$error information-transmission capacities (Definitions~\ref{def:classical-protocol}-\ref{def:quantum-protocol}) of $\Psi^t$ for a given error $\epsilon\in [0,1)$ and time $t\in \mathbb{N}$. We will primarily focus on the long time $t\to \infty$ limit, and derive convergence bounds on when the storage capacities reach their infinite-time values. We will show that any information stored inside the peripheral space (see Section~\ref{sec:periphery})
\begin{align}
\mathscr{X} (\Psi) &:= \text{span}\{X\in \B{\Hil} : \exists \, \theta\in \mathbb{R} \text{ with } \Psi(X)= e^{i\theta} X\}   \nonumber \\
&\simeq 0 \oplus \bigoplus_{k=1}^K (\B{\C{d_k}}\otimes \delta_k)
\end{align}
is shielded from noise for an arbitrarily long time. Furthermore, as $t\to \infty$, we show that this is the optimal way to store data so that it can be recovered with good fidelity.

\begin{figure}[H]
    \centering
    \includegraphics[scale=2]{markovian-memory.pdf}
    \caption{Schematic for a physical memory device experiencing Markovian noise modelled by a dQMS $(\Psi^t)_{t\in\mathbb{N}}$, where $\Psi:\B{\Hil}\to \B{\Hil}$ is a quantum channel. Logical data is encoded in the memory at time $t=0$ using an encoding channel $\mathcal{E}:\B{\Hil_{\operatorname{data}}}\to \B{\Hil}$. After some time $t$, a decoding channel $\mathcal{D}:\B{\Hil}\to \B{\Hil_{\operatorname{data}}}$ is applied to recover the data, so that $\mathcal{D}\circ \Psi^t \circ \mathcal{E}\simeq_{\epsilon} \operatorname{id}$ approximately simulates the identity channel on the data system, where $\epsilon\in [0,1)$ is the error allowed in the recovery process. }
    \label{fig:memory}
\end{figure}

Note that the Markovian model of noise given by a dQMS $(\Psi^t)_{t\in \mathbb{N}}$ covers the following settings:
    \begin{itemize}
        \item \textbf{Passive error-correction}: Here, $\Psi=\Psi_{\operatorname{noise}}$ and the dQMS models pure noise. No intermediate error correction is allowed in between time steps. 
        \item \textbf{Active error-correction}: Here, $\Psi=\Psi_{\operatorname{ecc}}\circ \Psi_{\operatorname{noise}}$, where $\Psi_{\operatorname{ecc}}$ is a fixed time-independent error-correction mechanism designed to detect/correct for errors induced by the noise $\Psi_{\operatorname{noise}}$ actively as they occur. We leave the analysis of more general adaptive error-correction procedures for future work.
    \end{itemize}

Furthermore, two defining features of the Shannon-theoretic framework that distinguishes it from the literature on error-correction are worth highlighting:

\begin{itemize}
    \item \textbf{Noise model:} The Shannon-theoretic framework aims to study the storage capacities of a quantum memory device with a fixed noise modelled by some dQMS $(\Psi^t)_{t\in \mathbb{N}}$. While error-correcting codes are traditionally designed to work for generic (local) noise, it is reasonable to expect that any physical implementation of a quantum computer includes information about additional noise structure. By understanding the dominant noise in the system, it is possible to employ strategies specifically designed for that noise model, thereby achieving more cost-effective suppression of the dominant noise. In a nutshell, the focus on Shannon theory is on the noise model (quantum channel), rather than a specific error correcting code. 
    
    \item \textbf{Encoding/decoding}: The Shannon-theoretic framework eliminates all physical and computational restrictions on how the information is encoded/decoded inside the memory. Moreover, it is assumed that the encoding/decoding can be performed perfectly without faults. Thus, this framework provides the most fundamental bounds on a memory's storage capacity that are allowed by quantum physics. Constraints on encoding/decoding can then be placed on top, depending on the specific physical implementation of the memory device under consideration. In this regard, it is pertinent to mention some recent papers that incorporate gate errors in a (asymptotic and memoryless) Shannon-theoretic setting, thus effectively doing Shannon theory in a fault-tolerant way; see \cite{Christandl2024fault, Belzig2024fault}.
\end{itemize}

\subsection{Main result} \label{sec:main-qms-storage}

Before we state and prove the main result of this section, it would be helpful for the reader to recall the spectral properties of quantum channels from Section~\ref{sec:spectral} and the structure of the peripheral space from Section~\ref{sec:periphery}. Here, we recall that for any channel $\Psi:\B{\Hil}\to \B{\Hil}$, there exists an orthogonal decomposition of the underlying Hilbert space $\Hil = \Hil_{0} \oplus \Hil_0^{\perp} = \Hil_0 \oplus \bigoplus_{k=1}^K \Hil_{k,1}\otimes \Hil_{k,2}$, with respect to which the peripheral space can be written as a direct sum of full matrix algebras of dimensions $d_k=\dim \Hil_{k,1}$: 
\begin{equation}\label{eq:phasespace-storage}
       \mathscr{X} (\Psi) = 0 \oplus \bigoplus_{k=1}^K (\B{\Hil_{k,1}}\otimes \delta_k). 
    \end{equation}
\begin{theorem}\label{theorem:main-storage}
Let $\Psi:\B{\Hil}\to \B{\Hil}$ be a quantum channel, $(\Psi^t)_{t\in \mathbb{N}}$ be the associated dQMS, and $\epsilon\in [0,1)$. Then, for all $t\in\mathbb{N}$, the one-shot $\epsilon-$error capacities satisfy:
\begin{align}   Q_{\epsilon}(\Psi^t) \geq Q_0(\Psi^t) &\geq \log (\max_k d_k ), \label{qlo}\\
C_{\epsilon}^{\operatorname{p}} (\Psi^t) \geq C_{0}^{\operatorname{p}}(\Psi^t)  &\geq \log (\max_k d_k), \label{cplo}\\\
C_{\epsilon}(\Psi^t) \geq C_{0}(\Psi^t) &\geq \log (\sum_k d_k ), \label{clo}\\
C_{\epsilon}^{\operatorname{ea}}(\Psi^t) \geq C_{0}^{\operatorname{ea}}(\Psi^t) &\geq \log (\sum_k d^2_k ).\label{cealo}
\end{align}
Moreover, for $t$ large enough, the following converse bounds hold:
\begin{align}
      Q_{\epsilon}(\Psi^t) &\leq \log (\max_k d_k) + \log(\frac{1}{1-\epsilon-\delta_t}), \label{eq:Qconverse}\\ 
      C_{\epsilon}^{\operatorname{p}}(\Psi^t) &\leq \log (\max_k d_k ) + \log(\frac{1}{1-\epsilon- \delta_t }), \label{eq:Pconverse} \\
     C_{\epsilon}(\Psi^t) &\leq  \log (\sum_k d_k) + \log(\frac{1}{1-\epsilon-\delta_t}), \label{eq:Cconverse}\\
     C_{\epsilon}^{\operatorname{ea}}(\Psi^t) &\leq  \log (\sum_k d_k^2 ) + \log(\frac{1}{1-\epsilon-\delta_t}).\label{eq:Ceaconverse}
\end{align}
Here, $d_k=\dim \Hil_{k,1}$ for $k\in \{1,2,\ldots ,K \}$ are the block dimensions in the decomposition of $\chi (\Psi)$ (see Eq.~\eqref{eq:phasespace-storage}), $\delta_t=\norm{\Psi^t-\Psi^t_{\infty}}_{\diamond} \leq \kappa\mu^t\to 0$ as $t\to \infty$, where $\mu=\operatorname{spr}(\Psi-\Psi_{\infty}), \kappa$ govern the convergence as in Eq.~\eqref{eq:converge}, and $t$ is large enough so that $\epsilon + \delta_t<1$.
\end{theorem}

\begin{proof}
We start by proving the achievability bounds in Eqs.~\eqref{qlo}-\eqref{cealo}. Throughout the proof, we work with orthogonal decomposition of the underlying Hilbert space $\Hil = \Hil_{0} \oplus \Hil_0^{\perp} = \Hil_0 \oplus \bigoplus_{k=1}^K \Hil_{k,1}\otimes \Hil_{k,2}$, with respect to which $\mathscr{X}(\Psi)$ assumes the decomposition in Eq.~\eqref{eq:phasespace-storage}.
\medskip

\noindent
{\textbf{Achievability: Quantum communication~\eqref{qlo}}}
\smallskip

Recall from Lemma~\ref{lemma:reverse} that the action of $\Psi$ on its peripheral space $\mathscr{X}(\Psi)$ is reversible, i.e., there exists a channel $\mathcal{R}:\B{\Hil}\to \B{\Hil}$ such that $\mathcal{R}\circ \Psi = \mathcal{P}_{\Psi}$, where $\mathcal{P}_{\Psi}$ is the projection onto the peripheral space (see Section~\ref{sec:periphery}). Thus, $\mathcal{R}^t\circ \Psi^t = \mathcal{P}_{\Psi}$ for all $t$. Using this fact, we can construct a $(d_k,0)$ quantum code (see Definition~\ref{def:quantum-protocol}) for $\Psi^t$ for all $t$ as follows. Let $V_k : \Hil_{k,1}\otimes \Hil_{k,2} \hookrightarrow \Hil_0^{\perp}$ and $V:\Hil_0^{\perp}\hookrightarrow \Hil$ be the canonical isometries. Define the encoder $\mathcal{E}_k : \B{\Hil_{k,1}}\to \B{\Hil}$ as
\begin{equation}
   \forall X\in \B{\Hil_{k,1}}: \quad \mathcal{E}_k(X) = V V_k (X \otimes \delta_k) (VV_k)^{\dagger}, 
\end{equation}
and the decoder $\mathcal{D}_{t,k}:\B{\Hil}\to \B{\Hil_{k,1}}$ as $\mathcal{D}_{t,k} = \Tr_{k,2} \circ R_{VV_k} \circ \mathcal{R}^t$, where $R_{VV_k}:\B{\Hil}\to \B{\Hil_{k,1}\otimes \Hil_{k,2}}$ is the restriction channel 
\begin{equation}
    R_{VV_k}(\cdot)= (VV_k)^{\dagger} (\cdot) VV_k + \Tr [(\iden_{\Hil} - VV_k (VV_k)^{\dagger})(\cdot) ]\sigma_k,
\end{equation}
for some arbitrary state $\sigma_k\in \State{\Hil_{k,1}\otimes \Hil_{k,2}}$ and $\Tr_{k,2}$ is the partial trace over $\Hil_{k,2}$. It is then evident that $\mathcal{D}_{t,k}\circ \Psi^t \circ \mathcal{E}_k : \B{\Hil_{k,1}}\to \B{\Hil_{k,1}}$ acts as the identity. Hence,
\begin{equation}
     \forall t\in \mathbb{N}, \forall \epsilon\in [0,1): \quad   \log (\max_k d_k) \leq Q_0 (\Psi^t) \leq Q_{\epsilon}(\Psi^t). 
\end{equation}

\medskip

\noindent
{\textbf{Achievability: Private classical communication~\eqref{cplo}}}
\smallskip

Since $Q_0 (\Psi)\leq C^{\operatorname{p}}_0 (\Psi)$ holds for any channel $\Psi$ (see Lemma~\ref{lemma:Q0<=P0}), we obtain
    \begin{equation}
    \forall t\in \mathbb{N}, \forall\epsilon\in [0,1): \quad   \log (\max_k d_k) \leq Q_0 (\Psi^t) \leq C^{\operatorname{p}}_0 (\Psi^t) \leq  C^{\operatorname{p}}_{\epsilon}(\Psi^t). 
    \end{equation}
  \medskip

\noindent
{\textbf{Achievability: Classical communication~\eqref{clo}}}
\smallskip

    For classical communication (Definition~\ref{def:classical-protocol}), we can send $\sum_{k=1}^K d_k$ messages perfectly (i.e., with $\epsilon=0$ error) through $\Psi^t$ for all $t$ by using the encoding states $\{ \ketbra{i_k} \otimes \delta_k \}$ for $k=1,2,\ldots ,K$ and $i_k = 1,2,\ldots , d_k$, where $\ketbra{i_k}$ are the diagonal matrix units in $\B{\Hil_{k,1}}$ and $\delta_k$ are given in Eq.~\eqref{eq:phasespace-storage}. Note that for each $k$, the state $\ketbra{i_k}\otimes \delta_k$ is supported only on $\Hil_{k,1}\otimes \Hil_{k,2}$. From the permutation+unitary action of $\Psi$ on its peripheral space (see Eq.~\eqref{eq:phaseaction}), it is clear that the outputs of these states under $\Psi^t$ are mutually orthogonal for all $t$ and hence, are perfectly distinguishable. Hence, 
    \begin{equation}
    \forall t\in \mathbb{N}, \forall\epsilon\in [0,1): \quad \log \left(\sum_{k=1}^K d_k\right)\leq C_0(\Psi^t)\leq  C_{\epsilon} (\Psi^t).
    \end{equation}
  \medskip

\noindent
{\textbf{Achievability: Entanglement-assisted classical communication~\eqref{cealo}}}
 \smallskip

    With entanglement assistance (Definition~\ref{def:entanglement-classical-protocol}), we can perfectly transmit $\sum_{k=1}^K d^2_k$ classical messages through $\Psi^t$ for all $t$. To see this, we start with an entangled state 
    \begin{equation}
        \frac{1}{K} \bigoplus_{k=1}^K (\psi^+_k \otimes \delta_k) \in \State{\oplus_k (\Hil \otimes \Hil_{k,1}\otimes \Hil_{k,2}) },
    \end{equation}
    where $\psi^+_k\in \State{\Hil\otimes \Hil_{k,1}}$ are maximally entangled states of Schmidt rank $d_k= \dim \, \Hil_{k,1}$ and $\delta_k$ are the states given in Eq.~\eqref{eq:phasespace-storage}. For each $k$, we apply an orthogonal set of unitary operators in $\B{\Hil_{k,1}}$ locally on $\Hil_{k,1}$ to encode $d_k^2$ many classical messages in orthogonal states\footnote{This is exactly the encoding scheme employed in the superdense coding protocol \cite{Bennett1992dense, Werner2001dense}.}, thus encoding $\sum_k d_k^2$ messages in total. The permutation+unitary action of $\Psi$ on its peripheral space (see Eq.~\eqref{eq:phaseaction}) ensures that these states remain orthogonal (and hence perfectly distinguishable) after the action of $\id_{\Hil} \otimes \Psi^t$ for all $t$. Thus,
    \begin{equation}
    \forall t\in \mathbb{N}, \forall\epsilon\in [0,1): \quad \log \left(\sum_{k=1}^K d^2_k\right)\leq  C_0^{\operatorname{ea}}(\Psi^t)\leq  C_{\epsilon}^{\operatorname{ea}} (\Psi^t).
    \end{equation}
\medskip

Next, we prove the converse bounds in Eqs.~\eqref{eq:Qconverse}-\eqref{eq:Ceaconverse}.
\smallskip

The proofs of these bounds for the quantum, classical, and entanglement-assisted classical capacities start similarly and so we consider them together below. Let us fix $\epsilon\in [0,1)$. Note that $\delta_t=\norm{\Psi^t-\Psi^t_{\infty}}_{\diamond} \to 0$ as $t\to \infty$ 
so that for $t$ large enough such that $\epsilon+\delta_t< 1$, we can use the `continuity-like' bound from Lemma~\ref{lemma:epsilon-delta} to obtain the following inequalities:
\begin{align}
    Q_{\epsilon}(\Psi^t) &\leq Q_{\epsilon + \delta_t} (\Psi^t_{\infty}) \leq Q_{\epsilon + \delta_t} (\mathcal{P}_{\Psi}),  \\
    C_{\epsilon}(\Psi^t) &\leq C_{\epsilon + \delta_t} (\Psi^t_{\infty}) \leq C_{\epsilon + \delta_t} (\mathcal{P}_{\Psi}),  \\
    C_{\epsilon}^{\operatorname{ea}}(\Psi^t) &\leq C^{\operatorname{ea}}_{\epsilon + \delta_t} (\Psi^t_{\infty}) \leq C^{\operatorname{ea}}_{\epsilon + \delta_t} (\mathcal{P}_{\Psi}). 
\end{align}
Note that the second set of inequalities above follow from the bottleneck inequalities in Lemma~\ref{lemma:bottleneck-oneshot}, since $\Psi^t_{\infty}=\Psi^t_{\infty} \circ \mathcal{P}_{\Psi} = \mathcal{P}_{\Psi}\circ \Psi^t_{\infty}$ for all $t\in\mathbb{N}$ (see Definition~\ref{def:peripheral-proj}). Next, without loss of generality, we assume that $\Hil_0=\{0\}$ in Eq.~\eqref{eq:phasespace-storage}\footnote{If $\Hil_0 \neq \{0\}$, we can choose to work with $\overbar{\mathcal{P}}_{\Psi}$ instead, which has the same capacities as $\mathcal{P}_{\Psi}$, see Lemma~\ref{lemma:PPar}.}, so that $\Hil = \oplus_{k=1}^K \Hil_{k,1}\otimes \Hil_{k,2}$ and $\mathcal{P}_{\Psi}=\overbar{\mathcal{P}}_{\Psi}$ projects onto 
\begin{equation}
\chi (\Psi) = \bigoplus_{k=1}^K (\B{\Hil_{k,1}}\otimes \delta_k)
\end{equation}
in the following way
\begin{align}\label{eq:phase-proj}
  \forall X\in \B{\Hil}: \quad \mathcal{P}_{\Psi}(X) &= \bigoplus_{k=1}^K \Tr_{k,2} (V^{\dagger}_k X V_k) \otimes \delta_k, 
\end{align}
where $V_k : \Hil_{k,1}\otimes \Hil_{k,2}\to \Hil$ are the canonical (inclusion) isometries and $\Tr_{k,2}$ denotes the partial trace over $\Hil_{k,2}$ (see Section~\ref{sec:periphery}). Before proceeding further, let us fix a label $A$ to the Hilbert space $\Hil$, so that our channel is defined as $\Psi:\B{\Hil_A}\to \B{\Hil_A}$. This will help us keep track of the subsystems involved.
\medskip

\noindent
{\textbf{Converse: Quantum communication~\eqref{eq:Qconverse}}}
\smallskip

For the quantum capacity, Lemma~\ref{lemma:one-shot-converse} shows that 
\begin{align}
    Q_{\epsilon + \delta_t}(\mathcal{P}_{\Psi}) &\leq I^c_{\max}(\mathcal{P}_{\Psi}) + \log(\frac{1}{1-\epsilon-\delta_t}) \\ 
    &= \sup_{\psi_{RA}} \inf_{\sigma_A} D_{\max}(\mathcal{P}_{A\to A}(\psi_{RA}) || \iden_R \otimes \sigma_A ) + \log(\frac{1}{1-\epsilon - \delta_t}), 
\end{align}
where the supremum is over all pure states $\psi_{RA}$ with $d_R=d_A$ (Remark~\ref{remark:pure}). We bound the first term above as follows. For a pure state $\psi_{RA}$, we use Eq.~\eqref{eq:phase-proj} to write
\begin{equation}
    \mathcal{P}_{A\to A}(\psi_{RA}) = \bigoplus_k \lambda_k \frac{1}{\lambda_k} \Tr_{k,2} \left[ (\iden_R\otimes V^{\dagger}_k) \psi (\iden_R \otimes V_k) \right] \otimes \delta_k = \bigoplus_k \lambda_k \theta_k \otimes \delta_k, 
\end{equation}
where $\lambda_k = \Tr \left[ (\iden_R\otimes V^{\dagger}_k) \psi (\iden_R \otimes V_k) \right]$ and each $\theta_k$ is a state in $\State{\Hil_R\otimes \Hil_{k,1}}$. Thus, by choosing $\sigma_A = \oplus_k \lambda_k \sigma_k \otimes \delta_k$, where $\sigma_k$ are arbitrary states in $\State{\Hil_{k,1}}$, we get 

\begin{align}
    \inf_{\sigma} D_{\max}(\mathcal{P}_{A\to A}(\psi_{RA}) || \iden_R \otimes \sigma_A ) &\leq \inf_{ \{ \sigma_k \}_k} 
    D_{\max}\left(\bigoplus_k \lambda_k \theta_k \otimes \delta_k \bigg\| \bigoplus_k \lambda_k \iden_R \otimes \sigma_k \otimes \delta_k \right) \nonumber \\ 
    &= \inf_{ \{\sigma_k\}_k } \max_k D_{\max} (\theta_k || \iden_R \otimes \sigma_k) \nonumber \\ 
    &= \max_k \inf_{\sigma_k} D_{\max} (\theta_k || \iden_R \otimes \sigma_k) \nonumber \\
    &\leq \log (\max_k d_k),
\end{align}
where the first equality follows from quasi-convexity of $D_{\max}$ (Remark~\eqref{remark:Dmax-quasi}), the second equality follows from Lemma~\ref{lemma:infmax}, and the last inequality follows from the fact that for any state $\rho_{AB}$,
\begin{equation}
    \inf_{\sigma\in \State{\Hil_B}} D_{\max}(\rho_{AB} \Vert \iden_A \otimes \sigma_B) \leq D_{\max}(\rho_{AB} \Vert \iden_A \otimes \iden_B/d_B) \leq \log d_B.
\end{equation}

\medskip

\noindent
{\textbf{Converse: Classical communication~\eqref{eq:Cconverse}}}
\smallskip

For the classical capacity, Lemma~\ref{lemma:one-shot-converse} shows that
\begin{align}
    C_{\epsilon + \delta_t}(\mathcal{P}_{\Psi}) &\leq \chi_{\max}(\mathcal{P}_{\Psi}) + \log(\frac{1}{1-\epsilon-\delta_t}) \\    
    &=\sup_{\rho_{MA}} \inf_{\sigma_A} D_{\max}(\mathcal{P}_{A\to A}(\rho_{MA}) || \rho_M \otimes \sigma_A ) + \log(\frac{1}{1-\epsilon - \delta_t}), 
\end{align}
where the supremum is over all cq states $\rho_{MA}$ (Definition~\ref{def:channel-measures}). Now, for an arbitrary cq state $\rho_{MA}=\sum_m p_m \ketbra{m}_M \otimes \rho^m_A$, note that 
\begin{align}
    \inf_{\sigma}D_{\max}(\mathcal{P}_{A\to A}(\rho_{MA}) || \rho_M \otimes \sigma_A ) = \inf_{\sigma} \max_m  D_{\max}(\mathcal{P}_{A \to A}(\rho^m_A) || \sigma_A ) \leq \log (\sum_k d_k),
\end{align}
where the inequality follows by choosing $\sigma = (\oplus_k \iden_k \otimes \delta_k)/\sum_k d_k$. This is because for any state $\rho$, its projection $\mathcal{P}(\rho)$ onto the peripheral space is dominated by $\oplus_k (\iden_k \otimes \delta_k )$ (see Eq.~\eqref{eq:phase-proj}).

\medskip
\medskip

\noindent
{\textbf{Converse: Entanglement-assisted classical communication~\eqref{eq:Ceaconverse}}}
\smallskip

We again use Lemma~\ref{lemma:one-shot-converse} to write
\begin{align}
    C^{\operatorname{ea}}_{\epsilon+\delta_t}(\mathcal{P}_{\Psi}) &\leq I_{\max}(\mathcal{P}_{\Psi}) + \log(\frac{1}{1-\epsilon - \delta_t}) \\ 
    &=\sup_{\psi_{RA}} \inf_{\sigma_A} D_{\max}(\mathcal{P}_{A\to A}(\psi_{RA}) || \psi_R \otimes \sigma_A) + \log(\frac{1}{1-\epsilon-\delta_t}), 
\end{align}
where the supremum is over all pure states $\psi_{RA}$ with $d=d_R=d_A$ (Remark~\ref{remark:pure}). Note that the supremum here is achieved by a maximally entangled state (see \cite[Remark 2]{Fang2020smooth} and Lemma~\ref{lemma:Imax})
\begin{equation}
  \psi^+_{RA}= \Omega_{RA}/d  \quad \text{where} \quad  \ket{\Omega}_{RA} = \sum_i \ket{\alpha_i}_R \otimes \ket{\alpha_i}_A.
\end{equation}
We assume that the basis of $\Hil_R \simeq \Hil_A = \oplus_k \Hil_{k,1}\otimes \Hil_{k,2}$ chosen above is 
\begin{equation}
    \ket{\alpha_{(i_k, j_k)}} = \ket{i_k}_{\Hil_{k,1}} \otimes \ket{j_k}_{\Hil_{k,2}},
\end{equation}
where for each $k$, $\{\ket{i_k}_{\Hil_{k,1}}\}_{i_k}$ and $\{\ket{j_k}_{\Hil_{k,2}}\}_{j_k}$ are the canonical bases for $\Hil_{k,1}$ and $\Hil_{k,2}$, respectively. Now, 
\begin{align}
     \mathcal{P}_{A\to A}(\psi^+_{RA}) = \frac{1}{d} \bigoplus_k \Tr_{k,2} \left[ (\iden_R\otimes V^{\dagger}_k) \Omega_{RA} (\iden_R \otimes V_k) \right] \otimes \delta_k = \frac{1}{d} \bigoplus_k \theta_k \otimes \delta_k, 
\end{align}
where $\theta_k = \Tr_{k,2} \left[ (\iden_R\otimes V^{\dagger}_k) \Omega_{RA} (\iden_R \otimes V_k) \right]$ is a positive operator in $\B{\Hil_R\otimes \Hil_{k,1}}$ with $\norm{\theta_k}_{\infty}=d_k$. Let us choose $\sigma_A = \oplus_k \lambda_k (\iden_k /d_k \otimes \delta_k )$, where $\{ \lambda_k = d^2_k / \sum_{l} d^2_{l} \}_{k=1}^K$ is a probability distribution. Then,
\begin{align}
 \inf_{\sigma} D_{\max}(\mathcal{P}_{A\to A}(\psi^+_{RA}) || \psi_R^+ \otimes \sigma_A) &\leq D_{\max}\left(\frac{1}{d} \bigoplus_k \theta_k \otimes \delta_k \bigg\|  \frac{1}{d} \bigoplus_k \lambda_k \iden_R \otimes \frac{\iden_k}{d_k} \otimes \delta_k \right) \nonumber \\
  &= \max_k D_{\max} \left( \theta_k \bigg\| \lambda_k \iden_R \otimes \frac{\iden_k}{d_k} \right) \nonumber \\
  &= \max_k \log \left( \frac{d_k}{\lambda_k} \norm{\theta_k}_{\infty}  \right) \nonumber \\
  &= \max_k\log (d^2_k / \lambda_k  ) = \log \left(\sum_k d^2_k\right),
\end{align}
where the first equality follows from quasi-convexity of $D_{\max}$ (Remark~\ref{remark:Dmax-quasi}) and the second equality follows from Definition~\ref{def:divergence} 
\medskip

\noindent
{\textbf{Converse: Private classical communication~\eqref{eq:Pconverse}}}
\smallskip

The proof of the converse bound for private classical capacity requires a slightly different line of argumentation. We again fix $\epsilon\in [0,1)$ and note that $\delta_t=\norm{\Psi^t-\Psi^t_{\infty}}_{\diamond} \to 0$ as $t\to \infty$, 
so that for $t$ large enough such that $\epsilon+\delta_t< 1$, we can use Theorem~\ref{lemma:QP<=E}, \ref{lemma:Eepsilon-delta} and \ref{lemma:channel-bottlenecks}, and the fact that $\Psi^t_{\infty}=\Psi^t_{\infty} \circ \mathcal{P}_{\Psi} = \mathcal{P}_{\Psi}\circ \Psi^t_{\infty}$ for all $t\in \mathbb{N}$ to write 
\begin{align}
    C^{\operatorname{p}}_{\epsilon}(\Psi^t) \leq E_H^{\epsilon} (\Psi^t) &\leq E_H^{\epsilon + \delta_t} (\Psi^t_{\infty})  \nonumber\\ 
    &\leq E_H^{\epsilon + \delta_t} (\mathcal{P}_{\Psi}) \nonumber \\
    &\leq E_{\max}(\mathcal{P}_{\Psi}) + \log (\frac{1}{1-\epsilon-\delta_t}) \\
    &= \sup_{\psi_{RA}} \inf_{\sigma_{RA}\in \operatorname{SEP}(R:A)} D_{\max} (\mathcal{P}_{A\to A}(\psi_{RA}) \Vert \sigma_{RA}) + \log (\frac{1}{1-\epsilon-\delta_t}),
\end{align}
where the supremum is over all pure states $\psi_{RA}\in \State{\Hil_R \otimes \Hil_A}$ with $d_R=d_A$. We bound the first term above as follows. For a pure state $\psi_{RA}$, we use Eq.~\eqref{eq:phase-proj} to write
\begin{equation}
    \mathcal{P}_{A\to A}(\psi_{RA}) = \bigoplus_k \lambda_k \frac{1}{\lambda_k} \Tr_{k,2} \left[ (\iden_R\otimes V^{\dagger}_k) \psi (\iden_R \otimes V_k) \right] \otimes \delta_k = \bigoplus_k \lambda_k \theta_k \otimes \delta_k, 
\end{equation}
where $\lambda_k = \Tr \left[ (\iden_R\otimes V^{\dagger}_k) \psi (\iden_R \otimes V_k) \right]$ and each $\theta_k$ is a state in $\State{\Hil_R\otimes \Hil_{k,1}}$. Thus, by choosing $\sigma_{RA} = \oplus_k \lambda_k \sigma_k \otimes \delta_k$, where $\sigma_k$ are arbitrary separable states in $\State{\Hil_R\otimes \Hil_{k,1}}$, we get 

\begin{align}
    \inf_{\sigma\in \operatorname{SEP}(R:A)} D_{\max}(\mathcal{P}_{A\to A}(\psi_{RA}) || \sigma_{RA} ) &\leq \inf_{ \{ \sigma_k \}_k } 
    D_{\max}\left(\bigoplus_k \lambda_k \theta_k \otimes \delta_k \bigg\| \bigoplus_k \lambda_k \sigma_k \otimes \delta_k \right) \nonumber \\ 
    &= \inf_{ \{\sigma_k\}_k } \max_k D_{\max} (\theta_k || \sigma_k) \nonumber \\ 
    &= \max_k \inf_{\sigma_k} D_{\max} (\theta_k || \sigma_k) \nonumber \\
    &\leq \log ( \max_k d_k),
\end{align}
where the first equality follows from the quasi-convexity of $D_{\max}$ (Remark~\ref{remark:Dmax-quasi}), the second equality follows from Lemma~\ref{lemma:infmax}, and the last inequality follows from the fact that for any state $\rho_{AB}$ (see Lemma~\ref{lemma:Emax}),
\begin{equation}
    \inf_{\sigma\in \operatorname{SEP}(A:B)} D_{\max}(\rho_{AB}\Vert \sigma_{AB}) \leq \log \min (d_A,d_B ).
\end{equation}
\end{proof}

\subsection{Applications} \label{sec:consequences-qms-storage}

In this section, we collect some simple consequences of Theorem~\ref{theorem:main-storage}. Firstly, we note that the capacities of the peripheral projection channel $\mathcal{P}_{\Psi}$ can be bounded as follows.

\begin{lemma}\label{corollary:cap-projection}
    Let $\Psi:\B{\Hil}\to \B{\Hil}$ be a channel with associated peripheral projection channels $\mathcal{P}_{\Psi}:\B{\Hil}\to \B{\Hil}$ and $\overbar{\mathcal{P}}_{\Psi}:\B{\Hil}\to \B{\Hil}$ from Section~\ref{sec:periphery}. Let $\epsilon\in [0,1)$. Then,
    
    \begin{alignat*}{2}
    \log (\max_k d_k) &\leq  Q_{\epsilon}(\mathcal{P}_{\Psi})\leq  \log (\max_k d_k) + \log (\frac{1}{1-\epsilon}),  \\ 
     \log (\max_k d_k) &\leq C^{\operatorname{p}}_{\epsilon}(\mathcal{P}_{\Psi}) \leq  \log (\max_k d_k) + \log (\frac{1}{1-\epsilon}),  \\ 
     \log\left(\sum_k d_k\right) &\leq C_{\epsilon}(\mathcal{P}_{\Psi}) \leq \log\left(\sum_k d_k\right) + \log (\frac{1}{1-\epsilon}), \\
     \log\left(\sum_k d^2_k\right) &\leq C^{\operatorname{ea}}_{\epsilon}(\mathcal{P}_{\Psi}) \leq \log\left(\sum_k d^2_k\right) + \log (\frac{1}{1-\epsilon}),
    \end{alignat*}
    where $d_k=\dim \Hil_{k,1}$ for $k\in \{1,2,\ldots ,K \}$ are the block dimensions in the decomposition of $\chi (\Psi)$.
\end{lemma}
\begin{proof}
    The lower bounds follow trivially, since $\mathcal{P}_{\Psi}$ projects onto the peripheral space $\mathscr{X}(\Psi)$ and hence acts as identity on the matrix blocks $\B{\Hil_{k,1}}$ in Eq.~\eqref{eq:phasespace-storage}. The converse bounds were obtained in the proof of Theorem~\ref{theorem:main-storage}. Also note that the communication capacities of $\mathcal{P}_{\Psi}$ and $\overbar{\mathcal{P}}_{\Psi}$ are identical, see Lemma~\ref{lemma:PPar}. 
\end{proof}

Next, we derive the infinite-time capacities of a dQMS by taking the limit $t\to \infty$ in Theorem~\ref{theorem:main-storage}.

\begin{corollary}\label{corollary:main}
For a channel $\Psi:\B{\Hil}\to \B{\Hil}$ and $\epsilon\in [0,1)$, the following holds true:
    \begin{align}
     \log (\max_k d_k) \leq\lim_{t\to \infty } Q_{\epsilon}(\Psi^t) &\leq  \log (\max_k d_k) + \log (\frac{1}{1-\epsilon}) \label{eq:Qinfty}  \\ 
     \log (\max_k d_k) \leq\lim_{t\to \infty } C^{\operatorname{p}}_{\epsilon}(\Psi^t) &\leq  \log (\max_k d_k) + \log (\frac{1}{1-\epsilon})  \\ 
     \log\left(\sum_k d_k\right)\leq \lim_{t\to \infty } C_{\epsilon}(\Psi^t) &\leq \log\left(\sum_k d_k\right) + \log (\frac{1}{1-\epsilon}) \label{eq:Cinfty} \\
     \log\left(\sum_k d^2_k\right)\leq \lim_{t\to \infty } C^{\operatorname{ea}}_{\epsilon}(\Psi^t) &\leq \log\left(\sum_k d^2_k\right) + \log (\frac{1}{1-\epsilon}),
\end{align}
where $d_k=\dim \Hil_{k,1}$ for $k\in \{1,2,\ldots ,K \}$ are the block dimensions in the decomposition of $\chi (\Psi)$.
\end{corollary}

\begin{remark}
    The bottleneck relation (Lemma~\ref{lemma:bottleneck-oneshot}) implies that the limit $\lim_{t\to \infty} \mathbb{Q}(\Psi^t)$ always exists for any channel $\Psi:\B{\Hil}\to \B{\Hil}$ and $\mathbb{Q}\in \{Q_{\epsilon}, C^{\operatorname{p}}_{\epsilon}, C_{\epsilon}, C^{\operatorname{ea}}_{\epsilon} \}$, even though the sequence $(\Psi^t)_{t\in \mathbb{N}}$ might not itself have a limit.
\end{remark}

\begin{remark}
    Eqs.~\eqref{eq:Qinfty} and \eqref{eq:Cinfty} were independently proved in \cite{singh2024zero} (for the $\epsilon=0$ case) and in \cite{fawzi2024error} (for arbitrary $\epsilon\in [0,1)$). The $\epsilon=0$ case of Eq.~\eqref{eq:Cinfty} is also proved in \cite{guan2016zero}. 
\end{remark}

\begin{remark}
    The achievability bounds in Eqs.~\eqref{qlo}-\eqref{cealo} of Theorem~\ref{theorem:main-storage} are obtained by constructing codes that work with $\epsilon=0$ error. It is unclear whether these bounds can be improved by explicitly taking $\epsilon$ into account. In this regard, we note that the achievability bounds in \cite{fawzi2024error} on quantum and classical capacities take $\epsilon$ into account and are slightly better than the ones in Eqs.~\eqref{qlo},\eqref{clo}. However, the error criteria they consider when defining the capacities are of the average kind, as opposed to the worst-case error criteria that we employ (see Definitions~\ref{def:classical-protocol}-\ref{def:quantum-protocol}). 
\end{remark}

\subsubsection{Rate of convergence}\label{subsec:convergence-storage}
Given the infinite-time capacities of a dQMS generated by $\Psi:\B{\Hil}\to \B{\Hil}$ with $d=\dim \Hil$ as in Corollary~\ref{corollary:main}, it is natural to ask how quickly the capacities converge to their infinite time values. Below, we provide a rough estimate. 

\begin{lemma}\label{lemma:lambert-channel}
    Let $\Psi:\B{\Hil}\to \B{\Hil}$ be a quantum channel with $\mu=\operatorname{spr}(\Psi-\Psi_{\infty})<1$. Furthermore, let $\mu_0$ be such that $\mu<\mu_0<1$ and let $\delta\in (0,1)$. Then, $\norm{\Psi^t-\Psi^t_{\infty}}_{\diamond}\leq \delta$ if $t\geq\frac{\mu}{\mu_0-\mu}$ and 
    \begin{equation*}
        t \geq \frac{D}{\ln (1/\mu)} \left( \frac{ \ln (D^{D+3/2}/\delta') }{D} - \ln \left(\frac{\mu \ln (1/\mu)}{1-\mu^2} \right) + \sqrt{2}\sqrt{\frac{ \ln (D^{D+3/2}/\delta')  }{D} - \ln \left(\frac{\mu \ln (1/\mu)}{1-\mu^2} \right) -1} \right),
    \end{equation*}
    where $d=\dim\Hil$, $D=d^2$ and $\delta'=\frac{\delta(1-\mu_0)^{3/2}}{8e^2}$.
\end{lemma}
\begin{proof}
    From \cite[Corollary 4.4]{Szehr2014specconvergence}, we know that for $t>\mu/(1-\mu)$:
\begin{align}
    \norm{\Psi^t - \Psi^t_{\infty}}_{\diamond} &\leq \frac{4e^2 \sqrt{D} (D + 1)}{ \left( 1- (1+ \frac{1}{t})\mu \right)^{3/2}}  \left(\frac{t (1-\mu^2)}{\mu}\right)^{D-1 } \mu^t  \nonumber \\
    &\leq \frac{8e^2 D^{3/2}}{(1-\mu_0)^{3/2}} \left(\frac{t (1-\mu^2)}{\mu}\right)^{D} \mu^t,  \label{eq:LHS<delta}
\end{align}
where the latter inequality follows from $t\geq\frac{\mu}{\mu_0-\mu}>\frac{\mu}{1-\mu}$. Then, Lemma~\ref{lemma:lambert} shows that if 
\begin{equation}
    t \geq \frac{D}{\ln (1/\mu)} \left( \frac{ \ln (D^{D+3/2}/\delta') }{D} - \ln \left(\frac{\mu \ln (1/\mu)}{1-\mu^2} \right) + \sqrt{2}\sqrt{\frac{ \ln (D^{D+3/2}/\delta')  }{D} - \ln \left(\frac{\mu \ln (1/\mu)}{1-\mu^2} \right) -1} \right),
\end{equation}
the RHS in Eq.~\eqref{eq:LHS<delta} is bounded as 
\begin{equation}
    D^{3/2} \left(\frac{t (1-\mu^2)}{\mu}\right)^{D} \mu^t \leq \frac{\delta(1-\mu_0)^{3/2}}{8e^2} = \delta'.
\end{equation}
\end{proof}

The above estimate shows that for any $d-$dimensional quantum memory experiencing noise modelled by a dQMS $(\Psi^t)_{t\in \mathbb{N}}$, 
the asymptotic $(t\to \infty)$ behavior of the noise starts kicking in after time $t \gtrsim d^2\ln d$. Hence, in accordance with Theorem~\ref{theorem:main-storage}, the storage capacity after this time is bounded as
\begin{equation}
    \log(\max_k d_k) \leq Q_{\epsilon}(\Psi^t) \lesssim \log(\max_k d_k) + \log (\frac{1}{1-\epsilon}).
\end{equation}
As discussed in \cite{Szehr2014specconvergence}, the stated bound $t\gtrsim d^2 \ln (d) \geq 2^{2n}$ (which grows exponentially in the number of qubits $n$ in the memory) cannot be improved in general by only utilizing knowledge about the spectrum of $\Psi$. However, in the zero-error case, it is possible to obtain a slightly stronger convergence estimate. Recall that the zero-error capacities of a channel $\Psi:\B{\Hil}\to \B{\Hil}$ admit alternate characterizations in terms of the independence numbers of the non-commutative confusability graph of $\Psi$ (Theorem~\ref{thm:DSW}):
\begin{equation}
    S_{\Psi} = \operatorname{span}\{K^{\dagger}_i K_j : 1\leq i,j\leq p\},
\end{equation}
where $\Psi(X)=\sum_{i=1}^p K_iXK_i^{\dagger}$ is a Kraus representation of $\Psi$ (see Definition~\ref{def: op-sys}). Now, for any channel $\Psi$, we prove in the next section that the operator systems $(S_{\Psi^t})_{t\in \mathbb{N}}$ form an increasing chain of subspaces that stabilizes after time $T$, where $T\leq d^2-\dim S_{\Psi}$ (see Lemma~\ref{lemma:op-chain}):
\begin{equation}
     S_{\Psi} \subset S_{\Psi^2} \subset \ldots \subset S_{\Psi^T} = S_{\Psi^{T+1}} = \ldots
\end{equation}
Hence, the zero-error storage capacities of any $d-$dimensional quantum memory device with an arbitrary dQMS $(\Psi^t)_{t\in \mathbb{N}}$ also stabilize after the same time. We note this in the following result.

\begin{theorem}\label{theorem:main-zero-storage}
Let $\Psi:\B{\Hil}\to \B{\Hil}$ be a quantum channel. Then, $\exists T\leq d^2-\dim S_{\Psi}$ such that
    \begin{equation}
    Q_0(\Psi) \geq Q_0(\Psi^2) \geq \ldots \geq Q_0(\Psi^T)=Q_0(\Psi^{T+1}) = \ldots = \log (\max_k d_k),
    \end{equation}
where $d=\dim \Hil$ and $d_k=\dim \Hil_{k,1}$ for $k\in \{1,2,\ldots ,K \}$ are the block dimensions in the decomposition of $\chi (\Psi)$. Similar stabilization happens for all the other capacities from Theorem~\ref{theorem:main-storage}.
\end{theorem}

\subsubsection{Memories with independent and identical (IID) noise} \label{subsec:IIDconvergence}

In this section, we deal with the special case of IID memory devices. Consider a quantum memory comprised of $n$ qudits, so that the Hilbert space $\Hil \simeq (\C{q})^{\otimes n}$ and the noise $\Psi$ acts independently and identically on all the qudits, i.e., $\Psi=\Gamma^{\otimes n}$ for some channel $\Gamma:\B{\C{q}}\to \B{\C{q}}$. Now, Lemma~\ref{lemma:peripheral-multi} shows that the peripheral space is multiplicative under tensor product: $\chi(\Psi) = \chi (\Gamma)^{\otimes n}$. Hence, applying Corollary~\ref{corollary:main}, we can infer that the infinite-time capacities are additive under tensor product:
\begin{equation}\label{eq:IID}
  \forall \epsilon\in [0,1), \forall n \in \mathbb{N}: \quad \log (\max_{k} d_k) \leq \lim_{t \to \infty} \frac{1}{n} Q_{\epsilon}((\Gamma^{\otimes n})^t) \leq \log (\max_{k} d_k) + \frac{1}{n}\log(\frac{1}{1-\epsilon}),
\end{equation}
where the integers $d_k$ now come from the block decomposition of $\chi (\Gamma)\subseteq \B{\C{q}}$. Hence, the optimal data storage rate of the memory, i.e., the number of logical qubits stored per physical qudit, in the asymptotic time $t\to \infty$ and space $n\to \infty$ limit is independent of $\epsilon$:
\begin{equation}
  \forall \epsilon\in [0,1): \quad \lim_{n\to \infty} \lim_{t \to \infty} \frac{1}{n} Q_{\epsilon}((\Gamma^{\otimes n})^t) = \log (\max_k d_k).
\end{equation}

Let us illuminate the above result from a different perspective. Suppose we try to store qubits in the $n-$qudit memory device at a rate $R>0$ and fidelity $\epsilon\in [0,1)$ by using $(\ceil{2^{nR}},\epsilon)$ quantum codes $(\mathcal{E}_n, \mathcal{D}_{n,t})$ (Definition~\ref{def:quantum-protocol}) for the noise channel $(\Gamma^{\otimes n})^t$. The encoder $\mathcal{E}_{n}:\B{\C{\ceil{2^{nR}}}}\to \B{\Hil}$ encodes logical qubits in the physical memory at time $t=0$, and the decoder $\mathcal{D}_{n,t}:\B{\Hil}\to \B{\C{\ceil{2^{nR}}}}$ is designed to recover the logical data at time $t$, such that the channel fidelity $F(\mathcal{D}_{n,t}\circ (\Gamma^{\otimes n})^t \circ \mathcal{E}_n ) \geq 1-\epsilon$ for all time $t$. Then, by rearranging the right inequality in Eq.~\eqref{eq:IID}, we see that 
\begin{equation}
    \epsilon \geq 1 - 2^{-n \left( R - \log (\max_k d_k) \right) }.
\end{equation}
Thus, if we try to store data in the memory at a rate $R>\log(\max_k d_k)$, the recovery error $\epsilon\to 1$ exponentially as $n\to \infty$. On the other hand, the coding scheme used in the proof of Theorem~\ref{theorem:main-storage} shows that the rate $R=\log(\max_k d_k)$ is achievable with exactly zero-error, since $Q_0(\Gamma^{\otimes n})\geq n \log (\max_k d_k)$ because of superadditivity of the zero-error capacities (Remark~\ref{remark:zero-super}). In other words, any Markovian IID noise process satisfies the \emph{strong converse property} in the asymptotic time $t\to \infty$ limit (see Figure~\ref{fig:error-rate-qms-storage-chapter}). Similar
results hold for the other capacities as well.

\begin{figure}
    \centering
    \includegraphics[width=0.5\linewidth]{error-rate-qms-storage.pdf}
    \caption{The error-rate curve for data storage in a quantum memory with $n$ qudits experiencing IID Markovian noise modelled by a dQMS $((\Gamma^{\otimes n})^t)_{t\in \mathbb{N}}$, where $\Gamma:\B{\C{q}}\to \B{\C{q}}$ is a quantum channel. One can store $n(\log\max_k d_k)$ many logical qubits inside the largest block in the peripheral space $\mathscr{X}(\Gamma)$ (see Eq.~\eqref{eq:phasespace-storage}) perfectly with no recovery error $\epsilon_n=0$. Moreover, any attempt to store data at a higher rate $R>\log\max_k d_k$ fails with certainty as $t\to \infty$, since the data recovery error $\epsilon_n\to 1$ exponentially as $n\to \infty$.}
    \label{fig:error-rate-qms-storage-chapter}
\end{figure}

As before, we can ask how quickly the capacities converge to their infinite-time limits. In order to apply Theorem~\ref{theorem:main-storage} and the convergence estimate from Section~\ref{subsec:convergence-storage}, we need to ensure that the joint channel $\Gamma^{\otimes n}$ is close to the asymptotic part $(\Gamma^{\otimes n})_{\infty}$, which factors like $(\Gamma_{\infty})^{\otimes n}$. This is because the peripheral space is multiplicative, so the projection $\mathcal{P}_{\Gamma^{\otimes n}}$ onto $\mathscr{X}(\Gamma^{\otimes n})=\mathscr{X}(\Gamma)^{\otimes n}$ also factors $\mathcal{P}_{\Gamma^{\otimes n}}=\mathcal{P}_{\Gamma}^{\otimes n}$, see Lemma~\ref{lemma:peripheral-multi}. Hence, 
\begin{equation}
    (\Gamma^{\otimes n})_{\infty} = \mathcal{P}_{\Gamma^{\otimes n}} \circ \Gamma^{\otimes n} = \mathcal{P}_{\Gamma}^{\otimes n} \circ \Gamma^{\otimes n} = (\mathcal{P}_{\Gamma}\circ \Gamma)^{\otimes n} = \Gamma_{\infty}^{\otimes n}.
\end{equation}
Now, we can use the following estimate:
\begin{align}
  \norm{\Psi^t - \Psi_{\infty}^t}_{\diamond} &= \norm{(\Gamma^{\otimes n})^t - (\Gamma_{\infty}^{\otimes n})^t}_{\diamond} \\
  &= \norm{(\Gamma^t)^{\otimes n} - (\Gamma_{\infty}^t)^{\otimes n}}_{\diamond}
\leq n \norm{\Gamma^t - \Gamma_{\infty}^t}_{\diamond},
\end{align}
where we have used \cite[Proposition 3.48]{watrous2018theory} to obtain the inequality. Let $\mu=\operatorname{spr}(\Gamma-\Gamma_{\infty})$ define the spectral gap $1-\mu$ of $\Gamma$ and $\mu_0\in (\mu,1)$. Then, applying Lemma~\ref{lemma:lambert-channel} to the local channel $\Gamma$ shows that for $t\geq \frac{\mu}{\mu_0-\mu}$ and 
\begin{align}
    t \geq \frac{q^2}{\ln (1/\mu)} \Bigg( \frac{ \ln (n) + \ln( q^{2(q^2+3/2)}/\delta') }{q^2}& - \ln \left(\frac{\mu \ln (1/\mu)}{1-\mu^2} \right) \nonumber \\ 
    &+ \sqrt{2}\sqrt{\frac{ \ln (n) + \ln (q^{2(q^2+3/2)}/\delta')}{q^2} - \ln \left(\frac{\mu \ln (1/\mu)}{1-\mu^2} \right) -1} \Bigg),
\end{align}
we have $\norm{\Gamma^t - \Gamma_{\infty}^t}_{\diamond}\leq \delta/n$, where $\delta'=\frac{\delta(1-\mu_0)^{3/2}}{8e^2}$. Hence, after time $t \gtrsim \ln (n)$, the storage capacities are bounded as
\begin{equation}
    \log(\max_k d_k) \leq \frac{1}{n} Q_{\epsilon}((\Gamma^{\otimes n})^t) \lesssim \log(\max_k d_k) + \frac{1}{n}\log (\frac{1}{1-\epsilon}).
\end{equation}
Note that the convergence here is incredibly rapid: for a memory comprised of $n$ qubits (so, $q=2$), the infinite time capacities in the IID case are reached after time $t\gtrsim\ln (n)$, where as in the non-IID case, it takes exponential time $t\gtrsim 2^{2n}$ to do so (Section~\ref{subsec:convergence-storage}).

\subsubsection{Lower bound on overhead of fault-tolerance} \label{subsec:fault-tolerance}

In this subsection, we provide an example of how our results can be used to provide lower bounds on the space overhead of fault-tolerant quantum memories. To this end, let us assume that the memory is comprised of $n$ qubits, undergoing IID depolarizing noise, so that $\Hil = (\C{2})^{\otimes n}$ and the dQMS is given by $((\mathcal{D}_p^{\otimes n})^t)_{t\in\mathbb{N}}$, where $\mathcal{D}_p:\B{\C{2}}\to \B{\C{2}}$ (for $0<p<1$) is the qubit depolarizing channel 
\begin{equation}
    \mathcal{D}_p (X) =  (1-p)X + p \operatorname{Tr}(X) \frac{\iden_2}{2}.
\end{equation}
Furthermore, suppose that we engineer an error-correction procedure $\Psi_{\operatorname{ecc}}:\B{\Hil}\to \B{\Hil}$ to periodically detect and correct errors induced by the depolarizing noise (see Figure~\ref{fig:depol-memory}). Thus, the memory experiences a modified Markovian process $(\Psi^t)_{t\in \mathbb{N}}$ in time, where $\Psi =  \Psi_{\operatorname{ecc}} \circ \mathcal{D}_p^{\otimes n}$. Observe that the depolarizing noise is \emph{strictly contractive} \cite{Raginsky2002contract}: 
\begin{equation}
\eta_{\operatorname{Tr}}(\mathcal{D}_p) := \sup_{\rho\neq \sigma} \frac{\norm{\mathcal{D}_p (\rho) - \mathcal{D}_p(\sigma)}_1}{\norm{\rho - \sigma}_1} = 1-p < 1,    
\end{equation}
where the supremum is over distinct states $\rho,\sigma\in \State{\Hil}$. Moreover, \cite{Park2012super, Shirokov2015zerosuper} shows that $\eta_{\operatorname{Tr}}(\mathcal{D}_p^{\otimes n}) < 1$ for all $n\in \mathbb{N}$, since $\mathcal{D}_p$ is a qubit channel\footnote{It is easy to show that the one-shot zero-error classical capacity of a channel $\Phi$ vanishes if and only if $\Phi$ is strictly contractive: $C_0(\Phi)=0 \iff \eta_{\Tr}(\Phi)<1$ \cite[Proposition 4.2]{Hiai2015contraction}. The results in \cite{Park2012super, Shirokov2015zerosuper} show that for qubit channels $\Phi, \Psi:\B{\C{2}}\to \B{\C{2}}$, if $C_0(\Phi)=C_0(\Psi)=0$, then $C_0(\Phi \otimes \Psi)=0$. In higher dimensions, this is no longer true \cite{Chen2010zerosuper}}. Hence, $\eta_{\operatorname{Tr}}(\Psi) \leq \eta_{\operatorname{Tr}}(\mathcal{D}_p^{\otimes n}) < 1$, which implies that there exists a unique fixed state $\sigma\in \State{\Hil}$ such that $\lim_{t\to \infty} \Psi^t (\rho) = \sigma$ for all input states $\rho\in \State{\Hil}$. Hence, the peripheral space is trivial $\chi (\Psi) = \operatorname{Fix}(\Psi) = \operatorname{span} \{\sigma\}$ with $K=1$ and $d_1=1$ (see Section~\ref{subsec:mixing}). Now, if we want to store $m_{\epsilon}(t)$ logical qubits in the memory for time $t$ with error at most $\epsilon\in [0,1)$, then $m_{\epsilon}(t) \leq Q_{\epsilon}(\Psi^t)$ by Definition~\ref{def:quantum-protocol}. As a result, the analysis from Section~\ref{subsec:convergence-storage} shows that when 
\begin{equation*}
    t \geq \frac{D}{\ln (1/\mu)} \left( \frac{ \ln (D^{D+3/2}/\epsilon') }{D} - \ln \left(\frac{\mu \ln (1/\mu)}{1-\mu^2} \right) + \sqrt{2}\sqrt{\frac{ \ln (D^{D+3/2}/\epsilon')  }{D} - \ln \left(\frac{\mu \ln (1/\mu)}{1-\mu^2} \right) -1} \right),
\end{equation*}
the norm $\norm{\Psi^t-\Psi^t_{\infty}}_{\diamond}\leq \epsilon$ and Theorem~\ref{theorem:main-storage} shows that the number of logical qubits 
\begin{align}
    m_{\epsilon}(t) \leq Q_{\epsilon}(\Psi^t) \leq \log \frac{ 1 }{1 - 2\epsilon}.
\end{align}
Here, $D=2^{2n}=(\dim\Hil)^2$, $\mu=\operatorname{spr}(\Psi-\Psi_{\infty})$, $\mu_0\in (\mu,1)$, $\epsilon'=\frac{\epsilon(1-\mu_0)^{3/2}}{8e^2}$ and we assume that $\epsilon<1/2$. In other words, the memory becomes useless for data storage after time $t\gtrsim D\ln(D) \gtrsim n2^{2n}$. Thus, regardless of what error-correction procedure we come up with, the memory's lifetime can scale at most exponentially with the number of qubits. Put differently, in order to store even a single logical qubit in the memory for $t$ time steps, we need at least $n \gtrsim \log t$ physical qubits. This bound is similar to the one obtained in~\cite{fawzi2022lower}.

\begin{figure}
    \centering
    \includegraphics[width=0.6\linewidth]{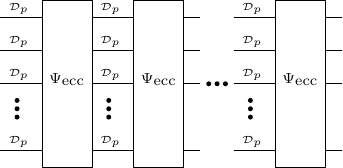}
    \caption{Quantum memory with $n$ physical qubits undergoin IID depolarizing noise with parameter $p\in (0,1)$. An error-correction procedure $\Psi_{\operatorname{ecc}}$ is designed to periodically detect and correct errors induced by the noise. We show that the memory becomes useless for data storage after time $t\gtrsim n2^{2n}$ scaling exponentially with the number of qubits. }
    \label{fig:depol-memory}
\end{figure}

Suppose now that the architecture of the memory device is such that it only allows us to perform error-correction on $k$ qubits at a time, where $k$ is some fixed constant independent of $n$. More precisely, $\Psi_{\operatorname{ecc}}:\B{\Hil}\to \B{\Hil}$ is of the form $\Psi_{\operatorname{ecc}}=\Gamma_{\operatorname{ecc}} \otimes \Gamma_{\operatorname{ecc}} \otimes \ldots$, where $\Gamma_{\operatorname{ecc}}:\B{(\C{2})^{\otimes k}}\to \B{(\C{2})^{\otimes k}}$ is designed to detect and correct errors on a subset of qubits of size $k$ (see Figure~\ref{fig:depol-memory-local}). In this $k$-local setting, the memory experiences a modified IID Markovian process $(\Psi^t)_{t\in \mathbb{N}}$ in time, where $\Psi = \Gamma^{\otimes m}$ and $ \Gamma = \Gamma_{\operatorname{ecc}} \circ \mathcal{D}_p^{\otimes k}$, with the number of qubits being $n=mk$. In this case, the IID convergence analysis from Section~\ref{subsec:IIDconvergence} shows that when 
\begin{align*}
    t \geq \frac{q^2}{\ln (1/\mu)} \Bigg( \frac{ \ln (m) + \ln( q^{2(q^2+3/2)}/\epsilon') }{q^2}& - \ln \left(\frac{\mu \ln (1/\mu)}{1-\mu^2} \right) \nonumber \\ 
    &+ \sqrt{2}\sqrt{\frac{ \ln (m) + \ln (q^{2(q^2+3/2)}/\epsilon')}{q^2} - \ln \left(\frac{\mu \ln (1/\mu)}{1-\mu^2} \right) -1} \Bigg),
\end{align*}
the norm $\norm{\Psi^t-\Psi^t_{\infty}}_{\diamond}\leq m\norm{\Gamma^t - \Gamma^t_{\infty}} \leq \epsilon$ and the number of logical qubits $m_{\epsilon}(t)$ stored in memory is 
\begin{align}
    m(t) \leq Q_{\epsilon}(\Psi^t) \leq \log \frac{ 1 }{1 - 2\epsilon}.
\end{align}
Here, $q=2^{k}$, $\mu=\operatorname{spr}(\Gamma-\Gamma_{\infty})$, $\mu_0\in (\mu,1)$, $\epsilon'=\frac{\epsilon(1-\mu_0)^{3/2}}{8e^2}$ and we assume that $\epsilon<1/2$. Observe how the locality restriction on error-correction severely reduces the memory's lifetime. When $\Psi_{\operatorname{ecc}}$ was allowed to act globally across all $n$ qubits, the memory became useless for data storage after an exponential number of time steps $t\gtrsim n2^{2n}$. In contrast, if $\Psi_{\operatorname{ecc}}$ is $k$-local, the memory becomes useless incredibly rapidly after time $t\gtrsim \ln (n)$. This is similar to the bound obtained in \cite{Razborov2004fault}. 

\begin{remark}
The results in this section also hold true if we replace the depolarizing noise model with any other noisy channel from the set of \emph{strictly contractive} qubit channels, which forms a \emph{dense} convex subset of the set of all qubit channels \cite{Raginsky2002contract}.     
\end{remark}

\begin{figure}
    \centering
    \includegraphics[width=0.6\linewidth]{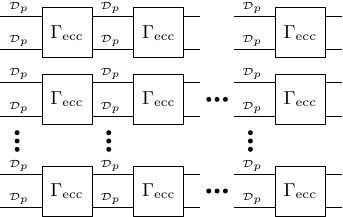}
    \caption{Quantum memory with $n$ physical qubits undergoin IID depolarizing noise with parameter $p\in (0,1)$. A local error-correction procedure $\Psi_{\operatorname{ecc}}=\Gamma_{\operatorname{ecc}}\otimes \Gamma_{\operatorname{ecc}} \otimes \ldots \otimes \Gamma_{\operatorname{ecc}}$ is designed to periodically detect and correct errors induced by the noise, where $\Gamma_{\operatorname{ecc}}$ acts only on two qubits. We show that the memory becomes useless for data-storage after time $t\gtrsim \ln (n)$ scaling logarithmically with the number of qubits. }
    \label{fig:depol-memory-local}
\end{figure}

\subsection{Examples} \label{sec:examples-qms-storage}
In this section, we discuss some examples of Markovian processes $(\Psi^t)_{t\in \mathbb{N}}$ for channels $\Psi:\B{\Hil}\to \B{\Hil}$ and their long-time storage capacities as stated in Theorem~\ref{theorem:main-storage}.

\subsubsection{Mixing noise} \label{subsec:mixing}

What kinds of Markovian processes $(\Psi^t)_{t\in \mathbb{N}}$ are so noisy that they do not allow for even a single classical bit of data to be stored in memory for an arbitrarily long time? According to Theorem~\ref{theorem:main-storage}, these are precisely those processes for which the peripheral space $\chi (\Psi) = 0 \oplus \bigoplus_{k=1}^K (\B{\Hil_{k,1}}\otimes \delta_k)$ is \emph{trivial}, i.e., $\chi (\Psi)$ is such that the following equivalent conditions hold
$$K=1 \text{ and } \dim \Hil_{1,1} = 1  \iff \sum_{k=1}^K \dim \Hil_{k,1} = 1 \iff \sum_{k=1}^K (\dim \Hil_{k,1})^2 =1.$$ 
It is easy to see that the above is equivalent to the process being \emph{mixing}, in the sense that there exists a unique state $\sigma\in \State{\Hil}$ such that 
\begin{equation}
\forall X\in \B{\Hil}: \quad \lim_{t\to \infty} \Psi^t(X) = \Tr (X) \sigma .    
\end{equation}
Hence, for such processes, we get
\begin{align}
    0 \leq \lim_{t\to \infty } C_{\epsilon}(\Psi^t) &\leq  \log (\frac{1}{1-\epsilon})  \\
    0\leq \lim_{t\to \infty } C^{\operatorname{ea}}_{\epsilon}(\Psi^t) &\leq \log (\frac{1}{1-\epsilon}).    
\end{align} 
Notice that even with assistance from pre-shared entanglement, mixing processes are useless for storing classical data as $t\to \infty$. If $\epsilon=0$, the limiting capacity is exactly $0$. Typical examples of such processes are given by the qubit depolarizing $\mathcal{D}_p : \B{\C{2}}\to \B{\C{2}}$ and amplitude damping $\mathcal{A}_{\gamma}: \B{\C{2}}\to \B{\C{2}}$ channels, with respective parameters $p, \gamma\in (0,1)$:
\begin{align}
    \mathcal{D}_p (X) &:= p \operatorname{Tr}(X) \frac{\iden_2}{2} + (1-p)X, \\
    \mathcal{A}_{\gamma} (X) &:= \begin{bmatrix} X_{00} + \gamma X_{11} & \sqrt{1-\gamma} X_{01} \\ \sqrt{1-\gamma} X_{10} & (1-\gamma) X_{11} \end{bmatrix}.
\end{align}
It is easy to check that $\lim_{t\to \infty} \mathcal{D}^t_p (X) = \Tr (X) \iden_2/2$ and $\lim_{t\to \infty} \mathcal{A}^t_{\gamma}(X) = \Tr (X) \ketbra{0}$.

\subsubsection{Asymptotically entanglement breaking noise}
What kinds of Markovian processes are so noisy that they don't allow for even a single logical qubit to be stored in memory for an arbitrarily long time? According to Theorem~\ref{theorem:main-storage}, these are precisely those processes for which the peripheral space $\chi (\Psi) = 0 \oplus \bigoplus_{k=1}^K (\B{\Hil_{k,1}}\otimes \delta_k)$ is \emph{commutative}, all the matrix blocks in $\chi (\Psi)$ are one-dimensional: $\max_k \dim \Hil_{k,1}=1$. One can prove that this is equivalent to the process being \emph{asymptotically entanglement breaking}, in the sense that all the limit points of the set $(\Psi^t)_{t\in \mathbb{N}}$ are entanglement-breaking \cite{Lami2016entsaving}. For such processes, we have
\begin{align}
    0 \leq \lim_{t\to \infty } Q_{\epsilon}(\Psi^t) &\leq  \log (\frac{1}{1-\epsilon})  \\
    0\leq \lim_{t\to \infty } C^{\operatorname{p}}_{\epsilon}(\Psi^t) &\leq \log (\frac{1}{1-\epsilon}).    
\end{align} 
A typical example of such a process is given by the dephasing noise $\Psi_p:\B{\C{2}}\to \B{\C{2}}$
\begin{align}
    \Psi_p (X) := \begin{bmatrix} X_{00} & p X_{01} \\ p X_{10} & X_{11} \end{bmatrix},
\end{align}
where the dephasing parameter $p\in \C{}$ is such that $|p|<1$.

\subsubsection{Collective decoherence noise} 
In this subsection, we consider Markovian processes for which the peripheral space contains non-trivial matrix blocks, so that logical qubits can be stored in memory for an arbitrarily long time. Consider a memory with $n$ physical qubits undergoing Markovian noise $(\Psi^t)_{t\in \mathbb{N}}$ generated by a \emph{correlated Pauli noise} channel $\Psi: \B{(\C{2})^{\otimes n}}\to \B{(\C{2})^{\otimes n}}$ defined as
\begin{equation}
    \Psi(X) := p_0 X + p_x \sigma_x^{\otimes n} X \sigma_x^{\otimes n} + p_y \sigma_y^{\otimes n} X \sigma_y^{\otimes n} + p_z \sigma_z^{\otimes n} X \sigma_z^{\otimes n},
\end{equation}
where $\sigma_x,\sigma_y,\sigma_z\in \B{\C{2}}$ are the Pauli matrices and $p_0,p_x,p_y,p_z>0$ such that $p_0+p_x+p_y+p_z=1$ \cite{Li2011correlated, Kondo2013correlated}. The \emph{interaction algebra} of the noise \cite{Knill2000error} is defined as the algebra generated by its Kraus operators $A(\Psi):= \operatorname{Alg} \{\sigma_x^{\otimes n}, \sigma_y^{\otimes n}, \sigma_z^{\otimes n} \}$. Since Pauli matrices form a basis of $\B{\C{2}}$, we get 
\begin{align}
    A(\Psi):= \operatorname{Alg} \{\sigma_x^{\otimes n}, \sigma_y^{\otimes n}, \sigma_z^{\otimes n} \} = \operatorname{Alg} \{ M^{\otimes n} : M\in \B{\C{2}} \}.
\end{align}
Moreover, since $\Psi$ is unital, the set of fixed points can be obtained by taking the \emph{commutant} of the Kraus operators \cite{Busch1998fixed, kribs2003fixed}:
\begin{align}
    \operatorname{Fix}(\Psi) &:= \{X\in \B{\C{2}}^{\otimes n} : \Psi(X) = X \} = A(\Psi)' \nonumber \\
    &= \{\sigma_x^{\otimes n}, \sigma_y^{\otimes n}, \sigma_z^{\otimes n} \}' := \{X\in \B{\C{2}}^{\otimes n} : [X,\sigma_x^{\otimes n}] = [X, \sigma_y^{\otimes n}] = [X, \sigma_z^{\otimes n}]=0 \}.
\end{align}
Note that the peripheral space of $\Psi$ is contained in the \emph{multiplicative domain} \cite{choi1974, paulsen-book, rahaman2017}: 
\begin{align}
    \chi(\Psi) \subseteq \mathcal{M}(\Psi):= \{X\in \B{\C{2}}^{\otimes n} : \forall Y\in \B{\C{2}}^{\otimes n}, \, \Psi(XY) &= \Psi(X)\Psi(Y) \nonumber \\ 
    \Psi(YX) &= \Psi(Y)\Psi(X) \}.
\end{align}
Since the multiplicative domain arises as the fixed point set of $\Psi^* \circ \Psi$ \cite{rahaman2017}, we obtain 
\begin{align}
    \mathcal{M}(\Psi) = \operatorname{Fix}(\Psi^*\circ \Psi) = A(\Psi^* \circ \Psi)' &= \{ (\sigma_i \sigma_j)^{\otimes n} : i,j=x,y,z \}' \nonumber \\ 
    &= \{\sigma_x^{\otimes n}, \sigma_y^{\otimes n}, \sigma_z^{\otimes n} \}'.
\end{align}
Thus, the following chain of inequalities collapes into equalities:
\begin{align}
    \{\sigma_x^{\otimes n}, \sigma_y^{\otimes n}, \sigma_z^{\otimes n} \}' = \operatorname{Fix}(\Psi) \subseteq \chi (\Psi) \subseteq \mathcal{M}(\Psi) = \{\sigma_x^{\otimes n}, \sigma_y^{\otimes n}, \sigma_z^{\otimes n} \}'.
\end{align}
To compute the relevant algebras, we employ Schur-Weyl duality \cite{Fulton2004rep}  to write
\begin{align}
    A(\Psi) &\simeq \bigoplus_{0\leq j\leq \floor{n/2}} \iden_{f(n-j,j)} \otimes \B{\C{^{g(n-j,j)}}} \nonumber  \\
    \operatorname{Fix}(\Psi) = \chi(\Psi)=A(\Psi)' &\simeq \bigoplus_{0\leq j\leq \floor{n/2}} 
    \B{\C{^{f(n-j,j)}}} \otimes \iden_{g(n-j,j)},
\end{align}
where $f(n-j,j) = {n \choose j} - {n \choose j-1} $ and $g(n-j,j) = n+1-2j$ for $j=0,1,\ldots \floor{n/2}$ and $f(n,0)=1$. Here, we are using the notation from \cite{Li2015collective} (see also \cite{Holbrook2003commutant, Holbrook2005collective}, where the set of fixed points were studied in detail). Thus, Theorem~\ref{theorem:main-storage} shows that the maximum number of logical qubits that can be stored in memory for an arbitrarily long time with error $\epsilon\in[0,1)$ is given by
\begin{equation}
    \log (\max_{0\leq j\leq \floor{n/2}} f(n-j,j)) \leq \lim_{t\to \infty} Q_{\epsilon}(\Psi^t) \leq \log (\max_{0\leq j\leq \floor{n/2}} f(n-j,j)) + \log (\frac{1}{1-\epsilon}).
\end{equation}
Computing the maximum above is an interesting discrete optimization problem \cite{Li2015collective}. Here, we just note that the rate of data storage can be shown to approach $1$ as the number of physical qubits $n\to \infty$ \cite{Kempe2001collective, Li2015collective}:
\begin{align}
    \lim_{n\to \infty} \lim_{t\to \infty} \frac{1}{n} Q_{\epsilon}(\Psi^t) &= \lim_{n\to \infty} \frac{1}{n} \log \max_{0\leq j\leq \floor{n/2}} f(n-j,j) \nonumber \\
    &\geq \lim_{n\to \infty} \frac{1}{n} \log f(n- \floor{n/2},\floor{n/2}) = 1.
\end{align}
One can generalize the above results to general collective decoherence processes $\Psi:\B{(\C{d})^{\otimes n}} \to \B{(\C{d})^{\otimes n}}$, where each qudit undergoes an identical random rotation given by a unitary $U\in\mathbb{U}(d)$ \cite{Zanardi1997collective, Viola2001collective, Fortunato2003collective, Boileau2004collective}: 
\begin{equation}
    \Psi(X) := \int_{U\in\mathbb{U(d)}} U^{\otimes n} X (U^{\dagger})^{\otimes n} d\mu (U) ,
\end{equation}
where $\mu$ is some Borel probability measure on the unitary group $\mathbb{U}(d)$. The relevant algebras in this setting can again be described using Schur-Weyl duality, and have been analyzed in detail in \cite{Junge2005collective, Li2015collective}.

\section{Transmission perspective} \label{sec:qms-transmission}

In this section, we return to the point-to-point communication setting between two spatially separated parties, where Alice wants to use a noisy communication channel $\Phi:\B{\Hil_A}\to \B{\Hil_B}$ many times to faithfully transmit information to Bob. We will assume that the noise in the communication link between Alice and Bob is \emph{Markovian}. We model this by considering a dQMS $(\Psi^l)_{l \in {\mathbb{N}}}$, where $\Psi :\B{\Hil}\to \B{\Hil}$ is a noisy channel, $\Psi^l = \Psi \circ \Psi \cdots \circ \Psi$ is the $l$-fold composition of $\Psi$ with itself, and $l \in {\mathbb{N}}$ plays the role of the length of communication link. Physically, the noise in each unit length of the communication link is modelled by $\Psi$, and since the noise is Markovian, the cumulative noise in a length $l$ segment is given by $\Psi^l$. We will restrict ourselves to `long' noisy communication links $\Psi^l$ of length $l \gtrsim d^2 \ln (d)$, where $d=\dim\Hil$. Mathematically, we are interested in the capacities of channels $\Phi:\B{\Hil}\to \B{\Hil}$ that are $l$-Markovian divisible\footnote{The notion of  {\em{divisibility}} of quantum channels has long been the focus of active research, especially in the study of open quantum systems. See e.g.~\cite{Wolf2008dividing}, \cite{Rivas2014dividing, Breuer2016dividing, Chruciski2022dividing} and references therein.} for `large' $l$, i.e., channels $\Phi$ for which there exists another channel $\Psi$ such that $\Phi=\Psi^l$ with $l \gtrsim d^2\ln (d)$.

\subsection{Zero-error setting}\label{sec:main-zero}
We first tackle the problem of zero-error communication (Definition~\ref{def:zero-error-capacity}) via highly Markovian divisble quantum channels. Recall that the zero-error capacities of a quantum channel $\Phi$ can be expressed in terms of the independence numbers of its operator system $S_{\Phi}$ (Section~\ref{sec:zero-error}). However, computing these independence numbers is a challenging task \cite{Shor2008complexity}.  Our first result shows that if an operator system $S\subseteq \B{\Hil}$ is a $*-$algebra (i.e., $S$ is a $\dagger$-closed subspace containing the identity $\iden$ and closed under matrix multiplication), then its independence numbers can be explicitly computed and are multiplicative. Recall that if $S\subseteq \B{\Hil}$ is a $*-$algebra, then there exists a decomposition $\Hil = \bigoplus_{k} \Hil_{k,1}\otimes \Hil_{k,2}$ such that $S=\bigoplus_{k} \left( \iden_{k,1} \otimes \B{\Hil_{k,2}} \right)$ \cite{Arveson1976algebra, Takesaki1979algebra}. Moreover, this decomposition can be efficiently computed \cite{zarikan2003algebra, Holbrook2003commutant, fawzi2024error}.  

\begin{lemma}\label{lemma:alpha-algebra}
    Let $S= \bigoplus_{k} \left( \iden_{k,1} \otimes \B{\Hil_{k,2}} \right) \subseteq \B{\Hil}$ be a $*-$algebra, where the block structure is with respect to the underlying decomposition $\Hil = \bigoplus_k \Hil_{k,1}\otimes \Hil_{k,2}$ with $d_k = \dim \Hil_{k,1}$. Then,  
    \begin{align*} 
     \sum_k d_k &= \alpha(S) \\
     \sum_k d^2_k &= \alpha_{ea}(S) \\
     \max_k d_k &= \alpha_q (S) = \alpha_p (S) 
    \end{align*}
    Furthermore, if $T$ is another $*-$algebra as above,
    \begin{align*}
    \alpha(S\otimes T) &= \alpha(S)\alpha(T), \\
    \alpha_{ea}(S\otimes T) &= \alpha_{ea}(S)\alpha_{ea}(T), \\
    \alpha_p(S\otimes T) &= \alpha_p(S) \alpha_p(T), \\
    \alpha_q(S\otimes T) &= \alpha_q(S) \alpha_q(T).
    \end{align*}
\end{lemma}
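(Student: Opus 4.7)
The plan is to directly exploit the block-decomposed form of $S$. For any $\ket{\psi}\in\Hil$, write $\ket{\psi} = \sum_k \ket{\psi^k}$ with $\ket{\psi^k}\in\Hil_{k,1}\otimes\Hil_{k,2}$, and identify each block component $\ket{\psi^k}$ with its matrix representation $M^{k,\psi}:\Hil_{k,2}\to\Hil_{k,1}$ via the canonical vectorization isomorphism. A direct computation shows that for $Y=\iden_{k,1}\otimes B$ supported in block $k$, the quantity $\bra{\psi}Y\ket{\phi}$ equals a fixed trace of $B$ against $(M^{k,\psi})^{\dagger}M^{k,\phi}$. All orthogonality conditions appearing in the definitions of $\alpha,\alpha_q,\alpha_p$ thus translate into concrete linear-algebraic conditions on the block matrices.

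For $\alpha(S)$, I would exhibit the lower bound by the family of $\sum_k d_k$ unit vectors $\ket{a}_{k,1}\otimes\ket{\phi_k}_{k,2}$, where $\ket{\phi_k}\in\Hil_{k,2}$ is an arbitrary fixed unit vector and $\ket{a}_{k,1}$ ranges over an orthonormal basis of $\Hil_{k,1}$; block diagonality makes the independence check immediate. For the matching upper bound, given any independent family $\{\ket{\psi_m}\}_{m=1}^N$, the condition $\ketbra{\psi_m}{\psi_{m'}}\perp S$ applied to $Y=\iden_{k,1}\otimes B$ forces $(M^{k,\psi_m})^{\dagger}M^{k,\psi_{m'}}=0$ for every $k$ and $m\neq m'$. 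Hence for each fixed $k$ the nonzero ranges of the $M^{k,\psi_m}$ form pairwise orthogonal subspaces of $\Hil_{k,1}$, so at most $d_k$ of them are nonzero; using that each $\ket{\psi_m}$ must be nonzero in at least one block, a double-counting argument then yields $N\leq\sum_k d_k$.

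For $\alpha_q(S),\alpha_p(S)\geq\max_k d_k$, I would work inside a maximal block: fixing $k^*$ with $d_{k^*}=\max_k d_k$ and a unit $\ket{\phi}\in\Hil_{k^*,2}$, the subspace $\mathcal{C}=\Hil_{k^*,1}\otimes\mathbb{C}\ket{\phi}$ is verified to satisfy $P_{\mathcal{C}}SP_{\mathcal{C}}=\mathbb{C}P_{\mathcal{C}}$, witnessing $\alpha_q(S)\geq d_{k^*}$, and the pure states $\rho_m=\ketbra{e_m}\otimes\ketbra{\phi}$ indexed by an orthonormal basis $\{\ket{e_m}\}$ of $\Hil_{k^*,1}$ are verified to satisfy both private-independence conditions, witnessing $\alpha_p(S)\geq d_{k^*}$. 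Since $\alpha_q\leq\alpha_p$ always, the remaining upper bound reduces to showing $\alpha_p(S)\leq\max_k d_k$. For this, the indistinguishability condition $(\rho_m-\rho_{m'})\perp S$ applied to $\iden_{k,1}\otimes B$ forces $\Tr_{k,1}(Q_k\rho_m Q_k)=\sigma_k$ to be the same operator for every $m$, where $Q_k$ is the block-$k$ projector. Since $\sum_k\Tr\sigma_k=1$, some $\sigma_{k^*}\neq 0$, and then $Q_{k^*}\rho_m Q_{k^*}\neq 0$ for every $m$. The distinguishability condition then produces $N$ pairwise orthogonal nonzero subspaces of $\Hil_{k^*,1}$ (spanned by the ranges of $M^{k^*,\psi}$ for $\ket{\psi}\in\operatorname{supp}\rho_m$), giving $N\leq d_{k^*}\leq\max_k d_k$.

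Finally, the multiplicativity claims are structural: rearranging tensor factors shows $S\otimes T$ is again a $*$-algebra of the same canonical form, with blocks indexed by pairs $(k,j)$ and block dimensions $d_k d_j'$; substituting into the explicit formulas established above immediately yields the three product relations. I expect the main obstacle to be the upper bound on $\alpha_p(S)$: one must first use the privacy constraint to locate a specific block in which every state necessarily has nontrivial support, and only then invoke the distinguishability condition inside that single block to obtain the sharp bound $d_{k^*}$, rather than the much weaker bound $\sum_k d_k$ that one would inherit naively from $\alpha(S)$.
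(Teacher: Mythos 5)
Your proof is correct, and it takes a genuinely different route from the paper's. The paper proves this lemma by recognizing $S$ as the operator system of the channel $\overbar{\mathcal{P}}=\bigoplus_k \id_{k,1}\otimes\mathcal{R}_{k,2}$, importing the one-shot $\eps$-error capacity bounds for such direct-sum channels from an earlier reference, setting $\eps=0$, and translating back to independence numbers via Theorem~\ref{thm:DSW}; multiplicativity is then a one-line corollary of the block structure of $S\otimes T$. You instead give a self-contained, elementary linear-algebra argument via the block-wise vectorization $\ket{\psi^k}\mapsto M^{k,\psi}$, extracting the orthogonality constraint $(M^{k,\psi})^\dagger M^{k,\phi}=0$ from the operator-system perpendicularity condition and reading off all four values directly. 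Both arguments are ultimately driven by the same underlying fact; the paper's version is shorter but delegates the real work to the capacity machinery of \cite{singh2024info} and \cite{Duan2013noncomm}, while yours makes the combinatorics explicit. Your identification of the key step for $\alpha_p(S)\leq\max_k d_k$---using the privacy constraint $(\rho_m-\rho_{m'})\perp S$ to pin down a single block $k^*$ in which every $\rho_m$ has nonzero support, and only then invoking the pairwise-range orthogonality inside that one block---is exactly the right observation, and is the place where a naive adaptation of the $\alpha(S)$ bound would fail. The multiplicativity argument matches the paper's.
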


\begin{proof}
Consider the channel $\overbar{\mathcal{P}}:\B{\Hil}\to \B{\Hil}$ defined as
\begin{align}
    \overbar{\mathcal{P}} &= \bigoplus_k \id_{k,1} \otimes \mathcal{R}_{k,2},
\end{align}
where $\id_{k,1}:\B{\Hil_{k,1}}\to \B{\Hil_{k,1}}$ is the identity channel and $\mathcal{R}_{k,2}:\B{\Hil_{k,2}}\to \B{\Hil_{k,2}}$ is the replacer channel defined as $\mathcal{R}_{k,2}(X) = \Tr(X)\delta_k$ for some states $\delta_k\in \State{\Hil_{k,2}}$. Then, since the operator systems of the identity and replacer channels are $S_{\id_{k,1}}=\mathbb{C}\iden_{k,1}$ and $S_{\mathcal{R}_{k,2}}=\B{\Hil_{k,2}}$, respectively, it is clear that $S = S_{\overbar{\mathcal{P}}}$. Moreover, Corollary~\ref{corollary:cap-projection} shows that for $\epsilon\in [0,1)$, 
\begin{align}
    \log (\max_k d_k) \leq Q_{\epsilon}(\overbar{\mathcal{P}}) &\leq  \log (\max_k d_k) + \log (\frac{1}{1-\epsilon}),  \\ 
     \log (\max_k d_k) \leq C^{\operatorname{p}}_{\epsilon}(\overbar{\mathcal{P}}) &\leq  \log (\max_k d_k) + \log (\frac{1}{1-\epsilon}),  \\ 
     \log\left(\sum_k d_k\right)\leq  C_{\epsilon}(\overbar{\mathcal{P}}) &\leq \log\left(\sum_k d_k\right) + \log (\frac{1}{1-\epsilon}), \\
     \log\left(\sum_k d^2_k\right)\leq  C^{\operatorname{ea}}_{\epsilon}(\overbar{\mathcal{P}}) &\leq \log\left(\sum_k d^2_k\right) + \log (\frac{1}{1-\epsilon}).
     \end{align}
     Notice that the upper and lower bounds coincide if $\epsilon=0$. Hence, the desired expressions follow from the correspondence between the zero-error capacities and independence numbers (Theorem~\ref{thm:DSW}).

    Now, let $T=\bigoplus_j \left( \iden_{j,1} \otimes \B{\Kil_{j,2}} \right) \subseteq \B{\Kil}$ be another $*-$algebra, where the block structure is with respect to the decomposition $\Kil = \bigoplus_j \Kil_{j,1}\otimes \Kil_{j,2}$ and $d_j'=\dim \Kil_{j,1}$. Then, $S\otimes T\subseteq \B{\Hil\otimes \Kil}$ is also a $*-$algebra with the block structure
    \begin{equation}
        S\otimes T = \bigoplus_{k,j} \iden_{k,1} \otimes \iden_{j,1} \otimes \B{\Hil_{k,2}\otimes \Kil_{j,2}}
    \end{equation}
    with respect to the decomposition $\Hil\otimes\Kil= \bigoplus_{k,j} \Hil_{k,1} \otimes \Kil_{j,1} \otimes \Hil_{k,2} \otimes \Kil_{j,2}$. Hence,
    \begin{align}
        \alpha(S\otimes T) = \sum_{k,j} d_k d_j' = \sum_k d_k \sum_j d_j' = \alpha(S)\alpha(T).
    \end{align}
    The multiplicativity of $\alpha_p, \alpha_{ea}$ and $\alpha_q$ follow similarly.
\end{proof}

Next, we introduce the notion of Markovian divisbility. 

\begin{definition}
    A quantum channel $\Phi:\B{\Hil}\to \B{\Hil}$ is said to be $l$-Markovian divisible if there exists another quantum channel $\Psi:\B{\Hil}\to \B{\Hil}$ such that 
    \begin{equation}
        \Phi = \underbrace{\Psi\circ \Psi \circ \ldots \circ\Psi}_{l \operatorname{times}} =: \Psi^l.
    \end{equation}
\end{definition}

The following lemma notes a very important stabilization property of operator systems under identical sequential compositions of quantum channels $\Psi\circ \Psi \circ \ldots \Psi$. 

\begin{lemma}\label{lemma:op-chain}
    Let $\Psi: \B{\Hil}\to \B{\Hil}$ be a quantum channel with $d=\dim \Hil$. Then, there exists $L\leq d^2 - \dim S_{\Psi}$ such that the following chain of (strict) inclusions and equalities are true:
    \begin{equation*}
        S_{\Psi} \subset S_{\Psi^2} \subset \ldots \subset S_{\Psi^L} = S_{\Psi^{L+1}} = \ldots
    \end{equation*}
\end{lemma}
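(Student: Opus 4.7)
The plan is to show two things: first, that the chain $S_{\Psi^n}$ is monotone non-decreasing; second, that once it stabilizes it remains stable; and third, to control the number of strict inclusions by a dimension-counting argument.

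For monotonicity, I would simply invoke the inclusion part of Lemma~\ref{lemma:op-homo} with the substitution $\Phi \leftarrow \Psi^n$ and $\Psi \leftarrow \Psi$, which immediately yields $S_{\Psi^n} \subseteq S_{\Psi\circ\Psi^n} = S_{\Psi^{n+1}}$ for every $n\in\mathbb{N}$. So the chain is non-decreasing.

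The key step is a recursion for the operator systems along the semigroup. Fix a Kraus representation $\Psi(X) = \sum_i K_i X K_i^\dagger$. For any channel $\Phi$ with Kraus operators $\{L_k\}$, the composition $\Phi\circ\Psi$ has Kraus operators $\{L_k K_i\}$, and directly from the definition of the non-commutative graph one computes
\begin{equation*}
S_{\Phi\circ\Psi} \;=\; \mathrm{span}\bigl\{K_i^\dagger Y K_{i'} : Y\in S_\Phi,\; i,i'\bigr\}.
\end{equation*}
Specializing to $\Phi = \Psi^n$, this gives the recursion
\begin{equation*}
S_{\Psi^{n+1}} \;=\; \mathrm{span}\bigl\{K_i^\dagger Y K_{i'} : Y\in S_{\Psi^n},\; i,i'\bigr\},
\end{equation*}
which exhibits $S_{\Psi^{n+1}}$ as the image of $S_{\Psi^n}$ under a fixed, \emph{monotone} linear map $T \mapsto \mathrm{span}\{K_i^\dagger T K_{i'}\}$. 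Consequently, if $S_{\Psi^L} = S_{\Psi^{L+1}}$ for some $L$, then feeding both sides through this map yields $S_{\Psi^{L+1}} = S_{\Psi^{L+2}}$, and by induction the chain is constant from $L$ onwards.

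Finally, the dimension bound: at each strict inclusion $S_{\Psi^n}\subsetneq S_{\Psi^{n+1}}$, the dimension of the operator system strictly increases, hence increases by at least $1$. Since $S_{\Psi^n}\subseteq\B{\Hil}$ is bounded in dimension by $d^2$, the number of strict inclusions starting from $S_\Psi$ (of dimension $\dim S_\Psi$) is at most $d^2 - \dim S_\Psi$, giving the claimed bound on $L$. The only step requiring genuine work is the recursion for $S_{\Psi^{n+1}}$ in terms of $S_{\Psi^n}$, but this is a short direct calculation from Definition~\ref{def: op-sys}; everything else is monotonicity and dimension counting.
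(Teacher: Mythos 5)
Your proposal is correct and mirrors the paper's own argument almost exactly: monotonicity from Lemma~\ref{lemma:op-homo}, the Kraus-level recursion $S_{\Psi^{n+1}} = \mathrm{span}\{K_i^\dagger Y K_j : Y\in S_{\Psi^n}\}$ to show that equality propagates, and a dimension count. (One small note shared with the paper: the chain up to $S_{\Psi^L}$ has $L-1$ strict inclusions, so the count literally yields $L \leq d^2 - \dim S_\Psi + 1$; this is harmless since only $L\leq d^2$ is used downstream.)
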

\begin{proof}
    Lemma~\ref{lemma:op-homo} shows that for all $l\in\mathbb{N}$, $S_{\Psi^l} \subseteq S_{\Psi^{l+1}}$. Moreover, if $S_{\Psi^l}=S_{\Psi^{l+1}}$ for some $l$, 
    \begin{align}
        S_{\Psi^{l+2}} &= \operatorname{span}\{ K_i^{\dagger} X K_j : 1\leq i,j\leq m, X\in S_{\Psi^{l+1}} \} \\
        &= \operatorname{span}\{ K_i^{\dagger} X K_j : 1\leq i,j\leq m, X\in S_{\Psi^{l}} \} \\
        &= S_{\Psi^{l+1}} = S_{\Psi^l}
    \end{align}
    where $\Psi(X)=\sum_{i=1}^m K_i XK_i^{\dagger}$ is some Kraus representation of $\Psi$. Proceeding by induction, we get that $S_{\Psi^{l+k}}=S_{\Psi^l}$ for all $k$. Define $L:= \min \{l\in \mathbb{N} : S_{\Psi^l}=S_{\Psi^{l+1}} \}$, so that 
    \begin{equation}
        S_{\Psi} \subset S_{\Psi^2} \subset \ldots \subset S_{\Psi^L} = S_{\Psi^{L+1}} = \ldots S_{\Psi^{L+k}} = \ldots ,
    \end{equation}
    where the strictness of the inclusions follows from the minimality of $L$. Moreover, since $\dim \B{\Hil} = d^2$, we must have $L\leq d^2 - \dim S_{\Psi}$.
\end{proof}

The above result motivates us to introduce the notion of the ``stabilized operator system'' of a quantum channel. The definition is similar in spirit to that of the stabilized multiplicative domain or the decoherence-free algebra of a unital completely positive map \cite{rahaman2017, Carbone2019stabilized}. 

\begin{definition}
    Let $\Psi:\B{\Hil}\to \B{\Hil}$ be a quantum channel with $d=\dim \Hil$. We define the \emph{stabilized operator system} of $\Psi$ as follows:
    \begin{equation*}
        S_{\Psi^{\infty}} := \bigcup_{l\in \mathbb{N}} S_{\Psi^l} = S_{\Psi^{d^2}} = S_{\Psi^{d^2+1}} =\ldots ,
    \end{equation*}
    where the latter equalities follow from Lemma~\ref{lemma:op-chain}.
\end{definition}

\subsubsection{Structure of the stabilized operator system} \label{sec:stablized}
We now provide a complete characterization of the structure of the stabilized operator system of a quantum channel, which might be of independent interest, especially from the perspective of non-commutative graph theory \cite{Daws2024qgraph}. Recall from Section~\ref{sec:periphery} that for any channel $\Psi:\B{\Hil}\to \B{\Hil}$, there exists a decomposition $\Hil = \Hil_0 \oplus \Hil_0^{\perp} = \Hil_0 \oplus \bigoplus_k \Hil_{k,1} \otimes \Hil_{k,2}$ such that the peripheral space
\begin{equation}\label{eq:phasespace-transmission}
       \mathscr{X} (\Psi) = 0 \oplus \bigoplus_{k=1}^K (\B{\Hil_{k,1}}\otimes \delta_k). 
    \end{equation}

\begin{theorem}\label{thm:stab-opsys}
    For a quantum channel $\Psi:\B{\Hil}\to \B{\Hil}$, the stabilized operator system $S_{\Psi^{\infty}}$ is isomorphic, in the sense of Definition~\ref{def:op-homo}, to a $*-$algebra. More precisely,  we have
    \begin{equation*}
        S_{\Psi^{\infty}} = S_{\mathcal{P}_{\Psi}} \longrightarrow S_{\overbar{\mathcal{P}}_{\Psi}}, \quad S_{\overbar{\mathcal{P}}_{\Psi}} \longrightarrow S_{\mathcal{P}_{\Psi}}, 
    \end{equation*}
    where $\mathcal{P}_{\Psi}:\B{\Hil}\to \B{\Hil}$ and $\overbar{\mathcal{P}}_{\Psi}:\B{\Hil_0^{\perp}}\to \B{\Hil_0^{\perp}}$ are the peripheral projection channels from Section~\ref{sec:periphery}. Furthermore, let $\Hil = \Hil_0 \oplus \Hil_0^{\perp} = \Hil_0 \oplus \bigoplus_k \Hil_{k,1} \otimes \Hil_{k,2}$ be the decomposition such that the peripheral space $\mathscr{X} (\Psi)$ assumes the block structure $\mathscr{X}(\Psi) = 0 \oplus \bigoplus_{k} (\B{\Hil_{k,1}}\otimes \delta_k)$. Then,
    \begin{equation*}
        S_{\overbar{\mathcal{P}}_{\Psi}} = \bigoplus_{k} \left( \mathbb{I}_{k,1} \otimes \B{\Hil_{k,2}} \right) \subseteq \B{\Hil_0^{\perp}}.
    \end{equation*}
\end{theorem}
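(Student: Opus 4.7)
The statement has three parts: (i) the explicit block-diagonal $*$-algebra form of $S_{\overbar{\mathcal{P}}_\Psi}$, (ii) the operator-system homomorphism $S_{\mathcal{P}_\Psi}\longrightarrow S_{\overbar{\mathcal{P}}_\Psi}$, and (iii) the equality $S_{\Psi^\infty}=S_{\mathcal{P}_\Psi}$. I will address them in that order, since (i) and (ii) are direct consequences of the structural material of Section~\ref{sec:periphery}. For (i), starting from the tensor-block form $\overbar{\mathcal{P}}_\Psi=\bigoplus_k \id_{k,1}\otimes\mathcal{R}_{k,2}$ in Eq.~\eqref{eq:phaseproj-2} and a spectral decomposition $\delta_k=\sum_j p_j^{(k)}\ketbra{e_j^{(k)}}$ with all $p_j^{(k)}>0$, I would write out the Kraus operators of $\id_{k,1}\otimes\mathcal{R}_{k,2}$ as $\mathbb{I}_{k,1}\otimes\sqrt{p_j^{(k)}}\ket{e_j^{(k)}}\bra{f_i^{(k)}}$. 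Their pairwise products vanish across distinct blocks and, within block $k$, span $\mathbb{I}_{k,1}\otimes\B{\Hil_{k,2}}$, giving the claimed formula. For (ii), the factorization $\mathcal{P}_\Psi=(\mathcal{V}\circ\overbar{\mathcal{P}}_\Psi)\circ(R_V\circ\mathcal{P}_\Psi)$ from Section~\ref{sec:periphery} combined with Lemma~\ref{lemma:op-homo} yields $S_{\mathcal{P}_\Psi}\longrightarrow S_{\mathcal{V}\circ\overbar{\mathcal{P}}_\Psi}$; since $V^\dagger V=\mathbb{I}_{\Hil_0^\perp}$, the identity $(VF_j)^\dagger(VF_{j'})=F_j^\dagger F_{j'}$ implies $S_{\mathcal{V}\circ\overbar{\mathcal{P}}_\Psi}=S_{\overbar{\mathcal{P}}_\Psi}$.

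The bulk of the proof lies in (iii). For the inclusion $S_{\mathcal{P}_\Psi}\subseteq S_{\Psi^\infty}$, I would exploit the fact that $\mathcal{P}_\Psi$ is a limit point of $(\Psi^l)_l$ together with Lemma~\ref{lemma:op-chain}: pick a subsequence $\Psi^{l_n}\to\mathcal{P}_\Psi$ with $l_n\geq L$ so that $S_{\Psi^{l_n}}=S_{\Psi^\infty}$ is fixed. Choosing consistent Stinespring dilations (padding with zeros where the Choi rank drops), the Kraus operators of $\Psi^{l_n}$ converge to Kraus operators of $\mathcal{P}_\Psi$, and each $F_j^\dagger F_{j'}$ is a limit of elements of the closed subspace $S_{\Psi^\infty}$. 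For the reverse inclusion $S_{\Psi^\infty}\subseteq S_{\mathcal{P}_\Psi}$, I would use the two commuting identities $\Psi_\infty^l=\mathcal{P}_\Psi\circ\Psi^l=\Psi^l\circ\mathcal{P}_\Psi$, which follow from the spectral projectors $\mathcal{P}_i$ commuting with $\Psi$. Lemma~\ref{lemma:op-homo} applied to the first expression gives $S_{\Psi^l}\subseteq S_{\Psi_\infty^l}$. The second expression, together with the structural action of $\Psi^l$ on $\mathscr{X}(\Psi)$ described by Eq.~\eqref{eq:phaseaction} (block permutation composed with intra-block unitary conjugations), allows one to simplify the Kraus products $(A_iF_j)^\dagger(A_{i'}F_{j'})=F_j^\dagger A_i^\dagger A_{i'}F_{j'}$: since the range of $F_{j'}$ lies in $\mathscr{X}(\Psi)$, a block-by-block Kraus computation parallel to (i) reduces $F_j^\dagger A_i^\dagger A_{i'}F_{j'}$ to a Kraus product of $\mathcal{P}_\Psi$ alone. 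This yields $S_{\Psi_\infty^l}=S_{\mathcal{P}_\Psi}$, whence $S_{\Psi^\infty}=S_{\Psi^L}\subseteq S_{\mathcal{P}_\Psi}$.

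The principal obstacle is verifying $S_{\Psi_\infty^l}=S_{\mathcal{P}_\Psi}$ in full generality: while $\Psi^l$ acts on $\mathscr{X}(\Psi)$ as a $*$-automorphism of the algebra $\bigoplus_k\B{\Hil_{k,1}}$, it need not globally lift to a Hilbert-space unitary on $\Hil_0^\perp$ when $\delta_k\neq\delta_{\pi(k)}$ for blocks permuted by $\pi$. The Kraus-level simplification must therefore be carried out explicitly, tracking the tensor-product-with-$\delta_k$ structure of each peripheral block and exploiting that $\pi$ only permutes blocks with equal-dimensional $\Hil_{k,1}$ factors.
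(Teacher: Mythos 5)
Parts (i) and (ii) of your outline are correct and agree with the paper's argument: the block structure of $S_{\overbar{\mathcal{P}}_\Psi}$ is read off from the Kraus operators of the direct sum $\bigoplus_k\id_{k,1}\otimes\mathcal{R}_{k,2}$, and the homomorphism $S_{\mathcal{P}_\Psi}\longrightarrow S_{\overbar{\mathcal{P}}_\Psi}$ falls out of the factorization $\mathcal{P}_\Psi=(\mathcal{V}\circ\overbar{\mathcal{P}}_\Psi)\circ(R_V\circ\mathcal{P}_\Psi)$ via Lemma~\ref{lemma:op-homo}. Your forward inclusion $S_{\mathcal{P}_\Psi}\subseteq S_{\Psi^\infty}$ in (iii) is also essentially right, though the paper states it more cleanly using the complementary-channel formulation $S_\Phi=(\Phi_c)^*(\B{\Hil_E})$ with a common dilation space $\Hil_E$ and continuity of $\Phi\mapsto(\Phi_c)^*$, which sidesteps the question of how to choose compatible Kraus representations along the convergent subsequence.

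The genuine gap is the reverse inclusion $S_{\Psi^\infty}\subseteq S_{\mathcal{P}_\Psi}$, and it is precisely the ``principal obstacle'' you flag. Your reduction to the claim $S_{\Psi_\infty^l}=S_{\mathcal{P}_\Psi}$ rests on a block-by-block Kraus computation that is not carried out, and the pivotal step ``since the range of $F_{j'}$ lies in $\mathscr{X}(\Psi)$'' is not well-formed: $\mathscr{X}(\Psi)$ is a subspace of $\B{\Hil}$, not of $\Hil$, so the range of an individual Kraus operator cannot lie in it. That $\mathcal{P}_\Psi$ has image $\mathscr{X}(\Psi)$ is a statement about the channel, not a pointwise constraint on its Kraus operators, and it does not by itself let you collapse $F_j^\dagger A_i^\dagger A_{i'}F_{j'}$ to a Kraus product of $\mathcal{P}_\Psi$. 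The paper avoids this computation entirely by invoking a reversal channel $\mathcal{R}$ from \cite{wolf2010inverse} with $\mathcal{R}\circ\Psi=\mathcal{P}_\Psi$. Because $\mathcal{P}_\Psi$ commutes with $\Psi$ (being a sum of spectral projectors of $\Psi$), induction gives $\mathcal{R}^{l+1}\circ\Psi^{l+1}=\mathcal{R}\circ\mathcal{P}_\Psi\circ\Psi=\mathcal{R}\circ\Psi\circ\mathcal{P}_\Psi=\mathcal{P}_\Psi$, hence $\mathcal{R}^l\circ\Psi^l=\mathcal{P}_\Psi$ for all $l$, and Lemma~\ref{lemma:op-homo} then yields $S_{\Psi^l}\subseteq S_{\mathcal{R}^l\circ\Psi^l}=S_{\mathcal{P}_\Psi}$ directly. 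That reversal channel is the missing ingredient; without it your argument does not close.
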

\begin{proof}
    Recall from Lemma~\ref{lemma:PPar-op} that $S_{\mathcal{P}_{\Psi}} \longrightarrow S_{\overbar{\mathcal{P}}_{\Psi}}$ and $ S_{\overbar{\mathcal{P}}_{\Psi}} \longrightarrow S_{\mathcal{P}_{\Psi}}$. 
    Moreover, recall from the same remark that $\overbar{\mathcal{P}}_{\Psi}$ can be written as a direct sum of identity and replacer channels:
    \begin{align}
    \overbar{\mathcal{P}}_{\Psi} &= \bigoplus_k \id_{k,1} \otimes \mathcal{R}_{k,2},
    \end{align}
    which shows that $S_{\overbar{\mathcal{P}}_{\Psi}}$ is a $*-$algebra:
    \begin{equation}
        S_{\overbar{\mathcal{P}}_{\Psi}} = \bigoplus_{k} \left( \iden_{k,1} \otimes \B{\Hil_{k,2}} \right).
    \end{equation} 
    Here, we used the fact that $S_{\id_{k,1}}=\mathbb{C}\iden_{k,1}$ and $S_{\mathcal{R}_{k,2}}=\B{\Hil_{k,2}}$. It remains to prove $S_{\Psi^{\infty}}=S_{\mathcal{P}_{\Psi}}$. Recall that there exists a channel $\mathcal{R}:\B{\Hil}\to \B{\Hil}$ such that (Lemma~\ref{lemma:reverse}): $\mathcal{R}\circ \Psi = \mathcal{P}_{\Psi}$, which implies that $\mathcal{R}^l\circ \Psi^l = \mathcal{P}_{\Psi}$ for all $l\in \mathbb{N}$. Then, Lemma~\ref{lemma:op-homo} shows that
    \begin{equation}
      \forall l\in\mathbb{N}: \quad  S_{\Psi^l} \subseteq S_{\mathcal{P}_{\Psi}}.
    \end{equation}
    For the reverse inclusion, let $L\leq d^2$ from Lemma~\ref{lemma:op-chain} be such that $S_{\Psi^{\infty}}=S_{\Psi^L}$. Let $\Hil_E$ be a common Stinespring dilation space for $\Psi^l$ for all $l\in \mathbb{N}$. Then, we can write $S_{\Psi^l} = [(\Psi^l)_c]^*(\B{\Hil_E})$. Hence, for all $k\in \mathbb{N}$, we get $S_{\Psi^{\infty}}= [(\Psi^{L+k})_c]^*(\B{\Hil_E}) $. Choose a subsequence $(k_i)_{i\in \mathbb{N}}$ such that $\lim_{i\to \infty}\Psi^{L+k_i} = \mathcal{P}_{\Psi}$ (see Lemma~\ref{lemma:reverse}). This implies that $\lim_{i\to \infty} [(\Psi^{L+k_i})_c]^* = [(\mathcal{P}_{\Psi})_c]^*$. Thus, 
    \begin{equation}
        \forall X\in \B{\Hil_E}: \quad [(\mathcal{P}_{\Psi})_c]^*(X) =  \lim_{i\to \infty} [(\Psi^{L+k_i})_c]^*(X) \in S_{\Psi^{\infty}}, 
    \end{equation}
    which proves that $[(\mathcal{P}_{\Psi})_c]^*(\B{\Hil_E}) = S_{\mathcal{P}_{\Psi}} \subseteq S_{\Psi^{\infty}}$. Hence, we obtain $S_{\Psi^{\infty}}=S_{\mathcal{P}_{\Psi}}$.

\end{proof}

\subsubsection{Capacity formulas}

With the structure theorem for the stabilized operator system in hand, we are ready to derive expressions for the zero-error capacities of highly Markovian divisible channels. Recall that $C_0(\Phi)$ denotes the one-shot zero-error classical capacity (Definition~\ref{def:classical-protocol}) and 
\begin{equation}
    C_{\operatorname{zero}}(\Phi)=\lim_{n\to \infty} \frac{C_0(\Phi^{\otimes n})}{n}
\end{equation}
is the asymptotic zero-error capacity of a channel $\Phi$ (Definition~\ref{def:zero-error-capacity}).

\begin{theorem}\label{theorem:main-zero}
Let $\Phi:\B{\Hil}\to \B{\Hil}$ be an $l$-Markovian divisible channel with $l\geq (\dim\Hil)^2$. Then, its operator system is isomorphic to a $*-$algebra: $S_{\Phi}=S_{\mathcal{P}_{\Phi}}\longrightarrow S_{\overbar{\mathcal{P}}_{\Phi}}$ and $S_{\overbar{\mathcal{P}}_{\Phi}} \longrightarrow S_{\mathcal{P}_{\Phi}}$. Consequently,   
 \begin{align*}
   C_0(\Phi) &= \log \left(\sum_k d_k \right),  \\
 C^{\operatorname{ea}} _0(\Phi) &= \log \left(\sum_k d^2_k \right),  \\
Q_{0}(\Phi)= C^{\operatorname{p}}_0(\Phi) &= \log \left(\max_k d_k \right), 
 \end{align*}
 where $d_k = \dim \Hil_{k,1}$ for $k=1,2,\ldots ,K$ are the block dimensions in the decomposition of $\mathscr{X} (\Phi)$.
 In addition, for any other $l'$-Markovian divisible channel $\Gamma : \B{\mathcal{K}} \to \B{\mathcal{K}}$ with $l'\geq (\dim\mathcal{K})^2$, 
 \begin{align*}
    C_0 (\Phi \otimes \Gamma) &= C_0(\Phi) + C_0(\Gamma), \\
    C^{\operatorname{ea}}_0 (\Phi \otimes \Gamma) &= C^{\operatorname{ea}}_0(\Phi) + C^{\operatorname{ea}}_0(\Gamma), \\
    C^{\operatorname{p}}_0(\Phi \otimes \Gamma) &= C^{\operatorname{p}}_0(\Phi) + C^{\operatorname{p}}_0(\Gamma), \\
     Q_0(\Phi \otimes \Gamma) &= Q_0(\Phi) + Q_0(\Gamma).
 \end{align*}
\end{theorem}
\begin{proof}

We use the notation from Theorem~\ref{thm:stab-opsys}. Firstly, since $\Phi$ is $l$-divisible, there exists a channel $\Psi:\B{\Hil}\to \B{\Hil}$ such that $\Phi=\Psi^l$. Since $l\geq(\dim \Hil)^2$, Lemma~\ref{lemma:op-chain} and Theorem~\ref{thm:stab-opsys} show
\begin{equation}
        S_{\Phi} = S_{\Psi^l} = S_{\Psi^{\infty}} = S_{\Phi^{\infty}} = S_{\mathcal{P}_{\Phi}} \longrightarrow S_{\overbar{\mathcal{P}}_{\Phi}}, \quad S_{\overbar{\mathcal{P}}_{\Phi}} \longrightarrow S_{\Phi}.
\end{equation}
Hence, using Theorem~\ref{thm:DSW} and Lemma~\ref{lemma:op-bottleneck}, we obtain
\begin{equation}
    C_0(\Phi) = \log \alpha(S_{\overbar{\mathcal{P}}_{\Phi}}) = \log(\sum_k \dim \Hil_{k,1}),
\end{equation}
where we used the fact that $S_{\overbar{\mathcal{P}}_{\Phi}} = \bigoplus_{k} \left( \iden_{k,1} \otimes \B{\Hil_{k,2}} \right) \subseteq \B{\Hil_0^{\perp}}$ is a $*-$algebra (Theorem~\ref{thm:stab-opsys}), so that Lemma~\ref{lemma:alpha-algebra} gives the expression for its independence number.

Similarly, if $\Gamma : \B{\mathcal{K}} \to \B{\mathcal{K}}$ is $l'$-divisible with $l'\geq (\dim\Kil)^2$, we get
\begin{equation}
        S_{\Gamma} \longrightarrow S_{\overbar{\mathcal{P}}_{\Gamma}}, \quad S_{\overbar{\mathcal{P}}_{\Gamma}} \longrightarrow S_{\Gamma},
\end{equation}
and using Theorem~\ref{thm:DSW} and Lemma~\ref{lemma:op-bottleneck}, 
\begin{equation}
    C_0(\Gamma)=\log \alpha(S_{\overbar{\mathcal{P}}_{\Gamma}}) = \log(\sum_j \dim \Kil_{j,1}),
\end{equation}
where we used the fact that $S_{\overbar{\mathcal{P}}_{\Gamma}} = \bigoplus_{j} \left( \iden_{j,1} \otimes \B{\Kil_{j,2}} \right) \subseteq \B{\Kil_0^{\perp}}$ is a $*-$algebra (Theorem~\ref{thm:stab-opsys}).

Finally, it follows that (see Lemma~\ref{lemma:op-homo-2})
\begin{align}
    S_{\Phi} \otimes S_{\Gamma} \longrightarrow S_{\overbar{\mathcal{P}}_{\Phi}} \otimes S_{\overbar{\mathcal{P}}_{\Gamma}}, \quad S_{\overbar{\mathcal{P}}_{\Phi}} \otimes S_{\overbar{\mathcal{P}}_{\Gamma}} \longrightarrow S_{\Phi} \otimes S_{\Gamma}.
\end{align}
Hence, using the multiplicativity of independence numbers of $*-$algebras from Lemma~\ref{lemma:alpha-algebra}, we obtain
\begin{align}
    C_0(\Phi \otimes \Gamma) = \log \alpha(S_{\overbar{\mathcal{P}}_{\Phi}} \otimes S_{\overbar{\mathcal{P}}_{\Gamma}}) = \log \alpha(S_{\overbar{\mathcal{P}}_{\Phi}} ) + \log \alpha(S_{\overbar{\mathcal{P}}_{\Gamma}} ) = C_0(\Phi) + C_0(\Gamma).
\end{align}

The proofs for the other capacities follow similarly.
\end{proof}

\begin{remark}
    We should note that the results obtained in this section were implicit in \cite{singh2024zero, fawzi2024error}. In particular, the stabilization of the operator systems of Markovian semigroups $(\Psi^l)_{l\in \mathbb{N}}$ (Lemma~\ref{lemma:op-chain}) was obtained in \cite{singh2024zero}, and the additivity of the one-shot zero-error capacities in the $l\to \infty$ limit was noted in \cite{fawzi2024error}. In this section, we essentially combine these two results by introducing the notion of the stabilized operator system, which allows us to lift the additivity of the capacities from the $l\to \infty$ limit to a finite $l\geq d^2$ level.
\end{remark}

The formulas for the asymptotic capacities follow as a simple corollary of the above result.

\begin{corollary}\label{corollary:main-zero}
Let $\Phi:\B{\Hil}\to \B{\Hil}$ be a channel that is $l$-Markovian divisible with $l\geq (\dim\Hil)^2$. Then,
 \begin{align*}
   C_{\operatorname{zero}}(\Phi) &= \log \left(\sum_k d_k \right),  \\
  C^{\operatorname{ea}} _{\operatorname{zero}}(\Phi) &= \log \left(\sum_k d^2_k \right),  \\
 Q_{\operatorname{zero}}(\Phi)= P_{\operatorname{zero}}(\Phi) &= \log \left(\max_k d_k \right),
 \end{align*}
 where $d_k = \dim \Hil_{k,1}$ for $k=1,2,\ldots ,K$ are the block dimensions in the decomposition of $\mathscr{X} (\Phi)$.
 Furthermore, or any other $l'$-Markovian divisible channel $\Gamma : \B{\mathcal{K}} \to \B{\mathcal{K}}$ with $l'\geq (\dim\mathcal{K})^2$, 
 \begin{align*}
    C_{\operatorname{zero}} (\Phi \otimes \Gamma) &= C_{\operatorname{zero}}(\Phi) + C_{\operatorname{zero}}(\Gamma), \\
    C^{\operatorname{ea}}_{\operatorname{zero}} (\Phi \otimes \Gamma) &= C^{\operatorname{ea}}_{\operatorname{zero}}(\Phi) + C^{\operatorname{ea}}_{\operatorname{zero}}(\Gamma), \\
    P_{\operatorname{zero}}(\Phi \otimes \Gamma) &= P_{\operatorname{zero}}(\Phi) + P_{\operatorname{zero}}(\Gamma), \\
     Q_{\operatorname{zero}}(\Phi \otimes \Gamma) &= Q_{\operatorname{zero}}(\Phi) + Q_{\operatorname{zero}}(\Gamma).
 \end{align*}
\end{corollary}

\subsubsection{Continuous-time quantum Markov semigroups}
The class of continuous-time quantum Markov semigroups (cQMS) $(\Psi_l)_{l\geq0}$ on $\B{\Hil}$ are of the form $\Psi_l=e^{l\mathcal{L}}$, where $\mathcal{L}:\B{\Hil}\to \B{\Hil}$ is a Lindbladian \cite{Gorini1976qms, Lindblad1976qms}, \cite[Chapter 7]{Wolf2012Qtour}. It is easy to check that the peripheral space of the semigroup at any length $l>0$ can be written as
\begin{align}
     \mathscr{X}(\Psi_l) &= \operatorname{span}\{X\in \B{\Hil} : \exists \,\theta\in \mathbb{R} \,\text{ s.t. }\, \mathcal{L}(X)= i\theta X \}  \nonumber \\
     &:= \mathscr{X}((\Psi_l)_{l\geq0}).
\end{align}
Moreover, for any $k\in\mathbb{N}$ and any $l> 0$, we can write 
\begin{equation}
    \Psi_l = e^{l\mathcal{L}} = (e^{\frac{l}{k}\mathcal{L}})^k = (\Psi_{l/k})^k,
\end{equation}
so that for any $l>0$, $\Psi_l$ is $k-$Markovian divisible for all $k\in\mathbb{N}$. Hence, Theorem~\ref{theorem:main-zero} applies to such semigroups. In particular, we note that the zero-error capacities are independent of $l$:
\begin{align}
    \forall l>0: \quad C_{\operatorname{zero}}(\Psi_l) &= \log \left(\sum_k d_k \right), \nonumber \\ 
    P_{\operatorname{zero}}(\Psi_l) &= \log \left(\max_k d_k \right) = Q_{\operatorname{zero}}(\Psi_l),
\end{align}
where $d_k = \dim \Hil_{k,1}$ for $k=1,2,\ldots ,K$ are the block dimensions in the decomposition of $\mathscr{X} ((\Psi_l)_{l\geq 0})$. Moreover, the following additivity result holds for any two cQMS $(\Psi_l)_{l\geq 0}$ and $(\Gamma_l)_{l\geq 0}$:
\begin{align}
    \forall l_1,l_2>0: \quad  C_{\operatorname{zero}} (\Psi_{l_1} \otimes \Gamma_{l_2}) &= C_{\operatorname{zero}}(\Psi_{l_1}) + C_{\operatorname{zero}}(\Gamma_{l_2}), \nonumber \\
     P_{\operatorname{zero}}(\Psi_{l_1} \otimes \Gamma_{l_2}) &= P_{\operatorname{zero}}(\Psi_{l_1}) + P_{\operatorname{zero}}(\Gamma_{l_2}), \nonumber \\
     Q_{\operatorname{zero}}(\Psi_{l_1} \otimes \Gamma_{l_2}) &= Q_{\operatorname{zero}}(\Psi_{l_1}) + Q_{\operatorname{zero}}(\Gamma_{l_2}).
\end{align}

\subsection{Non-zero error setting}\label{sec:main:non-zero}

In this section, we study the capacities of dQMS $(\Psi^l)_{l\in\mathbb{N}}$ in the non-zero error setting. We first examine the capacities of the asymptotic part $\Psi_{\infty}$ and then use continuity to obtain bounds on the capacities of finite-length channels $\Psi^l$. 

\subsubsection{Classical and entanglement-assisted classical capacities}\label{subsec:non-zero-classical}

\begin{theorem}\label{thm:Cinf}
    Let $\Psi:\B{\Hil}\to \B{\Hil}$ be a quantum channel with asymptotic part $\Psi_{\infty}$. Then,
    \begin{align}
       \log \left( \sum_k d_k \right) &= C(\Psi_{\infty}) = C^{\dagger}(\Psi_{\infty}) \\
       \log \left( \sum_k d^2_k \right) &= C_{\operatorname{ea}}(\Psi_{\infty}) = C_{\operatorname{ea}}^{\dagger}(\Psi_{\infty}) .
    \end{align}
where $d_k = \dim \Hil_{k,1}$ for $k=1,2,\ldots ,K$ are the block dimensions in the decomposition of $\mathscr{X} (\Psi)$. 
\end{theorem}
\begin{proof}

    The classical and entanglement-assisted classical codes constructed in  Theorem~\ref{theorem:main-storage} show 
    \begin{align}
        \forall n\in \mathbb{N}, \forall \epsilon\in [0,1): \quad \log(\sum_k d_k) \leq C_0(\Psi_{\infty}) \leq \frac{1}{n} C_0(\Psi_{\infty}^{\otimes n}) &\leq \frac{1}{n} C_{\epsilon}(\Psi_{\infty}^{\otimes n}) \\        
        \log(\sum_k d^2_k) \leq C_0^{\operatorname{ea}}(\Psi_{\infty}) \leq \frac{1}{n} C_0^{\operatorname{ea}}(\Psi_{\infty}^{\otimes n}) &\leq \frac{1}{n} C_{\epsilon}^{\operatorname{ea}}(\Psi_{\infty}^{\otimes n}),
    \end{align}
    where the inequalities follow from superadditivity (Remark~\ref{remark:zero-super}) of the one-shot zero-error capacities. Hence, $\log(\sum_k d_k)\leq C(\Psi_{\infty})$ and $\log(\sum_k d^2_k)\leq C_{\operatorname{ea}}(\Psi_{\infty})$ (see Definition~\ref{def:capacity}). 
    
    Conversely, $C_{\operatorname{ea}}^{\dagger}(\Psi_{\infty}) = I(\Psi_{\infty}) \leq I_{\max}(\Psi_{\infty})$ (see Theorem~\ref{theorem:Cea-I}). Note that since $\Psi_{\infty}= \mathcal{P}_{\Psi}\circ \Psi$, data-processing (Lemma~\ref{lemma:channel-bottlenecks}) shows that $I_{\max}(\Psi_{\infty})\leq I_{\max}(\mathcal{P}_{\Psi})$. Recall that $\mathcal{P}_{\Psi}$ here is the projector onto the peripheral space $\mathscr{X} (\Psi)$. Finally, $I_{\max}(\mathcal{P}_{\Psi})\leq \log(\sum_k d^2_k)$ was obtained in the proof of Theorem~\ref{theorem:main-storage}.

    For the classical capacity, we use the following strong converse bound (Lemma~\ref{lemma:strong-converse})
    \begin{equation}
        C^{\dagger}(\Psi_{\infty}) \leq \limsup_{n\to \infty} \frac{1}{n} \chi_{\max} (\Psi_{\infty}^{\otimes n}).
    \end{equation}
    By again using data-processing (Lemma~\ref{lemma:channel-bottlenecks}), we write $\chi_{\max}(\Psi_{\infty})\leq \chi_{\max}({\mathcal{P}}_{\Psi})\leq \log( \sum_k d_k)$, where the last inequality was obtained in the proof of Theorem~\ref{theorem:main-storage}. Moreover, since the peripheral space is multiplicative under tensor products (Lemma~\ref{lemma:peripheral-multi}), we get
\begin{equation}
   \forall n\in \mathbb{N}: \quad \frac{1}{n} \chi_{\max} (\Psi_{\infty}^{\otimes n}) \leq \log \left( \sum_k d_k \right),
\end{equation}
which clearly implies that 
\begin{equation}
        C^{\dagger}(\Psi_{\infty}) \leq \limsup_{n\to \infty} \frac{1}{n} \chi_{\max} (\Psi_{\infty}^{\otimes n}) \leq \log \left( \sum_k d_k \right).
    \end{equation}

\end{proof}

\begin{theorem}\label{thm:Casmpy}
    Let $\Psi:\B{\Hil}\to \B{\Hil}$ be a quantum channel with $d=\dim\Hil$. Then,
    \begin{align}
       \log \left( \sum_k d_k \right) &\leq C(\Psi^l) \leq \log \left( \sum_k d_k \right) + \delta_l \log (d^2 -1 ) + 2h (\delta_l/2) \nonumber \\
       \log \left( \sum_k d_k^2 \right) &\leq C_{\operatorname{ea}}(\Psi^l) = C_{\operatorname{ea}}^{\dagger}(\Psi^l) \leq \log \left( \sum_k d^2_k \right) + \delta_l \log (d^2-1) + 2h(\delta_l/2), \nonumber
    \end{align}
    where $d_k=\dim \Hil_{k,1}$ for $k\in \{1,2,\ldots ,K \}$ are the block dimensions in the decomposition of $\chi (\Psi)$, $\delta_l=\norm{\Psi^l-\Psi^l_{\infty}}_{\diamond} \leq \kappa\mu^l\to 0$ as $l\to \infty$, where $\mu=\operatorname{spr}(\Psi-\Psi_{\infty}), \kappa$ govern the convergence as in Eq.~\eqref{eq:converge}. Here, the lower bound holds for all $l\in\mathbb{N}$ while the upper bound holds for $l$ large enough so that $\delta_l/2 \leq 1-1/d^2$.
\end{theorem}
\begin{proof}
    For the lower bound, note that $\Psi^l_{\infty} = \Psi^l \circ \mathcal{P}_{\Psi}$ for all $l\in\mathbb{N}$, so that Lemma~\ref{lemma:bottleneck-regular} shows
    \begin{align}
       \forall l\in \mathbb{N}: \quad \log (\sum_k d_k ) &= C (\Psi^l_{\infty}) \leq C(\Psi^l) \\
       \log (\sum_k d^2_k ) &= C_{\operatorname{ea}} (\Psi^l_{\infty}) \leq C_{\operatorname{ea}}(\Psi^l).
    \end{align}
    For the converse bound, note that $\norm{\Psi^l - \Psi_{\infty}^l}_{\diamond} = \delta_l$, so that we can use continuity of the channel capacity function $C$ (Theorem~\ref{theorem:cap-cont}) to write
    \begin{align}
        C(\Psi^l) &\leq C (\Psi^l_{\infty}) + \delta_l \log (d^2 -1 ) + 2h (\delta_l/2) \nonumber \\ 
        &= \log \left( \sum_k d_k \right) + \delta_l \log (d^2 -1 ) + 2h (\delta_l/2),
    \end{align}
    where we used the capacity formula for $\Psi_{\infty}$ from Theorem~\ref{thm:Cinf}. For the entanglement-assisted capacity, note that strong converse property always holds $C^{\dagger}_{\operatorname{ea}}=C_{\operatorname{ea}}$ (Theorem~\ref{theorem:Cea-I}). The rest of the proof follows identically using continuity of $C_{\operatorname{ea}}$ (Theorem~\ref{theorem:cap-cont}).
\end{proof}

\begin{remark} \label{remark:chi-cont}
    Since no continuity bounds are known for the strong converse capacity $C^{\dagger}$, the proof technique used in Theorem~\ref{thm:Casmpy} does not work for this capacity. Another way to approach the problem is via continuity analysis of the max-Holevo quantity $\chi_{\max}$ (or the similarly defined $\alpha$-sandwiched Holevo quantity $\tilde{\chi}_{\alpha}$) \cite{Wilde2014converse}, which, to the best of our knowledge, is also unexplored. Note that the Holevo quantities admit alternative expressions in terms of divergence radii \cite{Sibson1969radii, Csiszar1995radii, Mosonyi2011radii, Mosonyi2021radii}.  We leave the continuity analysis of these quantities for future study. In this regard, it would also be interesting to investigate continuity properties of the newer (and sharper) strong converse bounds for classical communication in terms of the \emph{Upsilon information} and \emph{geometric R\'enyi divergences} \cite{Wang2018classicalstrong, Wang2019classicalstrong, Fang2021geometric}.
\end{remark}

\subsubsection{Quantum and private classical capacities} \label{subsec:non-zero-quantum}

Next, we deal with the quantum and private classical capacities.

\begin{theorem}\label{thm:QPinf}
    Let $\Psi:\B{\Hil}\to \B{\Hil}$ be a quantum channel with asymptotic part $\Psi_{\infty}$. Then,
    \begin{align}
       \log \left( \max_k d_k \right) &= Q(\Psi_{\infty}) = P(\Psi_{\infty}) = Q_{\leftrightarrow}(\Psi_{\infty}) = P_{\leftrightarrow}(\Psi_{\infty}) \nonumber \\
       &= Q^{\dagger}(\Psi_{\infty}) = P^{\dagger}(\Psi_{\infty}) = Q^{\dagger}_{\leftrightarrow}(\Psi_{\infty}) = P^{\dagger}_{\leftrightarrow}(\Psi_{\infty}). \nonumber
    \end{align}
where $d_k = \dim \Hil_{k,1}$ for $k=1,2,\ldots ,K$ are the block dimensions in the decomposition of $\mathscr{X} (\Psi)$. 
\end{theorem}
\begin{proof}
    It suffices to prove that $\log(\max_k d_k )\leq Q(\Psi_{\infty})$ and $P^{\dagger}_{\leftrightarrow}(\Psi_{\infty})\leq \log (\max_k d_k)$.

    For the lower bound, the quantum code constructed in  Theorem~\ref{theorem:main-storage} shows that 
    \begin{align}
        \forall n\in \mathbb{N}, \forall \epsilon\in [0,1): \quad \log(\max_k d_k) \leq Q_0(\Psi_{\infty}) \leq \frac{1}{n} Q_0(\Psi_{\infty}^{\otimes n}) &\leq \frac{1}{n} Q_{\epsilon}(\Psi_{\infty}^{\otimes n})
    \end{align}
    where the inequalities follow from superadditivity (Remark~\ref{remark:zero-super}) of the one-shot zero-error capacity. Hence, $\log(\max_k d_k)\leq Q(\Psi_{\infty})$ follows by taking the appropriate limit (Definition~\ref{def:capacity}). 

    For the upper bound, note that $P_{\leftrightarrow}^{\dagger}(\Psi)\leq E_{\max}(\Psi)$ (Theorem~\ref{theorem:QPassisted<Emax}). Moreover, since $\Psi_{\infty}= \mathcal{P}_{\Psi}\circ \Psi$, data-processing (Lemma~\ref{lemma:channel-bottlenecks}) shows that $E_{\max}(\Psi_{\infty})\leq E_{\max}(\mathcal{P}_{\Psi})$. Recall that $\mathcal{P}_{\Psi}$ here is the projector onto the peripheral space $\mathscr{X} (\Psi)$. Finally, $E_{\max}({\mathcal{P}}_{\Psi})\leq \log(\max_k d_k)$ was obtained in the proof of Theorem~\ref{theorem:main-storage}.
\end{proof}

\begin{remark}
    It is clear from the proof of Theorems~\ref{thm:Cinf} and \ref{thm:QPinf} that the capacities of the asymptotic part $\Psi_{\infty}$ of a channel $\Psi$ are equal to those of the peripheral projection channels $\mathcal{P}_{\Psi}$ and $\overbar{\mathcal{P}}_{\Psi}$. The key property of these channels that allows us to compute their capacities is their direct sum structure in terms of the identity and replacer channels (see \cite{Fukuda2007direct, Gao2018tro}).
\end{remark}

\begin{theorem}\label{thm:QPasymp}
    Let $\Psi:\B{\Hil}\to \B{\Hil}$ be a quantum channel with $d=\dim\Hil$ and $\alpha>1$. Then,
    \begin{align}
       \log \left( \max_k d_k \right) \leq Q(\Psi^l) \leq P^{\dagger}(\Psi^l) &\leq \log \left( \max_k d_k \right) + \frac{\alpha}{\alpha-1} \log (1+ \frac{\delta_l d^{\frac{\alpha-1}{\alpha}}}{2}), \nonumber \\
       \log \left( \max_k d_k \right) \leq Q(\Psi^l) \leq P^{\dagger}_{\leftrightarrow}(\Psi^l) &\leq \log \left( \max_k d_k \right) +  \log (1+ \frac{\delta_l d}{2}), \nonumber
    \end{align}
 where $d_k=\dim \Hil_{k,1}$ for $k\in \{1,2,\ldots ,K \}$ are the block dimensions in the decomposition of $\chi (\Psi)$ and $\delta_l=\norm{\Psi^l-\Psi^l_{\infty}}_{\diamond} \leq \kappa\mu^l\to 0$ as $l\to \infty$, where $\mu=\operatorname{spr}(\Psi-\Psi_{\infty}), \kappa$ govern the convergence (Eq.~\eqref{eq:converge}). Here, the lower bound holds for all $l\in\mathbb{N}$ while the upper bound holds for $l$ large enough so that $\delta_l < 2$.
\end{theorem}
\begin{proof}
    Note that $\Psi^l_{\infty} = \Psi^l \circ \mathcal{P}_{\Psi}$ for all $l\in \mathbb{N}$, so that we can use Lemma~\ref{lemma:bottleneck-regular} to write
    \begin{equation}
       \forall l\in \mathbb{N}: \quad \log (\max_k d_k ) = Q (\Psi^l_{\infty}) \leq Q(\Psi^l).
    \end{equation}
    For the converse bound, note that $P^{\dagger}(\Psi^l) \leq \widetilde E_{\alpha}(\Psi^l)$ (Lemma~\ref{lemma:strong-converse}) and $\norm{\Psi^l - \Psi_{\infty}^l}_{\diamond}= \delta_l$. Hence, we can use continuity of $\widetilde E_{\alpha}$ (Lemma~\ref{lemma:Emax-cont}) to write
    \begin{align}
        \widetilde E_{\alpha}(\Psi^l) &\leq \widetilde E_{\alpha} (\Psi^l_{\infty}) + \frac{\alpha}{\alpha-1} \log (1+ \frac{\delta_l d^{\frac{\alpha-1}{\alpha}}}{2}) \nonumber \\ 
        &\leq \widetilde E_{\alpha}(\mathcal{P}_{\Psi}) + \frac{\alpha}{\alpha-1} \log (1+\frac{\delta_l d^{\frac{\alpha-1}{\alpha}}}{2}) \nonumber \\ 
        &\leq E_{\max}(\mathcal{P}_{\Psi}) + \frac{\alpha}{\alpha-1} \log (1+\frac{\delta_l d^{\frac{\alpha-1}{\alpha}}}{2}) \\
        &\leq \log (\max_k d_k) + \frac{\alpha}{\alpha-1} \log (1+\frac{\delta_l d^{\frac{\alpha-1}{\alpha}}}{2}),
    \end{align}
    where the second inequality follows from data-processing (since $\Psi^l_{\infty}=\Psi^l_{\infty}\circ \mathcal{P}_{\Psi}$, see Lemma~\ref{lemma:channel-bottlenecks}) and the final inequality follows because $E_{\max}( \mathcal{P}_{\Psi}) \leq \log(\max_k d_k)$ (see the proof of Theorem~\ref{theorem:main-storage}). The converse bound for $P^{\dagger}_{\leftrightarrow}$ follows similarly since $P_{\leftrightarrow}^{\dagger}(\Psi^l)\leq E_{\max}(\Psi^l)$ (Theorem~\ref{theorem:QPassisted<Emax}). Here, we can use continuity of $E_{\max}$ (Lemma~\ref{lemma:Emax-cont}) and follow exactly the same steps as above.
\end{proof}

\subsubsection{Finite block-length bounds on quantum and private classical capacities}\label{subsec:non-zero-finiteblock}

Recall that the quantum capacity $Q(\Phi)$ of a noisy channel $\Phi$ quantifies the optimal rate at which quantum information can be transmitted via $n$ uses of $\Phi$ such that the error incurred in transmission vanishes as the number of channel uses $n\to \infty$: (Definition~\ref{def:capacity})
\begin{equation}
    Q(\Phi) = \inf_{\epsilon\in (0,1)} \liminf_{n\to \infty} \frac{Q_{\epsilon}(\Phi^{\otimes n})}{n}.
\end{equation}

However, in practice, resources are finite and the number of channel uses are limited. Hence, from a realistic viewpoint, it is more important to analyze how well 
information can be transmitted over a finite number of channel uses by using encoders and decoders (see Definitions~\ref{def:classical-protocol}-\ref{def:quantum-protocol}) with fixed finite sizes. In this section, we adopt this perspective and obtain bounds on the one-shot quantum and private classical capacities of dQMS $(\Psi^l)_{l\in\mathbb{N}}$ in the finite blocklength regime.

\begin{theorem} \label{theorem:QPfinite-alpha}
    Let $\Psi:\B{\Hil}\to \B{\Hil}$ be a quantum channel, $n\in \mathbb{N}$ and $\epsilon\in [0,1)$. Then, for all $\alpha>1$,
    \begin{align*}
        \log(\max_k d_k) \leq \frac{1}{n} Q_{\epsilon}((\Psi^l)^{\otimes n})) &\leq \log(\max_k d_k) +  \frac{\alpha}{\alpha-1} \log (1+\frac{\delta_l d^{\frac{\alpha-1}{\alpha}}}{2}) + \frac{\alpha}{n(\alpha-1)} \log (\frac{n^{d^2}}{1-\epsilon}), \\
        \log(\max_k d_k) \leq \frac{1}{n} C^{\operatorname{p}}_{\epsilon}((\Psi^l)^{\otimes n})) &\leq \log(\max_k d_k) +  \frac{\alpha}{\alpha-1} \log (1+\frac{\delta_l d^{\frac{\alpha-1}{\alpha}}}{2}) + \frac{\alpha }{n(\alpha-1)} \log (\frac{n^{d^2}}{1-\epsilon}),
    \end{align*}
    where $d_k=\dim \Hil_{k,1}$ for $k\in \{1,2,\ldots ,K \}$ are the block dimensions in the decomposition of $\chi (\Psi)$ and $\delta_l=\norm{\Psi^l-\Psi^l_{\infty}}_{\diamond} \leq \kappa\mu^l\to 0$ as $l\to \infty$ where $\mu=\operatorname{spr}(\Psi-\Psi_{\infty}), \kappa$ govern the convergence (Eq.~\eqref{eq:converge}). Here, the lower bounds hold for all $l\in\mathbb{N}$ while the upper bound holds for $l$ large enough so that $\delta_l < 2$.
\end{theorem}
\begin{proof}
    The codes constructed in  Theorem~\ref{theorem:main-storage} show that for all $l\in \mathbb{N}$, $n\in \mathbb{N}$ and $\epsilon\in [0,1)$, we have 
    \begin{align}
    \log(\max_k d_k) \leq Q_0(\Psi^l) \leq \frac{1}{n} Q_0((\Psi^l)^{\otimes n}) &\leq \frac{1}{n} Q_{\epsilon}((\Psi^l)^{\otimes n}) \\
    \log(\max_k d_k) \leq C^{\operatorname{p}}_0(\Psi^l) \leq \frac{1}{n} C^{\operatorname{p}}_0((\Psi^l)^{\otimes n}) &\leq \frac{1}{n} C^{\operatorname{p}}_{\epsilon}((\Psi^l)^{\otimes n})
    \end{align}
    where the inequalities follow from superadditivity (Remark~\ref{remark:zero-super}) of the one-shot zero-error capacity. For the upper bounds, we can do the following calculation:
    \begin{align}
    \frac{1}{n} Q_{\epsilon}((\Psi^l)^{\otimes n})) &\leq \frac{1}{n} \widetilde E_{\alpha}((\Psi^l)^{\otimes n}) + \frac{\alpha}{n(\alpha-1)} \log(\frac{1}{1-\epsilon}) \nonumber \\
    &\leq \widetilde E_{\alpha}(\Psi^l) + \frac{\alpha d^2}{\alpha-1} \frac{\log n}{n} + \frac{\alpha }{n(\alpha-1)} \log (\frac{1}{1-\epsilon}) \nonumber \\
    &\leq \widetilde E_{\alpha}(\Psi^l_{\infty}) + \frac{\alpha}{\alpha-1} \log (1+\frac{\delta_l d^{\frac{\alpha-1}{\alpha}}}{2}) + \frac{\alpha }{n(\alpha-1)} \log (\frac{n^{d^2}}{1-\epsilon})  \nonumber \\
    &\leq \log(\max_k d_k) +  \frac{\alpha}{\alpha-1} \log (1+\frac{\delta_l d^{\frac{\alpha-1}{\alpha}}}{2}) + \frac{\alpha }{n(\alpha-1)} \log (\frac{n^{d^2}}{1-\epsilon}).  
\end{align}
Here, the first inequality follows from the converse bound in Theorem~\ref{lemma:QP<=E}, the second inequality follows from weak subadditivity of $\widetilde E_{\alpha}$ \cite[Theorem 15]{Wilde2017private} \cite[Theorem 6]{Tomamichel2017strong}, the third inequality follows from continuity of $\widetilde E_{\alpha}$ (Lemma~\ref{lemma:Emax-cont}), and the final inequality $\widetilde E_{\alpha}(\Psi^l_{\infty})\leq \log(\max_k d_k)$ was shown in Theorem~\ref{thm:QPasymp}. 
\end{proof}

\begin{theorem} \label{theorem:QPfinite-max}
    Let $\Psi:\B{\Hil}\to \B{\Hil}$ be a quantum channel, $n\in \mathbb{N}$ and $\epsilon\in [0,1)$. Then, 
    \begin{align*}
        \log(\max_k d_k) \leq \frac{1}{n} Q_{\epsilon}((\Psi^l)^{\otimes n})) &\leq \log(\max_k d_k) +  \log (1+\frac{\delta_l d}{2}) + \frac{1}{n} \log (\frac{1}{1-\epsilon}), \\
        \log(\max_k d_k) \leq \frac{1}{n} C^{\operatorname{p}}_{\epsilon}((\Psi^l)^{\otimes n})) &\leq \log(\max_k d_k) +   \log (1+\frac{\delta_l d}{2}) + \frac{1}{n} \log (\frac{1}{1-\epsilon}),
    \end{align*}
    where $d_k=\dim \Hil_{k,1}$ for $k\in \{1,2,\ldots ,K \}$ are the block dimensions in the decomposition of $\chi (\Psi)$ and $\delta_l=\norm{\Psi^l-\Psi^l_{\infty}}_{\diamond} \leq \kappa\mu^l\to 0$ as $l\to \infty$, where $\mu=\operatorname{spr}(\Psi-\Psi_{\infty}), \kappa$ govern the convergence (Eq.~\eqref{eq:converge}). Here, the lower bound holds for all $l\in\mathbb{N}$ while the upper bound holds for $l$ large enough so that $\delta_l < 2$.
\end{theorem}
\begin{proof}
    The lower bounds were obtained in Theorem~\ref{theorem:QPfinite-alpha}. For the converse bounds, we follow the same steps as in Theorem~\ref{theorem:QPfinite-alpha}, with $\widetilde E_{\alpha}$ replaced with $E_{\max}$:
\begin{align}
   \frac{1}{n} Q_{\epsilon}((\Psi^{\otimes n})^l) &\leq \frac{1}{n}E_{\max}((\Psi^l)^{\otimes n}) + \frac{1}{n} \log(\frac{1}{1-\epsilon}) \nonumber \\
    &\leq E_{\max}(\Psi^l) + \frac{1}{n}\log(\frac{1}{1-\epsilon}) \nonumber \\
    &\leq E_{\max}(\Psi^l_{\infty}) + \log(1+ \frac{\delta_l d}{2}) + \frac{1}{n}\log(\frac{1}{1-\epsilon}) \nonumber \\
    &\leq \log(\max_k d_k) +  \log(1+ \frac{\delta_l d}{2}) + \frac{1}{n}\log(\frac{1}{1-\epsilon}).
\end{align}
Here, the first inequality follows from Theorem~\ref{lemma:QP<=E}, the second inequality follows from the subadditivity of $E_{\max}$ \cite{Christandl2017max, Berta2018amort}, the third inequality follows from the continuity of $E_{\max}$ (Lemma~\ref{lemma:Emax-cont}), and the final inequality $E_{\max}(\Psi^l_{\infty})\leq \log(\max_k d_k)$ was shown in Theorem~\ref{thm:QPasymp}.
\end{proof}

Observe that the $n\to \infty$ limit of the bounds obtained in Theorems~\ref{theorem:QPfinite-alpha} and \ref{theorem:QPfinite-max} yield the asymptotic bounds in Theorem~\ref{thm:QPasymp}. Finally, we note that since $E_{\max}$ also serves as a converse bound on the LOCC-assisted capacities (Theorem~\ref{theorem:QPassisted<Emax}), the same techniques as above work to prove bounds on the assisted capacities.

\begin{theorem} \label{theorem:QPfinite-assisted}
    Let $\Psi:\B{\Hil}\to \B{\Hil}$ be a quantum channel, $n\in \mathbb{N}$ and $\epsilon\in [0,1)$. Then, 
    \begin{align*}
        \log(\max_k d_k) \leq \frac{1}{n}Q^{\leftrightarrow}_{n,\epsilon}(\Psi^l) \leq \frac{1}{n}P^{\leftrightarrow}_{n,\epsilon}(\Psi^l)  &\leq \log(\max_k d_k) +  \log (1+\frac{\delta_l d}{2}) + \frac{1}{n} \log (\frac{1}{1-\epsilon}),
    \end{align*}
    where $d_k=\dim \Hil_{k,1}$ for $k\in \{1,2,\ldots ,K \}$ are the block dimensions in the decomposition of $\chi (\Psi)$ and $\delta_l=\norm{\Psi^l-\Psi^l_{\infty}}_{\diamond} \leq \kappa\mu^l\to 0$ as $l\to \infty$, where $\mu=\operatorname{spr}(\Psi-\Psi_{\infty}), \kappa$ govern the convergence (Eq.~\eqref{eq:converge}). Here, the lower bound holds for all $l\in\mathbb{N}$ while the upper bound holds for $l$ large enough so that $\delta_l < 2$.
\end{theorem}

\subsubsection{Strong additivity}\label{subsec:strong}

Next, we analyze the additivity of the capacities of dQMS $(\Psi^l)_{l\in \mathbb{N}}$. We begin by showing that the asymptotic part $\Psi_{\infty}$ of any channel is strongly additive.

\begin{theorem}\label{thm:asym-strongadd}
    Let $\Psi:\B{\Hil_A}\to \B{\Hil_A}$ and $\Gamma:\B{\Hil_B}\to \B{\Hil_C}$ be arbitrary channels. Then,
    \begin{align*}
        Q(\Psi_{\infty}\otimes \Gamma) &= Q(\Psi_{\infty}) + Q(\Gamma),  \\
        P(\Psi_{\infty}\otimes \Gamma) &= P(\Psi_{\infty}) + P(\Gamma), \\
        C(\Psi_{\infty}\otimes \Gamma) &= C(\Psi_{\infty}) + C(\Gamma).
    \end{align*}
\end{theorem}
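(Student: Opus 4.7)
The lower bound direction $X(\Psi_\infty \otimes \Gamma) \geq X(\Psi_\infty) + X(\Gamma)$ for $X \in \{C, P, Q\}$ is immediate from superadditivity of the Holevo, private, and coherent informations: running independent optimal protocols for $\Psi_\infty$ and $\Gamma$ on the two tensor factors yields a valid protocol for $\Psi_\infty \otimes \Gamma$ with additive rate, and regularizing via Theorem~\ref{theorem:LSD+CP} gives the claim.

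For the matching upper bound, the plan is to exploit the factorization $\Psi_\infty = \Psi_\infty \circ \mathcal{P}_\Psi$ together with the isometric decomposition $\mathcal{P}_\Psi = \mathcal{V} \circ \overbar{\mathcal{P}}_\Psi \circ R_V \circ \mathcal{P}_\Psi$ established in Section~\ref{sec:periphery}, in order to reduce the problem to $\overbar{\mathcal{P}}_\Psi \otimes \Gamma$. Concretely, writing $\Psi_\infty \otimes \Gamma = (\Psi_\infty \otimes \id_C) \circ (\mathcal{P}_\Psi \otimes \Gamma)$ and
\begin{equation*}
\mathcal{P}_\Psi \otimes \Gamma = (\mathcal{V} \otimes \id_C) \circ (\overbar{\mathcal{P}}_\Psi \otimes \Gamma) \circ (R_V \otimes \id_B) \circ (\mathcal{P}_\Psi \otimes \id_B),
\end{equation*}
the composition bottleneck inequalities of Lemma~\ref{lemma:bottleneck-regular} will yield $X(\Psi_\infty \otimes \Gamma) \leq X(\overbar{\mathcal{P}}_\Psi \otimes \Gamma)$ for each of the three capacities.

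It then suffices to compute $X(\overbar{\mathcal{P}}_\Psi \otimes \Gamma)$ exactly by exploiting the direct-sum structure $\overbar{\mathcal{P}}_\Psi = \bigoplus_k \id_{k,1} \otimes \mathcal{R}_{k,2}$ from Eq.~\eqref{eq:phaseproj-2}. Under the canonical isomorphism $(\bigoplus_k \Hil_{k,1} \otimes \Hil_{k,2}) \otimes \Hil_B \cong \bigoplus_k (\Hil_{k,1} \otimes \Hil_{k,2} \otimes \Hil_B)$, the channel $\overbar{\mathcal{P}}_\Psi \otimes \Gamma$ becomes $\bigoplus_k (\id_{k,1} \otimes \mathcal{R}_{k,2} \otimes \Gamma)$. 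Since the replacer $\mathcal{R}_{k,2}$ has zero capacity and is absorbed losslessly, and the identity channel $\id_{k,1}$ is strongly additive in $C$, $P$, and $Q$, each summand satisfies $X(\id_{k,1} \otimes \mathcal{R}_{k,2} \otimes \Gamma) = \log d_k + X(\Gamma)$. Invoking the standard direct-sum formulas—$C(\bigoplus_k \Phi_k) = \log \sum_k 2^{C(\Phi_k)}$, while $Q(\bigoplus_k \Phi_k) = \max_k Q(\Phi_k)$ and $P(\bigoplus_k \Phi_k) = \max_k P(\Phi_k)$ (cross-block coherences are destroyed by the direct sum, so quantum/private information must concentrate in a single block, whereas classical messages can be spread across blocks)—one obtains
\begin{align*}
C(\overbar{\mathcal{P}}_\Psi \otimes \Gamma) &= \log \textstyle\sum_k d_k + C(\Gamma) = C(\Psi_\infty) + C(\Gamma), \\
Q(\overbar{\mathcal{P}}_\Psi \otimes \Gamma) &= \log \textstyle\max_k d_k + Q(\Gamma) = Q(\Psi_\infty) + Q(\Gamma), \\
P(\overbar{\mathcal{P}}_\Psi \otimes \Gamma) &= \log \textstyle\max_k d_k + P(\Gamma) = P(\Psi_\infty) + P(\Gamma),
\end{align*}
via the capacity formulas of Theorems~\ref{thm:QPinf} and~\ref{thm:Cinf}. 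Combining this with the bottleneck reduction above and the superadditivity lower bounds completes the argument.

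The main obstacle I anticipate is carefully justifying the direct-sum capacity formulas and the strong additivity of the identity channel, at the level of the regularized information measures $\chi$, $I_p$, and $I_c$ from Theorem~\ref{theorem:LSD+CP}. While these are folklore results, applying them rigorously requires verifying that the direct sum additivity lifts through the regularization $\lim_n \tfrac{1}{n} X(\Phi^{\otimes n})$; concretely one needs the additivity of each information measure under tensoring with the identity channel, together with the observation that $(\bigoplus_k \Phi_k)^{\otimes n}$ remains block-diagonal on an appropriately refined decomposition, so that the single-shot direct-sum formulas propagate to all $n$.
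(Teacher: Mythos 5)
Your overall reduction --- lower bound from superadditivity, upper bound by factoring $\Psi_\infty = \Psi_\infty \circ \mathcal{P}_\Psi$, passing to $\overbar{\mathcal{P}}_\Psi \otimes \Gamma$ via the bottleneck/data-processing inequalities, and then exploiting the direct-sum structure of $\overbar{\mathcal{P}}_\Psi$ together with strong additivity of the identity and replacer channels --- is exactly the paper's route. However, the ``standard direct-sum formulas'' you invoke, $Q(\bigoplus_k \Phi_k) = \max_k Q(\Phi_k)$ and $P(\bigoplus_k \Phi_k) = \max_k P(\Phi_k)$, are \emph{false in general} for the regularized capacities, not merely awkward to justify. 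The reason is superactivation: $(\Phi_1 \oplus \Phi_2)^{\otimes 2n}$ contains the cross-block $\Phi_1^{\otimes n} \otimes \Phi_2^{\otimes n}$, so if $Q(\Phi_1) = Q(\Phi_2) = 0$ but $Q(\Phi_1 \otimes \Phi_2) > 0$ (as in Smith--Yard), then $Q(\Phi_1 \oplus \Phi_2) > 0 = \max_k Q(\Phi_k)$. Your intuition that ``quantum/private information must concentrate in a single block'' is therefore wrong at the level of regularized capacities.

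What rescues you --- and what you correctly sketch as ``the main obstacle'' --- is precisely what the paper makes explicit from the start: one should never invoke a capacity-level direct-sum formula, but instead prove the additivity statement at the single-letter level, i.e.\ show $I_c(\Psi_\infty \otimes \Gamma') = \log(\max_k d_k) + I_c(\Gamma')$ for an arbitrary channel $\Gamma'$ (and analogously for $\chi$, $I_p$). The single-shot direct-sum formula $I_c(\bigoplus_k \Phi_k) = \max_k I_c(\Phi_k)$ from~\cite{Fukuda2007direct} \emph{is} valid, and after the reduction to $\overbar{\mathcal{P}}_\Psi \otimes \Gamma' = \bigoplus_k (\id_{k,1} \otimes \mathcal{R}_{k,2} \otimes \Gamma')$, each summand yields $\log d_k + I_c(\Gamma')$ by the strong additivity of the identity (Hadamard channel) and the replacer (Appendix~\ref{appen:additive}). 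Regularization then follows by iterating the one-letter identity with $\Gamma' = \Psi_\infty^{\otimes (n-1)} \otimes \Gamma^{\otimes n}$. So your ``fix'' is in fact the whole proof, and the capacity-level formulas in the body of your argument should be deleted rather than justified.
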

\begin{proof}
    
    It suffices to prove the theorem for the single letter quantities $I_c$, $I_p$, $\chi$ and the stated result would then follow from regularization (Theorem~\ref{theorem:LSD+CP}). Throughout the proof, we work with the decomposition $\Hil_A = \Hil_0 \oplus \bigoplus_k \Hil_{k,1}\otimes \Hil_{k,2}=\Hil_0 \oplus \Hil^{\perp}_0$ of the underlying Hilbert space, with respect to which the peripheral space $\mathscr{X}(\Psi)$ assumes the stated decomposition $\mathscr{X} (\Psi) = 0 \oplus \bigoplus_k (\B{\Hil_{k,1}}\otimes \delta_k)$. 
    
    Firstly, note that since the coherent information is super-additivite, we can write
    \begin{equation}
      \log (\max_k d_k) + I_c (\Gamma) \leq I_c (\Psi_{\infty}) + I_c (\Gamma) \leq I_c (\Psi_{\infty} \otimes \Gamma),
     \end{equation}
     where we used $\log(\max_k d_k )\leq I_c(\Psi_{\infty})$ (see Theorem~\ref{thm:QPinf}).

    To prove the opposite inequality, recall that $\Psi_{\infty}=\mathcal{P}_{\Psi}\circ \Psi_{\infty}=\Psi_{\infty}\circ \mathcal{P}_{\Psi}$, where $\mathcal{P}_{\Psi}:\B{\Hil_A}\to \B{\Hil_A}$ projects onto the peripheral space $\mathscr{X} (\Psi)$. Then, by using data-processing (Lemma~\ref{lemma:channel-bottlenecks}), we obtain
    \begin{equation}
        I_c (\Psi_{\infty} \otimes \Gamma) = I_c ( (\Psi_{\infty} \otimes \id_{C\to C}) \circ ( \mathcal{P}_{\Psi}\otimes \Gamma)) \leq I_c (\mathcal{P}_{\Psi} \otimes \Gamma).
    \end{equation}
Another data-processing argument similar to what is used in Lemma~\ref{lemma:PPar} shows
\begin{equation}
        I_c (\mathcal{P}_{\Psi} \otimes \Gamma) \leq I_c (\overbar{\mathcal{P}}_{\Psi} \otimes \Gamma),
\end{equation} 
where $\overbar{\mathcal{P}}_{\Psi}:\B{\Hil_0^{\perp}}\to \B{\Hil_0^{\perp}}$ is the restricted projection channel from Section~\ref{sec:periphery}. Recall from the same remark that we can write the action of $\overbar{\mathcal{P}}_{\Psi}$ and $\overbar{\mathcal{P}}_{\Psi}\otimes \Gamma$ as follows:
\begin{align}
    \overbar{\mathcal{P}}_{\Psi} &= \bigoplus_k \id_{k,1} \otimes \mathcal{R}_{k,2} \nonumber \\ \overbar{\mathcal{P}}_{\Psi} \otimes \Gamma &= \bigoplus_k \id_{k,1} \otimes \mathcal{R}_{k,2} \otimes \Gamma,
\end{align}
    where for each $k$, $\id_{k,1}:\B{\Hil_{k,1}}\to \B{\Hil_{k,1}}$ is the identity channel and $\mathcal{R}_{k,2}:\B{\Hil_{k,2}}\to \B{\Hil_{k,2}}$ is the replacer channel defined as $\mathcal{R}_{k,2}(X) = \Tr(X)\delta_k$. From the formulas of capacities of direct sum channels in \cite{Fukuda2007direct}, it follows that 
    \begin{align}
        I_c (\overbar{\mathcal{P}}_{\Psi} \otimes \Gamma) &= \max_k I_c \left( \id_{k,1} \otimes \mathcal{R}_{k,2} \otimes \Gamma \right) \nonumber  \\ 
        &= \max_k \left( \log d_k + I_c (\mathcal{R}_{k,2} \otimes \Gamma ) \right) \nonumber \\
        &= \log (\max_k d_k) + I_c (\Gamma),
    \end{align}
    where latter two equalities follow from the fact that both the identity and replacer channels are strongly additive (Appendix~\ref{appen:additive}), with $I_c (\id_{k,1})=\log d_k$ and $I_c (\mathcal{R}_{k,2})=0$ for all $k$ (see also \cite{Gao2018tro}). Retracing our steps, we have the following chain of inequalities:
    \begin{align}
        I_c (\Psi_{\infty} \otimes \Gamma) \leq I_c (\mathcal{P}_{\Psi} \otimes \Gamma) \leq I_c (\overbar{\mathcal{P}}_{\Psi} \otimes \Gamma) = \log (\max_k d_k) + I_c (\Gamma),
    \end{align}
    which completes the proof for the quantum capacity. The proofs for $\chi$, $I_p$ follow exactly the same steps, since the identity and replacer channels are also strongly additive for $\chi$, $I_p$ (Appendix~\ref{appen:additive}).
\end{proof}

Finally, we can lift the strong additivity from the asymptotic part $\Psi_{\infty}$ to finite length by using continuity of channel capacities as shown below.

\begin{theorem}\label{theorem:strong-add}
    Let $\Psi:\B{\Hil_A}\to \B{\Hil_A}$ and $\Gamma:\B{\Hil_B}\to \B{\Hil_C}$ be quantum channels. Then,
    \begin{align*}
       \log \left( \sum_k d_k \right) + C(\Gamma) \leq C(\Psi^l &\otimes \Gamma) \leq \log\left( \sum_k d_k \right) + C(\Gamma) + \delta_l \log (d^2_A d^2_C -1 ) + 2h (\delta_l/2), \\
       \log(\max_k d_k) + P(\Gamma) \leq P(\Psi^l &\otimes \Gamma) \leq \log(\max_k d_k) + P(\Gamma) + 2\delta_l \log (d^2_A d^2_C -1 ) + 4h (\delta_l/2), \\
       \log(\max_k d_k) + Q(\Gamma) \leq Q(\Psi^l &\otimes \Gamma) \leq \log(\max_k d_k) + Q(\Gamma) + \delta_l \log (d^2_A d^2_C -1 ) + 2h (\delta_l/2),
    \end{align*}
    where $d_k=\dim \Hil_{k,1}$ for $k\in \{1,2,\ldots ,K \}$ are the block dimensions in the decomposition of $\chi (\Psi)$ and $\delta_l=\norm{\Psi^l-\Psi^l_{\infty}}_{\diamond} \leq \kappa\mu^l\to 0$ as $l\to \infty$, where $\mu=\operatorname{spr}(\Psi-\Psi_{\infty}), \kappa$ govern the convergence. The lower bound holds for all $l\in\mathbb{N}$ while the upper bound holds for $l$ large enough so that $\delta_l/2 \leq 1 - 1/(d_A^2d_C^2)$.
\end{theorem}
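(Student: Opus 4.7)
The plan is to mirror exactly the strategy used in Theorems~\ref{thm:QPasymp} and \ref{thm:Casmpy}: establish the result at the level of the asymptotic part $\Psi^l_\infty$ using the strong additivity statement of Theorem~\ref{thm:asym-strongadd}, and then transfer to $\Psi^l$ via the diamond-norm convergence $\norm{\Psi^l-\Psi^l_\infty}_{\diamond}\leq \kappa\mu^l$ together with continuity of the capacities. A preliminary remark I would record first is that the peripheral eigenvalues and spectral projectors of $\Psi^l$ coincide with those of $\Psi$ (direct from the Jordan decomposition, since $\abs{\lambda}=1\iff \abs{\lambda^l}=1$), so $\mathcal{P}_{\Psi^l}=\mathcal{P}_{\Psi}$ and the block dimensions $d_k$ appearing in the statement are the same whether one works with $\Psi$ or $\Psi^l$. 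In particular, Theorems~\ref{thm:QPinf} and \ref{thm:Cinf} give the closed-form values $Q(\Psi^l_\infty)=P(\Psi^l_\infty)=\log(\max_k d_k)$ and $C(\Psi^l_\infty)=\log(\sum_k d_k)$.

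For the lower bounds, I would use the factorisation $\Psi^l_\infty \otimes \Gamma = (\Psi^l\otimes \Gamma)\circ(\mathcal{P}_\Psi\otimes \id)$ together with the bottleneck inequality (Lemma~\ref{lemma:bottleneck-regular}) to obtain $F(\Psi^l_\infty\otimes\Gamma)\leq F(\Psi^l\otimes\Gamma)$ for $F\in\{C,P,Q\}$. Applying Theorem~\ref{thm:asym-strongadd} to the asymptotic-part channel $\Psi^l_\infty$ yields the exact decomposition $F(\Psi^l_\infty\otimes \Gamma) = F(\Psi^l_\infty) + F(\Gamma)$, and inserting the closed-form value of $F(\Psi^l_\infty)$ from the previous paragraph produces all three lower bounds.

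For the upper bounds, I would combine the estimate
$\norm{(\Psi^l\otimes\Gamma)-(\Psi^l_\infty\otimes\Gamma)}_{\diamond} = \norm{(\Psi^l-\Psi^l_\infty)\otimes\Gamma}_{\diamond}\leq \norm{\Psi^l-\Psi^l_\infty}_{\diamond}\leq \kappa\mu^l$
(using stability of the diamond norm under tensoring with the channel $\Gamma$) with the Alicki-Fannes-Winter type continuity bounds of Theorem~\ref{theorem:cap-cont}, now applied with output dimension $d_A d_C$ rather than $d_A$. For $C$ and $Q$ this gives a correction of $\kappa\mu^l \log(d_A^2 d_C^2-1) + 2h(\kappa\mu^l/2)$ around the reference value $F(\Psi^l_\infty\otimes\Gamma)$ computed above; for $P$ the continuity bound picks up an extra factor of two because privacy constrains both the main and the complementary channel output, which is precisely the source of the $2\kappa\mu^l$ and $4h(\kappa\mu^l/2)$ in the statement. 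Matching each such upper bound with the additive expression already derived for $F(\Psi^l_\infty\otimes\Gamma)$ yields the stated estimates.

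The main obstacle is not any new idea but rather the bookkeeping around the continuity step: one must feed the correct output dimension $d_A d_C$ into the continuity bound and verify that the private-capacity estimate genuinely inherits the doubled constants from comparing both the channel and its complement. Everything else is a direct application of results already established earlier in the paper.
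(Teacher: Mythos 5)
Your proposal is correct and follows essentially the same strategy as the paper: compute at the level of $\Psi^l_\infty$ using Theorem~\ref{thm:asym-strongadd}, then transfer to $\Psi^l$ via the diamond-norm estimate $\norm{\Psi^l-\Psi^l_\infty}_\diamond\leq\kappa\mu^l$ and the continuity bounds of Theorem~\ref{theorem:cap-cont} applied with output dimension $d_A d_C$. The only (cosmetic) divergence is in the lower bound: the paper invokes superadditivity $F(\Psi^l\otimes\Gamma)\geq F(\Psi^l)+F(\Gamma)$ and then the lower bounds on $F(\Psi^l)$ from Theorems~\ref{thm:QPasymp}/\ref{thm:Casmpy}, whereas you route through the bottleneck $F(\Psi^l\otimes\Gamma)\geq F(\Psi^l_\infty\otimes\Gamma)$ via the factorisation $\Psi^l_\infty\otimes\Gamma=(\Psi^l\otimes\Gamma)\circ(\mathcal{P}_\Psi\otimes\id)$ and then apply Theorem~\ref{thm:asym-strongadd}; both give the same number. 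Your observation that $\mathcal{P}_{\Psi^l}=\mathcal{P}_\Psi$ (so the block dimensions $d_k$ are unchanged when replacing $\Psi$ by $\Psi^l$) is a correct and useful point to have on record, and the accounting of the factor of $2$ in the private-capacity continuity constant matches Theorem~\ref{theorem:cap-cont}.
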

\begin{proof}
    The lower bounds follow from the superadditivity of channel capacities along with the capacity estimates from Theorems~\ref{thm:QPasymp} and \ref{thm:Casmpy}.

    For the upper bound, note that $\norm{\Psi^l \otimes \Gamma - \Psi^l_{\infty} \otimes \Gamma}_{\diamond} \leq \norm{\Psi^l-\Psi^l_{\infty}}_{\diamond} = \delta_l \leq \kappa\mu^l $, so that we can use continuity of the channel capacity function $C$ (Theorem~\ref{theorem:cap-cont}) to write 
    \begin{align*}
       C(\Psi^l \otimes \Gamma) &\leq C(\Psi^l_{\infty}\otimes \Gamma) +  \delta_l \log (d^2_A d^2_C -1 ) + 2h (\delta_l/2) \\
       &= C(\Psi^l_{\infty}) + C(\Gamma) + \delta_l \log (d^2_A d^2_C -1 ) + 2h (\delta_l/2) \\
       &=\log\left(\sum_k d_k\right) + C(\Gamma) + \delta_l \log (d^2_A d^2_C -1 ) + 2h (\delta_l/2),
    \end{align*}
    where we used the strong additivity of $\Psi_{\infty}$ from Theorem~\ref{thm:asym-strongadd} along with the capacity formula from Theorem~\ref{thm:Cinf}. The proofs for private and quantum capacities follow similarly.
\end{proof}

We can reformulate the above strong additivity results in the language of \emph{potential} capacities \cite{Winter2016potential}. The potential capacity of a channel $\Phi$ quantifies the maximum possible capability of a channel to transmit information when it is used in combination with any other contextual channel. More precisely, for a channel $\Phi:\B{\Hil_A}\to \B{\Hil_B}$, we define its $q$-dimensional potential  classical, private classical, and quantum capacity, respectively, as follows: 

\begin{align}
    C^{(q)}_{\operatorname{pot}}(\Phi) := \sup_{\Gamma} \left[C(\Phi\otimes \Gamma) - C(\Gamma) \right], \\
    P^{(q)}_{\operatorname{pot}}(\Phi) := \sup_{\Gamma} \left[P(\Phi\otimes \Gamma) - P(\Gamma) \right], \\
    Q^{(q)}_{\operatorname{pot}}(\Phi) := \sup_{\Gamma} \left[Q(\Phi\otimes \Gamma) - Q(\Gamma) \right],
\end{align}
where the supremum is over all contextual channels $\Gamma:\B{\Hil_C}\to \B{\Hil_D}$ with fixed output dimension $q=d_D=\dim\Hil_D$. The potential capacities defined in \cite{Winter2016potential} are obtained by taking a further supremum over $q\in \mathbb{N}$. We can now restate Theorem~\ref{theorem:strong-add} as follows.

\begin{theorem}\label{theorem:stradd}
    Let $\Psi:\B{\Hil_A}\to \B{\Hil_A}$ a quantum channel and $q\in \mathbb{N}$. Then,
    \begin{align*}
       \log \left( \sum_k d_k \right) \leq C^{(q)}_{\operatorname{pot}}(\Psi^l) &\leq \log\left( \sum_k d_k \right) + \delta_l \log (q^2d^2_A -1 ) + 2h (\delta_l/2), \\
       \log(\max_k d_k) \leq P^{(q)}_{\operatorname{pot}}(\Psi^l) &\leq \log(\max_k d_k) +  2\delta_l \log (q^2d^2_A -1 ) + 4h (\delta_l/2), \\
       \log(\max_k d_k) \leq Q^{(q)}_{\operatorname{pot}}(\Psi^l) &\leq \log(\max_k d_k) + \delta_l \log (q^2d^2_A -1 ) + 2h (\delta_l/2),
    \end{align*} 
    where $d_k=\dim \Hil_{k,1}$ for $k\in \{1,2,\ldots ,K \}$ are the block dimensions in the decomposition of $\chi (\Psi)$ and $\delta_l=\norm{\Psi^l-\Psi^l_{\infty}}_{\diamond} \leq \kappa\mu^l\to 0$ as $l\to \infty$, where $\mu=\operatorname{spr}(\Psi-\Psi_{\infty}), \kappa$ govern the convergence (see Eq.~\eqref{eq:converge}). The lower bound holds for all $l\in\mathbb{N}$ while the upper bound holds for $l$ large enough so that $\delta_l/2 \leq 1 - 1/(qd_A)^2$.
\end{theorem}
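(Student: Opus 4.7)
The plan is to obtain Theorem~\ref{theorem:stradd} as an immediate corollary of Theorem~\ref{theorem:strong-add}, by recasting the two-sided strong additivity bounds in the potential capacity language. The key observation is that every bound in Theorem~\ref{theorem:strong-add} compares $C(\Psi^l \otimes \Gamma)$ against $C(\Gamma)$ with a correction depending only on $\Psi$, $l$, and the output dimension of $\Gamma$ (and not on any other structural feature of $\Gamma$). This is exactly the structure one needs to evaluate $\sup_{\Gamma}\!\bigl[C(\Psi^l \otimes \Gamma) - C(\Gamma)\bigr]$.

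For the lower bounds, I fix any channel $\Gamma$ with $\dim \Hil_D = d$ (for instance $\Gamma = \id_d$) and rearrange the lower bound of Theorem~\ref{theorem:strong-add} as
\[
C(\Psi^l \otimes \Gamma) - C(\Gamma) \;\geq\; \log\!\left(\sum_k d_k\right),
\]
so that taking the supremum over admissible $\Gamma$ preserves the inequality. The analogues for $P^{(d)}_{\operatorname{pot}}$ and $Q^{(d)}_{\operatorname{pot}}$ use the corresponding lower bounds from Theorem~\ref{theorem:strong-add}, with $\log\max_k d_k$ in place of $\log\sum_k d_k$.

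For the upper bounds, I rearrange the upper bound of Theorem~\ref{theorem:strong-add} as
\[
C(\Psi^l \otimes \Gamma) - C(\Gamma) \;\leq\; \log\!\left(\sum_k d_k\right) + \kappa \mu^l \log(d^2_A d_C^2 - 1) + 2 h(\kappa\mu^l/2),
\]
and then specialize to $d_C = d$, which is forced by the dimensional constraint in the definition of $C^{(d)}_{\operatorname{pot}}$. Since the resulting right-hand side is independent of $\Gamma$, taking $\sup_{\Gamma}$ on the left yields the claimed upper bound on $C^{(d)}_{\operatorname{pot}}(\Psi^l)$. The same reasoning gives the bounds for $P^{(d)}_{\operatorname{pot}}$ (inheriting the extra factor of $2$ in the correction term from Theorem~\ref{theorem:strong-add}) and $Q^{(d)}_{\operatorname{pot}}$; likewise the validity condition $\kappa\mu^l/2 \leq 1 - 1/(d d_A)^2$ matches automatically once $d_C = d$ is substituted into the original condition $\kappa\mu^l/2 \leq 1 - 1/(d_A^2 d_C^2)$.

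I do not expect any real obstacle here: all analytic content was already absorbed in proving Theorem~\ref{theorem:strong-add}, which in turn rested on the strong additivity of $\Psi_\infty$ (Theorem~\ref{thm:asym-strongadd}), the continuity estimate for the capacity functions, and the spectral convergence bound $\norm{\Psi^l - \Psi^l_\infty}_\diamond \leq \kappa\mu^l$. The only thing to verify carefully is the notational matching between the two theorems, namely identifying the contextual channel's output space $\Hil_C$ in Theorem~\ref{theorem:strong-add} with the output space $\Hil_D$ (of dimension $d$) appearing in the definition of the potential capacity.
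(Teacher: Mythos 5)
Your proposal is correct and matches the paper's intent exactly: the paper itself introduces Theorem~\ref{theorem:stradd} with the sentence ``We can now restate Theorem~\ref{theorem:strong-add} as follows'' and offers no further proof, so the content is precisely the bookkeeping you carry out, namely rearranging the two-sided bounds of Theorem~\ref{theorem:strong-add} into a bound on $C(\Psi^l\otimes\Gamma)-C(\Gamma)$ (and its $P$, $Q$ analogues), observing that the right-hand side depends on $\Gamma$ only through its output dimension, identifying that dimension with $d$, and taking the supremum over $\Gamma$. One small imprecision worth noting: for the lower bound you speak of ``fixing'' a particular $\Gamma$; what is actually being used is that the lower bound of Theorem~\ref{theorem:strong-add} holds for \emph{every} $\Gamma$ with output dimension $d$, so the supremum is trivially bounded below by the $\Gamma$-independent constant — but this is clearly what you mean, and the argument goes through.
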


\subsubsection{Rate of convergence}\label{subsec:convergence-transmission}

The finite length capacity bounds from Theorems~\ref{thm:Casmpy}, \ref{thm:QPasymp}, \ref{theorem:strong-add} immediately yield the following infinite-length capacity formulas.

\begin{corollary}
    Let $\Psi:\B{\Hil}\to \B{\Hil}$ be a quantum channel. Then,
    \begin{align*}
        \lim_{l\to \infty} C(\Psi^l) &= \log (\sum_k d_k), \\
        \lim_{l\to \infty} Q(\Psi^l) &= \lim_{l\to \infty} Q^{\dagger}(\Psi^l) = \lim_{l\to \infty} P_{\leftrightarrow}(\Psi^l) = \lim_{l\to \infty} P_{\leftrightarrow}^{\dagger}(\Psi^l) = \log (\max_k d_k)
    \end{align*}
    Moreover, for any other channel $\Gamma:\B{\Hil_B}\to \B{\Hil_C}$,
    \begin{align*}
        \lim_{l\to \infty} C(\Psi^l \otimes \Gamma) &= \log(\sum_k d_k) + C(\Gamma) \\
        \lim_{l\to \infty} P(\Psi^l \otimes \Gamma) &= \log(\max_k d_k) + P(\Gamma) \\ 
        \lim_{l\to \infty} Q(\Psi^l \otimes \Gamma) &= \log(\max_k d_k) + Q(\Gamma). 
    \end{align*}
    Here, $d_k = \dim \Hil_{k,1}$ for $k=1,2,\ldots ,K$ are the block dimensions in the decomposition of $\mathscr{X} (\Psi)$.
\end{corollary}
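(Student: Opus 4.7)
The plan is to obtain this corollary by passing to the limit $l \to \infty$ in the finite-length sandwich bounds already established in Theorems~\ref{thm:QPasymp}, \ref{thm:Casmpy}, and \ref{theorem:strong-add}. The unifying observation is that the spectral convergence estimate $\norm{\Psi^l - \Psi^l_\infty}_\diamond \leq \kappa \mu^l$, with $\mu < 1$ and $\kappa$ growing only sub-exponentially in $l$ (cf.~Eq.~\eqref{eq:converge-gap}), forces $\kappa\mu^l \to 0$, so every continuity-induced correction term appearing in those theorems vanishes in the limit. Consequently, the argument reduces to a short squeeze in each case.

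First, for the quantum and private capacities of $\Psi^l$, I would invoke Theorem~\ref{thm:QPasymp}, which provides
\[
\log(\max_k d_k) \leq Q(\Psi^l) \leq P^\dagger(\Psi^l) \leq \log(\max_k d_k) + \log(1+\kappa\mu^l d_A/2).
\]
Taking $l \to \infty$ collapses the upper and lower bounds, giving $\lim_l Q(\Psi^l) = \lim_l P^\dagger(\Psi^l) = \log(\max_k d_k)$. The two remaining limits, $\lim_l P(\Psi^l)$ and $\lim_l Q^\dagger(\Psi^l)$, are then pinned down by the universal ordering $Q \leq P \leq P^\dagger$ and $Q \leq Q^\dagger \leq P^\dagger$ from Eq.~\eqref{eq:Q<P<C}.

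Second, for the classical capacity of $\Psi^l$, Theorem~\ref{thm:Casmpy} gives a sandwich of the same shape:
\[
\log(\textstyle\sum_k d_k) \leq C(\Psi^l) \leq \log(\textstyle\sum_k d_k) + \kappa\mu^l \log(d_A^2-1) + 2h(\kappa\mu^l/2).
\]
Continuity of the binary entropy $h$ at $0$ together with $\kappa\mu^l \to 0$ eliminates both correction terms, yielding $\lim_l C(\Psi^l) = \log(\sum_k d_k)$. The three tensor-product limits follow in exactly the same fashion from Theorem~\ref{theorem:strong-add}, since its correction terms are again all of the form $c\, \kappa\mu^l \log(\cdots) + c'\,h(\kappa\mu^l/2)$ and vanish as $l \to \infty$; the contextual channel $\Gamma$ remains fixed throughout, so $d_C$ and $C(\Gamma), P(\Gamma), Q(\Gamma)$ are constants in $l$.

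There is no substantive obstacle here; the work is essentially bookkeeping. The only subtlety worth checking is that $\kappa\mu^l \to 0$ despite $\kappa$ itself depending on $l$. This is immediate from the explicit estimate in Eq.~\eqref{eq:converge-gap}, where $\kappa$ is a polynomial of degree $d^2-1$ in $l$ while $\mu^l$ decays exponentially, so the product tends to zero as emphasized in the discussion following Eq.~\eqref{eq:converge}.
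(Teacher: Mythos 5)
Your proof is correct and takes essentially the same route the paper intends (the paper just says the finite-length bounds "immediately yield" the corollary): pass to the limit $l\to\infty$ in Theorems~\ref{thm:QPasymp}, \ref{thm:Casmpy}, \ref{theorem:strong-add}, using $\kappa\mu^l\to 0$ to kill the correction terms, and squeeze the remaining capacities via $Q\leq P\leq P^\dagger$ and $Q\leq Q^\dagger\leq P^\dagger$. One tiny nitpick: the inequalities $P\leq P^\dagger$ and $Q\leq Q^\dagger$ come from the display preceding Eq.~\eqref{eq:Q<P<C} rather than that equation itself, but the chain you need is indeed available.
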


Given the infinite-length capacity bounds in the previous corollary, it is natural to ask for estimates on the length $l$ after which the capacities are close to their infinite-length values. From the convergence analysis presented in Section~\ref{subsec:convergence-storage}, we know that the asymptotic behavior of a channel becomes dominant (i.e., the diamond norm distance $\delta_l=\norm{\Psi^l-\Psi^l_{\infty}}_{\diamond}$ becomes small) after length $l\gtrsim d^2\ln (d)$, so that according to Theorems~\ref{thm:Casmpy}, \ref{thm:QPasymp}, \ref{theorem:strong-add}, the infinite-length capacities are reached after the same length. For example, suppose that $\Psi:\B{\Hil}\to \B{\Hil}$ is a quantum channel with $d=\dim\Hil$, $D=d^2$, and $\mu=\operatorname{spr}(\Psi-\Psi_{\infty})$. Fix $\delta\in(0,1)$, $\mu_0\in(\mu,1)$, and $\delta'=\frac{\delta(1-\mu_0)^{3/2}}{8e^2}$. Then, using Lemma~\ref{lemma:lambert-channel}, it is easy to see that for
\begin{equation}
    l \geq \frac{D}{\ln (1/\mu)} \left( \frac{ \ln (D^{D+2}/\delta') }{D} - \ln \left(\frac{\mu \ln (1/\mu)}{1-\mu^2} \right) + \sqrt{2}\sqrt{\frac{ \ln (D^{D+2}/\delta')  }{D} - \ln \left(\frac{\mu \ln (1/\mu)}{1-\mu^2} \right) -1} \right),
\end{equation}
and $l\geq \frac{\mu}{\mu_0-\mu}$, we have $d\delta_l = d\norm{\Psi^l-\Psi^l_{\infty}}_{\diamond} \leq \delta$, so that Theorem~\ref{thm:QPasymp} shows that 
\begin{align}
       \log \left( \max_k d_k \right) \leq Q(\Psi^l) \leq P^{\dagger}_{\leftrightarrow}(\Psi^l) \leq \log \left( \max_k d_k \right) +  \log (1+ \frac{\delta}{2}).
    \end{align}
Similarly, in the finite-blocklength regime, Theorem~\ref{theorem:QPfinite-max} shows that for any $n\in \mathbb{N}$ and $\epsilon\in [0,1)$,
\begin{align}
        \log(\max_k d_k) \leq \frac{1}{n} Q_{\epsilon}((\Psi^l)^{\otimes n})) &\leq \log(\max_k d_k) +  \log (1+\frac{\delta}{2}) + \frac{1}{n} \log (\frac{1}{1-\epsilon}), \\
        \log(\max_k d_k) \leq \frac{1}{n} C^{\operatorname{p}}_{\epsilon}((\Psi^l)^{\otimes n})) &\leq \log(\max_k d_k) +   \log (1+\frac{\delta}{2}) + \frac{1}{n} \log (\frac{1}{1-\epsilon}).
    \end{align}
Even if LOCC-assistance is allowed, Theorem~\ref{theorem:QPfinite-assisted} shows that
\begin{align}
    \log(\max_k d_k) \leq \frac{1}{n}Q^{\leftrightarrow}_{n,\epsilon}(\Psi^l) \leq \frac{1}{n}P^{\leftrightarrow}_{n,\epsilon}(\Psi^l)  &\leq \log(\max_k d_k) +  \log (1+\frac{\delta}{2}) + \frac{1}{n} \log (\frac{1}{1-\epsilon}).
\end{align}

\section{Conclusion}\label{sec:conclude}

In the concluding discussion, we compare and contrast the analyses presented in Sections~\ref{sec:qms-storage} and \ref{sec:qms-transmission}. Let $\Psi:\B{\Hil}\to \B{\Hil}$ be a quantum channel acting on a $d-$dimensional quantum system with $d=\dim\Hil$. Let us denote the one-shot $\epsilon-$error capacities of $n$ uses of the $m-$fold concatenation channel $\Psi^m=\Psi\circ \Psi \circ \ldots \circ \Psi$ by
\begin{align}
    Q(m,n,\epsilon) &:= Q_{\epsilon}((\Psi^m)^{\otimes n}), \\
    P(m,n,\epsilon) &:= C^{\operatorname{p}}_{\epsilon}((\Psi^m)^{\otimes n}).
\end{align}

\begin{itemize}
    \item In the data storage setup of Section~\ref{sec:qms-storage}, we fix $n\in \mathbb{N}$ and $\epsilon\in [0,1)$ and study the $t\to \infty$ behavior of the function $Q(t,n,\epsilon)$. Our analysis shows that for $t\gtrsim \ln(n)+d^2\ln(d)$, the following bound holds:
\begin{align}
    \log(\max_k d_k) &\leq \frac{1}{n} Q(t,n,\epsilon) \leq\log(\max_k d_k) + \frac{1}{n}\log (\frac{1}{1-\epsilon-\delta_t}), \label{eq:Qstorage-bound} \\
    \log(\max_k d_k) &\leq \frac{1}{n} P(t,n,\epsilon) \leq\log(\max_k d_k) + \frac{1}{n}\log (\frac{1}{1-\epsilon-\delta_t}), \label{eq:Pstorage-bound}
\end{align}
where $t\gtrsim \ln(n)+d^2\ln(d)$ ensures that $\delta_t=\norm{(\Psi^t)^{\otimes n}-(\Psi^t_{\infty})^{\otimes n}}_{\diamond}$ is small, see Section~\ref{subsec:IIDconvergence}.

\item In the point-to-point transmission setup of Section~\ref{sec:qms-transmission}, we study the behavior of $Q(l,n,\epsilon)$ for a fixed length $l\gtrsim d^2\ln (d)$ (independent of $n$). Our analysis shows that for all $n\in \mathbb{N}$ and $\epsilon\in [0,1)$,
\begin{align}
    \log(\max_k d_k) &\leq \frac{1}{n} Q(l,n,\epsilon) \leq \log(\max_k d_k) +  \log (1+\frac{\delta'_ld}{2}) + \frac{1}{n} \log (\frac{1}{1-\epsilon}), \label{eq:Qtransmission-bound} \\
    \log(\max_k d_k) &\leq \frac{1}{n} P(l,n,\epsilon) \leq \log(\max_k d_k) +  \log (1+\frac{\delta'_ld}{2}) + \frac{1}{n} \log (\frac{1}{1-\epsilon}), \label{eq:Ptransmission-bound}
\end{align}
where $l\gtrsim d^2\ln (d)$ ensures that the norm $\delta'_l=\norm{\Psi^l-\Psi^l_{\infty}}_{\diamond}$ is small, see Section~\ref{subsec:convergence-transmission}.
\end{itemize}

Notice that the $t\to\infty$ limit of the bounds~\eqref{eq:Qstorage-bound},\eqref{eq:Pstorage-bound} and the $l\to \infty$ limit of the bounds~\eqref{eq:Qtransmission-bound},\eqref{eq:Ptransmission-bound} agree with each other, giving us the following bounds for all $n\in \mathbb{N}$ and $\epsilon\in [0,1)$:
\begin{equation}
    \log(\max_k d_k) \leq \lim_{m\to \infty} \frac{1}{n} Q(m,n,\epsilon) \leq\log(\max_k d_k) + \frac{1}{n}\log (\frac{1}{1-\epsilon}).
\end{equation}
Importantly, the bounds~\eqref{eq:Qtransmission-bound},\eqref{eq:Ptransmission-bound} hold for a wider range of parameters $(m,n,\epsilon)$ compared to the bounds~\eqref{eq:Qstorage-bound},\eqref{eq:Pstorage-bound}. However, in the parameter space $m\gtrsim \ln(n)+d^2\ln(d)$ that is common to both, \eqref{eq:Qstorage-bound},\eqref{eq:Pstorage-bound} are tighter than \eqref{eq:Qtransmission-bound},\eqref{eq:Ptransmission-bound}. Both bounds hold in the `large' $m$ regime, and it would be interesting to scrutinize how non-trivial capacity estimates can be obtained in the `small' $m$ regime. In this regard, we should mention that by imposing additional constraints on the channel $\Psi$, such as the existence of a full rank invariant state and reversibility (given by a suitable detailed balance condition), it is possible to employ the framework of quantum functional inequalities to obtain converse bounds on the asymptotic capacities $Q(\Psi^m),P(\Psi^m)$ (Definition~\ref{def:capacity}) that hold for all $m$ (see \cite{MullerHermes2018capacity, Bardet2021group}). It would be interesting to analyze whether similar techniques can be applied in the one-shot finite-blocklength regime and to a more general class of channels.

Another interesting direction is to study capacities of Markovian semigroups when active error-correction is allowed in between time steps \cite{MullerHermes2015subdivision, fawzi2022lower}. Recall from Section~\ref{subsec:fault-tolerance} that by setting $\Psi = \Psi_{\text{ecc}} \circ \Psi_{\text{noise}}$, we can accommodate a fixed time-independent error correction mechanism $\Psi_{\text{ecc}}$ in our model that is engineered to detect and correct for errors induced by the noise $\Psi_{\operatorname{noise}}$ actively as they occur. We leave the analysis of capacities for more general time-dependent active error-correction mechanisms for future study. 

Finally, it would be interesting to analyze the finite block-length behavior of the classical capacity in the large $m\gtrsim d^2\ln(d)$ regime, similar to what is done in Section~\ref{subsec:non-zero-finiteblock} for quantum and private classical capacity. In this regard, as noted in Remark~\ref{remark:chi-cont}, it would be pertinent to perform a continuity analysis for the Holevo channel measures $\chi_{\max}$ and $\widetilde{\chi}_{\alpha}$, which might be useful to obtain strong converse bounds for classical capacity in the stated regime. Note that in the $m\gtrsim \ln(n)+d^2\ln(d)$ regime, the analysis from Section~\ref{sec:qms-storage} does give a bound similar to \eqref{eq:Qstorage-bound},\eqref{eq:Pstorage-bound} also for the classical capacity:
\begin{equation}
    \log(\sum_k d_k) \leq \frac{1}{n} C(m,n,\epsilon) \leq\log(\sum_k d_k) + \frac{1}{n}\log (\frac{1}{1-\epsilon-\delta_m}),
\end{equation}
where, as before, $m\gtrsim \ln(n)+d^2\ln(d)$ ensures that the norm $\delta_m=\norm{(\Psi^m)^{\otimes n}-(\Psi^m_{\infty})^{\otimes n}}_{\diamond}$ is small.

\section{Acknowledgements}
We thank \'Angela Capel for helpful discussions. S.S.~is supported by the Cambridge Trust International Scholarship. N.D.~is supported by the Engineering and Physical Sciences Research Council [Grant Ref: EP/Y028732/1].

\appendix

\section{LOCC-assisted quantum and private classical communication}\label{appen:assisted}

The ability of Alice and Bob to classically communicate with each other is modeled by an \emph{LOCC}\footnote{LOCC is an acronym for \emph{local operations assisted with classical communication}.} channel \cite{Chitambar2014LOCC}. A quantum channel $\Phi_{AB\to A'B'}$ is called $(A\to B)$ \emph{LOCC} if it can be written as 
\begin{equation}
    \Phi_{AB\to A'B'} = \sum_{x\in [\mathscr{X}]} \mathcal{M}^{(x)}_{A\to A'} \otimes \mathcal{N}^{(x)}_{B\to B'},
\end{equation}
where the sum is over a finite set of indices, $\{\mathcal{M}^{(x)}_{A\to A'}\}_{x\in [\mathscr{X}]}$ is a set of linear completely positive maps that sum to a trace preserving map, and $\{\mathcal{N}^{(x)}_{B\to B'} \}_{x\in [\mathscr{X}]}$ is a set of quantum channels. A channel $\Phi_{AB\to A'B'}$ is called $(A\leftrightarrow B)$ \emph{LOCC} (or \emph{two-way LOCC} or simply \emph{LOCC}) if it can be written as a concatenation of a finite sequence of $(A\to B)$ and $(B\to A)$ LOCC channels.  

\begin{definition}  (LOCC-assisted quantum communication) \label{def:LOCCquantum} \\
    Let $n,d\in \mathbb{N}$ and $\epsilon\in [0,1)$. An $(n, d,\epsilon)$ \emph{LOCC-assisted quantum code} for a channel $\Phi_{A\to B}$ consists of a separable state $\sigma_{A_1'A_1B_1'}\in \operatorname{SEP}(A_1'A_1:B_1')$, a set $\{\mathcal{L}^{(i-1)}_{A_{i-1}'B_{i-1}B'_{i-1}\to A_i'A_iB_i'} \}_{i=2}^n$ of LOCC channels, and another LOCC channel $\mathcal{L}^{(n)}_{A_n'B_nB'_n\to M_AM_B}$, such that the final state of the protocol defined by this code
    \begin{align}\label{eq:LOCCquantum}
        \eta_{M_AM_B} = \mathcal{L}^{(n)}_{A_n'B_nB'_n\to M_AM_B} \circ \Phi_{A_n\to B_n} \circ \mathcal{L}^{(n-1)}_{A_{n-1}'B_{n-1}B'_{n-1}\to A_n'A_nB_n'} \circ \ldots  \nonumber \\ 
        \circ\mathcal{L}^{(1)}_{A_{1}'B_{1}B'_{1}\to A_2'A_2B_2'}\circ \Phi_{A_1\to B_1}(\sigma_{A_1'A_1B_1'})
    \end{align}
    satisfies
    \begin{equation}
        F(\psi^+_{M_AM_B}, \eta_{M_AM_B}) \geq 1- \epsilon,
    \end{equation}
    where $\psi^+_{M_AM_B}$ is a maximally entangled state of Schmidt rank $d=d_{M_A}=d_{M_B}$. 
    
    Observe that since free classical communication is allowed in this protocol, Alice and Bob can use the final state $\eta_{M_AM_B}$ of the protocol to approximately simulate an identity channel of dimension $d$ by using quantum teleportation \cite{Bennett1993teleport}. The $n-$\emph{shot} $\epsilon$-error \emph{LOCC-assisted quantum capacity}
    of $\Phi$ is defined as 
    \begin{align}
    Q_{n,\epsilon}^{\leftrightarrow}(\Phi):= \sup \{ \log d: \exists (n, d,\epsilon)& \text{ LOCC-assisted} \nonumber \\ 
    &\text{quantum code for } \Phi\}.
\end{align}
The \emph{(asymptotic) LOCC-assisted quantum capacity} of $\Phi$ is defined as 
\begin{equation}
    Q_{\leftrightarrow}(\Phi) := \inf_{\epsilon\in (0,1)} \liminf_{n\to \infty} \frac{1}{n} Q_{n,\epsilon}^{\leftrightarrow}(\Phi).
\end{equation}
The corresponding strong converse capacity is defined as 
\begin{equation}
    Q^{\dagger}_{\leftrightarrow}(\Phi) := \sup_{\epsilon\in (0,1)} \limsup_{n\to \infty} \frac{1}{n} Q_{n,\epsilon}^{\leftrightarrow}(\Phi).
\end{equation}

\end{definition}

\begin{figure}[H]
    \centering
    \includegraphics[width=0.85\linewidth]{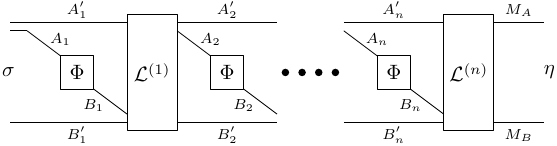}
    \caption{An LOCC-assisted quantum communication protocol via $n$ uses of a quantum channel $\Phi_{A\to B}$. Alice starts with a separable state $\sigma\in \operatorname{SEP}(A_1'A_1:B_1)$. She sends the system $A_1$ to Bob via $\Phi_{A\to B}$, after which the two parties perform an LOCC operation $\mathcal{L}^{(1)}_{A_1'B_1B_1'\to A_2'A_2B_2 }$. This procedure is repeated $n$ times. If the state $\eta$ at the end of the protocol is close to a maximally entangled state $\psi^+$ of Schmidt rank $d$: $F(\eta, \psi^+) \geq 1- \epsilon$, we say that the protocol is an $(n,d,\epsilon)$ LOCC-assisted quantum code for $\Phi$.  }
    \label{fig:assisted-quantum}
\end{figure}

\begin{definition}  (LOCC-assisted private classical communication) \label{def:LOCCprivate} \\
    Let $n,\mathscr{M}\in \mathbb{N}$ and $\epsilon\in [0,1)$. An $(n, \mathscr{M},\epsilon)$ \emph{LOCC-assisted private classical code} for a channel $\Phi_{A\to B}$ consists of a separable state $\sigma_{A_1'A_1B_1'}\in \operatorname{SEP}(A_1'A_1:B_1')$, a set $\{\mathcal{L}^{(i-1)}_{A_{i-1}'B_{i-1}B'_{i-1}\to A_i'A_iB_i'} \}_{i=2}^n$ of LOCC channels, and another LOCC channel $\mathcal{L}^{(n)}_{A_n'B_nB'_n\to M_AM_BS_AS_B}$, such that the final state of the protocol
    \begin{align}\label{eq:LOCCprivate}
        \eta_{M_AM_BS_AS_B} = \mathcal{L}^{(n)}_{A_n'B_nB'_n\to M_AM_BS_AS_B} \circ \Phi_{A_n\to B_n} \circ \mathcal{L}^{(n-1)}_{A_{n-1}'B_{n-1}B'_{n-1}\to A_n'A_nB_n'} \circ \ldots  \nonumber \\ 
        \circ\mathcal{L}^{(1)}_{A_{1}'B_{1}B'_{1}\to A_2'A_2B_2'}\circ \Phi_{A_1\to B_1}(\sigma_{A_1'A_1B_1'})
    \end{align}
    satisfies
    \begin{equation}
        F(\gamma^+_{M_AM_BS_AS_B}, \eta_{M_AM_BS_AS_B}) \geq 1- \epsilon,
    \end{equation}
    where $\gamma^+_{M_AM_BS_AS_B}$ is a \emph{private state} \cite{Horodecki2009private} of \emph{key dimension} $d=d_{M_A}=d_{M_B}$, i.e., $\gamma^+$ is of the form 
    \begin{equation}\label{eq:private}
        \gamma^+_{M_AM_BS_AS_B} = U_{M_AM_BS_AS_B} (\psi^+_{M_AM_B} \otimes \theta_{S_AS_B} ) U^{\dagger}_{M_AM_BS_AS_B},
    \end{equation}
    where $\psi^+_{M_AM_B}$ is a maximally entangled state of Schmidt rank $\mathscr{M}$ and $U_{M_AM_BS_AS_B}$ is a \emph{twisting} unitary:
    \begin{align}
        \psi^+_{M_AM_B} &= \frac{1}{\mathscr{M}} \sum_{m,m'\in [\mathscr{M}]} \ketbra{m}{m'}_{M_A} \otimes \ketbra{m}{m'}_{M_B}  \\
        U_{M_AM_BS_AS_B} &= \sum_{m,m'\in [\mathscr{M}]} \ketbra{m}{m}_{M_A} \otimes \ketbra{m'}{m'}_{M_B} \otimes U^{(m,m')}_{S_AS_B}. \label{eq:twisting}
    \end{align}
    Here, $U^{(m,m')}_{S_AS_B}$ are arbitrary unitary matrices and $\theta_{S_AS_B}$ is an arbitrary state.
    
    The equivalence between the above formulation of private capacity in terms of `distilling' private states and the usual formulation in terms of privately communicating classical information as in Definition~\ref{def:private-protocol} was established in the seminal works \cite{Horodecki2005private, Horodecki2009private}. The $n-$\emph{shot} $\epsilon$-error \emph{LOCC-assisted private classical capacity}
    of $\Phi$ is defined as 
    \begin{align}
    P_{n,\epsilon}^{\leftrightarrow}(\Phi):= \sup \{ \log \mathscr{M}: \exists (n, \mathscr{M},\epsilon)& \text{ LOCC-assisted} \nonumber \\ 
    &\text{private classical code for } \Phi\}.
\end{align}
The \emph{(asymptotic) LOCC-assisted private classical capacity} of $\Phi$ is defined as 
\begin{equation}
    P_{\leftrightarrow}(\Phi) := \inf_{\epsilon\in (0,1)} \liminf_{n\to \infty} \frac{1}{n} P_{n,\epsilon}^{\leftrightarrow}(\Phi).
\end{equation}
The corresponding strong converse capacity is defined as 
\begin{equation}
    P^{\dagger}_{\leftrightarrow}(\Phi) := \sup_{\epsilon\in (0,1)} \limsup_{n\to \infty} \frac{1}{n} P_{n,\epsilon}^{\leftrightarrow}(\Phi).
\end{equation}
\end{definition}

As noted before, strong converse capcities are always at least as large as the normal capacities:
\begin{equation}
    Q_{\leftrightarrow}(\Phi) \leq Q^{\dagger}_{\leftrightarrow}(\Phi),  \quad P_{\leftrightarrow}(\Phi)\leq P^{\dagger}_{\leftrightarrow}(\Phi).
\end{equation}
Furthermore, since assistance from classical communication can only enhance the capacities, we obtain the following relations between the unassisted and assisted capacities for all $n\in \mathbb{N}$ and $\epsilon\in [0,1)$:
\begin{align}
    Q_{\epsilon}(\Phi^{\otimes n}) \leq Q^{\leftrightarrow}_{n,\epsilon}(\Phi), \\
    C^{\operatorname{p}}_{\epsilon}(\Phi^{\otimes n}) \leq P^{\leftrightarrow}_{n,\epsilon}(\Phi),
\end{align}
which clearly imply that
\begin{align}
    Q(\Phi) \leq Q_{\leftrightarrow}(\Phi), \quad Q^{\dagger}(\Phi)&\leq Q^{\dagger}_{\leftrightarrow}(\Phi),  \\
    P(\Phi) \leq P_{\leftrightarrow}(\Phi), \quad  P^{\dagger}(\Phi)&\leq P^{\dagger}_{\leftrightarrow}(\Phi).
\end{align}
Moreover, since a more general target state is allowed for private classical communication, we get the following relation for all $n\in \mathbb{N}$ and $\epsilon\in [0,1)$:
\begin{align}\label{eq:Q<P-assisted}
   Q^{\leftrightarrow}_{n,\epsilon}(\Phi)\leq P^{\leftrightarrow}_{n,\epsilon}(\Phi).
\end{align}

It turns out that the max-relative entropy of entanglement of a channel (Definition~\ref{def:channel-measures}) provides a fundamental converse bound on the aforementioned assisted capacities.

\begin{theorem} \cite{Christandl2017max} \label{theorem:QPassisted<Emax}
    Let $\Phi:\B{\Hil_A}\to \B{\Hil_B}$ be a quantum channel, $\epsilon\in [0,1)$, and $n\in \mathbb{N}$. Then,
    \begin{equation}
      Q_{n,\epsilon}^{\leftrightarrow}(\Phi) \leq  P_{n,\epsilon}^{\leftrightarrow}(\Phi) \leq nE_{\max}(\Phi) + \log(\frac{1}{1-\epsilon}).
    \end{equation}
Consequently,
    Let $\Phi:\B{\Hil_A}\to \B{\Hil_B}$ be a quantum channel. Then,
    \begin{equation}
     Q^{\dagger}_{\leftrightarrow}(\Phi)\leq   P^{\dagger}_{\leftrightarrow}(\Phi) \leq E_{\max}(\Phi).
    \end{equation}
\end{theorem}

\section{Continuity of channel capacities}\label{sec:continuity}
In what follows, $S(A|B)_{\rho} := - I(A\rangle B)_{\rho} = S(AB)_{\rho}-S(B)_{\rho}$ denotes the conditional entropy of a bipartite state $\rho_{AB}\in \State{\Hil_A\otimes \Hil_B}$. Recently, the Alicki-Fannes-Winter continuity bound for conditional entropies \cite{Alicki2004continuous, Winter2016tight} was improved to the following bound.

\begin{theorem} \cite{Berta2025continuous, Audenaert2024continuous} \label{theorem:AFW}
    Let $\rho_{AB}, \sigma_{AB} \in \mathcal{D}(\cH_A \otimes \cH_B)$ be such that they have equal marginals on the $B$ subsystem $\rho_B=\sigma_B$ and $\frac{1}{2} \norm{\rho_{AB} - \sigma_{AB}}_1 \leq \delta$. Then,
    \begin{equation}
       |S(A|B)_{\rho} - S(A|B)_{\sigma}| \le \begin{cases} \delta \log( d_A^2 - 1) + h(\delta), & \text{if }\,\delta \leq 1 - 1/d_A^2 \\
       \log(d_A^2), & \text{if }\,\delta > 1-1/d_A^2
       \end{cases}\, ,
    \end{equation}
    where $h(\delta):= -\delta\log \delta - (1-\delta)\log (1-\delta)$ is the binary entropy function.
\end{theorem}

Using the above theorem, the continuity bound on channel output entropy \cite[Theorem 11]{Leung2009continuous} can now be improved as follows.

    \begin{theorem}\label{theorem:out-entropy}
        Let $\Phi, \Psi : \B{\Hil_{A}} \to \B{\Hil_B}$ be two quantum channels such that
        $\frac{1}{2}\| \Phi - \Psi\|_\diamond = \delta$. Then, for any state $\rho\in \cD(\cH_R \otimes \cH_{A}^{\otimes n})$,
         \begin{equation*}
       |S((\id_R \otimes \Phi^{\otimes n})(\rho))-
       S((\id_R \otimes \Psi^{\otimes n})(\rho))|\le \begin{cases} n\left(\delta \log( d_B^2 - 1) + h(\delta) \right), & \text{if }\,\delta \leq 1 - 1/d_B^2 \\
       n\log(d_B^2), & \text{if }\,\delta > 1-1/d_B^2
       \end{cases}\, .
    \end{equation*}
    \end{theorem}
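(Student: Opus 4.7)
The plan is to recycle the telescoping argument of \cite[Theorem 11]{Leung2009continuous}, substituting the sharpened conditional-entropy continuity bound recalled just above in place of the original Alicki--Fannes--Winter bound. For $k = 0, 1, \ldots, n$, introduce the interpolating states
\[
\rho_k := \bigl(\id_R \otimes \Phi^{\otimes k} \otimes \Psi^{\otimes (n-k)}\bigr)(\rho),
\]
so $\rho_0 = (\id_R \otimes \Psi^{\otimes n})(\rho)$ and $\rho_n = (\id_R \otimes \Phi^{\otimes n})(\rho)$. By the triangle inequality it suffices to bound each consecutive difference $|S(\rho_k) - S(\rho_{k-1})|$ and then sum over $k$.

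Next, I would observe that $\rho_k$ and $\rho_{k-1}$ differ only in whether the $k$-th channel is $\Phi$ or $\Psi$, so that $\tfrac12\|\rho_k - \rho_{k-1}\|_1 \leq \tfrac12\|\Phi - \Psi\|_\diamond \leq \eps$ follows from the defining property of the diamond norm (the remaining tensor factors acting as the reference system for a suitable input state). Crucially, because both $\Phi$ and $\Psi$ are trace-preserving, tracing out the $k$-th output system $B_k$ produces \emph{identical} marginal states for $\rho_k$ and $\rho_{k-1}$ on the complementary system $R \otimes B_1 \cdots B_{k-1} B_{k+1} \cdots B_n$.

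Applying the entropy chain rule $S(XY) = S(X) + S(Y|X)$ with $Y = B_k$ and $X$ the complement, the $S(X)$ terms cancel, and the difference reduces to
\[
S(\rho_k) - S(\rho_{k-1}) = S(B_k \mid R B_1 \cdots \widehat{B_k} \cdots B_n)_{\rho_k} - S(B_k \mid R B_1 \cdots \widehat{B_k} \cdots B_n)_{\rho_{k-1}},
\]
where $\widehat{B_k}$ indicates omission. The improved conditional-entropy continuity bound recalled above, applied with the ``$A$''-system taken to be $B_k$ of dimension $d_B$, then gives $|S(\rho_k) - S(\rho_{k-1})| \leq \eps \log(d_B^2 - 1) + h(\eps)$ in the regime $\eps \leq 1 - 1/d_B^2$, and the trivial bound $\log(d_B^2)$ otherwise. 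Summing the $n$ terms yields the claimed estimate.

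No genuine obstacle is expected: the argument is standard telescoping together with a single observation—that the one tensor factor on which $\rho_k$ and $\rho_{k-1}$ disagree is precisely the one conditioned upon via the chain rule—so the improvement over \cite{Leung2009continuous} is purely inherited from the sharper single-shot conditional-entropy continuity bound.
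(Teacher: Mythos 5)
Your proof is correct and follows exactly the route the paper intends: the paper states the result as a direct consequence of substituting the improved conditional-entropy continuity bound into the telescoping argument of Leung--Smith (their Theorem 11), and your write-up fills in precisely that argument — the interpolating states, the equality of marginals on the complement of $B_k$ (via trace-preservation), the chain-rule cancellation, and the single-shot continuity bound applied to the $B_k$ subsystem. No gaps.
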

    
    This in turn leads to the following improved continuity bounds for the capacities of quantum channels. These are improvements over the results stated as \cite[Corollary 1,2 and 3]{Leung2009continuous}.
    \begin{theorem}\label{theorem:cap-cont}
         Let $\Phi, \Psi : \B{\Hil_{A}} \to \B{\Hil_B}$ be two channels such that 
        $\frac{1}{2}\| \Phi - \Psi\|_\diamond = \delta \leq  1 - 1/d_B^2 $. Then, the following bounds hold on the channel capacities:
\begin{align*}
    |C(\Phi) - C(\Psi)| &\leq 2 \left(\delta \log( d_B^2 - 1) + h(\delta) \right), \\
    |P(\Phi) - P(\Psi)| &\leq 4 \left(\delta \log( d_B^2 - 1) + h(\delta) \right), \\
    |Q(\Phi) - Q(\Psi)| &\leq 2 \left(\delta \log( d_B^2 - 1) + h(\delta) \right), \\
    |C_{\operatorname{ea}}(\Phi) - C_{\operatorname{ea}}(\Psi)| &\leq 2 \left(\delta \log( d_B^2 - 1) + h(\delta) \right).
\end{align*}
    \end{theorem}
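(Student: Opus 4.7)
The strategy is to reduce each capacity continuity bound to the improved channel-entropy continuity theorem stated immediately above, applied to the regularized single-letter expressions from Theorem~\ref{theorem:LSD+CP}. In each case I will express the underlying single-letter quantity, evaluated on $\Phi^{\otimes n}$, as a signed sum of a channel-independent term and a fixed number of entropies of states of the form $(\id_R\otimes\Phi^{\otimes n})(\cdot)$, apply the entropy bound termwise, optimize over the input, divide by $n$, and pass to $\sup_n$. For the classical capacity, any cq input $\rho_{XA^n}$ gives
\[
\chi(\Phi^{\otimes n},\rho_{XA^n}) \;=\; H(X) \;+\; S\!\bigl(\Phi^{\otimes n}(\bar\rho_{A^n})\bigr) \;-\; S\!\bigl((\id_X\otimes\Phi^{\otimes n})(\rho_{XA^n})\bigr),
\]
where $\bar\rho_{A^n}=\Tr_X\rho_{XA^n}$; the $H(X)$ term is channel-independent, while each of the two entropies differs between $\Phi^{\otimes n}$ and $\Psi^{\otimes n}$ by at most $n(\eps\log(d_B^2-1)+h(\eps))$ (with references $R=\varnothing$ and $R=X$ respectively). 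Choosing $\rho_{XA^n}$ optimal for $\Phi^{\otimes n}$, using the triangle inequality, dividing by $n$, and passing to $\sup_n$ yields the factor-of-$2$ bound on $|C(\Phi)-C(\Psi)|$. The quantum capacity proceeds identically from $I_c(\Phi^{\otimes n},\psi_{RA^n})=S((\id_R\otimes\Phi^{\otimes n})(\psi)_{B^n})-S((\id_R\otimes\Phi^{\otimes n})(\psi)_{RB^n})$ for pure $\psi_{RA^n}$.

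For the private capacity, the central step is to rewrite $I_p$ so that the complementary channel disappears. Purify the cq state $\rho_{XA^n}$ to $\ket{\psi}_{XRA^n}$ and apply a Stinespring isometry $V:\cH_{A^n}\to\cH_{B^n}\otimes\cH_{E^n}$ of $\Phi^{\otimes n}$, producing a pure state $\ket{\chi}_{XRB^nE^n}$. Schmidt duality on this pure state gives $S(E^n)_\chi=S(XRB^n)_\chi$ and $S(XE^n)_\chi=S(RB^n)_\chi$; substituting into $I_p=I(X:B^n)-I(X:E^n)$ produces the identity
\[
I_p(\Phi^{\otimes n},\rho_{XA^n}) \;=\; S(B^n)_\chi \;+\; S(RB^n)_\chi \;-\; S(XB^n)_\chi \;-\; S(XRB^n)_\chi.
\]
Each of the four entropies is of a subsystem containing $B^n$, hence expressible as $S((\id\otimes\Phi^{\otimes n})(\cdot))$ on a suitable marginal of $\psi_{XRA^n}$. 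The channel-entropy continuity theorem therefore bounds each entropy difference by $n(\eps\log(d_B^2-1)+h(\eps))$, and the same optimize/regularize argument delivers the factor-of-$4$ bound on $|P(\Phi)-P(\Psi)|$.

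\textbf{Main obstacle.} The only non-mechanical step is the purification identity $I_p=S(B)+S(RB)-S(XB)-S(XRB)$. Without this Schmidt-duality reformulation, bounding the $I(X:E)$ contribution would require controlling $\|\Phi^c-\Psi^c\|_\diamond$ in terms of $\|\Phi-\Psi\|_\diamond$, for which only $O(\sqrt{\eps})$ estimates (via continuity of Stinespring dilations) are generally available; this would replace the linear-in-$\eps$ bound by a qualitatively weaker square-root bound. Once the identity is in hand, everything else is a direct, termwise application of the improved entropy continuity theorem together with the regularized capacity formulas.
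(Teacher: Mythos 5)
Your proposal is correct and takes essentially the same route as the paper, which obtains Theorem~\ref{theorem:cap-cont} by plugging the improved conditional-entropy continuity bound into the argument of Leung and Smith \cite{Leung2009continuous}; in particular, the factor-of-$2$ bounds for $C$ and $Q$ from the two $B$-containing entropy terms in $\chi$ and $I_c$, and the factor-of-$4$ bound for $P$ via the purification identity $I_p = S(B)+S(RB)-S(XB)-S(XRB)$ (which avoids any appeal to continuity of the complementary channel), are exactly the Leung--Smith decomposition that the paper invokes.
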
    
    
Finally, we note some continuity bounds on the relative entropies of entanglement.
 
\begin{lemma} \cite{bluhm2023continuity}
    Let $\rho_{AB}, \sigma_{AB}\in \State{\Hil_A\otimes \Hil_B}$ be such that $\frac{1}{2} \norm{\rho-\sigma}_1 \leq \delta$. Furthermore, let $d=\min ( d_A, d_B)$ and $\alpha>1$. Then,
    \begin{align}
        |\widetilde{E}_{\alpha}(A:B)_{\rho} - \widetilde{E}_{\alpha}(A:B)_{\sigma}| &\leq \frac{\alpha}{\alpha-1}\log (1+\delta d^{\frac{\alpha-1}{\alpha}}) \\
        |E_{\max} (A:B)_{\rho} - E_{\max}(A:B)_{\sigma}| &\leq \log (1+\delta d). 
    \end{align}
\end{lemma}

\begin{lemma}\label{lemma:Emax-cont}
    Let $\Phi,\Psi : \B{\Hil_A}\to \B{\Hil_B}$ be quantum channels such that $\frac{1}{2}\norm{\Phi-\Psi}_{\diamond} = \delta$. Furtheremore, let $d=\min(d_A,d_B)$ and $\alpha>1$. Then,
    \begin{align}
        |\widetilde{E}_{\alpha}(\Phi) - \widetilde{E}_{\alpha}(\Psi)| &\leq \frac{\alpha}{\alpha-1}\log (1+\delta d^{\frac{\alpha-1}{\alpha}}) \\
        |E_{\max}(\Phi) - E_{\max}(\Psi)| &\leq \log (1+\delta d). 
    \end{align}
\end{lemma}
\begin{proof}
    Recall that $E_{\max}(\Phi) = \sup_{\psi_{RA}} E_{\max}(R:B)_{ \Phi_{A\to B}(\psi_{RA})}$ and $\widetilde{E}_{\alpha}(\Phi) = \sup_{\psi_{RA}} \widetilde{E}_{\alpha}(R:B)_{ \Phi_{A\to B}(\psi_{RA})}$, where the supremum is over all pure states with $d_R=d_A$ (see Definition~\ref{def:channel-entanglement-measures}). Now, since $\frac{1}{2} \norm{\Phi- \Psi}_{\diamond}= \delta$, it is clear that for each state $\psi_{RA}$, we have $\frac{1}{2}\norm{\Phi_{A\to B}(\psi_{RA}) - \Psi_{A\to B}(\psi_{RA}) }_1\leq \delta$. A simple application of the previous lemma then proves the desired result.
\end{proof}

\section{Strong additivity of identity and replacer channels} \label{appen:additive}

\begin{lemma}
    Let $\id:\B{\Hil_A}\to \B{\Hil_A}$ be the identity channel and $\Phi:\B{\Hil_B}\to \B{\Hil_C}$ be an arbitrary channel. Then, 
    \begin{align*}
        \chi(\id \otimes \Phi) &= \log d_A + \chi(\Phi), \\
        I_p(\id \otimes \Phi) &= \log d_A + I_p(\Phi), \\
        I_c(\id \otimes \Phi) &= \log d_A + I_c(\Phi).
    \end{align*}
\end{lemma}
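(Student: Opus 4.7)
The plan is to prove matching upper and lower bounds for each of $\chi$, $I_p$, and $I_c$. For the lower bounds, super-additivity of these quantities under tensor products (immediate by product-encoding the inputs) combined with the identity single-letter values $\chi(\id) = I_p(\id) = I_c(\id) = \log d_A$ yields $\log d_A + \chi(\Phi) \leq \chi(\id \otimes \Phi)$ and similarly for $I_p, I_c$. The identity single-letter values are achieved by the uniform classical ensemble over orthonormal basis states (for $\chi$ and $I_p$; note $I(X{:}E) = 0$ because $\id$ has trivial Stinespring environment) or the maximally entangled state (for $I_c$), with matching upper bounds from the output dimension.

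For $I_c$, I would take any pure input $\psi_{RAB}$ and let $\omega_{RACE}$ be the pure global output after $\id_A \otimes \Phi$, where $E$ is the Stinespring environment of $\Phi$. Purity yields $I(R\rangle AC) = S(AC) - S(RAC) = S(AC) - S(E)$, and subadditivity $S(AC) \leq S(A) + S(C) \leq \log d_A + S(C)$ together with $S(C) - S(E) = I_c(\Phi, \psi_B) \leq I_c(\Phi)$ gives the desired bound.

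For $\chi$, since refining a cq input ensemble can only increase $\chi$, one may restrict without loss of generality to pure-conditional inputs. Purity of each $\omega^x_{ACE}$ then yields $S(\omega^x_{AC}) = S(\omega^x_E)$, and subadditivity on the average gives
\begin{equation*}
I(X{:}AC) \leq \log d_A + S(\Phi(\rho_B)) - \sum_x p_x S(\Phi^c(\rho^x_B)),
\end{equation*}
where $\rho^x_B$ denotes the $B$-marginal of $\psi^x$ and $\rho_B = \sum_x p_x \rho^x_B$. A second refining step---decomposing each $\rho^x_B$ into a pure-state ensemble, using the pure-input identity $S(\Phi^c(|\phi\rangle\langle\phi|)) = S(\Phi(|\phi\rangle\langle\phi|))$ and concavity of the von Neumann entropy---bounds the right-hand side by $\log d_A + \chi(\Phi)$.

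For $I_p$, the analogous pure-conditional argument gives that $I_p(\id \otimes \Phi)$ restricted to pure-conditional inputs is at most $\log d_A + I_c(\Phi) \leq \log d_A + I_p(\Phi)$. The main obstacle is that, unlike $\chi$, refining a cq ensemble can \emph{decrease} $I_p$, so the pure-conditional bound does not automatically extend to mixed-conditional inputs. I would address this either by invoking the strong additivity of $I_p$ against arbitrary channels for Hadamard channels---which $\id$ is, since its one-dimensional Stinespring environment makes $\id^c$ trivially entanglement-breaking---or by directly showing that the optimal cq ensemble for $I_p(\id \otimes \Phi)$ may be taken in a product form across $A$ and $B$, exploiting that $(\id \otimes \Phi)^c(\rho_{AB}) = \Phi^c(\rho_B)$ depends only on the $B$-marginal.
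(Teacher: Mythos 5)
Your lower bounds via super-additivity are standard and correct. Your direct entropic arguments for $\chi$ and $I_c$ are also correct and constitute a genuinely different route from the paper: the paper simply cites the known strong additivity of $\chi$, $I_p$, $I_c$ for Hadamard channels, whereas your $I_c$ argument (global purity, $S(AC)-S(RAC)=S(AC)-S(E)$, subadditivity of entropy, and the reduction $S(C)-S(E) = I_c(\Phi,\rho_B)$) and your $\chi$ argument (pure-conditional extremality, $S(\omega^x_{AC})=S(\omega^x_E)$, a further pure decomposition of $\rho^x_B$ together with $S(\Phi(\phi))=S(\Phi^c(\phi))$ and concavity) are elementary and self-contained. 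For readers who want a proof rather than a citation, this is an improvement.

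For $I_p$, however, you correctly identify that pure-conditional extremality fails and then propose two escape routes. Your option~1 --- invoke strong additivity of $I_p$ for Hadamard channels --- is precisely the paper's proof, so for this third equality you end up reproducing the paper's approach rather than replacing it. Your option~2 --- show that the optimal cq ensemble for $I_p(\id\otimes\Phi)$ can be taken in product form across $A$ and $B$ --- is stated but not carried out, and I do not think it is as easy as you suggest: with $\rho^x_{AB}$ mixed and possibly correlated across $A$ and $B$, the term $\sum_x p_x\left[S(\sigma^x_E)-S(\omega^x_{AC})\right]$ does not reduce to anything involving only $B$-marginals, and the natural refinement to pure conditionals changes both $I(X{:}AC)$ and $I(X{:}E)$ in the same direction, so neither monotonicity nor a chain-rule bound of the form $I(X{:}A\mid C)\le\log d_A$ closes the gap without further ideas (the naive bound gives $2\log d_A$). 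If you intend option~2 as the actual proof, you owe the reader the argument that the supremum can be restricted to product ensembles and then the calculation that finishes the bound from there; as it stands, the $I_p$ case is either incomplete (option~2) or equivalent to the paper's citation (option~1).
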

\begin{proof}
    It is easy to check $\chi(\id)=I_p(\id)=I_c(\id)=\log d_A$. The desired additivity follows from the fact that identity is a Hadamard channel (i.e., its complementary channel is entanglement-breaking), and $\chi, I_p, I_c$ are strongly additive for Hadamard channels \cite{King2006hadamard, Brdler2010Hadamard, Wilde2011Hadamard, Winter2016potential}. 
\end{proof}

\begin{lemma}
    Let $\mathcal{R}:\B{\Hil_{A_1}}\to \B{\Hil_{B_1}}$ be a replacer channel of the form $\mathcal{R}(X)=\Tr (X)\delta$ for some state $\delta\in \State{\Hil_{B_1}}$ and $\Phi:\B{\Hil_{A_2}}\to \B{\Hil_{B_2}}$ be an arbitrary channel. Then, 
    \begin{align*}
        \chi(\mathcal{R} \otimes \Phi) &=  \chi(\Phi), \\
        I_p(\mathcal{R} \otimes \Phi) &=  I_p(\Phi), \\
        I_c(\mathcal{R} \otimes \Phi) &=  I_c(\Phi).
    \end{align*}
\end{lemma}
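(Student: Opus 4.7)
The plan is to prove each equality by establishing the two directions separately. For the ``$\geq$'' direction, I would first observe that $\chi(\mathcal{R})=I_p(\mathcal{R})=I_c(\mathcal{R})=0$: since $\mathcal{R}(X)=(\Tr X)\delta$ produces the constant output $\delta$ regardless of input, the $B_1$-register of the channel output decouples from any reference/ancillary/classical register, so the divergence expressions underlying all three informations collapse to zero. Combined with the super-additivity of $\chi, I_p, I_c$ recalled earlier in the paper, this yields $\chi(\mathcal{R}\otimes\Phi)\geq \chi(\mathcal{R})+\chi(\Phi)=\chi(\Phi)$, and analogously for $I_p$ and $I_c$.

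For the ``$\leq$'' direction, I would exploit the factorization $\mathcal{R}=\mathcal{P}_\delta\circ \Tr_{A_1}$, where $\Tr_{A_1}:\B{\Hil_{A_1}}\to\mathbb{C}$ is the discard channel and $\mathcal{P}_\delta:\mathbb{C}\to\B{\Hil_{B_1}}$ is the preparation channel sending the trivial input to $\delta$. Tensoring with $\Phi$ gives $\mathcal{R}\otimes\Phi=(\mathcal{P}_\delta\otimes\Phi)\circ(\Tr_{A_1}\otimes\id_{A_2})$. Applying the bottleneck inequalities for $\chi, I_p, I_c$ proved earlier in the paper (via data-processing of the underlying divergence), each of the three channel informations for $\mathcal{R}\otimes\Phi$ is bounded above by the corresponding information of $\mathcal{P}_\delta\otimes\Phi$.

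The last step is the short computation $\chi(\mathcal{P}_\delta\otimes\Phi)=\chi(\Phi)$, and analogously for $I_p$ and $I_c$. Since $\mathcal{P}_\delta$ has trivial input, the supremum defining each of these quantities becomes one over states on the input of $\Phi$ alone, while the appended $B_1$-register always sits in the fixed state $\delta$, which factors out of every entropy in the definitions (the $S(\delta)$ contributions cancel between the $S(B_1B_2)$ and $S(RB_1B_2)$ terms in the coherent information, and similarly for the Holevo and private quantities). Combining the two directions yields the desired equalities. The main potential obstacle is the application of the bottleneck inequality for $I_c$, since the intermediate state $\psi_{RA_2}=(\Tr_{A_1}\otimes\id_{A_2})(\psi_{RA_1A_2})$ is in general mixed and single-letter coherent information need not be well-behaved under mixed inputs; but this case is exactly what the prior bottleneck lemma in the paper handles, so it may be invoked as a black box.
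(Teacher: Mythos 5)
Your proof is correct and, in spirit, equivalent to the paper's, but it is organized differently. The paper proves the upper bound by a direct entropy calculation: it takes an arbitrary (cq, resp.~pure) input state $\rho_{XA_1A_2}$ (resp.~$\psi_{RA_1A_2}$), observes that the $B_1$ output decouples as $\omega_{XB_1B_2}=\delta_{B_1}\otimes\omega_{XB_2}$ because $\mathcal{R}$ replaces $A_1$ with the fixed state $\delta$, and then cancels the $S(\delta)$ terms by hand in the mutual/coherent information expressions. You instead factor $\mathcal{R}=\mathcal{P}_\delta\circ\Tr_{A_1}$, invoke the paper's bottleneck lemma to drop to $\mathcal{P}_\delta\otimes\Phi$, and only then do the (now cleaner) cancellation; the lower bound via superadditivity is identical in both. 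The two routes buy slightly different things: the paper's is more self-contained (no extra lemma needed), while yours isolates the conceptual point that prepending a discard and appending a preparation cannot help, which makes the argument modular. One thing worth noting: for the $I_c$ case the paper's final step ``$I(R\rangle B_2)_\omega\leq I_c(\Phi)$'' is applied to the state $\omega_{RB_2}=\Phi(\psi_{RA_2})$ with $\psi_{RA_2}=\Tr_{A_1}\psi_{RA_1A_2}$ generally mixed, so the paper is implicitly relying on exactly the same bottleneck fact for coherent information (proved earlier via data-processing) that you invoke explicitly. You flag this as the ``main potential obstacle,'' and your assessment is right: both arguments stand or fall together with that lemma. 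So your proof is not missing anything relative to the paper's; it is simply a reorganization that makes the reliance on the bottleneck visible rather than buried in the last inequality.
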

\begin{proof}
    It is clear that $\chi(\mathcal{R})=I_p(\mathcal{R})=I_c(\mathcal{R})=0$ from Definition~\ref{def:channel-measures}. Moreover, since all information measures are superadditive, it suffices to show that $\chi(\mathcal{R}\otimes \Phi)\leq \chi(\Phi)$, and similarly for $I_p, I_c$. With this end in sight, let $\rho_{XA_1A_2}$ be an arbitrary cq state, $\omega_{XB_1B_2} = (\mathcal{R}_{A_1\to B_1} \otimes \Phi_{A_2\to B_2})(\rho_{XA_1A_2})$, and $\sigma_{XE_1E_2} = (\mathcal{R}^c_{A_1\to E_1} \otimes \Phi^c_{A_2\to E_2})(\rho_{XA_1A_2})$. Then,
    \begin{align}
        I(X:B_1B_2)_{\omega} &= S(X) + S(B_1B_2) - S(XB_1B_2) \nonumber \\
        &=  S(X) + S(B_2) + S(B_1) - S(XB_2) - S(B_1) \nonumber \\
        &= S(X) + S(B_2) - S(XB_2) \nonumber \\
        &= I(X:B_2) \leq \chi (\Phi).
    \end{align}
    Moreover, 
    \begin{align}
        I(X:B_1B_2)_{\omega} - I(X:E_1E_2)_{\sigma} &= I(X:B_2) - I(X:E_2) - I(X:E_1 |E_2) \nonumber \\
        &\leq I(X:B_2) - I(X:E_2) \nonumber \\
        &\leq I_p(\Phi)
    \end{align}
    Finally, if $\psi_{RA_1A_2}$ is an arbitrary pure state  and $\omega_{RB_1B_2} = (\mathcal{R}_{A_1\to B_1} \otimes \Phi_{A_2\to B_2})(\psi_{RA_1A_2})$,
    \begin{align}
        I(R\rangle B_1 B_2)_{\omega} &= S(B_1B_2) - S(RB_1B_2) \nonumber \\
        &= S(B_1) + S(B_2) - S(B_1) - S(RB_2) \nonumber\\
        &= S(B_2) - S(RB_2) \nonumber  \\
        &= I(R\rangle B_2) \leq I_c(\Phi).
    \end{align}
    Hence, the desired claims follow.
\end{proof}

\section{Techninal lemmas}

\begin{lemma}\label{lemma:Q0<=P0}
    For a quantum channel $\Phi:\B{\Hil_A}\to \B{\Hil_B}$, the one-shot zero-error quantum and private classical capacities satisfy the relation $Q_0(\Phi)\leq C^{\operatorname{p}}_0(\Phi)$.
\end{lemma}
\begin{proof}
    Consider a $(\mathscr{M}, 0)$ quantum code $(\mathcal{E}_{A'\to A}, \mathcal{D}_{B\to A'})$ for $\Phi$ with $\mathscr{M}=d_{A'}=d_R$ (Definition~\ref{def:quantum-protocol}), which we can use to transmit one-half of a maximally entangled state 
    \begin{equation}
        \psi^+_{RA'} = \frac{1}{\mathscr{M}} \sum_{m,m'} \ketbra{m}{m'}_R \otimes \ketbra{m}{m'}_{A'}
    \end{equation}
    of Schmidt rank $\mathscr{M}$ through $\Phi$ perfectly, i.e.
    \begin{equation}
        \psi^+_{RA'} = \mathcal{D}_{B\to A'}\circ \Phi_{A\to B}\circ \mathcal{E}_{A'\to A}(\psi^+_{RA'}).
    \end{equation}
    Let $\mathcal{V}_{A\to BE}$ be an isometric extension of $\Phi_{A\to B}$ and consider the state 
    \begin{equation}
        \omega_{RA'E} = \mathcal{D}_{B\to A'}\circ \mathcal{V}_{A\to BE}\circ \mathcal{E}_{A'\to A}(\psi^+_{RA'}),
    \end{equation}
    which extends the state at the output of the protocol, i.e.,   $\omega_{RA'} = \mathcal{D}_{B\to A'}\circ \Phi_{A\to B}\circ \mathcal{E}_{A'\to A}(\psi^+_{RA'})$. Since the only possible extension of $\psi^+_{RA'}$ is of the form $\psi^+_{RA'}\otimes \sigma_E$ for some state $\sigma_E$, we get
    \begin{equation}
        \omega_{RA'E} = \psi^+_{RA'}\otimes \sigma_E =  \mathcal{D}_{B\to A'}\circ \mathcal{V}_{A\to BE}\circ \mathcal{E}_{A'\to A}(\psi^+_{RA'}).
    \end{equation}
    Applying a measurement with POVMs $\{ \ketbra{m}_R \}_{m\in [\mathscr{M}]}$ and $\{ \ketbra{m}_{A'} \}_{m\in [\mathscr{M}]}$ on the $R$ and $A'$ systems yields    \begin{equation}\label{eq:star}
        \overbar{\psi}_{RA'}^+ \otimes \sigma_E =  \overbar{\mathcal{D}}_{B\to A'}\circ \mathcal{V}_{A\to BE}\circ \mathcal{E}_{A'\to A}(\overbar{\psi}^+_{RA'}),
    \end{equation}
    where $\overbar{\psi}_{RA'}^+ = \frac{1}{\mathscr{M}} \sum_{m} \ketbra{m}{m}_R \otimes \ketbra{m}{m}_{A'}$ is a maximally classically correlated state and $\overbar{\mathcal{D}}_{B\to A'}$ is a measurement channel defined as 
    \begin{align}
        \overbar{\mathcal{D}}_{B\to A'}(X_B) &= \sum_m \Tr (\ketbra{m}_{A'} \mathcal{D}_{B\to A'}(X_B) ) \ketbra{m}_{A'} \nonumber \\ 
        &= \sum_m \Tr ( \mathcal{D}_{A'\to B}^*( \ketbra{m}_{A'}) X_B ) \ketbra{m}_{A'}. 
    \end{align} 
    Note that $\{ \mathcal{D}_{A'\to B}^*( \ketbra{m}_{A'} \}_{m\in [\mathscr{M}]}^{\mathscr{M}}$ forms a POVM, since $\mathcal{D}_{A'\to B}^*$ is unital. Expanding the LHS and RHS of the Eq.~\eqref{eq:star} by using the formula for $\overbar{\psi}^+$ and matching terms shows that for each $m\in [\mathscr{M}]$,
    \begin{equation}
        \ketbra{m}_{A'}\otimes \sigma_E = \overbar{\mathcal{D}}_{B\to A'}\circ \mathcal{V}_{A\to BE} (\rho^m_{A}),
    \end{equation}
    where the states $\rho^m$ are defined as $\rho^m_A = \mathcal{E}_{A'\to A}(\ketbra{m}_{A'})$. Thus, the encoding states $\{\rho^m_A\}_{m\in [\mathscr{M}]}$ and the decoding POVM $\{ \mathcal{D}_{A'\to B}^*( \ketbra{m}_{A'} \}_{m\in [\mathscr{M}]}$ forms a $(\mathscr{M},0)$ private classical code for $\Phi$ (Definition~\ref{def:private-protocol}). Since the quantum code that we started with was arbitrary, we obtain the desired result.
\end{proof}

\begin{lemma}\label{lemma:Emax}
    For any state $\rho_{AB}\in \State{\Hil_A \otimes \Hil_B}$,
\begin{equation}
    \inf_{\sigma\in \operatorname{SEP}(A:B)} D_{\max}(\rho_{AB}\Vert \sigma_{AB}) \leq \log \min \{d_A,d_B \}.
\end{equation}
\end{lemma}
\begin{proof}
    Consider the spectral decomposition $\rho_{AB}=\sum_i p_i \psi^i_{AB}$, where $\psi^i_{AB}$ are pure states and $\sum_i p_i =1$, so that we can write
    \begin{align}
        \inf_{\sigma\in \operatorname{SEP}(A:B)}D_{\max}(\rho_{AB}\Vert \sigma_{AB}) &\leq \inf_{\{\sigma^i \}_i \subset \operatorname{SEP}(A:B) } D_{\max}\left( \sum_i p_i \psi^i_{AB} \bigg\| \sum_i p_i\sigma^i_{AB} \right) \nonumber\\ 
        &\leq \inf_{\{\sigma^i \}_i \subset \operatorname{SEP}(A:B) } \max_i D_{\max} (\psi^i_{AB}\Vert \sigma^i_{AB}) \nonumber \\
        &= \max_i \inf_{\sigma^i \in \operatorname{SEP}(A:B) } D_{\max} (\psi^i_{AB}\Vert \sigma^i_{AB}).
    \end{align}
    Note that we made use of quasi-convexity of $D_{\max}$ (Remark~\ref{remark:Dmax-quasi}) to obtain the second inequality above and of Lemma~\ref{lemma:infmax} to obtain the last equality. Thus, it suffices to prove the claim for pure states. For a pure state $\psi_{AB}$, it is known that \cite{Datta2009maxrel}
    \begin{equation}
        \inf_{\sigma \in \operatorname{SEP}(A:B) } D_{\max} (\psi_{AB}\Vert \sigma_{AB}) = 2\log (\sum_{i=1}^d \sqrt{s_i}),
    \end{equation}
    where $s_i\geq 0$ are the Schmidt coefficients of $\psi_{AB}$ satisfying $\sum_{i=1}^d s_i=1$ and $d=\min \{d_A,d_B \} $. A simple application of Cauchy-Schwarz inequality then shows
    \begin{equation}
        \sum_{i} \sqrt{s_i} \leq \sqrt{d \sum_i s_i} = \sqrt{d}. 
    \end{equation}
    Hence,
    \begin{equation}
        \inf_{\sigma \in \operatorname{SEP}(A:B) } D_{\max} (\psi_{AB}\Vert \sigma_{AB}) = 2\log (\sum_{i=1}^d \sqrt{s_i}) \leq 2 \log \sqrt{d} = \log d = \log \min \{d_A,d_B \}.
    \end{equation}
\end{proof}

\begin{lemma}\label{lemma:infmax}
    Let $f_k : \mathcal{S}\to \mathbb{R}$ for $k=1,2,\ldots ,K$ be arbitrary mappings, where $\mathcal{S}$ is an arbitrary set. Then, 
    \begin{equation}
        \inf_{\{x_k \}_k \subset \mathcal{S} } \left( \max_k f_k (x_k) \right) = \max_k \left(\inf_{x\in \mathcal{S}} f_k (x) \right).
    \end{equation}
\end{lemma}
\begin{proof}
    Clearly, for any subset $\{x_k \}_k \subset \mathcal{S}$, we have
    \begin{equation}
       \max_k \left(\inf_{x\in \mathcal{S}} f_k (x) \right) \leq \max_k f_k (x_k), 
    \end{equation}
    so that 
    \begin{equation}
       \max_k \left(\inf_{x\in \mathcal{S}} f_k (x) \right) \leq \inf_{\{x_k \}_k \subset \mathcal{S} } \left( \max_k f_k (x_k) \right). 
    \end{equation}
    Next we justify that the above inequality cannot be strict. Note that for any $\delta>0$, the number $\max_k \left(\inf_{x\in \mathcal{S}} f_k (x) \right) + \delta$, by definition, cannot be a lower bound on the sets $\{f_k (x) \}_{x\in \mathcal{S}}$ for all $k$. In other words, for every $\delta>0$, there exists a subset $\{x_k \}_k \subset \mathcal{S}$ such that for each $k$, $f_k(x_k) < \max_k \left(\inf_{x\in \mathcal{S}} f_k (x) \right) + \delta$, which means that  
    \begin{equation}
      \max_k f_k (x_k) <  \max_k \left(\inf_{x\in \mathcal{S}} f_k (x) \right) + \delta. 
    \end{equation}
    Hence, $\max_k \left(\inf_{x\in \mathcal{S}} f_k (x) \right)$ must be the greatest lower bound on the set $\{ \max_k f_k (x_k) \}_{\{x_k \}_k \subset \mathcal{S} }$, which is what we want to prove:
    \begin{equation}
        \max_k \left(\inf_{x\in \mathcal{S}} f_k (x) \right) = \inf_{\{x_k \}_k \subset \mathcal{S} } \left( \max_k f_k (x_k) \right). 
    \end{equation}
\end{proof}

\begin{lemma}\label{lemma:EHstate}
    Let $\rho_{AB}, \omega_{AB}$ be states such that $\norm{\rho-\omega}_1 \leq \delta$. Then, for $\epsilon\in [0,1)$ such that $\epsilon+\delta<1$, $E_{H}^{\epsilon}(A:B)_{\rho} \leq E_{H}^{\epsilon+\delta}(A:B)_{\omega}$.
\end{lemma}
\begin{proof}
    Fix a separable state $\sigma_{AB}$. Then, for every $\Lambda\in \B{\Hil_A\otimes \Hil_B}$ satisfying $0\leq \Lambda \leq \iden$ and $\Tr \Lambda \rho \geq 1-\epsilon$, we have $\Tr \Lambda \omega = \Tr \Lambda \rho - \Tr \Lambda (\rho - \omega) \geq 1 - (\epsilon + \delta) $. Hence, $D^{\epsilon}_H (\rho \Vert \sigma) \leq D^{\epsilon+\delta}_H (\omega \Vert \sigma)$ (see Definition~\ref{def:divergence}). The claim follows by taking an infimum over all separable states $\sigma_{AB}$.
\end{proof}

\begin{lemma}\label{lemma:Eepsilon-delta}
    Let $\Phi, \Psi : \B{\Hil_A}\to \B{\Hil_B}$ be quantum channels such that $\norm{\Phi-\Psi}_{\diamond}= \delta$. Then, for $\epsilon\in [0,1)$ such that $\epsilon+\delta<1$,
    \begin{align}
        E_H^{\epsilon}(\Phi) \leq E_H^{\epsilon+\delta}(\Psi)
    \end{align}
\end{lemma}
\begin{proof}
    For any pure state $\psi_{RA}$, $\norm{\Phi_{A\to B}(\psi_{RA}) - \Psi_{A\to B}(\psi_{RA}) }_1 \leq \delta$ because $\norm{\Phi-\Psi}_{\diamond}= \delta$. Thus, the claim easily follows from Lemmma~\ref{lemma:EHstate}.
\end{proof}

\begin{lemma}\label{lemma:lambert}
    Let $\delta,\mu\in (0,1)$, $\alpha>1$ and $d\in \mathbb{N}$ be a positive integer. Then, 
    \begin{equation*}
        d^{\alpha} \left(\frac{t(1-\mu^2)}{\mu}\right)^d \mu^t \leq \delta 
    \end{equation*}
    if 
    \begin{equation*}
        t \geq \frac{d}{\ln (1/\mu)} \left( \frac{ \ln (d^{d+\alpha}/\delta) }{d} - \ln \left(\frac{\mu \ln (1/\mu)}{1-\mu^2} \right) + \sqrt{2}\sqrt{\frac{ \ln (d^{d+\alpha}/\delta)  }{d} - \ln \left(\frac{\mu \ln (1/\mu)}{1-\mu^2} \right) -1} \right).
    \end{equation*}
    \begin{proof}
        The inequality we want to solve is equivalent to 
        \begin{align}
            d^{\alpha} \left(\frac{t(1-\mu^2)}{\mu}\right)^d \mu^t \leq \delta  \iff
            \frac{t\ln (\mu)}{d} \exp({\frac{t\ln (\mu)}{d}}) &\geq  \frac{\mu \ln (\mu) \delta^{1/d}}{(1-\mu^2) d^{1+\alpha/d}} \nonumber \\ 
            \iff \frac{t\ln (\mu)}{d} &\leq W_{-1}\left( \frac{\mu \ln (\mu) \delta^{1/d}}{(1-\mu^2) d^{1+\alpha/d}} \right) \nonumber \\
            \iff t &\geq \frac{d}{\ln (\mu)} W_{-1}\left( \frac{\mu \ln (\mu) \delta^{1/d}}{(1-\mu^2) d^{1+\alpha/d}} \right), \label{eq:lambert}
        \end{align}
        where $W_{-1}$ is the Lambert $W$ function \cite{Corless1996lambert}, which satisfies $W_{-1}(xe^x)=x$ for all $x\leq -1$ and is non-increasing on its domain. Moreover, for all $u>0$, it is known that $W_{-1}(-e^{-u-1})> -1-\sqrt{2u}-u$ \cite{Chatzigeorgiou2013lambert}. Thus, we make the substitution
        \begin{equation}
            -e^{-u-1} = \frac{\mu \ln (\mu) \delta^{1/d}}{(1-\mu^2) d^{1+\alpha/d}} \iff u = \frac{\ln(d^{d+\alpha}/\epsilon)}{d} - \ln(\frac{\mu \ln(\mu)}{1-\mu^2}) -1,
        \end{equation}
        so that Eq.~\eqref{eq:lambert} gives us the required bound on $t$:
        \begin{equation}
            t \geq \frac{d}{\ln (1/\mu)} \left( \frac{ \ln (d^{d+\alpha}/\delta) }{d} - \ln \left(\frac{\mu \ln (1/\mu)}{1-\mu^2} \right) + \sqrt{2}\sqrt{\frac{ \ln (d^{d+\alpha}/\delta)  }{d} - \ln \left(\frac{\mu \ln (1/\mu)}{1-\mu^2} \right) -1} \right),
        \end{equation}
        which suffices to ensure that 
        \begin{equation}
            d^{\alpha} \left(\frac{t(1-\mu^2)}{\mu}\right)^d \mu^t \leq \delta.
        \end{equation}
    \end{proof}
\end{lemma}

\begin{lemma}\label{lemma:Imax}
    Let $\Phi:\B{\Hil_A}\to \B{\Hil_B}$ be a quantum channel. Then, 
    \begin{equation}
        I_{\max}(\Phi)= I_{\max}(R:B)_{\Phi_{A\to B}(\psi^+_{RA})} = \inf_{\sigma_B}D_{\max}(\Phi_{A\to B}(\psi^+_{RA})\Vert \psi^+_R \otimes \sigma_B),
    \end{equation}
    where $\psi^+_{RA}=\Omega_{RA}/d_A$ with $\ket{\Omega}_{RA}=\sum_{i} \ket{i}_R \otimes \ket{i}_A$ is a maximally entangled state with $d_R=d_A$. 
\end{lemma}
\begin{proof}
    Recall from Definition~\ref{def:channel-measures} and Remark~\ref{remark:pure} that
    \begin{equation}
        I_{\max}(\Phi) = \sup_{\psi_{RA}} \inf_{\sigma_B} D_{\max}(\Phi_{A\to B}(\psi_{RA})\Vert \psi_R \otimes \sigma_B) \geq \inf_{\sigma_B}D_{\max}(\Phi_{A\to B}(\psi^+_{RA})\Vert \psi^+_R \otimes \sigma_B),
    \end{equation}
    where the supremum is over all pure states $\psi_{RA}$ with $d_R=d_A$ and the inequality follows by choosing $\psi=\psi^+$ in the supremum.
    
    To show the opposite inequality, suppose that $\lambda\geq 0$ and $\sigma_B$ is such that $\Phi_{A\to B}(\psi^+_{RA}) \leq \lambda \psi^+_R \otimes \sigma_B$, which is is equivalent to $\Phi_{A\to B}(\Omega_{RA}) \leq \lambda \iden_R \otimes \sigma_B$. Since any pure state $\psi_{RA}$ can be written as $\psi_{RA}=X_R \Omega_{RA} X_R^{\dagger}$ for some operator $X_R$ satisfying $\Tr (X_R X_R^{\dagger})=1$, the following inequality also holds:
    \begin{equation}
        \Phi_{A\to B}(\psi_{RA}) = X_R\Phi_{A\to B}(\Omega_{RA})X_R^{\dagger} \leq \lambda X_RX_R^{\dagger} \otimes \sigma_B = \lambda \psi_R \otimes \sigma_B.
    \end{equation}
    Hence, for any pure state $\psi_{RA}$, we obtain (see Remark~\ref{remark:Dmax-quasi})
    \begin{equation}
        \inf_{\sigma_B} D_{\max}(\Phi_{A\to B}(\psi_{RA})\Vert \psi_R \otimes \sigma_B) \leq \inf_{\sigma_B}D_{\max}(\Phi_{A\to B}(\psi^+_{RA})\Vert \psi^+_R \otimes \sigma_B).
    \end{equation}
    Taking a supremum over all $\psi_{RA}$ then yields the desired inequality.
\end{proof}

\bibliography{references}

\newcommand{\etalchar}[1]{$^{#1}$}
\begin{thebibliography}{BKNPV10}

\bibitem[ABD{\etalchar{+}}24]{Audenaert2024continuous}
Koenraad Audenaert, Bjarne Bergh, Nilanjana Datta, Michael~G. Jabbour, Ángela Capel, and Paul Gondolf.
\newblock Continuity bounds for quantum entropies arising from a fundamental entropic inequality.
\newblock {\em ar{X}iv:2408.15306}, 2024.

\bibitem[AF04]{Alicki2004continuous}
R~Alicki and M~Fannes.
\newblock Continuity of quantum conditional information.
\newblock {\em Journal of Physics A: Mathematical and General}, 37(5):L55–L57, January 2004.

\bibitem[AJW19]{Anshu2019oneshot}
Anurag Anshu, Rahul Jain, and Naqueeb~Ahmad Warsi.
\newblock On the near-optimality of one-shot classical communication over quantum channels.
\newblock {\em Journal of Mathematical Physics}, 60(1), January 2019.

\bibitem[AL06]{Alon2006zeroclassical}
N.~Alon and E.~Lubetzky.
\newblock The shannon capacity of a graph and the independence numbers of its powers.
\newblock {\em IEEE Transactions on Information Theory}, 52(5):2172–2176, May 2006.

\bibitem[AL07]{Alicki2007open}
Robert Alicki and Karl Lendi.
\newblock {\em Quantum Dynamical Semigroups and Applications}.
\newblock Springer Berlin Heidelberg, 2007.

\bibitem[Ali02]{Alicki2002open}
R.~Alicki.
\newblock {\em Invitation to Quantum Dynamical Semigroups}, page 239–264.
\newblock Springer Berlin Heidelberg, 2002.

\bibitem[Arv76]{Arveson1976algebra}
William Arveson.
\newblock {\em An Invitation to {C}$^*$-{A}lgebras}.
\newblock Springer New York, 1976.

\bibitem[BBC{\etalchar{+}}93]{Bennett1993teleport}
Charles~H. Bennett, Gilles Brassard, Claude Cr\'epeau, Richard Jozsa, Asher Peres, and William~K. Wootters.
\newblock Teleporting an unknown quantum state via dual classical and einstein-podolsky-rosen channels.
\newblock {\em Phys. Rev. Lett.}, 70:1895--1899, Mar 1993.

\bibitem[BCGM24]{bluhm2023continuity}
Andreas Bluhm, Ángela Capel, Paul Gondolf, and Tim M\"{o}bus.
\newblock Unified framework for continuity of sandwiched {R\'enyi} divergences.
\newblock {\em Annales Henri Poincaré}, December 2024.

\bibitem[BCLK{\etalchar{+}}22]{Bharti2022nisq}
Kishor Bharti, Alba Cervera-Lierta, Thi~Ha Kyaw, Tobias Haug, Sumner Alperin-Lea, Abhinav Anand, Matthias Degroote, Hermanni Heimonen, Jakob~S. Kottmann, Tim Menke, Wai-Keong Mok, Sukin Sim, Leong-Chuan Kwek, and Alán Aspuru-Guzik.
\newblock Noisy intermediate-scale quantum algorithms.
\newblock {\em Reviews of Modern Physics}, 94(1), February 2022.

\bibitem[BCMH24]{Belzig2024fault}
Paula Belzig, Matthias Christandl, and Alexander M\"{u}ller-Hermes.
\newblock Fault-tolerant coding for entanglement-assisted communication.
\newblock {\em IEEE Transactions on Information Theory}, 70(4):2655–2673, April 2024.

\bibitem[BD10]{Buscemi2010hypothesis}
Francesco Buscemi and Nilanjana Datta.
\newblock The quantum capacity of channels with arbitrarily correlated noise.
\newblock {\em IEEE Transactions on Information Theory}, 56(3):1447–1460, March 2010.

\bibitem[BD20]{Boche2020zeroclassical}
Holger Boche and Christian Deppe.
\newblock Computability of the zero-error capacity with kolmogorov oracle.
\newblock In {\em 2020 IEEE International Symposium on Information Theory (ISIT)}, page 2020–2025. IEEE, June 2020.

\bibitem[BD24]{boche2024computability}
Holger Boche and Christian Deppe.
\newblock Computability of the zero-error capacity of noisy channels, 2024.

\bibitem[Bei13]{Beigi2013sandwich}
Salman Beigi.
\newblock Sandwiched {R\'enyi} divergence satisfies data processing inequality.
\newblock {\em Journal of Mathematical Physics}, 54(12), December 2013.

\bibitem[BGL{\etalchar{+}}04]{Boileau2004collective}
J.-C. Boileau, D.~Gottesman, R.~Laflamme, D.~Poulin, and R.~W. Spekkens.
\newblock Robust polarization-based quantum key distribution over a collective-noise channel.
\newblock {\em Physical Review Letters}, 92(1), January 2004.

\bibitem[BHTW10]{Brdler2010Hadamard}
Kamil Brádler, Patrick Hayden, Dave Touchette, and Mark~M. Wilde.
\newblock Trade-off capacities of the quantum {H}adamard channels.
\newblock {\em Physical Review A}, 81(6), June 2010.

\bibitem[BJL{\etalchar{+}}21]{Bardet2021group}
Ivan Bardet, Marius Junge, Nicholas Laracuente, Cambyse Rouze, and Daniel~Stilck Franca.
\newblock Group transference techniques for the estimation of the decoherence times and capacities of quantum markov semigroups.
\newblock {\em IEEE Transactions on Information Theory}, 67(5):2878–2909, May 2021.

\bibitem[BKNPV10]{BlumeKohout2010algebra}
Robin Blume-Kohout, Hui~Khoon Ng, David Poulin, and Lorenza Viola.
\newblock Information-preserving structures: A general framework for quantum zero-error information.
\newblock {\em Physical Review A}, 82(6), December 2010.

\bibitem[BLP{\etalchar{+}}16]{Brown2016memory}
Benjamin~J. Brown, Daniel Loss, Jiannis~K. Pachos, Chris~N. Self, and James~R. Wootton.
\newblock Quantum memories at finite temperature.
\newblock {\em Rev. Mod. Phys.}, 88:045005, Nov 2016.

\bibitem[BLPV16]{Breuer2016dividing}
Heinz-Peter Breuer, Elsi-Mari Laine, Jyrki Piilo, and Bassano Vacchini.
\newblock Colloquium: {Non-Markovian} dynamics in open quantum systems.
\newblock {\em Rev. Mod. Phys.}, 88:021002, Apr 2016.

\bibitem[BLT25]{Berta2025continuous}
Mario Berta, Ludovico Lami, and Marco Tomamichel.
\newblock Continuity of entropies via integral representations.
\newblock {\em IEEE Transactions on Information Theory}, 71(3):1896–1908, March 2025.

\bibitem[BP07]{Breuer2007open}
Heinz-Peter Breuer and Francesco Petruccione.
\newblock {\em The Theory of Open Quantum Systems}.
\newblock Oxford University PressOxford, January 2007.

\bibitem[BS98]{Busch1998fixed}
Paul Busch and Javed Singh.
\newblock L\"{u}ders theorem for unsharp quantum measurements.
\newblock {\em Physics Letters A}, 249(1–2):10–12, November 1998.

\bibitem[BS08]{Shor2008complexity}
Salman Beigi and Peter~W. Shor.
\newblock On the complexity of computing zero-error and holevo capacity of quantum channels, 2008.

\bibitem[BSST99]{Bennett1999entanglement}
Charles~H. Bennett, Peter~W. Shor, John~A. Smolin, and Ashish~V. Thapliyal.
\newblock Entanglement-assisted classical capacity of noisy quantum channels.
\newblock {\em Phys. Rev. Lett.}, 83:3081--3084, Oct 1999.

\bibitem[BSST02]{Bennett2002entanglement}
C.H. Bennett, P.W. Shor, J.A. Smolin, and A.V. Thapliyal.
\newblock Entanglement-assisted capacity of a quantum channel and the reverse shannon theorem.
\newblock {\em IEEE Transactions on Information Theory}, 48(10):2637–2655, October 2002.

\bibitem[BW92]{Bennett1992dense}
Charles~H. Bennett and Stephen~J. Wiesner.
\newblock Communication via one- and two-particle operators on {Einstein-Podolsky-Rosen} states.
\newblock {\em Phys. Rev. Lett.}, 69:2881--2884, Nov 1992.

\bibitem[BW18]{Berta2018amort}
Mario Berta and Mark~M Wilde.
\newblock Amortization does not enhance the max-{R}ains information of a quantum channel.
\newblock {\em New Journal of Physics}, 20(5):053044, May 2018.

\bibitem[CCH11]{Cubitt2011zerosuper}
Toby~S. Cubitt, Jianxin Chen, and Aram~W. Harrow.
\newblock Superactivation of the asymptotic zero-error classical capacity of a quantum channel.
\newblock {\em IEEE Transactions on Information Theory}, 57(12):8114–8126, December 2011.

\bibitem[CCHS10]{Chen2010zerosuper}
Jianxin Chen, Toby~S. Cubitt, Aram~W. Harrow, and Graeme Smith.
\newblock Super-duper-activation of the zero-error quantum capacity.
\newblock In {\em 2010 IEEE International Symposium on Information Theory}, pages 2695--2697, 2010.

\bibitem[CG24]{cheng2024strong}
Hao-Chung Cheng and Li~Gao.
\newblock On strong converse theorems for quantum hypothesis testing and channel coding.
\newblock {\em arXiv:2403.13584}, 2024.

\bibitem[CGH{\etalchar{+}}96]{Corless1996lambert}
R.~M. Corless, G.~H. Gonnet, D.~E.~G. Hare, D.~J. Jeffrey, and D.~E. Knuth.
\newblock On the {Lambert W} function.
\newblock {\em Advances in Computational Mathematics}, 5(1):329–359, December 1996.

\bibitem[Cha13]{Chatzigeorgiou2013lambert}
Ioannis Chatzigeorgiou.
\newblock Bounds on the lambert function and their application to the outage analysis of user cooperation.
\newblock {\em IEEE Communications Letters}, 17(8):1505–1508, August 2013.

\bibitem[Cho74]{choi1974}
Man~Duen Choi.
\newblock A schwarz inequality for positive linear maps on {C}$^*$-algebras.
\newblock {\em Illinois J. Math.}, 1974.

\bibitem[Chr22]{Chruciski2022dividing}
Dariusz Chruściński.
\newblock Dynamical maps beyond markovian regime.
\newblock {\em Physics Reports}, 992:1–85, December 2022.

\bibitem[CJ19]{Carbone2019stabilized}
Raffaella Carbone and Anna Jenčová.
\newblock On period, cycles and fixed points of a quantum channel.
\newblock {\em Annales Henri Poincaré}, 21(1):155–188, November 2019.

\bibitem[CLB10]{Costa2010oneshot-NPhard}
Rui~A. Costa, Michael Langberg, and Joao Barros.
\newblock One-shot capacity of discrete channels.
\newblock In {\em 2010 IEEE International Symposium on Information Theory}, page 211–215. IEEE, June 2010.

\bibitem[CLM{\etalchar{+}}14]{Chitambar2014LOCC}
Eric Chitambar, Debbie Leung, Laura Mančinska, Maris Ozols, and Andreas Winter.
\newblock Everything you always wanted to know about {LOCC} (but were afraid to ask).
\newblock {\em Communications in Mathematical Physics}, 328(1):303–326, March 2014.

\bibitem[CMH17]{Christandl2017max}
Matthias Christandl and Alexander M\"{u}ller-Hermes.
\newblock Relative entropy bounds on quantum, private and repeater capacities.
\newblock {\em Communications in Mathematical Physics}, 353(2):821–852, May 2017.

\bibitem[CMH24]{Christandl2024fault}
Matthias Christandl and Alexander M\"{u}ller-Hermes.
\newblock Fault-tolerant coding for quantum communication.
\newblock {\em IEEE Transactions on Information Theory}, 70(1):282–317, January 2024.

\bibitem[Csi95]{Csiszar1995radii}
I.~Csiszar.
\newblock Generalized cutoff rates and {Rényi}'s information measures.
\newblock {\em IEEE Transactions on Information Theory}, 41(1):26--34, 1995.

\bibitem[CT05]{Cover2005book}
Thomas~M. Cover and Joy~A. Thomas.
\newblock {\em Elements of Information Theory}.
\newblock Wiley, April 2005.

\bibitem[CWY04]{Cai2004private}
N.~Cai, A.~Winter, and R.~W. Yeung.
\newblock Quantum privacy and quantum wiretap channels.
\newblock {\em Problems of Information Transmission}, 40(4):318–336, October 2004.

\bibitem[Dat09a]{Datta2009maxrel}
Nilanjana Datta.
\newblock Max-relative entropy of entanglement, alias log robustness.
\newblock {\em Int. J. Quantum Inf.}, 07(02):475--491, March 2009.

\bibitem[Dat09b]{Datta2009max}
Nilanjana Datta.
\newblock Min- and max-relative entropies and a new entanglement monotone.
\newblock {\em IEEE Transactions on Information Theory}, 55(6):2816–2826, June 2009.

\bibitem[Daw24]{Daws2024qgraph}
Matthew Daws.
\newblock Quantum graphs: Different perspectives, homomorphisms and quantum automorphisms.
\newblock {\em Communications of the American Mathematical Society}, 4(5):117–181, February 2024.

\bibitem[Den89]{Denisov1989infdivisble}
L.~V. Denisov.
\newblock Infinitely divisible markov mappings in quantum probability theory.
\newblock {\em Theory of Probability {\&} Its Applications}, 33(2):392–395, January 1989.

\bibitem[Dev05]{Devetak2005capacity}
I.~Devetak.
\newblock The private classical capacity and quantum capacity of a quantum channel.
\newblock {\em IEEE Transactions on Information Theory}, 51(1):44–55, January 2005.

\bibitem[DSS98]{smolin1998noisy}
David~P. DiVincenzo, Peter~W. Shor, and John~A. Smolin.
\newblock Quantum-channel capacity of very noisy channels.
\newblock {\em Phys. Rev. A}, 57:830--839, Feb 1998.

\bibitem[DSW13]{Duan2013noncomm}
Runyao Duan, Simone Severini, and Andreas Winter.
\newblock Zero-error communication via quantum channels, noncommutative graphs, and a quantum \text{Lovász Number}.
\newblock {\em IEEE Transactions on Information Theory}, 59(2):1164--1174, 2013.

\bibitem[Dua09]{Duan2009zerosuper}
Runyao Duan.
\newblock Super-activation of zero-error capacity of noisy quantum channels, 2009.

\bibitem[Fek23]{Fekete1923}
M.~Fekete.
\newblock {\"U}ber die verteilung der wurzeln bei gewissen algebraischen gleichungen mit ganzzahligen koeffizienten.
\newblock {\em Mathematische Zeitschrift}, 17(1):228–249, December 1923.

\bibitem[FF21]{Fang2021geometric}
Kun Fang and Hamza Fawzi.
\newblock Geometric {R\'enyi} divergence and its applications in quantum channel capacities.
\newblock {\em Communications in Mathematical Physics}, 384(3):1615–1677, May 2021.

\bibitem[FH04]{Fulton2004rep}
William Fulton and Joe Harris.
\newblock {\em Representation Theory}.
\newblock Springer New York, 2004.

\bibitem[FMHS22]{fawzi2022lower}
Omar Fawzi, Alexander M{\"u}ller-Hermes, and Ala Shayeghi.
\newblock A lower bound on the space overhead of fault-tolerant quantum computation.
\newblock In {\em 13th Innovations in Theoretical Computer Science Conference (ITCS 2022)}. Schloss-Dagstuhl-Leibniz Zentrum f{\"u}r Informatik, 2022.

\bibitem[FRT24]{fawzi2024error}
Omar Fawzi, Mizanur Rahaman, and Mostafa Taheri.
\newblock Capacities of quantum markovian noise for large times.
\newblock {\em preprint arXiv:2408.00116}, 2024.

\bibitem[FVP{\etalchar{+}}03]{Fortunato2003collective}
Evan~M. Fortunato, Lorenza Viola, Marco~A. Pravia, Emanuel Knill, Raymond Laflamme, Timothy~F. Havel, and David~G. Cory.
\newblock Exploring noiseless subsystems via nuclear magnetic resonance.
\newblock {\em Physical Review A}, 67(6), June 2003.

\bibitem[FW07]{Fukuda2007direct}
Motohisa Fukuda and Michael~M. Wolf.
\newblock Simplifying additivity problems using direct sum constructions.
\newblock {\em Journal of Mathematical Physics}, 48(7), July 2007.

\bibitem[FWTB20]{Fang2020smooth}
Kun Fang, Xin Wang, Marco Tomamichel, and Mario Berta.
\newblock Quantum channel simulation and the channel’s smooth max-information.
\newblock {\em IEEE Transactions on Information Theory}, 66(4):2129–2140, April 2020.

\bibitem[GFY16]{guan2016zero}
Ji~Guan, Yuan Feng, and Mingsheng Ying.
\newblock Decomposition of quantum markov chains and zero-error capacity.
\newblock {\em preprint arXiv:1608.06024v1}, 2016.

\bibitem[GFY18]{Guan2018algebra}
Ji~Guan, Yuan Feng, and Mingsheng Ying.
\newblock Decomposition of quantum {M}arkov chains and its applications.
\newblock {\em Journal of Computer and System Sciences}, 95:55–68, August 2018.

\bibitem[GJL18]{Gao2018tro}
Li~Gao, Marius Junge, and Nicholas LaRacuente.
\newblock Capacity estimates via comparison with {TRO} channels.
\newblock {\em Communications in Mathematical Physics}, 364(1):83–121, September 2018.

\bibitem[GKS76]{Gorini1976qms}
Vittorio Gorini, Andrzej Kossakowski, and E.~C.~G. Sudarshan.
\newblock Completely positive dynamical semigroups of n-level systems.
\newblock {\em Journal of Mathematical Physics}, 17(5):821–825, May 1976.

\bibitem[Got16]{gottesman2016surviving}
Daniel Gottesman.
\newblock Surviving as a quantum computer in a classical world.
\newblock {\em Textbook manuscript preprint}, 2016.

\bibitem[Has09]{Hastings2009}
M.~B. Hastings.
\newblock Superadditivity of communication capacity using entangled inputs.
\newblock {\em Nature Physics}, 5(4):255–257, March 2009.

\bibitem[HHHO05]{Horodecki2005private}
Karol Horodecki, Michał Horodecki, Paweł Horodecki, and Jonathan Oppenheim.
\newblock Secure key from bound entanglement.
\newblock {\em Physical Review Letters}, 94(16), April 2005.

\bibitem[HHHO09]{Horodecki2009private}
Karol Horodecki, Michal Horodecki, Pawel Horodecki, and Jonathan Oppenheim.
\newblock General paradigm for distilling classical key from quantum states.
\newblock {\em IEEE Transactions on Information Theory}, 55(4):1898--1929, 2009.

\bibitem[HJ91]{Horn1991matrix}
Roger~A. Horn and Charles~R. Johnson.
\newblock {\em Topics in Matrix Analysis}.
\newblock Cambridge University Press, April 1991.

\bibitem[HKL03]{Holbrook2003commutant}
John~A. Holbrook, David~W. Kribs, and Raymond Laflamme.
\newblock Noiseless subsystems and the structure of the commutant in quantum error correction.
\newblock {\em Quantum Information Processing}, 2(5):381–419, October 2003.

\bibitem[HKLP05]{Holbrook2005collective}
John~A. Holbrook, David~W. Kribs, Raymond Laflamme, and David Poulin.
\newblock Noiseless subsystems for collective rotation channels in quantum information theory.
\newblock {\em Integral Equations and Operator Theory}, 51(2):215–234, February 2005.

\bibitem[Hol98]{Holevo1998capacity}
A.S. Holevo.
\newblock The capacity of the quantum channel with general signal states.
\newblock {\em IEEE Transactions on Information Theory}, 44(1):269--273, 1998.

\bibitem[Hol02]{Holevo2002entanglement}
A.~S. Holevo.
\newblock On entanglement-assisted classical capacity.
\newblock {\em Journal of Mathematical Physics}, 43(9):4326–4333, September 2002.

\bibitem[HR15]{Hiai2015contraction}
Fumio Hiai and Mary~Beth Ruskai.
\newblock Contraction coefficients for noisy quantum channels.
\newblock {\em Journal of Mathematical Physics}, 57(1), December 2015.

\bibitem[JKK05]{Junge2005collective}
Marius Junge, Peter~T. Kim, and David~W. Kribs.
\newblock Universal collective rotation channels and quantum error correction.
\newblock {\em Journal of Mathematical Physics}, 46(2), January 2005.

\bibitem[KBLW01]{Kempe2001collective}
J.~Kempe, D.~Bacon, D.~A. Lidar, and K.~B. Whaley.
\newblock Theory of decoherence-free fault-tolerant universal quantum computation.
\newblock {\em Physical Review A}, 63(4), March 2001.

\bibitem[KBN13]{Kondo2013correlated}
Yasushi Kondo, Chiara Bagnasco, and Mikio Nakahara.
\newblock Implementation of a simple operator-quantum-error-correction scheme.
\newblock {\em Physical Review A}, 88(2), August 2013.

\bibitem[KCSC22]{Koudia2022superadd}
Seid Koudia, Angela~Sara Cacciapuoti, Kyrylo Simonov, and Marcello Caleffi.
\newblock How deep the theory of quantum communications goes: Superadditivity, superactivation and causal activation.
\newblock {\em IEEE Communications Surveys {\&} Tutorials}, 24(4):1926--1956, 2022.

\bibitem[Kin06]{King2006hadamard}
Christopher King.
\newblock An application of the {L}ieb-{T}hirring inequality in quantum information theory.
\newblock In {\em XIVth International Congress on Mathematical Physics}, page 486–490. World Scientific, March 2006.

\bibitem[KL97]{knil-laf}
Emanuel Knill and Raymond Laflamme.
\newblock Theory of quantum error-correcting codes.
\newblock {\em Phys. Rev. A}, 55:900--911, Feb 1997.

\bibitem[KLV00]{Knill2000error}
Emanuel Knill, Raymond Laflamme, and Lorenza Viola.
\newblock Theory of quantum error correction for general noise.
\newblock {\em Physical Review Letters}, 84(11):2525–2528, March 2000.

\bibitem[KO98]{Korner1998zero}
J.~Korner and A.~Orlitsky.
\newblock Zero-error information theory.
\newblock {\em IEEE Transactions on Information Theory}, 44(6):2207–2229, 1998.

\bibitem[{Kri}03]{kribs2003fixed}
David~W. {Kribs}.
\newblock {Quantum Channels, Wavelets, Dilations and Representations of $O_n$}.
\newblock {\em Proc. Edinburgh Math. Soc. 46 (2003), 421-433}, page math/0309390, September 2003.

\bibitem[KW24]{khatri2024principles}
Sumeet Khatri and Mark~M. Wilde.
\newblock Principles of quantum communication theory: A modern approach, 2024.

\bibitem[LG16]{Lami2016entsaving}
L.~Lami and V.~Giovannetti.
\newblock Entanglement-saving channels.
\newblock {\em Journal of Mathematical Physics}, 57(3), March 2016.

\bibitem[Lin76]{Lindblad1976qms}
G.~Lindblad.
\newblock On the generators of quantum dynamical semigroups.
\newblock {\em Communications in Mathematical Physics}, 48(2):119–130, June 1976.

\bibitem[Lin99]{Lindblad1999}
G\"{o}ran Lindblad.
\newblock A general no-cloning theorem.
\newblock {\em Letters in Mathematical Physics}, 47(2):189–196, 1999.

\bibitem[Llo97]{Lloyd1997capacity}
Seth Lloyd.
\newblock Capacity of the noisy quantum channel.
\newblock {\em Phys. Rev. A}, 55:1613--1622, 3 1997.

\bibitem[LLS{\etalchar{+}}23]{Leditzky2023superadd}
Felix Leditzky, Debbie Leung, Vikesh Siddhu, Graeme Smith, and John~A. Smolin.
\newblock Generic nonadditivity of quantum capacity in simple channels.
\newblock {\em Phys. Rev. Lett.}, 130:200801, May 2023.

\bibitem[LLSS14]{Leung2014privacy}
Debbie Leung, Ke~Li, Graeme Smith, and John~A. Smolin.
\newblock Maximal privacy without coherence.
\newblock {\em Physical Review Letters}, 113(3), July 2014.

\bibitem[LNP{\etalchar{+}}11]{Li2011correlated}
Chi-Kwong Li, Mikio Nakahara, Yiu-Tung Poon, Nung-Sing Sze, and Hiroyuki Tomita.
\newblock Efficient quantum error correction for fully correlated noise.
\newblock {\em Physics Letters A}, 375(37):3255–3258, August 2011.

\bibitem[LNPS15]{Li2015collective}
Chi-Kwong Li, Mikio Nakahara, Yiu-Tung Poon, and Nung-Sing Sze.
\newblock Maximal noiseless code rates for collective rotation channels on qudits.
\newblock {\em Quantum Information Processing}, 14(11):4039–4055, September 2015.

\bibitem[LS09]{Leung2009continuous}
Debbie Leung and Graeme Smith.
\newblock Continuity of quantum channel capacities.
\newblock {\em Communications in Mathematical Physics}, 292(1):201–215, May 2009.

\bibitem[LTAL19]{Lim2019superadd}
Youngrong Lim, Ryuji Takagi, Gerardo Adesso, and Soojoon Lee.
\newblock Activation and superactivation of single-mode gaussian quantum channels.
\newblock {\em Physical Review A}, 99(3), March 2019.

\bibitem[LWZG09]{Li2009superadd}
Ke~Li, Andreas Winter, XuBo Zou, and GuangCan Guo.
\newblock Private capacity of quantum channels is not additive.
\newblock {\em Physical Review Letters}, 103(12), September 2009.

\bibitem[LY16]{Leung2016privacy}
Debbie Leung and Nengkun Yu.
\newblock Maximum privacy without coherence, zero-error.
\newblock {\em Journal of Mathematical Physics}, 57(9), September 2016.

\bibitem[MH11]{Mosonyi2011radii}
M~Mosonyi and F~Hiai.
\newblock On the quantum {R}ényi relative entropies and related capacity formulas.
\newblock {\em IEEE Transactions on Information Theory}, 57(4):2474–2487, April 2011.

\bibitem[MHF18]{MullerHermes2018capacity}
Alexander M\"{u}ller-Hermes and Daniel~Stilck Franca.
\newblock Sandwiched {R\'enyi} convergence for quantum evolutions.
\newblock {\em Quantum}, 2:55, February 2018.

\bibitem[MHRW15]{MullerHermes2015subdivision}
Alexander Muller-Hermes, David Reeb, and Michael~M. Wolf.
\newblock Quantum subdivision capacities and continuous-time quantum coding.
\newblock {\em IEEE Transactions on Information Theory}, 61(1):565–581, January 2015.

\bibitem[MLDS{\etalchar{+}}13]{MullerLennert2013sandwich}
Martin M\"{u}ller-Lennert, Frédéric Dupuis, Oleg Szehr, Serge Fehr, and Marco Tomamichel.
\newblock On quantum {R\'enyi} entropies: A new generalization and some properties.
\newblock {\em Journal of Mathematical Physics}, 54(12), December 2013.

\bibitem[MO21]{Mosonyi2021radii}
Milan Mosonyi and Tomohiro Ogawa.
\newblock Divergence radii and the strong converse exponent of classical-quantum channel coding with constant compositions.
\newblock {\em IEEE Transactions on Information Theory}, 67(3):1668–1698, March 2021.

\bibitem[MW14]{Matthews2014converse}
William Matthews and Stephanie Wehner.
\newblock Finite blocklength converse bounds for quantum channels.
\newblock {\em IEEE Transactions on Information Theory}, 60(11):7317–7329, November 2014.

\bibitem[Pau02]{paulsen-book}
Vern Paulsen.
\newblock {\em Completely bounded maps and operator algebras}, volume~78 of {\em Cambridge Studies in Advanced Mathematics}.
\newblock Cambridge University Press, Cambridge, 2002.

\bibitem[PECG{\etalchar{+}}25]{wolf2024decidable}
Álvaro Perales-Eceiza, Toby Cubitt, Mile Gu, David Pérez-García, and Michael~M. Wolf.
\newblock Undecidability in physics: A review.
\newblock {\em Physics Reports}, 1138:1–29, September 2025.

\bibitem[Pet85]{Petz1985relative}
Dénes Petz.
\newblock Quasi-entropies for states of a von neumann algebra.
\newblock {\em Publications of the Research Institute for Mathematical Sciences}, 21(4):787–800, August 1985.

\bibitem[Pet86]{Petz1986relative}
Dénes Petz.
\newblock Quasi-entropies for finite quantum systems.
\newblock {\em Reports on Mathematical Physics}, 23(1):57–65, February 1986.

\bibitem[PL12]{Park2012super}
Jeonghoon Park and Soojoon Lee.
\newblock Zero-error classical capacity of qubit channels cannot be superactivated.
\newblock {\em Physical Review A}, 85(5), May 2012.

\bibitem[PPV10]{Polyanskiy2010converse}
Yury Polyanskiy, H.~Vincent Poor, and Sergio Verdu.
\newblock Channel coding rate in the finite blocklength regime.
\newblock {\em IEEE Transactions on Information Theory}, 56(5):2307–2359, May 2010.

\bibitem[Pre18]{Preskill2018nisq}
John Preskill.
\newblock Quantum computing in the {NISQ} era and beyond.
\newblock {\em Quantum}, 2:79, August 2018.

\bibitem[PV10]{Polyanskiy2010meta}
Yury Polyanskiy and Sergio Verdu.
\newblock Arimoto channel coding converse and r\'enyi divergence.
\newblock In {\em 2010 48th Annual Allerton Conference on Communication, Control, and Computing (Allerton)}, page 1327–1333. IEEE, September 2010.

\bibitem[QWW18]{Qi2018position}
Haoyu Qi, Qingle Wang, and Mark~M Wilde.
\newblock Applications of position-based coding to classical communication over quantum channels.
\newblock {\em Journal of Physics A: Mathematical and Theoretical}, 51(44):444002, October 2018.

\bibitem[Rag02]{Raginsky2002contract}
Maxim Raginsky.
\newblock Strictly contractive quantum channels and physically realizable quantum computers.
\newblock {\em Physical Review A}, 65(3), February 2002.

\bibitem[{Rah}17]{rahaman2017}
Mizanur {Rahaman}.
\newblock {Multiplicative properties of quantum channels}.
\newblock {\em Journal of Physics A Mathematical General}, 50(34):345302, August 2017.

\bibitem[Raz04]{Razborov2004fault}
Alexander~A. Razborov.
\newblock An upper bound on the threshold quantum decoherence rate.
\newblock {\em Quantum Info. Comput.}, 4(3):222–228, May 2004.

\bibitem[Ren06]{Renner2006PhD}
Renato Renner.
\newblock Security of quantum key distribution.
\newblock {\em arXiv:quant-ph/0512258}, 2006.

\bibitem[RHP14]{Rivas2014dividing}
Ángel Rivas, Susana~F Huelga, and Martin~B Plenio.
\newblock Quantum {non-Markovianity}: characterization, quantification and detection.
\newblock {\em Reports on Progress in Physics}, 77(9):094001, August 2014.

\bibitem[Sha48]{Shannon1948communication}
C.~E. Shannon.
\newblock A mathematical theory of communication.
\newblock {\em Bell Syst. Tech. J.}, 27(3):379--423, July 1948.

\bibitem[Sha56]{Shannon1956zero}
C.~Shannon.
\newblock The zero error capacity of a noisy channel.
\newblock {\em IRE Transactions on Information Theory}, 2(3):8--19, 1956.

\bibitem[Sho02]{Shor2002capacity}
Peter Shor.
\newblock The quantum channel capacity and coherent information.
\newblock {\em MSRI Workshop on Quantum Computation}, 2002.

\bibitem[Sib69]{Sibson1969radii}
Robin Sibson.
\newblock Information radius.
\newblock {\em Zeitschrift f{\"u}r Wahrscheinlichkeitstheorie und Verwandte Gebiete}, 14(2):149–160, 1969.

\bibitem[SRD24]{singh2024zero}
Satvik Singh, Mizanur Rahaman, and Nilanjana Datta.
\newblock Zero-error communication under discrete-time markovian dynamics, 2024.

\bibitem[SRW14]{Szehr2014specconvergence}
Oleg Szehr, David Reeb, and Michael~M. Wolf.
\newblock Spectral convergence bounds for classical and quantum markov processes.
\newblock {\em Communications in Mathematical Physics}, 333(2):565–595, October 2014.

\bibitem[SS09]{Smith2009superadd}
Graeme Smith and John~A. Smolin.
\newblock Extensive nonadditivity of privacy.
\newblock {\em Phys. Rev. Lett.}, 103:120503, Sep 2009.

\bibitem[SS15]{Shirokov2015zerosuper}
Maksim {Shirokov} and Tatiana {Shulman}.
\newblock {On superactivation of zero-error capacities and reversibility of a quantum channel}.
\newblock {\em Commun. Math. Phys. V.335}, pages 1159--1179, September 2015.

\bibitem[SSY11]{Smith2011superadd}
Graeme Smith, John~A. Smolin, and Jon Yard.
\newblock Quantum communication with gaussian channels of zero quantum capacity.
\newblock {\em Nature Photonics}, 5(10):624–627, August 2011.

\bibitem[Sta16]{Stahlke2016zero}
Dan Stahlke.
\newblock Quantum zero-error source-channel coding and non-commutative graph theory.
\newblock {\em IEEE Transactions on Information Theory}, 62(1):554–577, January 2016.

\bibitem[SW97]{Schumacher1997capacity}
Benjamin Schumacher and Michael~D. Westmoreland.
\newblock Sending classical information via noisy quantum channels.
\newblock {\em Phys. Rev. A}, 56:131--138, Jul 1997.

\bibitem[SW13]{Sharma2013meta}
Naresh Sharma and Naqueeb~Ahmad Warsi.
\newblock Fundamental bound on the reliability of quantum information transmission.
\newblock {\em Physical Review Letters}, 110(8), February 2013.

\bibitem[SY08]{Smith2008super}
Graeme Smith and Jon Yard.
\newblock Quantum communication with zero-capacity channels.
\newblock {\em Science}, 321(5897):1812–1815, September 2008.

\bibitem[Tak79]{Takesaki1979algebra}
Masamichi Takesaki.
\newblock {\em Theory of Operator Algebras I}.
\newblock Springer New York, 1979.

\bibitem[Ter15]{Tehral2015memory}
Barbara~M. Terhal.
\newblock Quantum error correction for quantum memories.
\newblock {\em Rev. Mod. Phys.}, 87:307--346, Apr 2015.

\bibitem[Tom16]{Tomamichel2016book}
Marco Tomamichel.
\newblock {\em Quantum Information Processing with Finite Resources}.
\newblock Springer International Publishing, 2016.

\bibitem[TWW17]{Tomamichel2017strong}
Marco Tomamichel, Mark~M. Wilde, and Andreas Winter.
\newblock Strong converse rates for quantum communication.
\newblock {\em IEEE Transactions on Information Theory}, 63(1):715–727, January 2017.

\bibitem[Ume62]{Umegaki1962relative}
Hisaharu Umegaki.
\newblock Conditional expectation in an operator algebra. {IV.} {E}ntropy and information.
\newblock {\em Kodai Mathematical Journal}, 14(2), January 1962.

\bibitem[VFP{\etalchar{+}}01]{Viola2001collective}
Lorenza Viola, Evan~M. Fortunato, Marco~A. Pravia, Emanuel Knill, Raymond Laflamme, and David~G. Cory.
\newblock Experimental realization of noiseless subsystems for quantum information processing.
\newblock {\em Science}, 293(5537):2059–2063, September 2001.

\bibitem[Wat18]{watrous2018theory}
John Watrous.
\newblock {\em The Theory of Quantum Information}.
\newblock Cambridge University Press, 2018.

\bibitem[WC08]{Wolf2008dividing}
Michael~M. Wolf and J.~Ignacio Cirac.
\newblock Dividing quantum channels.
\newblock {\em Communications in Mathematical Physics}, 279(1):147–168, February 2008.

\bibitem[WCPG11]{wolf2011undecidable}
Michael~M. Wolf, Toby~S. Cubitt, and David Perez-Garcia.
\newblock Are problems in quantum information theory (un)decidable?
\newblock {\em arXiv:1111.5425}, 2011.

\bibitem[Wer01]{Werner2001dense}
R~F Werner.
\newblock All teleportation and dense coding schemes.
\newblock {\em Journal of Physics A: Mathematical and General}, 34(35):7081–7094, August 2001.

\bibitem[WFT19]{Wang2019classicalstrong}
Xin Wang, Kun Fang, and Marco Tomamichel.
\newblock On converse bounds for classical communication over quantum channels.
\newblock {\em IEEE Transactions on Information Theory}, 65(7):4609–4619, July 2019.

\bibitem[WH11]{Wilde2011Hadamard}
Mark~M. Wilde and Min-Hsiu Hsieh.
\newblock The quantum dynamic capacity formula of a quantum channel.
\newblock {\em Quantum Information Processing}, 11(6):1431–1463, September 2011.

\bibitem[Wil13]{Wilde2013book}
Mark~M. Wilde.
\newblock {\em Quantum Information Theory}.
\newblock Cambridge University Press, April 2013.

\bibitem[Win16]{Winter2016tight}
Andreas Winter.
\newblock Tight uniform continuity bounds for quantum entropies: Conditional entropy, relative entropy distance and energy constraints.
\newblock {\em Communications in Mathematical Physics}, 347(1):291–313, March 2016.

\bibitem[Wol12]{Wolf2012Qtour}
M.~M. Wolf.
\newblock Quantum channels and operations: Guided tour.
\newblock {\em (unpublished)}, 2012.

\bibitem[WPG10]{wolf2010inverse}
Michael~M. Wolf and David Perez-Garcia.
\newblock The inverse eigenvalue problem for quantum channels.
\newblock {\em arXiv:1005.4545}, 2010.

\bibitem[WR12]{WRenner2012hypo}
Ligong Wang and Renato Renner.
\newblock One-shot classical-quantum capacity and hypothesis testing.
\newblock {\em Phys. Rev. Lett.}, 108:200501, May 2012.

\bibitem[WTB17]{Wilde2017private}
Mark~M. Wilde, Marco Tomamichel, and Mario Berta.
\newblock Converse bounds for private communication over quantum channels.
\newblock {\em IEEE Transactions on Information Theory}, 63(3):1792–1817, March 2017.

\bibitem[WWY14]{Wilde2014converse}
Mark~M. Wilde, Andreas Winter, and Dong Yang.
\newblock {Strong Converse for the Classical Capacity of Entanglement-Breaking and Hadamard Channels via a Sandwiched {R\'enyi} {Relative} Entropy}.
\newblock {\em Communications in Mathematical Physics}, 331(2):593–622, July 2014.

\bibitem[WXD18]{Wang2018classicalstrong}
Xin Wang, Wei Xie, and Runyao Duan.
\newblock Semidefinite programming strong converse bounds for classical capacity.
\newblock {\em IEEE Transactions on Information Theory}, 64(1):640–653, January 2018.

\bibitem[WY16]{Winter2016potential}
Andreas Winter and Dong Yang.
\newblock Potential capacities of quantum channels.
\newblock {\em IEEE Transactions on Information Theory}, 62(3):1415–1424, March 2016.

\bibitem[Zar03]{zarikan2003algebra}
Vrej Zarikian.
\newblock Algorithms for operator algebra calculations.
\newblock {\em preprint}, 2003.

\bibitem[ZR97]{Zanardi1997collective}
P.~Zanardi and M.~Rasetti.
\newblock Noiseless quantum codes.
\newblock {\em Physical Review Letters}, 79(17):3306–3309, October 1997.

\end{thebibliography}
\bibliographystyle{alpha}

\vspace{10pt}
\hrule
\vspace{10pt}
\end{document}